\providecommand{\U}[1]{\protect\rule{.1in}{.1in}}
\newtheorem{theorem}{Theorem}
\newtheorem{proposition}[theorem]{Proposition}
\newtheorem{remark}[theorem]{Remark}
\newenvironment{proof}[1][Proof]{\noindent\textbf{#1.} }{\ \rule{0.5em}{0.5em}}
\begin{document}
\preprint{ }
\title[ ]{Quantifying the performance of approximate teleportation and quantum error
correction via symmetric two-PPT-extendibility}
\author{Tharon Holdsworth}
\affiliation{Department of Physics, University of Alabama at Birmingham, Birmingham,
Alabama 35233, USA}
\affiliation{Hearne Institute for Theoretical Physics, Department of Physics and Astronomy,
and Center for Computation and Technology, Louisiana State University, Baton
Rouge, Louisiana 70803, USA}
\author{Vishal Singh}
\affiliation{Hearne Institute for Theoretical Physics, Department of Physics and Astronomy,
and Center for Computation and Technology, Louisiana State University, Baton
Rouge, Louisiana 70803, USA}
\affiliation{School of Applied and Engineering Physics, Cornell University, Ithaca, New York 14850, USA}
\author{Mark M.~Wilde}
\affiliation{Hearne Institute for Theoretical Physics, Department of Physics and Astronomy,
and Center for Computation and Technology, Louisiana State University, Baton
Rouge, Louisiana 70803, USA}
\affiliation{School of Electrical and Computer Engineering, Cornell University, Ithaca, New York 14850, USA}

\begin{abstract}
The ideal realization of quantum teleportation relies on having access to a maximally entangled state; however, in practice, such an ideal state is typically not available and one can instead only realize an approximate teleportation. With this in mind, we present a method to quantify the performance of approximate teleportation when using an arbitrary resource state. More specifically, after framing the task of approximate teleportation as an optimization of a simulation error over one-way local operations and classical communication (LOCC) channels, we establish a semi-definite relaxation of this optimization task by instead optimizing over the larger set of two-PPT-extendible channels. The main analytical calculations in our paper consist of exploiting the unitary covariance symmetry of the identity channel to establish a significant reduction of the computational cost of this latter optimization.
Next, by exploiting known connections between approximate teleportation and quantum error correction, we also apply these concepts to establish bounds on the performance of approximate quantum error correction over a given quantum channel. 
Finally, we evaluate our bounds for various  examples of resource states and channels.
\end{abstract}

\date{\today}
\startpage{1}
\endpage{10}
\maketitle
\tableofcontents

\section{Introduction}

Teleportation is one of the most basic protocols in quantum information
science \cite{PhysRevLett.70.1895}. By means of two bits of classical
communication and an entangled pair of qubits (a so-called resource state), it
is possible to transmit a qubit from one location to another. This protocol
demonstrates the fascinating possibilities available under the distant
laboratories paradigm of local operations and classical communication (LOCC), 
and it prompted the development of the resource theory of entanglement
\cite{BDSW96}. Teleportation is so ubiquitous in quantum information science
now, that nearly every subfield (fault-tolerant computing, error correction,
cryptography, communication complexity, Shannon theory, etc.) employs it in
some manner. A number of impressive teleportation experiments have been
conducted over the past few decades
\cite{Bouwmeester1997,furusawa1998unconditional,PhysRevLett.80.1121,Riebe2004,Ursin2004,Sherson2006,Ma2012,Ren2017}%
.

The teleportation protocol assumes an ideal resource state; however, if the resource state shared between the two parties is imperfect, then the
teleportation protocol no longer simulates an ideal quantum channel, but 
rather some approximation of it \cite{PhysRevLett.72.797,PhysRevA.60.1888}.
This problem has been studied considerably in the literature and is related to the well-known problem of entanglement distillation
\cite{BBPSSW96EPP,BDSW96}. Recently, it has been addressed in a precise and
general operational way, in terms of a meaningful channel distinguishability
measure \cite[Definition~19]{KW17}.

In the seminal work \cite{BDSW96}, a connection was forged between
entanglement distillation and approximate quantum error correction. There, it
was shown that certain one-way LOCC\ entanglement distillation protocols can be
converted to approximate quantum error correction protocols, and vice versa.
Thus, techniques for analyzing entanglement distillation can be used to
analyze quantum error correction and vice versa.

In this paper, we obtain bounds on the performance of teleportation when using
an imperfect resource state, and by exploiting the aforementioned connection, we address a related problem
for approximate quantum error correction. We thus consider our paper to offer
two distinct, yet related contributions. The conceptual approach that we take
here is linked to that of \cite{SW21}, which was concerned with a more
involved protocol called bidirectional teleportation; it is also linked to \cite{KDWW19,KDWW21}, which introduced the set of $k$-extendible channels as a semi-definite relaxation of the set of one-way LOCC channels. Our approach has strong
links as well with that taken in \cite{BBFS21}, the latter concerned with
bounding the performance of approximate quantum error correction by means of
$k$-PPT-extendible channels; these channels were introduced in \cite{BBFS21} as a semi-definite relaxation of the set of one-way LOCC channels that forms a tighter containment than $k$-extendible channels alone. In fact, our method applied to the
problem of approximate quantum error correction can be understood as
exploiting further symmetries available when simulating the identity channel,
in order to reduce the computational complexity required to calculate the
bounds given in~\cite{BBFS21}.

Let us discuss our first contribution in a bit more detail. Suppose that the
goal is to use a bipartite resource state $\rho_{AB}$ along with one-way
LOCC to simulate a perfect quantum channel of dimension $d$. It is not always
possible to perform this simulation exactly, and for most resource states, an
error will occur. We can quantify the simulation error either in terms of the
diamond distance \cite{Kit97} or the channel infidelity \cite{GLN04}.
However, we prove here that the simulation error is the same, regardless of
whether we use the channel infidelity or the diamond distance, when
quantifying the deviation between the simulation and an ideal quantum channel (note
that a similar result was found previously in \cite{SW21} and we exploit similar techniques to arrive at our conclusion here). Next, in order to obtain a
lower bound on the simulation error, and due to the fact that it is
computationally challenging to optimize over one-way LOCC channels, we
optimize the error over the larger set of two-PPT-extendible channels (defined in Section~\ref{sec:2pk-ch}) and show
that the resulting quantity can be calculated by means of a semi-definite
program. By exploiting the unitary covariance symmetry of the ideal quantum channel, we reduce
the computational complexity of the semi-definite program to depend only on the dimension of
the resource state $\rho_{AB}$ being considered.
This constitutes
our main contribution to the analysis of teleportation with an imperfect
resource state. We also provide a general formulation of the simulation
problem when trying to simulate an arbitrary channel using one-way LOCC and a resource state.

The second contribution of our paper employs a similar line of reasoning to obtain a lower bound on the simulation error of
approximate quantum error correction. In this setting, instead of a bipartite
state, two parties have at their disposal a quantum channel $\mathcal{N}%
_{A\rightarrow B}$, for which they can prepend an encoding and append a
decoding in order to simulate a perfect quantum channel of dimension $d$. This
encoding and decoding can be understood as a superchannel \cite{CDP08}\ that
transforms $\mathcal{N}_{A\rightarrow B}$ into an approximation of the perfect
quantum channel. It is clear that the simulation error cannot increase by
allowing for a superchannel realized by one-way local operations and common randomness (LOCR), and here, following the
approach outlined above, we find a lower bound on the simulation error by
optimizing instead over the larger class of two-PPT-extendible superchannels with an extra non-signaling constraint.
Critically, this lower bound can be calculated by means of a semi-definite
program. As indicated above, this problem was previously considered in
\cite{BBFS21}, but our contribution is that the semi-definite programming
lower bound reported here has a substantially reduced computational
complexity, depending only on the input and output
dimensions of the channel $\mathcal{N}_{A\rightarrow B}$ of interest.

\subsection{Organization of the paper}

Our paper is organized into two major parts, according to the contributions mentioned above. The first part (Sections~\ref{sec: II}-\ref{sec: IV}) details our contribution to quantifying the performance of approximate teleportation. The second part (Sections~\ref{sec:superchs}-\ref{sec: VII}) details our contribution to quantifying the performance of approximate quantum error correction.

The first part of our paper is organized as follows: Section~\ref{sec: II} provides some background on quantum states and channels, with an emphasis on LOCC and LOCR bipartite channels. Section~\ref{sec:quantify-approx-tele} establishes a measure for the performance of quantum channel simulation, namely, in terms of the normalized diamond distance and channel infidelity. We prove here that these two error measures are equal when the goal is to simulate the identity channel, following as a consequence of the unitary covariance symmetry of the identity channel. Section~\ref{sec: IV} presents the major contribution of the first part, a semi-definite program (SDP) that gives a  lower bound on the simulation error of approximate teleportation when using an arbitrary bipartite resource state and one-way LOCC channels. This SDP is further simplified by exploiting the aforementioned symmetry of the identity channel to reduce the computational cost of the optimization task significantly.

The second part of our paper is organized as follows: Section~\ref{sec:superchs} provides background on quantum superchannels to generalize the concepts of one-way LOCC and LOCR bipartite channels to superchannels. Section~\ref{sec: VI} explores the task of channel simulation, i.e., simulating a quantum channel from an arbitrary quantum channel and LOCR superchannels. The performance of channel simulation is again quantified with the normalized diamond distance and channel infidelity, and again the error measures are equal when the goal is to simulate the identity channel with the assistance of common randomness. Section~\ref{sec: VII} presents the major contribution of the second part, an SDP that gives a lower bound on the error in simulating a quantum channel with an arbitrary channel and LOCR superchannels. We detail a much simplified SDP for the simulation of an identity channel, the case of interest in approximate quantum error correction, by leveraging its unitary covariance symmetry. 

Section~\ref{sec: VIII} presents plots that result from numerical calculations of our SDP error bounds. The first example in Section~\ref{sec:ex-1} bounds the error in approximate teleportation using a certain mixed state as the resource state, demonstrating that two-PPT-extendiblity constraints can achieve tighter bounds when compared to PPT constraints alone. The second example in Section~\ref{sec:ex-3d-tele-2d-resource} considers the bounds when using a lower dimensional resource state to simulate a higher dimensional identity channel. The next example in Section~\ref{sec:ex-depol-ch} considers the bounds for qubit and qutrit depolarizing channels.  The penultimate example in Section~\ref{sec:ex-2} bounds the error in approximate teleportation when using two-mode squeezed states as the resource state. The final example in Section~\ref{sec:ex-3} bounds the error in simulating an identity channel when using the three-level amplitude damping channel \cite{Chessa2021Feb}, and it is thus an example of our bound applied to approximate quantum error correction.

Section~\ref{sec: Conclusion} concludes by discussing several open questions for future work. We note here that Python code for calculating the SDPs in our paper is available with its arXiv posting.

\section{Background on states, channels, and bipartite channels}\label{sec: II}

We recall some basic facts about quantum information theory in this section to fix our notation before proceeding; more detailed background can be found in \cite{H17,H13book,Watrous2018,W17,KW20book}.

\subsection{States and channels}

A quantum state or density operator, usually denoted by $\rho_{A}$, $\sigma_A$, etc., is a positive semi-definite, unit trace operator  acting on a Hilbert space $\mathcal{H}_{A}$.
The Heisenberg--Weyl operators are unitary transformations of quantum states, defined  for all $x,z\in\{0,1,\ldots,d-1\}$ as
\begin{align}
Z(z)  &  \coloneqq\sum_{k=0}^{d-1}e^{\frac{2\pi ikz}{d}}|k\rangle\!\langle
k|,\\
X(x)  &  \coloneqq\sum_{k=0}^{d-1}|k\oplus_{d}x\rangle\!\langle k|,\\
W^{z,x}  &  \coloneqq Z(z)X(x),\label{eq:HW-ops}
\end{align}
where $\oplus_{d}$ denotes addition modulo $d$.

A quantum channel is a completely positive (CP), trace-preserving (TP)\ map.
Let $\mathcal{N}_{A\rightarrow B}$ denote a quantum channel that accepts as
input a linear operator acting on a Hilbert space$~\mathcal{H}_{A}$ and
outputs a linear operator acting on a Hilbert space$~\mathcal{H}_{B}$. For
short, we say that the channel takes system $A$ to system$~B$, where systems are
identified with Hilbert spaces.\ Let $\Gamma_{RB}^{\mathcal{N}}$ denote the
Choi operator of a channel $\mathcal{N}_{A\rightarrow B}$:%
\begin{equation}
\Gamma_{RB}^{\mathcal{N}}\coloneqq\mathcal{N}_{A\rightarrow B}(\Gamma_{RA}),
\end{equation}
where%
\begin{equation}
\Gamma_{RA}\coloneqq\sum_{i,j=0}^{d_{A}-1}|i\rangle\!\langle j|_{R}%
\otimes|i\rangle\!\langle j|_{A}%
\end{equation}
is the unnormalized maximally entangled operator and $\{|i\rangle_{R}\}_{i=0}^{d_{A}-1}$
and $\{|i\rangle_{A}\}_{i=0}^{d_{A}-1}$ are orthonormal bases.

The Choi representation of a channel is isomorphic to the superoperator representation and provides a convenient means of characterizing a channel. Namely, a channel  $\mathcal{M}_{A\rightarrow
B}$ is completely positive if and only if its Choi operator $\Gamma
_{RB}^{\mathcal{M}}$ is positive semi-definite and a channel
$\mathcal{M}_{A\rightarrow B}$ is trace preserving if and only if its Choi
operator $\Gamma_{RB}^{\mathcal{M}}$ satisfies $\operatorname{Tr}_{B}%
[\Gamma_{RB}^{\mathcal{M}}]=I_{R}$.

\subsection{Bipartite channels}

A bipartite channel $\mathcal{N}_{AB\rightarrow A^{\prime}B^{\prime}}$ maps
input systems $A$ and $B$ to output systems $A^{\prime}$ and $B^{\prime}$. In this model,
we assume that a single party Alice has access to systems $A$ and $A^{\prime}%
$, while another party Bob has access to systems $B$ and $B^{\prime}$.\ The
Choi operator for a bipartite channel $\mathcal{N}_{AB\rightarrow A^{\prime
}B^{\prime}}$ is as follows:%
\begin{equation}
\Gamma_{\tilde{A}\tilde{B}A^{\prime}B^{\prime}}^{\mathcal{N}}=\mathcal{N}%
_{AB\rightarrow A^{\prime}B^{\prime}}(\Gamma_{\tilde{A}A}\otimes\Gamma
_{\tilde{B}B}).
\end{equation}

\subsubsection{One-way LOCC channels}

A bipartite channel $\mathcal{L}_{AB\rightarrow A^{\prime}B^{\prime}}$ is a
one-way LOCC (1WL)\ channel if it can be written as follows:%
\begin{equation}
\mathcal{L}_{AB\rightarrow A^{\prime}B^{\prime}}=\sum_{x}\mathcal{E}%
_{A\rightarrow A^{\prime}}^{x}\otimes\mathcal{D}_{B\rightarrow B^{\prime}}%
^{x}, \label{eq:1WL-def}%
\end{equation}
where $\{\mathcal{E}_{A\rightarrow A^{\prime}}^{x}\}_{x}$ is a set of
completely positive maps, such that the sum map $\sum_{x}\mathcal{E}%
_{A\rightarrow A^{\prime}}^{x}$ is trace preserving, and $\{\mathcal{D}%
_{B\rightarrow B^{\prime}}^{x}\}_{x}$ is a set of quantum channels. The idea
here is that Alice acts on her system $A$ with a quantum instrument described
by $\{\mathcal{E}_{A\rightarrow A^{\prime}}^{x}\}_{x}$, transmits the classical outcome $x$ of the measurement over a classical communication
channel to Bob, who subsequently applies the quantum channel $\mathcal{D}%
_{B\rightarrow B^{\prime}}^{x}$ to his system $B$. A key example of a one-way
LOCC channel is in the teleportation protocol:\ given that Alice and Bob share a maximally entangled state in systems $\hat{A}\hat{B}$ and Alice has prepared
the system $A_{0}$ that she would like to teleport, the one-way LOCC\ channel consists of Alice performing a Bell measurement on systems $A_{0}\hat{A}$ (quantum instrument),
sending the measurement outcome to Bob (classical communication), who then applies a Heisenberg--Weyl correction
operation on system$~B$ conditioned on the classical communication from Alice. One-way LOCC channels are central in our analysis of approximate teleportation.

\subsubsection{LOCR\ channels}

A subset of one-way LOCC\ channels consists of those that can be implemented
by local operations and common randomness (LOCR). These channels have the following form:%
\begin{equation}
\mathcal{C}_{AB\rightarrow A^{\prime}B^{\prime}}=\sum_{y}p(y)\mathcal{E}%
_{A\rightarrow A^{\prime}}^{y}\otimes\mathcal{D}_{B\rightarrow B^{\prime}}%
^{y}, \label{eq:LOCR-ch-def}%
\end{equation}
where $\{p(y)\}_{y}$ is a probability distribution and $\{\mathcal{E}%
_{A\rightarrow A^{\prime}}^{y}\}_{y}$ and $\{\mathcal{D}_{B\rightarrow
B^{\prime}}^{y}\}_{y}$ are sets of quantum channels. The main difference
between one-way LOCC and LOCR is that, in the latter case, the channel is
simply a probabilistic mixture of local channels. In order to simulate them,
classical communication is not needed, and only the weaker resource of common
randomness is required. Thus, the following containment holds:%
\begin{equation}
\operatorname{LOCR}\subset\operatorname*{1WL}.
\end{equation}
These channels play a role in our analysis of approximate quantum error
correction and channel simulation.

\subsubsection{Two-extendible channels}

A bipartite channel $\mathcal{N}_{AB\rightarrow A^{\prime}B^{\prime}}$ is
two-extendible \cite{KDWW19,KDWW21}, if there exists an extension channel
$\mathcal{M}_{AB_{1}B_{2}\rightarrow A^{\prime}B_{1}^{\prime}B_{2}^{\prime}}$
satisfying permutation covariance:%
\begin{equation}
\mathcal{M}_{AB_{1}B_{2}\rightarrow A^{\prime}B_{1}^{\prime}B_{2}^{\prime}%
}\circ\mathcal{F}_{B_{1}B_{2}}=\mathcal{F}_{B_{1}^{\prime}B_{2}^{\prime}}%
\circ\mathcal{M}_{AB_{1}B_{2}\rightarrow A^{\prime}B_{1}^{\prime}B_{2}%
^{\prime}} \label{eq:perm-cov-ch}%
\end{equation}
and the following non-signaling constraint:%
\begin{equation}
\operatorname{Tr}_{B_{2^{\prime}}}\circ\mathcal{M}_{AB_{1}B_{2}\rightarrow
A^{\prime}B_{1}^{\prime}B_{2}^{\prime}}=\mathcal{N}_{AB_{1}\rightarrow
A^{\prime}B_{1}^{\prime}}\otimes\operatorname{Tr}_{B_{2}}.
\label{eq:marg-ch-ch}%
\end{equation}
In the above, $\mathcal{F}_{B_{1}B_{2}}$ is the unitary swap channel that
permutes systems $B_{1}$ and $B_{2}$, and $\mathcal{F}_{B_{1}^{\prime}%
B_{2}^{\prime}}$ is defined similarly. Also, $\operatorname{Tr}$ denotes the
partial trace channel. Note that the two conditions in \eqref{eq:perm-cov-ch}
and \eqref{eq:marg-ch-ch} imply that the original channel $\mathcal{N}_{AB\rightarrow A^{\prime
}B^{\prime}}$ is non-signaling from Bob to Alice, i.e.,%
\begin{equation}
\operatorname{Tr}_{B^{\prime}}\circ\mathcal{N}_{AB\rightarrow A^{\prime
}B^{\prime}}=\operatorname{Tr}_{B^{\prime}}\circ\mathcal{N}_{AB\rightarrow
A^{\prime}B^{\prime}}\circ\mathcal{R}_{B}^{\pi},
\label{eq:non-sig-implied-by-two-ext}%
\end{equation}
where%
\begin{equation}
\mathcal{R}_{B}^{\pi}(\cdot)\coloneqq\operatorname{Tr}[\cdot]\pi_{B}
\label{eq:replacer-ch-def}%
\end{equation}
is a replacer channel that traces out its input and replaces it with the
maximally mixed state $\pi_{B}\coloneqq\frac{I}{d_{B}}$. We provide a proof of \eqref{eq:non-sig-implied-by-two-ext} in Appendix~\ref{app:proof-2ext-non-sig-prop}. 

More generally,
$k$-extendible channels were defined in \cite{KDWW19,KDWW21}, and a resource
theory was constructed based on them. However, we only make use of
two-extendible channels in this work, and we leave the study of our problem
using $k$-extendible channels for future work. See \cite{BBFS21} for an
alternative definition of $k$-extendible channels that appeared after the
original proposal of \cite{KDWW19}. A key insight of \cite{KDWW19,KDWW21} is
that the set of one-way LOCC channels is contained in the set of
two-extendible channels, and we make use of this observation in our paper.

A bipartite channel $\mathcal{N}_{AB\rightarrow A^{\prime}B^{\prime}}$ is
two-extendible if and only if its Choi operator $\Gamma_{ABA^{\prime}%
B^{\prime}}^{\mathcal{N}}$ is such that there exists a Hermitian operator
$\Gamma_{AB_{1}B_{2}A^{\prime}B_{1}^{\prime}B_{2}^{\prime}}^{\mathcal{M}}$
satisfying \cite{KDWW19,KDWW21}%
\begin{align}
(\mathcal{F}_{B_{1}B_{2}}\otimes\mathcal{F}_{B_{1}^{\prime}B_{2}^{\prime}%
})(\Gamma_{AB_{1}B_{2}A^{\prime}B_{1}^{\prime}B_{2}^{\prime}}^{\mathcal{M}})
&  =\Gamma_{AB_{1}B_{2}A^{\prime}B_{1}^{\prime}B_{2}^{\prime}}^{\mathcal{M}%
},\label{eq:perm-cov-choi}\\
\operatorname{Tr}_{B_{2}^{\prime}}[\Gamma_{AB_{1}B_{2}A^{\prime}B_{1}^{\prime
}B_{2}^{\prime}}^{\mathcal{M}}]  &  =\Gamma_{AB_{1}A^{\prime}B_{1}^{\prime}%
}^{\mathcal{N}}\otimes I_{B_{2}},\label{eq:marg-ch-choi}\\
\Gamma_{AB_{1}B_{2}A^{\prime}B_{1}^{\prime}B_{2}^{\prime}}^{\mathcal{M}}  &
\geq0,\label{eq:ext-ch-CP}\\
\operatorname{Tr}_{A^{\prime}B_{1}^{\prime}B_{2}^{\prime}}[\Gamma_{AB_{1}%
B_{2}A^{\prime}B_{1}^{\prime}B_{2}^{\prime}}^{\mathcal{M}}]  &  =I_{AB_{1}%
B_{2}}. \label{eq:ext-ch-TP}%
\end{align}
The condition in \eqref{eq:perm-cov-choi} holds if and only if
\eqref{eq:perm-cov-ch} does. The condition in \eqref{eq:marg-ch-choi}\ holds
if and only if \eqref{eq:marg-ch-ch} does. Finally, \eqref{eq:ext-ch-CP} holds
if and only if $\mathcal{M}_{AB_{1}B_{2}\rightarrow A^{\prime}B_{1}^{\prime
}B_{2}^{\prime}}$ is completely positive, and \eqref{eq:ext-ch-TP} holds if
and only if $\mathcal{M}_{AB_{1}B_{2}\rightarrow A^{\prime}B_{1}^{\prime}%
B_{2}^{\prime}}$ is trace preserving. Related to the above, the conditions in
\eqref{eq:perm-cov-choi} and \eqref{eq:marg-ch-choi} imply the following
non-signaling condition on the Choi operator of $\mathcal{N}_{AB\rightarrow
A^{\prime}B^{\prime}}$:%
\begin{equation}
\operatorname{Tr}_{B^{\prime}}[\Gamma_{ABA^{\prime}B^{\prime}}^{\mathcal{N}%
}]=\frac{1}{d_{B}}\operatorname{Tr}_{BB^{\prime}}[\Gamma_{ABA^{\prime
}B^{\prime}}^{\mathcal{N}}]\otimes I_{B},
\end{equation}
which is equivalent to \eqref{eq:non-sig-implied-by-two-ext}. 

\subsubsection{Completely positive-partial-transpose preserving channels}

\label{sec:C-PPT-P-def}

A bipartite channel $\mathcal{N}_{AB\rightarrow A^{\prime}B^{\prime}}$ is
completely positive-partial-transpose preserving (C-PPT-P) \cite{Rai99,Rai01}%
\ if the map $T_{B^{\prime}}\circ\mathcal{N}_{AB\rightarrow A^{\prime
}B^{\prime}}\circ T_{B}$ is completely positive. Here, $T_{B}$ is the partial
transpose map, defined as the following superoperator:%
\begin{equation}
T_{B}(\cdot)\coloneqq\sum_{i,j}|i\rangle\!\langle j|_{B}(\cdot)|i\rangle
\!\langle j|_{B}. \label{eq:transpose-map-def}%
\end{equation}
See also \cite{CDGG20}. The set of one-way LOCC channels is contained in the
set of C-PPT-P channels \cite{Rai99,Rai01}, and we also make use of this
observation in our paper. A bipartite channel $\mathcal{N}_{AB\rightarrow
A^{\prime}B^{\prime}}$ is C-PPT-P\ if and only if its Choi operator
$\Gamma_{ABA^{\prime}B^{\prime}}^{\mathcal{N}}$ satisfies%
\begin{align}
\Gamma_{ABA^{\prime}B^{\prime}}^{\mathcal{N}}  &  \geq
0,\label{eq:cond-CPPTP-1}\\
\operatorname{Tr}_{A^{\prime}B^{\prime}}[\Gamma_{ABA^{\prime}B^{\prime}%
}^{\mathcal{N}}]  &  =I_{AB},\\
T_{BB^{\prime}}(\Gamma_{ABA^{\prime}B^{\prime}}^{\mathcal{N}})  &  \geq0,
\label{eq:cond-CPPTP-3}%
\end{align}
where $T_{BB^{\prime}}$ is the partial transpose acting on systems $B$ and$~B^{\prime
}$. We note that the C-PPT-P constraint has been used in prior work on
bounding the simulation error in bidirectional teleportation \cite{SW21}. See
also \cite{LM15,WXD18,WD16,WD16pra,BW17} for other contexts.

\subsubsection{Two-PPT-extendible channels}

\label{sec:2pk-ch}

We can combine the above constraints in a non-trivial way to define the set of
two-PPT-extendible channels, and we note that this was considered recently in
\cite[Remark after Lemma~4.10]{BBFS21}, as a generalization of the concept
employed for bipartite states \cite{DPS02,DPS04}. Explicitly, a bipartite
channel $\mathcal{N}_{AB\rightarrow A^{\prime}B^{\prime}}$ is
two-PPT-extendible if there exists an extension channel $\mathcal{M}%
_{AB_{1}B_{2}\rightarrow A^{\prime}B_{1}^{\prime}B_{2}^{\prime}}$ satisfying
the following conditions of permutation covariance, non-signaling, and being
completely-PPT-preserving:%
\begin{equation}
\mathcal{M}_{AB_{1}B_{2}\rightarrow A^{\prime}B_{1}^{\prime}B_{2}^{\prime}%
}\circ\mathcal{F}_{B_{1}B_{2}}=\mathcal{F}_{B_{1}^{\prime}B_{2}^{\prime}}%
\circ\mathcal{M}_{AB_{1}B_{2}\rightarrow A^{\prime}B_{1}^{\prime}B_{2}%
^{\prime}}, \label{eq:perm-cov-2pptext}%
\end{equation}%
\begin{align}
\operatorname{Tr}_{B_{2^{\prime}}}\circ\mathcal{M}_{AB_{1}B_{2}\rightarrow
A^{\prime}B_{1}^{\prime}B_{2}^{\prime}}  &  =\mathcal{N}_{AB_{1}\rightarrow
A^{\prime}B_{1}^{\prime}}\otimes\operatorname{Tr}_{B_{2}}%
,\label{eq:no-sig-2pptext}\\
T_{B_{2}^{\prime}}\circ\mathcal{M}_{AB_{1}B_{2}\rightarrow A^{\prime}%
B_{1}^{\prime}B_{2}^{\prime}}\circ T_{B_{2}}  &  \in\operatorname*{CP}%
,\label{eq:pptb2-2pptext}\\
T_{A^{\prime}}\circ\mathcal{M}_{AB_{1}B_{2}\rightarrow A^{\prime}B_{1}%
^{\prime}B_{2}^{\prime}}\circ T_{A}  &  \in\operatorname*{CP}.
\label{eq:ppta-2pptext}%
\end{align}
It is redundant to demand further that the following constraints hold:%
\begin{align}
T_{B_{1}^{\prime}}\circ\mathcal{M}_{AB_{1}B_{2}\rightarrow A^{\prime}%
B_{1}^{\prime}B_{2}^{\prime}}\circ T_{B_{1}}  &  \in\operatorname*{CP}%
,\label{eq:pptb1-2pptext}\\
T_{A^{\prime}B_{1}^{\prime}}\circ\mathcal{M}_{AB_{1}B_{2}\rightarrow
A^{\prime}B_{1}^{\prime}B_{2}^{\prime}}\circ T_{AB_{1}}  &  \in
\operatorname*{CP},\\
T_{A^{\prime}B_{2}^{\prime}}\circ\mathcal{M}_{AB_{1}B_{2}\rightarrow
A^{\prime}B_{1}^{\prime}B_{2}^{\prime}}\circ T_{AB_{2}}  &  \in
\operatorname*{CP},\\
T_{B_{1}^{\prime}B_{2}^{\prime}}\circ\mathcal{M}_{AB_{1}B_{2}\rightarrow
A^{\prime}B_{1}^{\prime}B_{2}^{\prime}}\circ T_{B_{1}B_{2}}  &  \in
\operatorname*{CP},
\end{align}
because they follow as a consequence of \eqref{eq:pptb2-2pptext} and
\eqref{eq:perm-cov-2pptext}, \eqref{eq:pptb2-2pptext},
\eqref{eq:pptb1-2pptext}, and \eqref{eq:ppta-2pptext}, respectively. A
bipartite channel $\mathcal{N}_{AB\rightarrow A^{\prime}B^{\prime}}$ is
two-PPT-extendible if and only if its Choi operator $\Gamma_{ABA^{\prime
}B^{\prime}}^{\mathcal{N}}$ is such that there exists a Hermitian operator
$\Gamma_{AB_{1}B_{2}A^{\prime}B_{1}^{\prime}B_{2}^{\prime}}^{\mathcal{M}}$
satisfying%
\begin{align}
(\mathcal{F}_{B_{1}B_{2}}\otimes\mathcal{F}_{B_{1}^{\prime}B_{2}^{\prime}%
})(\Gamma_{AB_{1}B_{2}A^{\prime}B_{1}^{\prime}B_{2}^{\prime}}^{\mathcal{M}})
&  =\Gamma_{AB_{1}B_{2}A^{\prime}B_{1}^{\prime}B_{2}^{\prime}}^{\mathcal{M}%
},\label{eq:SDP-constr-2pe-1}\\
\operatorname{Tr}_{B_{2}^{\prime}}[\Gamma_{AB_{1}B_{2}A^{\prime}B_{1}^{\prime
}B_{2}^{\prime}}^{\mathcal{M}}]  &  =\Gamma_{AB_{1}A^{\prime}B_{1}^{\prime}%
}^{\mathcal{N}}\otimes I_{B_{2}},\label{eq:SDP-constr-2pe-2}\\
T_{B_{2}B_{2}^{\prime}}(\Gamma_{AB_{1}B_{2}A^{\prime}B_{1}^{\prime}%
B_{2}^{\prime}}^{\mathcal{M}})  &  \geq0,\\
T_{AA^{\prime}}(\Gamma_{AB_{1}B_{2}A^{\prime}B_{1}^{\prime}B_{2}^{\prime}%
}^{\mathcal{M}})  &  \geq0,\\
\Gamma_{AB_{1}B_{2}A^{\prime}B_{1}^{\prime}B_{2}^{\prime}}^{\mathcal{M}}  &
\geq0,\\
\operatorname{Tr}_{A^{\prime}B_{1}^{\prime}B_{2}^{\prime}}[\Gamma_{AB_{1}%
B_{2}A^{\prime}B_{1}^{\prime}B_{2}^{\prime}}^{\mathcal{M}}]  &  =I_{AB_{1}%
B_{2}}. \label{eq:SDP-constr-2pe-last}%
\end{align}

Observe that a bipartite channel $\mathcal{N}_{AB\rightarrow A^{\prime
}B^{\prime}}$ is C-PPT-P if it is two-PPT-extendible. This follows from
\eqref{eq:no-sig-2pptext} and \eqref{eq:ppta-2pptext}.

Every one-way\ LOCC channel of the form in \eqref{eq:1WL-def}\ is
two-PPT-extendible by considering the following extension channel:%
\begin{equation}
\sum_{x}\mathcal{E}_{A\rightarrow A^{\prime}}^{x}\otimes\mathcal{D}%
_{B_{1}\rightarrow B_{1}^{\prime}}^{x}\otimes\mathcal{D}_{B_{2}\rightarrow
B_{2}^{\prime}}^{x},
\end{equation}
which manifestly satisfies the constraints in \eqref{eq:perm-cov-2pptext}--\eqref{eq:ppta-2pptext}. We
thus employ two-PPT-extendible channels as a semi-definite relaxation of the
set of one-way LOCC channels.

\subsubsection{Two-PPT-extendible non-signaling channels}

We can add a further constraint to the channels discussed in the previous
section, i.e., a non-signaling constraint of the following form:%
\begin{equation}
\operatorname{Tr}_{A^{\prime}}\circ\mathcal{M}_{AB_{1}B_{2}\rightarrow
A^{\prime}B_{1}^{\prime}B_{2}^{\prime}}=\operatorname{Tr}_{A^{\prime}}%
\circ\mathcal{M}_{AB_{1}B_{2}\rightarrow A^{\prime}B_{1}^{\prime}B_{2}%
^{\prime}}\circ\mathcal{R}_{A}^{\pi}, \label{eq:def-A-to-Bs-nonsig}%
\end{equation}
which ensures that the extension channel $\mathcal{M}_{AB_{1}B_{2}\rightarrow
A^{\prime}B_{1}^{\prime}B_{2}^{\prime}}$ is also non-signaling from Alice to
both Bobs. The constraint on the Choi operator $\Gamma_{AB_{1}B_{2}A^{\prime}B_{1}^{\prime
}B_{2}^{\prime}}^{\mathcal{M}}$ is as follows:%
\begin{equation}
\operatorname{Tr}_{A^{\prime}}[\Gamma_{AB_{1}B_{2}A^{\prime}B_{1}^{\prime
}B_{2}^{\prime}}^{\mathcal{M}}]=\frac{1}{d_{A}}\operatorname{Tr}_{A^{\prime}%
A}[\Gamma_{AB_{1}B_{2}A^{\prime}B_{1}^{\prime}B_{2}^{\prime}}^{\mathcal{M}%
}]\otimes I_{A}.
\end{equation}

Every LOCR\ channel of the form in \eqref{eq:LOCR-ch-def}\ is
two-PPT-extendible non-signaling, as is evident by choosing the following
extension channel:%
\begin{equation}
\sum_{y}p(y)\mathcal{E}_{A\rightarrow A^{\prime}}^{y}\otimes\mathcal{D}%
_{B_{1}\rightarrow B_{1}^{\prime}}^{y}\otimes\mathcal{D}_{B_{2}\rightarrow
B_{2}^{\prime}}^{y}.
\end{equation}
We thus employ two-PPT-extendible non-signaling channels as a semi-definite
relaxation of the set of LOCR channels, and we note here that \cite{BBFS21} previously used this approach.

Let us state explicitly here that extensions of one-way LOCC channels of the
form in \eqref{eq:1WL-def} generally do not satisfy the non-signaling
constraint in \eqref{eq:def-A-to-Bs-nonsig}, due to the fact that each map
$\mathcal{E}_{A\rightarrow A^{\prime}}^{x}$ in \eqref{eq:1WL-def} is not
necessarily trace preserving.

\section{Quantifying the performance of approximate teleportation}\label{sec:quantify-approx-tele}

In approximate teleportation, Alice and Bob are allowed to make use of a fixed
bipartite state $\rho_{\hat{A}\hat{B}}$ and an arbitrary one-way LOCC channel
$\mathcal{L}_{A\hat{A}\hat{B}\rightarrow B}$, with the goal of simulating an
identity channel of dimension $d$. To be clear, the one-way LOCC channel
$\mathcal{L}_{A\hat{A}\hat{B}\rightarrow B}$ has the following form:%
\begin{equation}
\mathcal{L}_{A\hat{A}\hat{B}\rightarrow B}(\omega_{A\hat{A}\hat{B}})=\sum
_{x}\mathcal{D}_{\hat{B}\rightarrow B}^{x}(\operatorname{Tr}_{A\hat{A}%
}[\Lambda_{A\hat{A}}^{x}\omega_{A\hat{A}\hat{B}}]),
\end{equation}
where $\{\Lambda_{A\hat{A}}^{x}\}_{x}$ is a positive operator-valued measure
(satisfying $\Lambda_{A\hat{A}}^{x}\geq0$ for all $x$ and $\sum_{x}%
\Lambda_{A\hat{A}}^{x}=I_{A\hat{A}}$) and $\{\mathcal{D}_{\hat{B}\rightarrow
B}^{x}\}_{x}$ is a set of quantum channels. We assume that the dimension of
the systems $\hat{A}\hat{B}$ is finite, and we write the dimension of $\hat
{A}$ as $d_{\hat{A}}$ and the dimension of $\hat{B}$ as $d_{\hat{B}}$. The
approximate teleportation protocol realizes the following simulation channel
$\widetilde{\mathcal{S}}_{A\rightarrow B}$ \cite[Eq.~(11)]{PhysRevA.60.1888}:%
\begin{equation}
\widetilde{\mathcal{S}}_{A\rightarrow B}(\omega_{A})\coloneqq\mathcal{L}%
_{A\hat{A}\hat{B}\rightarrow B}(\omega_{A}\otimes\rho_{\hat{A}\hat{B}}).
\label{eq:sim-channel}%
\end{equation}
In the following subsections, we discuss two seemingly different ways of
quantifying the simulation error.

\subsection{Quantifying simulation error with normalized diamond distance}

\label{sec:quant-error-dia-dist}The standard metric for quantifying the
distance between quantum channels is the normalized diamond distance
\cite{Kit97}. See the related paper \cite{SW21}\ for discussions of the
operational significance of the diamond distance (see also \cite{KW20book}).
For channels $\mathcal{N}_{C\rightarrow D}$ and $\widetilde{\mathcal{N}%
}_{C\rightarrow D}$, the diamond distance is defined as%
\begin{multline}
\left\Vert \mathcal{N}_{C\rightarrow D}-\widetilde{\mathcal{N}}_{C\rightarrow
D}\right\Vert _{\diamond}\\
\coloneqq\sup_{\rho_{RC}}\left\Vert \mathcal{N}_{C\rightarrow D}(\rho
_{RC})-\widetilde{\mathcal{N}}_{C\rightarrow D}(\rho_{RC})\right\Vert _{1},
\end{multline}
where the optimization is over every bipartite state $\rho_{RC}$ with the
reference system $R$ arbitrarily large. The following equality is well known (see, e.g., \cite{KW20book})
\begin{multline}
\left\Vert \mathcal{N}_{C\rightarrow D}-\widetilde{\mathcal{N}}_{C\rightarrow
D}\right\Vert _{\diamond}\label{eq:diamond-dist-reduction}\\
=\sup_{\psi_{RC}}\left\Vert \mathcal{N}_{C\rightarrow D}(\psi_{RC}%
)-\widetilde{\mathcal{N}}_{C\rightarrow D}(\psi_{RC})\right\Vert _{1},
\end{multline}
where the optimization is over every pure bipartite state $\psi_{RC}$ with the
reference system $R$ isomorphic to the channel input system $C$. The
normalized diamond distance is then given by%
\begin{equation}
\frac{1}{2}\left\Vert \mathcal{N}_{C\rightarrow D}-\widetilde{\mathcal{N}%
}_{C\rightarrow D}\right\Vert _{\diamond},
\end{equation}
so that the resulting error takes a value between zero and one.
The reduction in \eqref{eq:diamond-dist-reduction} implies that it is a
computationally tractable problem to calculate the diamond distance, and in
fact, one can do so by means of the following semi-definite program
\cite{Wat09}:%
\begin{equation}
\inf_{\lambda,Z_{RD}\geq0}\left\{
\begin{array}
[c]{c}%
\lambda:\lambda I_{R}\geq\operatorname{Tr}_{D}[Z_{RD}],\\
Z_{RD}\geq\Gamma_{RD}^{\mathcal{N}}-\Gamma_{RD}^{\widetilde{\mathcal{N}}}%
\end{array}
\right\}  , \label{eq:diamond-d-SDP}%
\end{equation}
where $\Gamma_{RD}^{\mathcal{N}}$ and $\Gamma_{RD}^{\widetilde{\mathcal{N}}}$
are the Choi operators of $\mathcal{N}_{C\rightarrow D}$ and $\widetilde
{\mathcal{N}}_{C\rightarrow D}$, respectively.

The simulation error when using a bipartite state $\rho_{\hat{A}\hat{B}}$ and
a one-way LOCC\ channel to simulate an identity channel $\operatorname{id}%
_{A\rightarrow B}^{d}$ of dimension $d$ is given by%
\begin{equation}
e_{\operatorname{1WL}}(\rho_{\hat{A}\hat{B}},\mathcal{L}_{A\hat{A}\hat
{B}\rightarrow B})\coloneqq\frac{1}{2}\left\Vert \operatorname{id}%
_{A\rightarrow B}^{d}-\widetilde{\mathcal{S}}_{A\rightarrow B}\right\Vert
_{\diamond},
\end{equation}
where the simulation channel $\widetilde{\mathcal{S}}_{A\rightarrow B}$ is
defined in \eqref{eq:sim-channel}. Employing
\eqref{eq:diamond-dist-reduction}, we find that%
\begin{multline}
e_{\operatorname{1WL}}(\rho_{\hat{A}\hat{B}},\mathcal{L}_{A\hat{A}\hat
{B}\rightarrow B})\\
=\sup_{\psi_{RA}}\frac{1}{2}\left\Vert \psi_{RA}-\mathcal{L}_{A\hat{A}\hat
{B}\rightarrow B}(\psi_{RA}\otimes\rho_{\hat{A}\hat{B}})\right\Vert _{1},
\end{multline}
with $\psi_{RA}$ a pure bipartite state such that system $R$ is isomorphic to
system $A$. We are interested in the minimum possible simulation error, and so
we define%
\begin{equation}
e_{\operatorname{1WL}}(\rho_{\hat{A}\hat{B}})\coloneqq\inf_{\mathcal{L\in
}\operatorname{1WL}}e_{\operatorname{1WL}}(\rho_{\hat{A}\hat{B}}%
,\mathcal{L}_{A\hat{A}\hat{B}\rightarrow B}), \label{eq:diamond-err-sim-tele}%
\end{equation}
where we recall that $\operatorname{1WL}$ denotes the set of one-way LOCC
channels. The error $e_{\operatorname{1WL}}(\rho_{\hat{A}\hat{B}})$ is one
kind of simulation error on which we are interested in obtaining
computationally efficient lower bounds. Indeed, it is a computationally
difficult problem to calculate $e_{\operatorname{1WL}}(\rho_{\hat{A}\hat{B}})$
directly, and so we instead resort to calculating lower bounds.

\subsection{Quantifying simulation error with channel infidelity}

Another measure of the simulation error is by means of the channel infidelity. Let us recall that the fidelity of states $\omega$ and $\tau$ is
defined as \cite{Uhl76}
\begin{equation}
F(\omega,\tau)\coloneqq\left\Vert \sqrt{\omega}\sqrt{\tau}\right\Vert _{1}%
^{2},
\end{equation}
where $\left\Vert X\right\Vert _{1}\coloneqq\operatorname{Tr}[\sqrt{X^{\dag}%
X}]$. From this measure, we can define a channel fidelity measure for channels
$\mathcal{N}_{C\rightarrow D}$ and $\widetilde{\mathcal{N}}_{C\rightarrow D}$
as follows:%
\begin{equation}
F(\mathcal{N},\widetilde{\mathcal{N}})\coloneqq\inf_{\rho_{RC}}F(\mathcal{N}%
_{C\rightarrow D}(\rho_{RC}),\widetilde{\mathcal{N}}_{C\rightarrow D}%
(\rho_{RC})),
\end{equation}
where the optimization is over every bipartite state $\rho_{RC}$ with the
reference system $R$ arbitrarily large. Similar to the diamond distance, it
suffices to optimize the channel fidelity over every pure bipartite state $\psi_{RC}$ with
reference system $R$ isomorphic to the channel input system $C$ (see, e.g., \cite{KW20book}):%
\begin{equation}
F(\mathcal{N},\widetilde{\mathcal{N}})\coloneqq\inf_{\psi_{RC}}F(\mathcal{N}%
_{C\rightarrow D}(\psi_{RC}),\widetilde{\mathcal{N}}_{C\rightarrow D}%
(\psi_{RC})). \label{eq:ch-fid-simplified-1}%
\end{equation}
The square root of the channel fidelity can be calculated by means of the
following semi-definite program \cite{Yuan2017,KW20}:%
\begin{equation}
\sqrt{F}(\mathcal{N},\widetilde{\mathcal{N}})=\sup_{\lambda\geq0,Q_{RD}%
}\lambda\label{eq:SDP-ch-fid-1}%
\end{equation}
subject to%
\begin{align}
\lambda I_{R}  &  \leq\operatorname{Re}[\operatorname{Tr}_{D}[Q_{RD}]],\\%
\begin{bmatrix}
\Gamma_{RD}^{\widetilde{\mathcal{N}}} & Q_{RD}^{\dag}\\
Q_{RD} & \Gamma_{RD}^{\mathcal{N}}%
\end{bmatrix}
&  \geq0. \label{eq:SDP-ch-fid-3}%
\end{align}

An alternative method for quantifying the simulation error is to employ the
channel infidelity, defined as $1-F(\mathcal{N},\widetilde{\mathcal{N}})$.
Indeed, we can measure the simulation error as follows, when using a bipartite
state $\rho_{\hat{A}\hat{B}}$ and a one-way LOCC channel~$\mathcal{L}%
_{A\hat{A}\hat{B}\rightarrow B}$:%
\begin{equation}
e_{\text{$\operatorname{1WL}$}}^{F}(\rho_{\hat{A}\hat{B}},\mathcal{L}%
_{A\hat{A}\hat{B}\rightarrow B})\coloneqq1-F(\operatorname{id}_{A\rightarrow
B}^{d},\widetilde{\mathcal{S}}_{A\rightarrow B}),
\end{equation}
where the simulation channel $\widetilde{\mathcal{S}}_{A\rightarrow B}$ is
defined in \eqref{eq:sim-channel}. By employing
\eqref{eq:ch-fid-simplified-1}, we find that%
\begin{multline}
e_{\text{$\operatorname{1WL}$}}^{F}(\rho_{\hat{A}\hat{B}},\mathcal{L}%
_{A\hat{A}\hat{B}\rightarrow B})=\\
\sup_{\psi_{RA}}\left[  1-F(\psi_{RA},\mathcal{L}_{A\hat{A}\hat{B}\rightarrow
B}(\psi_{RA}\otimes\rho_{\hat{A}\hat{B}}))\right]  ,
\end{multline}
where the optimization is over every pure bipartite state $\psi_{RA}$ with
system $R$ isomorphic to the channel input system $A$. Since we are interested
in the minimum possible simulation error, we define%
\begin{equation}
e_{\text{$\operatorname{1WL}$}}^{F}(\rho_{\hat{A}\hat{B}})\coloneqq\inf
_{\mathcal{L}\in\text{$\operatorname{1WL}$}}e_{\text{$\operatorname{1WL}$}%
}^{F}(\rho_{\hat{A}\hat{B}},\mathcal{L}_{A\hat{A}\hat{B}\rightarrow B}).
\label{eq:fid-err-sim-tele}%
\end{equation}
This is the other kind of simulation error on which we are interested in
obtaining lower bounds.

\subsection{One-way LOCC\ simulation of general point-to-point channels}

Beyond the case of simulating an ideal channel, more generally we can consider
using a resource state $\rho_{\hat{A}\hat{B}}$ along with a one-way LOCC
channel $\mathcal{L}_{A\hat{A}\hat{B}\rightarrow B}$ in order to simulate a
general channel $\mathcal{N}_{A\rightarrow B}$. In this case, the simulation
channel has the following form:%
\begin{equation}
\widetilde{\mathcal{N}}_{A\rightarrow B}(\omega_{A})\coloneqq\mathcal{L}%
_{A\hat{A}\hat{B}\rightarrow B}(\omega_{A}\otimes\rho_{\hat{A}\hat{B}}).
\end{equation}
The simulation error when employing a specific one-way LOCC\ channel
$\mathcal{L}_{A\hat{A}\hat{B}\rightarrow B}$ is%
\begin{equation}
e_{\text{$\operatorname{1WL}$}}(\mathcal{N}_{A\rightarrow B},\rho_{\hat{A}%
\hat{B}},\mathcal{L}_{A\hat{A}\hat{B}\rightarrow B})\coloneqq\frac{1}%
{2}\left\Vert \mathcal{N}-\widetilde{\mathcal{N}}\right\Vert _{\diamond},
\end{equation}
and the simulation error minimized over all possible one-way LOCC channels is%
\begin{multline}
e_{\text{$\operatorname{1WL}$}}(\mathcal{N}_{A\rightarrow B},\rho_{\hat{A}%
\hat{B}})\coloneqq\label{eq:sim-err-diamond-1WL-gen}\\
\inf_{\mathcal{L}\in\text{$\operatorname{1WL}$}}e_{\text{$\operatorname{1WL}$%
}}(\mathcal{N}_{A\rightarrow B},\rho_{\hat{A}\hat{B}},\mathcal{L}_{A\hat
{A}\hat{B}\rightarrow B}).
\end{multline}
We note here that this is a special case of the simulation problem considered
in \cite[Section II]{FWTB18}.

Alternatively, we can employ the infidelity to quantify the simulation error
as follows:%
\begin{equation}
e_{\text{$\operatorname{1WL}$}}^{F}(\mathcal{N}_{A\rightarrow B},\rho_{\hat
{A}\hat{B}},\mathcal{L}_{A\hat{A}\hat{B}\rightarrow B}%
)\coloneqq 1-F(\mathcal{N},\widetilde{\mathcal{N}}),
\end{equation}%
\begin{multline}
e_{\text{$\operatorname{1WL}$}}^{F}(\mathcal{N}_{A\rightarrow B},\rho_{\hat
{A}\hat{B}})\coloneqq\\
\inf_{\mathcal{L}\in\text{$\operatorname{1WL}$}}e_{\text{$\operatorname{1WL}$%
}}^{F}(\mathcal{N}_{A\rightarrow B},\rho_{\hat{A}\hat{B}},\mathcal{L}%
_{A\hat{A}\hat{B}\rightarrow B}).
\end{multline}

\subsection{Equality of simulation errors when simulating the identity
channel}

Proposition~\ref{prop:sim-errs-equal-1WL}\ below states that the following
equality holds for every bipartite state $\rho_{\hat{A}\hat{B}}$:
\begin{equation}
e_{\text{$\operatorname{1WL}$}}(\rho_{\hat{A}\hat{B}}%
)=e_{\text{$\operatorname{1WL}$}}^{F}(\rho_{\hat{A}\hat{B}}).
\end{equation}
We
provide an explicit proof in Appendix~\ref{app:proof-errors-equal}.
This equality follows as a consequence of the unitary covariance symmetry of the identity
channel being simulated and the fact that an optimal simulating channel
should respect the same symmetries. Indeed, consider that the identity channel $\operatorname{id}_{A\rightarrow
B}^{d}$ possesses the following unitary covariance symmetry:%
\begin{equation}
\operatorname{id}_{A\rightarrow B}^{d}\circ\mathcal{U}_{A}=\mathcal{U}%
_{B}\circ\operatorname{id}_{A\rightarrow B}^{d},
\label{eq:identity-ch-symmetries}%
\end{equation}
which holds for every unitary channel $\mathcal{U}(\cdot)=U(\cdot)U^{\dag}$,
with $U$ a unitary operator. As a consequence, the theory simplifies in the sense that we need only focus on bounding the simulation error with respect to a single measure. We note here that a similar result was found in \cite{SW21} for the case of simulating the bipartite swap channel by means of LOCC. 

\begin{proposition}
\label{prop:sim-errs-equal-1WL}The optimization problems in
\eqref{eq:diamond-err-sim-tele} and \eqref{eq:fid-err-sim-tele}, for the error
in simulating the identity channel $\operatorname{id}_{A\rightarrow B}^{d}$,
simplify as follows:%
\begin{align}
e_{\text{$\operatorname{1WL}$}}(\rho_{\hat{A}\hat{B}})  &
=e_{\text{$\operatorname{1WL}$}}^{F}(\rho_{\hat{A}\hat{B}})\\
&  =1-\sup_{K_{\hat{A}\hat{B}},L_{\hat{A}\hat{B}}\geq0}\operatorname{Tr}%
[K_{\hat{A}\hat{B}}\rho_{\hat{A}\hat{B}}],
\end{align}
subject to $K_{\hat{A}\hat{B}}+L_{\hat{A}\hat{B}}=I_{\hat{A}\hat{B}}$ and the
following channel $\mathcal{L}_{A\hat{A}\hat{B}\rightarrow B}$ being a one-way
LOCC\ channel:%
\begin{multline}
\mathcal{L}_{A\hat{A}\hat{B}\rightarrow B}(\omega_{A\hat{A}\hat{B}%
})=\operatorname{id}_{A\rightarrow B}^{d}(\operatorname{Tr}_{\hat{A}\hat{B}%
}[K_{\hat{A}\hat{B}}\omega_{A\hat{A}\hat{B}}])\\
+\mathcal{D}_{A\rightarrow B}(\operatorname{Tr}_{\hat{A}\hat{B}}[L_{\hat
{A}\hat{B}}\omega_{A\hat{A}\hat{B}}]),
\end{multline}
where $\mathcal{D}_{A\rightarrow B}$ is the following channel:%
\begin{equation}\label{eq:randomizing channel}
\mathcal{D}_{A\rightarrow B}(\sigma_{A})\coloneqq\frac{1}{d^{2}-1}%
\sum_{(z,x)\neq(0,0)}W^{z,x}\sigma(W^{z,x})^{\dag},
\end{equation}
and $W^{z,x}$ is defined in \eqref{eq:HW-ops}. The constraint that
$\mathcal{L}_{A\hat{A}\hat{B}\rightarrow B}$ is a one-way LOCC channel is
equivalent to the existence of a positive operator-valued measure (POVM) $\{\Lambda_{B\hat{A}}^{x}\}_{x}$ and a
set $\{\mathcal{D}_{\hat{B}\rightarrow B}^{x}\}_{x}$\ of channels such that%
\begin{equation}
K_{\hat{A}\hat{B}}=\frac{1}{d^{2}}\sum_{x}\operatorname{Tr}_{B}[T_{B}%
(\Lambda_{B\hat{A}}^{x})\Gamma_{\hat{B}B}^{\mathcal{D}^{x}}],
\label{eq:one-way-LOCC-K-form}%
\end{equation}
where $\Gamma_{\hat{B}B}^{\mathcal{D}^{x}}$ is the Choi operator of the
channel $\mathcal{D}_{\hat{B}\rightarrow B}^{x}$.
\end{proposition}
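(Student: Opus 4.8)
The plan is to collapse the optimization over all one-way LOCC channels down to the two-branch family displayed in the proposition, using the full unitary covariance \eqref{eq:identity-ch-symmetries} of the identity channel, and then to read off the scalar figure of merit $\operatorname{Tr}[K\rho]$ together with the one-way LOCC constraint. First I would symmetrize. Given any one-way LOCC channel $\mathcal{L}_{A\hat{A}\hat{B}\rightarrow B}$, I would form the twirl $\bar{\mathcal{L}}(\cdot)\coloneqq\int dU\,\mathcal{U}_{B}^{\dag}\circ\mathcal{L}\circ(\mathcal{U}_{A}\otimes\operatorname{id}_{\hat{A}\hat{B}})(\cdot)$ with respect to the Haar measure $dU$. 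Since Alice may apply $\mathcal{U}_{A}$ before her instrument and Bob may apply $\mathcal{U}_{B}^{\dag}$ after his correction, with the value of $U$ carried by the classical message, $\bar{\mathcal{L}}$ is still one-way LOCC and still uses the same resource state $\rho_{\hat{A}\hat{B}}$. Because $\operatorname{id}_{A\rightarrow B}^{d}$ is covariant, prepending $\mathcal{U}_{A}$ and appending $\mathcal{U}_{B}^{\dag}$ leaves both the diamond distance and the channel fidelity to the identity invariant, so convexity of the diamond distance (and joint concavity of the fidelity) guarantees that $\bar{\mathcal{L}}$ does no worse than $\mathcal{L}$. Hence both optima \eqref{eq:diamond-err-sim-tele} and \eqref{eq:fid-err-sim-tele} are unchanged if I restrict to covariant channels.

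Next I would determine the structure of a covariant channel via Schur's lemma. Regarding the Choi operator of $\bar{\mathcal{L}}$ as an operator on the logical pair $RB$ (with $R\cong A$, $d_{R}=d_{B}=d$) whose coefficients live in the algebra of the spectator systems $\hat{A}\hat{B}$, covariance forces it to commute with $\bar{U}_{R}\otimes U_{B}$ for all $U$. The representation $\bar{U}\otimes U$ on $RB$ decomposes into the trivial representation (spanned by the maximally entangled vector) and the adjoint representation, which are inequivalent, so its commutant is spanned by $P_{0}\coloneqq|\Phi\rangle\!\langle\Phi|_{RB}$ and $P_{1}\coloneqq I_{RB}-P_{0}$. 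Thus the Choi operator takes the form $P_{0}\otimes K_{\hat{A}\hat{B}}+P_{1}\otimes L_{\hat{A}\hat{B}}$ with $K,L\geq0$, and there are no cross terms because the two representations are inequivalent. Since $P_{0}$ is proportional to the Choi operator of $\operatorname{id}_{A\rightarrow B}^{d}$ and $P_{1}$ is proportional to that of $\mathcal{D}_{A\rightarrow B}$ from \eqref{eq:randomizing channel} (using that $(\operatorname{id}\otimes\mathcal{D})(\Phi)$ is supported on $I-|\Phi\rangle\!\langle\Phi|$), this is precisely the two-branch channel of the proposition, and the trace-preservation condition $\operatorname{Tr}_{B}[\Gamma^{\bar{\mathcal{L}}}]=I$ reduces to $K+L=I_{\hat{A}\hat{B}}$.

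I would then evaluate the error. Substituting $\omega_{A}\otimes\rho_{\hat{A}\hat{B}}$ into this channel gives $\widetilde{\mathcal{S}}_{A\rightarrow B}=\operatorname{Tr}[K\rho]\,\operatorname{id}_{A\rightarrow B}^{d}+\operatorname{Tr}[L\rho]\,\mathcal{D}_{A\rightarrow B}$, a depolarizing-type channel whose only free parameter is $p\coloneqq\operatorname{Tr}[K\rho]$. For such a channel the entanglement fidelity to the identity equals $p$, and by covariance the worst-case channel fidelity coincides with the entanglement fidelity and with the diamond distance, so all measures produce the simulation error $1-p$; this simultaneously establishes the equality $e_{\operatorname{1WL}}(\rho_{\hat{A}\hat{B}})=e_{\operatorname{1WL}}^{F}(\rho_{\hat{A}\hat{B}})$. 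Minimizing the error is therefore maximizing $\operatorname{Tr}[K\rho]$ subject to $K+L=I$, $K,L\geq0$, and $\mathcal{L}\in\operatorname{1WL}$, which is the stated formula.

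Finally I would translate the membership $\mathcal{L}\in\operatorname{1WL}$ into the explicit expression \eqref{eq:one-way-LOCC-K-form}. Writing a general one-way LOCC channel as $\mathcal{L}(\omega)=\sum_{x}\mathcal{D}_{\hat{B}\rightarrow B}^{x}(\operatorname{Tr}_{A\hat{A}}[\Lambda_{A\hat{A}}^{x}\,\omega])$ with $\{\Lambda_{A\hat{A}}^{x}\}_{x}$ a POVM, I would extract the identity-branch operator through the entanglement-fidelity functional $\operatorname{Tr}[K\rho]=\langle\Phi_{RB}|(\operatorname{id}_{R}\otimes\mathcal{L})(\Phi_{RA}\otimes\rho_{\hat{A}\hat{B}})|\Phi_{RB}\rangle$, which is linear in $\rho$ and reproduces $p$ for the covariant channel, hence pins down $K$. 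Evaluating it with the standard transpose-trick identities — the input-side relation $\operatorname{Tr}_{A}[\Lambda_{A\hat{A}}^{x}(\Phi_{RA}\otimes\cdot)]=\tfrac{1}{d}T_{R}(\Lambda_{R\hat{A}}^{x})$ and the output-side Choi identity expressing $\langle\Phi_{RB}|(\operatorname{id}_{R}\otimes\mathcal{D}^{x})(\cdot)|\Phi_{RB}\rangle$ through $\Gamma_{\hat{B}B}^{\mathcal{D}^{x}}$ — and relabeling the reference $R$ as $B$, collects the two factors of $1/d$ into the prefactor $1/d^{2}$ and leaves the partial transpose $T_{B}$, yielding \eqref{eq:one-way-LOCC-K-form}. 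I expect the main obstacle to lie here, in the bookkeeping of which system carries the partial transpose, the placement of the relabeling $R\to B$, and the accumulation of the normalization $1/d^{2}$; the conceptual crux, by contrast, is the Schur-lemma step, where one must confirm that inequivalence of the trivial and adjoint representations rules out cross terms, so that the covariant channel is genuinely a two-outcome mixture of $\operatorname{id}$ and $\mathcal{D}$.
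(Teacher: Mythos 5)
Your proposal is correct and its backbone coincides with the paper's proof: twirl the one-way LOCC channel using the covariance \eqref{eq:identity-ch-symmetries}, observe that the twirl is itself implementable by one-way LOCC, invoke convexity/concavity to restrict to covariant channels, identify the two-dimensional commutant of $\{\overline{U}\otimes U\}$ (your Schur-lemma argument is exactly the content of the bilateral-twirl formula the paper cites from \cite{W89,Horodecki99,Watrous2018}), and then translate the one-way LOCC membership into \eqref{eq:one-way-LOCC-K-form} via the transpose trick. Where you genuinely deviate is the final evaluation of the two error measures. The paper runs both through their SDP formulations: the diamond-distance part mirrors yours, but the fidelity part occupies a full subsection in which the SDP variables $Q_{AB}$ and $L_{A\hat{A}\hat{B}B}$ are symmetrized, reduced to $2\times2$ matrix inequalities, and solved to give $\lambda=q_{1}=\sqrt{\operatorname{Tr}[K_{\hat{A}\hat{B}}\rho_{\hat{A}\hat{B}}]}$. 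You instead compute both metrics directly from the two-branch form: since the Choi operators of $\operatorname{id}_{A\rightarrow B}^{d}$ and $\mathcal{D}_{A\rightarrow B}$ are orthogonal positive operators, the positive part of the difference is $(1-p)\Gamma_{AB}$ with $p=\operatorname{Tr}[K_{\hat{A}\hat{B}}\rho_{\hat{A}\hat{B}}]$, giving diamond error $1-p$; and for pure $\psi_{RA}$,
\begin{equation}
F\bigl(\psi_{RA},\,p\,\psi_{RA}+(1-p)(\operatorname{id}_{R}\otimes\mathcal{D})(\psi_{RA})\bigr)=p+(1-p)\langle\psi|(\operatorname{id}_{R}\otimes\mathcal{D})(\psi_{RA})|\psi\rangle\geq p,
\end{equation}
with equality at $\Phi_{RA}$, giving infidelity $1-p$. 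This is a legitimate and noticeably shorter route to the equality $e_{\operatorname{1WL}}=e_{\operatorname{1WL}}^{F}$, at the cost of generality: the paper's SDP-symmetrization template is what gets reused for the two-extendible and superchannel propositions, where no closed-form two-branch evaluation is available. One bookkeeping error to fix: with your parametrization $P_{0}\otimes K_{\hat{A}\hat{B}}+P_{1}\otimes L_{\hat{A}\hat{B}}$, where $P_{0}=\Phi_{AB}$ and $P_{1}=I_{AB}-\Phi_{AB}$, trace preservation gives $\tfrac{1}{d}K_{\hat{A}\hat{B}}+\tfrac{d^{2}-1}{d}L_{\hat{A}\hat{B}}=I_{\hat{A}\hat{B}}$, not $K_{\hat{A}\hat{B}}+L_{\hat{A}\hat{B}}=I_{\hat{A}\hat{B}}$; to land on the proposition's normalization you must weight the branches as $\Gamma_{AB}\otimes K_{\hat{A}\hat{B}}+\frac{dI_{AB}-\Gamma_{AB}}{d^{2}-1}\otimes L_{\hat{A}\hat{B}}$, as the paper does after rescaling, and also absorb the partial transposes $T_{\hat{A}\hat{B}}(K_{\hat{A}\hat{B}})\rightarrow K_{\hat{A}\hat{B}}$, which leaves positivity, completeness, and the one-way LOCC form invariant.
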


\begin{proof}
See Appendix~\ref{app:proof-errors-equal}.
\end{proof}

\section{SDP lower bounds on the performance of approximate teleportation based on two-PPT-extendibility}\label{sec: IV}

\subsection{SDP lower bound on the error in one-way LOCC\ simulation of a
channel}\label{sec_SDP_lower_bound_tel}

It is difficult to compute the simulation error $e_{\text{$\operatorname{1WL}%
$}}(\mathcal{N}_{A\rightarrow B},\rho_{\hat{A}\hat{B}})$ defined in
\eqref{eq:sim-err-diamond-1WL-gen} because it is challenging to optimize over
the set of one-way LOCC channels \cite{Gur04,Ghar10}. Here we enlarge the set
of one-way LOCC channels to the set of two-PPT-extendible bipartite channels,
with the goal of simplifying the calculation of the simulation error. The
result is that we provide a lower bound on the one-way LOCC\ simulation error
in terms of a semi-definite program, which follows because the set of
two-PPT-extendible channels is specified by semi-definite constraints, as
indicated in \eqref{eq:SDP-constr-2pe-1}--\eqref{eq:SDP-constr-2pe-last}.

In more detail, recall that a bipartite channel is two-PPT-extendible if the
conditions in \eqref{eq:perm-cov-2pptext}--\eqref{eq:ppta-2pptext} hold. As
indicated previously at the end of Section~\ref{sec:2pk-ch}, every one-way
LOCC\ channel is a two-extendible channel, and the containment is strict.
Thus,%
\begin{equation}
\text{$\operatorname{1WL}$}\subset\text{$\operatorname*{2PE}\ $},
\label{eq:1wl-in-2ext}%
\end{equation}
where $\operatorname*{2PE}$ denotes the set of two-PPT-extendible channels, as
defined in Section~\ref{sec:2pk-ch}.

We can then define the simulation error under two-PPT-extendible channels, as a semi-definite relaxation of \eqref{eq:sim-err-diamond-1WL-gen},   as
follows:%
\begin{equation}
e_{\operatorname{2PE}}(\mathcal{N}_{A\rightarrow B},\rho_{\hat{A}\hat{B}%
})\coloneqq\inf_{\mathcal{K}\in\operatorname{2PE}}\frac{1}%
{2}\left\Vert \mathcal{N}-\widetilde{\mathcal{N}}\right\Vert _{\diamond},
\label{eq:sim-err-2-ext-gen-ch}%
\end{equation}
where%
\begin{equation}
\widetilde{\mathcal{N}}_{A\rightarrow B}(\omega_{A})\coloneqq\mathcal{K}%
_{A\hat{A}\hat{B}\rightarrow B}(\omega_{A}\otimes\rho_{\hat{A}\hat{B}})
\end{equation}
and $\mathcal{K}_{A\hat{A}\hat{B}\rightarrow B}$ is a two-PPT-extendible
channel, meaning that there exists an extension channel $\mathcal{M}_{A\hat
{A}\hat{B}_{1}\hat{B}_{2}\rightarrow B_{1}B_{2}}$ satisfying the following
conditions:%
\begin{align}
\operatorname{Tr}_{B_{2}}\circ\mathcal{M}_{A\hat{A}\hat{B}_{1}\hat{B}%
_{2}\rightarrow B_{1}B_{2}}  &  =\mathcal{K}_{A\hat{A}\hat{B}_{1}\rightarrow
B_{1}}\otimes\operatorname{Tr}_{\hat{B}_{2}},\\
\mathcal{M}_{A\hat{A}\hat{B}_{1}\hat{B}_{2}\rightarrow B_{1}B_{2}}%
\circ\mathcal{F}_{\hat{B}_{1}\hat{B}_{2}}  &  =\mathcal{F}_{B_{1}B_{2}}%
\circ\mathcal{M}_{A\hat{A}\hat{B}_{1}\hat{B}_{2}\rightarrow B_{1}B_{2}%
},\label{eq:perm-cov-sim-ch}\\
T_{B_{2}}\circ\mathcal{M}_{A\hat{A}\hat{B}_{1}\hat{B}_{2}\rightarrow
B_{1}B_{2}}\circ T_{\hat{B}_{2}}  &  \in\text{$\operatorname*{CP}$%
},\label{eq:C-PPT-P-constr-1-B2}\\
\mathcal{M}_{A\hat{A}\hat{B}_{1}\hat{B}_{2}\rightarrow B_{1}B_{2}}\circ
T_{A\hat{A}}  &  \in\text{$\operatorname*{CP}$}. \label{eq:C-PPT-P-constr-2-A}%
\end{align}
As a consequence of the containment in \eqref{eq:1wl-in-2ext}, the following
bound holds%
\begin{equation}
e_{\operatorname{2PE}}(\mathcal{N}_{A\rightarrow B},\rho_{\hat{A}\hat{B}})\leq
e_{\text{$\operatorname*{1WL}$}}(\mathcal{N}_{A\rightarrow B},\rho_{\hat
{A}\hat{B}}). \label{eq:lower-bnd-two-ext}%
\end{equation}

We now show that the simulation error in \eqref{eq:sim-err-2-ext-gen-ch}\ can
be calculated by means of a semi-definite program.

\begin{proposition}
\label{prop:two-ext-sim-err-gen-ch}The simulation error in
\eqref{eq:sim-err-2-ext-gen-ch} can be calculated by means of the following
semi-definite program:%
\begin{equation}
e_{\operatorname{2PE}}(\mathcal{N}_{A\rightarrow B},\rho_{\hat{A}\hat{B}%
})=\inf_{\substack{\mu\geq0,Z_{AB}\geq0,\\M_{A\hat{A}\hat{B}_{1}B_{1}\hat
{B}_{2}B_{2}}\geq0}}\mu, \label{eq:obj-func-gen-two-ext-sim}%
\end{equation}
subject to%
\begin{align}
\mu I_{A}  &  \geq Z_{A},\label{eq:general-SDP-two-ext-sim-2}\\
Z_{AB}  &  \geq\Gamma_{AB}^{\mathcal{N}}-\operatorname{Tr}_{\hat{A}\hat{B}%
_{1}}\!\left[  T_{\hat{A}\hat{B}_{1}}(\rho_{\hat{A}\hat{B}_{1}})\frac
{M_{A\hat{A}\hat{B}_{1}B_{1}}}{d_{\hat{B}}}\right]  ,
\label{eq:general-SDP-two-ext-sim}%
\end{align}%
\begin{align}
\operatorname{Tr}_{B_{1}B_{2}}[M_{A\hat{A}\hat{B}_{1}B_{1}\hat{B}_{2}B_{2}}]
&  =I_{A\hat{A}\hat{B}_{1}\hat{B}_{2}},\label{eq:sim-two-ext-TP}\\
(\mathcal{F}_{\hat{B}_{1}\hat{B}_{2}}\otimes\mathcal{F}_{B_{1}B_{2}}%
)(M_{A\hat{A}\hat{B}_{1}B_{1}\hat{B}_{2}B_{2}})  &  =M_{A\hat{A}\hat{B}%
_{1}B_{1}\hat{B}_{2}B_{2}},\label{eq:sim-two-ext-2EXT-constr}\\
\operatorname{Tr}_{B_{2}}[M_{A\hat{A}\hat{B}_{1}B_{1}\hat{B}_{2}B_{2}}]  &
=\frac{M_{A\hat{A}\hat{B}_{1}B_{1}}}{d_{\hat{B}}}\otimes I_{\hat{B}_{2}%
},\label{eq:sim-two-ext-constr-ext}\\
T_{A\hat{A}}(M_{A\hat{A}\hat{B}_{1}B_{1}\hat{B}_{2}B_{2}})  &  \geq0,\label{eq:sim-two-ext-constr-ppt1}\\
T_{\hat{B}_{2}B_{2}}(M_{A\hat{A}\hat{B}_{1}B_{1}\hat{B}_{2}B_{2}})  &  \geq0.
\label{eq:sim-two-ext-constr-ppt2}
\end{align}

\end{proposition}

The objective function in \eqref{eq:obj-func-gen-two-ext-sim} and the first
two constraints in \eqref{eq:general-SDP-two-ext-sim-2} and
\eqref{eq:general-SDP-two-ext-sim} follow from the semi-definite program in
\eqref{eq:diamond-d-SDP} for the normalized diamond distance. The quantity
$\operatorname{Tr}_{\hat{A}\hat{B}_{1}}\!\left[  T_{\hat{A}\hat{B}_{1}}%
(\rho_{\hat{A}\hat{B}_{1}})\frac{M_{A\hat{A}\hat{B}_{1}B_{1}}}{d_{\hat{B}}%
}\right]  $ in \eqref{eq:general-SDP-two-ext-sim} is the Choi operator
corresponding to the composition of the appending channel and the simulation
channel $\mathcal{K}_{A\hat{A}\hat{B}_{1}\rightarrow B_{1}}$, with Choi
operator $\frac{M_{A\hat{A}\hat{B}_{1}B_{1}}}{d_{\hat{B}}}$, where
$\mathcal{K}_{A\hat{A}\hat{B}_{1}\rightarrow B_{1}}$ is the marginal channel
of $\mathcal{M}_{A\hat{A}\hat{B}_{1}\hat{B}_{2}\rightarrow B_{1}B_{2}}$,
defined as%
\begin{multline}
\mathcal{K}_{A\hat{A}\hat{B}_{1}\rightarrow B_{1}}(\omega_{A\hat{A}\hat{B}%
_{1}})\\
\coloneqq\operatorname{Tr}_{B_{2}}[\mathcal{M}_{A\hat{A}\hat{B}_{1}\hat{B}%
_{2}\rightarrow B_{1}B_{2}}(\omega_{A\hat{A}\hat{B}_{1}}\otimes\pi_{\hat
{B}_{2}})].
\end{multline}
The constraint in \eqref{eq:sim-two-ext-TP} forces $\mathcal{M}_{A\hat{A}%
\hat{B}_{1}\hat{B}_{2}\rightarrow B_{1}B_{2}}$ to be trace preserving, that in
\eqref{eq:sim-two-ext-2EXT-constr} forces $\mathcal{M}_{A\hat{A}\hat{B}%
_{1}\hat{B}_{2}\rightarrow B_{1}B_{2}}$ to be permutation covariant with
respect to the $B$ systems (see \eqref{eq:perm-cov-sim-ch}), and that in
\eqref{eq:sim-two-ext-constr-ext} forces $\mathcal{M}_{A\hat{A}\hat{B}_{1}%
\hat{B}_{2}\rightarrow B_{1}B_{2}}$ to be the extension of a marginal channel
$\mathcal{K}_{A\hat{A}\hat{B}_{1}\rightarrow B_{1}}$. The final two
PPT\ constraints are equivalent to the C-PPT-P\ constraints in
\eqref{eq:C-PPT-P-constr-1-B2} and \eqref{eq:C-PPT-P-constr-2-A}, respectively.

\subsection{SDP lower bound on the simulation error of approximate teleportation}

The semi-definite program in Proposition~\ref{prop:two-ext-sim-err-gen-ch}%
\ can be evaluated for an important case of interest, i.e., when
$\mathcal{N}_{A\rightarrow B}=\operatorname{id}_{A\rightarrow B}^{d}$. Recall from Section~\ref{sec:quantify-approx-tele} that this special case corresponds to approximate teleportation. The
semi-definite program in Proposition~\ref{prop:two-ext-sim-err-gen-ch} is efficiently computable with respect to the dimensions
of the systems $A$, $\hat{A}$, $\hat{B}$, and $B$. However, it is in our
interest to reduce the computational complexity of these optimization tasks even further for
this important case, and we can do so by exploiting the unitary covariance symmetry of
the identity channel, as stated in \eqref{eq:identity-ch-symmetries}.

In this section, we provide a semi-definite program for evaluating the
simulation error%
\begin{equation}
e_{\operatorname{2PE}}(\rho_{\hat{A}\hat{B}})\equiv e_{\operatorname{2PE}%
}(\operatorname{id}_{A\rightarrow B}^{d},\rho_{\hat{A}\hat{B}}),
\end{equation}
with reduced complexity, i.e., only polynomial in the dimensions $d_{\hat{A}}$
and $d_{\hat{B}}$ of the resource state $\rho_{\hat{A}\hat{B}}$. We provide a
proof of Proposition~\ref{prop:simplified-SDP-two-ext-sim-id}\ in
Appendix~\ref{app:SDP-id-sim-simplify-sym}.

\begin{proposition}
\label{prop:simplified-SDP-two-ext-sim-id}The semi-definite program in
Proposition~\ref{prop:two-ext-sim-err-gen-ch}, for the special case of simulating the identity
channel $\operatorname{id}_{A\rightarrow B}^{d}$, simplifies as follows for $d\geq 3$:
\begin{align}
&  e_{\operatorname{2PE}}(\rho_{\hat{A}\hat{B}})\nonumber\\
&  =e_{\operatorname{2PE}}^{F}(\rho_{\hat{A}\hat{B}})\\
&  =1-\sup_{\substack{M^{+},M^{-},M^{0}\geq0,\\M^{1},M^{2},M^{3}\in\operatorname{LinOp}%
}}\operatorname{Tr}\!\left[  T_{\hat{A}\hat{B}_{1}}(\rho_{\hat{A}\hat{B}_{1}%
})\frac{P_{\hat{A}\hat{B}_{1}\hat{B}_{2}}}{d_{\hat{B}}}\right]  ,\label{eq:obj-funct-simple-sdp-tele}
\end{align}
subject to%
\begin{equation}%
\begin{bmatrix}
M^{0}+M^{3} & M^{1}-iM^{2}\\
M^{1}+iM^{2} & M^{0}-M^{3}%
\end{bmatrix}
\geq0,
\label{eq:M-matrix-constr-tele}
\end{equation}%
\begin{align}
I_{\hat{A}\hat{B}_{1}\hat{B}_{2}}  &  =M_{\hat{A}\hat{B}_{1}\hat{B}_{2}}%
^{+}+M_{\hat{A}\hat{B}_{1}\hat{B}_{2}}^{-}+M_{\hat{A}\hat{B}_{1}\hat{B}_{2}%
}^{0}, \label{eq:TP-constr-tele}\\
M_{\hat{A}\hat{B}_{1}\hat{B}_{2}}^{i}  &  =\mathcal{F}_{\hat{B}_{1}\hat{B}%
_{2}}(M_{\hat{A}\hat{B}_{1}\hat{B}_{2}}^{i})\quad\forall i\in\left\{
+,-,0,1\right\}  , \label{eq:perm-cov-constr-tele-1}\\
M_{\hat{A}\hat{B}_{1}\hat{B}_{2}}^{j}  &  =-\mathcal{F}_{\hat{B}_{1}\hat
{B}_{2}}(M_{\hat{A}\hat{B}_{1}\hat{B}_{2}}^{j})\quad\forall j\in\left\{
2,3\right\}  , \label{eq:perm-cov-constr-tele-2}\\
P_{\hat{A}\hat{B}_{1}\hat{B}_{2}}  &  =\frac{1}{d_{\hat{B}}}\operatorname{Tr}%
_{\hat{B}_{2}}[P_{\hat{A}\hat{B}_{1}\hat{B}_{2}}]\otimes I_{\hat{B}_{2}}, \label{eq:no-sig-cond-tele-simple-sdp}\\
P_{\hat{A}\hat{B}_{1}\hat{B}_{2}}  &  \coloneqq\frac{1}{2d}\left[
dM^{0}+M^{1}+\sqrt{d^{2}-1}\ M^{2}\right]  ,
\label{eq:P-op-no-sig-tele-simple-sdp}
\end{align}%
\begin{align}
T_{\hat{A}}\!\left(  \frac{2M_{\hat{A}\hat{B}_{1}\hat{B}_{2}}^{+}}%
{d+2}+M_{\hat{A}\hat{B}_{1}\hat{B}_{2}}^{0}+M_{\hat{A}\hat{B}_{1}\hat{B}_{2}%
}^{1}\right)   &  \geq0,\label{eq:PPTA-cond-1}\\
T_{\hat{A}}\!\left(  \frac{2M_{\hat{A}\hat{B}_{1}\hat{B}_{2}}^{-}}%
{d-2}+M_{\hat{A}\hat{B}_{1}\hat{B}_{2}}^{1}-M_{\hat{A}\hat{B}_{1}\hat{B}_{2}%
}^{0}\right)   &  \geq0,\label{eq:PPTA-cond-2}\\%
\begin{bmatrix}
G^{0}+G^{3} & G^{1}-iG^{2}\\
G^{1}+iG^{2} & G^{0}-G^{3}%
\end{bmatrix}
&  \geq0, \label{eq:PPTA-cond-3}%
\end{align}%
\begin{align}
&  G_{\hat{A}\hat{B}_{1}\hat{B}_{2}}^{0}\coloneqq T_{\hat{A}}\left(
M^{+}+M^{-}+\frac{M^{0}-dM^{1}}{2}\right)  ,\\
&  G_{\hat{A}\hat{B}_{1}\hat{B}_{2}}^{1}\coloneqq T_{\hat{A}}\left(
M^{+}-M^{-}+\frac{M^{1}-dM^{0}}{2}\right)  ,\\
&  G_{\hat{A}\hat{B}_{1}\hat{B}_{2}}^{2}\coloneqq\frac{\sqrt{3\left(
d^{2}-1\right)  }}{2}T_{\hat{A}}(M_{\hat{A}\hat{B}_{1}\hat{B}_{2}}^{2}),\\
&  G_{\hat{A}\hat{B}_{1}\hat{B}_{2}}^{3}\coloneqq\frac{\sqrt{3\left(
d^{2}-1\right)  }}{2}T_{\hat{A}}(M_{\hat{A}\hat{B}_{1}\hat{B}_{2}}^{3}).
\end{align}%
\begin{align}
T_{\hat{A}\hat{B}_{1}}\!\left(  \frac{dM^{+}}{d+2}+M^{-}+\frac{dM^{0}%
-M^{1}-\sqrt{d^{2}-1}\ M^{2}}{2}\right)   &  \geq0, \label{eq:PPTB-cond-1}\\
T_{\hat{A}\hat{B}_{1}}\!\left(  M^{+}+\frac{dM^{-}}{d-2}-\frac{dM^{0}%
-M^{1}-\sqrt{d^{2}-1}\ M^{2}}{2}\right)   &  \geq0, \label{eq:PPTB-cond-2}
\end{align}%
\begin{equation}%
\begin{bmatrix}
E^{0}+E^{3} & E^{1}-iE^{2}\\
E^{1}+iE^{2} & E^{0}-E^{3}%
\end{bmatrix}
\geq0, \label{eq:PPTB-cond-3}
\end{equation}%
\begin{align}
E_{\hat{A}\hat{B}_{1}\hat{B}_{2}}^{0}  &  \coloneqq\frac{T_{\hat{A}\hat{B}%
_{1}}\!\left(  d(M^{+}-M^{-})+\frac{L^{0}}{2}\right)  }{d^{2}-1},\\
E_{\hat{A}\hat{B}_{1}\hat{B}_{2}}^{1}  &  \coloneqq\frac{T_{\hat{A}\hat{B}%
_{1}}\!\left(  -M^{+}+M^{-}+\frac{L^{1}}{2}\right)  }{d^{2}-1},\\
E_{\hat{A}\hat{B}_{1}\hat{B}_{2}}^{2}  &  \coloneqq\frac{T_{\hat{A}\hat{B}%
_{1}}\!(M^{+}-M^{-}+\frac{L^{2}}{2})}{\sqrt{d^{2}-1}},\\
E_{\hat{A}\hat{B}_{1}\hat{B}_{2}}^{3}  &  \coloneqq T_{\hat{A}\hat{B}_{1}%
}(M_{\hat{A}\hat{B}_{1}\hat{B}_{2}}^{3}),\\
L^{0}  &  \coloneqq\left(  d^{2}-2\right)  M^{0}+dM^{1}+d\sqrt{d^{2}-1}\ 
M^{2},\\
L^{1}  &  \coloneqq dM^{0}+\left(  2d^{2}-3\right)  M^{1}-\sqrt{d^{2}-1}\ 
M^{2},\\
L^{2}  &  \coloneqq M^{1}-dM^{0}-\sqrt{d^{2}-1}\ M^{2}.
\label{eq:PPT-cond-final-tele}
\end{align}
For the case of $d=2$, the SDP is the same, with the exception that we set $M^-_{\hat{A}\hat{B}_1\hat{B}_2} = 0$ and the constraints in \eqref{eq:PPTA-cond-2} and \eqref{eq:PPTB-cond-2} are not used.
\end{proposition}

\begin{proof}
See Appendix~\ref{app:SDP-id-sim-simplify-sym}.
\end{proof}

\begin{remark}
\label{rem:SDP-explain}
    The SDP in the statement of Proposition~\ref{prop:simplified-SDP-two-ext-sim-id} is rather lengthy, and so we provide some explanation here. The constraint in \eqref{eq:M-matrix-constr-tele} and the constraints $M^+, M^-, M^0 \geq 0 $ in \eqref{eq:obj-funct-simple-sdp-tele} correspond to the constraint of complete positivity in \eqref{eq:obj-func-gen-two-ext-sim} (i.e., $M_{A\hat{A}\hat{B}_{1}B_{1}\hat
{B}_{2}B_{2}}\geq 0$). The constraint in \eqref{eq:TP-constr-tele} corresponds to the constraint of trace preservation in \eqref{eq:sim-two-ext-TP}. The constraints in \eqref{eq:perm-cov-constr-tele-1}--\eqref{eq:perm-cov-constr-tele-2} correspond to the permutation covariance constraint in \eqref{eq:sim-two-ext-2EXT-constr}. The constraint in \eqref{eq:no-sig-cond-tele-simple-sdp} corresponds to the non-signaling constraint in \eqref{eq:sim-two-ext-constr-ext}. The constraints in \eqref{eq:PPTA-cond-1}--\eqref{eq:PPTA-cond-3} correspond to the PPT constraint in \eqref{eq:sim-two-ext-constr-ppt1}, and the constraints in \eqref{eq:PPTB-cond-1}--\eqref{eq:PPTB-cond-3} correspond to the PPT constraint in \eqref{eq:sim-two-ext-constr-ppt2}.
\end{remark}

\begin{remark}
\label{rem:SDP-many-constr}
Even though the number of constraints in the SDP above appears to increase when compared with the SDP from Proposition~\ref{prop:two-ext-sim-err-gen-ch}, we note that the runtime of the SDP above is significantly reduced because the size of the matrices involved in each of the constraints is much smaller. This is the main advantage conferred by incorporating unitary covariance symmetry of the identity channel.

If we only optimized over the larger set of two-extendible chanels instead of the set of two-PPT-extendible channels, the SDP would be much simpler, given by \eqref{eq:obj-funct-simple-sdp-tele}--\eqref{eq:P-op-no-sig-tele-simple-sdp}. However, optimizing over the smaller set of two-PPT-extendible channels gives tighter bounds at a marginal increase in computational cost, and thus we also include the PPT constraints in \eqref{eq:PPTA-cond-1}--\eqref{eq:PPTA-cond-3} and \eqref{eq:PPTB-cond-1}--\eqref{eq:PPTB-cond-3}.
\end{remark}

\section{Background on superchannels}

\label{sec:superchs}

This section constitutes the beginning of the second contribution of our paper, regarding lower bounds on the error in channel simulation and approximate quantum error correction. We begin by reviewing the theory of superchannels, as well as particular examples of them relevant to the aforementioned applications.

\subsection{Basics of superchannels}

A superchannel $\Theta\equiv\Theta_{\left(  A\rightarrow B\right)
\rightarrow(C\rightarrow D)}$ is a physical transformation of a channel
$\mathcal{N}_{A\rightarrow B}$ that accepts as input the channel
$\mathcal{N}_{A\rightarrow B}$ and outputs a channel with input system $C$ and
output system $D$. Mathematically, a superchannel is a linear map that
preserves the set of quantum channels, even when the quantum channel is an
arbitrary bipartite channel with external input and output systems that are
arbitrarily large. Superchannels are thus completely CPTP preserving in this
sense. A general theory of superchannels was introduced in \cite{CDP08}\ and
developed further in \cite{CDP09,CDP08a,Gour18}.

In more detail, let us denote the output of a superchannel $\Theta$ by
$\mathcal{K}_{C\rightarrow D}$, so that%
\begin{equation}
\Theta_{\left(  A\rightarrow B\right)  \rightarrow(C\rightarrow D)}%
(\mathcal{N}_{A\rightarrow B})=\mathcal{K}_{C\rightarrow D}.
\label{eq:superch-in-out}%
\end{equation}
The superchannel $\Theta_{\left(  A\rightarrow B\right)  \rightarrow
(C\rightarrow D)}$ is completely CPTP\ preserving in the sense that the
following output map%
\begin{equation}
(\operatorname{id}_{\left(  R\right)  \rightarrow\left(  R\right)  }%
\otimes\Theta_{\left(  A\rightarrow B\right)  \rightarrow(C\rightarrow
D)})(\mathcal{M}_{RA\rightarrow RB})
\end{equation}
is a quantum channel for every input quantum channel $\mathcal{M}%
_{RA\rightarrow RB}$, where $\operatorname{id}_{\left(  R\right)
\rightarrow\left(  R\right)  }$ denotes the identity superchannel~\cite{CDP08}.

The fundamental theorem of superchannels from \cite{CDP08} is that
$\Theta_{\left(  A\rightarrow B\right)  \rightarrow(C\rightarrow D)}$ has a
physical realization in terms of a pre-processing channel $\mathcal{E}%
_{C\rightarrow AQ}$ and a post-processing channel $\mathcal{D}_{BQ\rightarrow
D}$ as follows:%
\begin{multline}
\Theta_{\left(  A\rightarrow B\right)  \rightarrow(C\rightarrow D)}%
(\mathcal{N}_{A\rightarrow B})\label{eq:superch-decomp-pre-post}\\
=\mathcal{D}_{BQ\rightarrow D}\circ\mathcal{N}_{A\rightarrow B}\circ
\mathcal{E}_{C\rightarrow AQ},
\end{multline}
where $Q$ is a quantum memory system. Furthermore, every superchannel
$\Theta_{\left(  A\rightarrow B\right)  \rightarrow(C\rightarrow D)}$ is in
one-to-one correspondence with a bipartite channel of the following form:%
\begin{equation}
\mathcal{P}_{CB\rightarrow AD}\coloneqq\mathcal{D}_{BQ\rightarrow D}%
\circ\mathcal{E}_{C\rightarrow AQ}. \label{eq:superch-bipart-factorization}%
\end{equation}
Note that $\mathcal{P}_{CB\rightarrow AD}$ is completely positive, trace
preserving, and obeys the following non-signaling constraint:%
\begin{equation}
\operatorname{Tr}_{D}\circ\mathcal{P}_{CB\rightarrow AD}=\operatorname{Tr}%
_{D}\circ\mathcal{P}_{CB\rightarrow AD}\circ\mathcal{R}_{B}^{\pi},
\label{eq:non-sig-bipartite-superch}%
\end{equation}
where the replacer channel $\mathcal{R}_{B}^{\pi}$ is defined in
\eqref{eq:replacer-ch-def}. Related to this, $\Gamma_{CBAD}^{\mathcal{P}}$ is
the Choi operator of a superchannel if and only if it satisfies the following
constraints:%
\begin{align}
\Gamma_{CBAD}^{\mathcal{P}}  &  \geq0,\\
\operatorname{Tr}_{AD}[\Gamma_{CBAD}^{\mathcal{P}}]  &  =I_{CB},\\
\operatorname{Tr}_{D}[\Gamma_{CBAD}^{\mathcal{P}}]  &  =\frac{1}{d_{B}%
}\operatorname{Tr}_{BD}[\Gamma_{CBAD}^{\mathcal{P}}]\otimes I_{B}.
\end{align}
The first two constraints correspond to complete positivity and trace
preservation, respectively, and the last constraint is a non-signaling constraint corresponding to
$\mathcal{P}_{CB\rightarrow AD}$ having the factorization in
\eqref{eq:superch-bipart-factorization}, so that $\mathcal{P}_{CB\rightarrow
AD}$ is in correspondence with a superchannel. To determine the Choi operator
for the output channel $\mathcal{K}_{C\rightarrow D}$ in
\eqref{eq:superch-in-out}, we can use the following propagation rule
\cite{CDP08,Gour18}:%
\begin{equation}\label{eq:propagation rule}
\Gamma_{CD}^{\mathcal{K}}=\operatorname{Tr}_{AB}[T_{AB}(\Gamma_{AB}%
^{\mathcal{N}})\Gamma_{CBAD}^{\mathcal{P}}],
\end{equation}
where $\Gamma_{CBAD}^{\mathcal{P}}$ is the Choi operator of $\mathcal{P}%
_{CB\rightarrow AD}$ and $\Gamma_{AB}^{\mathcal{N}}$ is the Choi operator of
$\mathcal{N}_{A\rightarrow B}$.

\subsection{One-way LOCC superchannels}

A superchannel $\Lambda\equiv\Lambda_{\left(  A\rightarrow B\right)
\rightarrow\left(  C\rightarrow D\right)  }$\ is implementable by one-way LOCC
if it can be written in the following form:%
\begin{equation}
\Lambda(\mathcal{N}_{A\rightarrow B})\coloneqq\sum_{x}\mathcal{D}%
_{B\rightarrow D}^{x}\circ\mathcal{N}_{A\rightarrow B}\circ\mathcal{E}%
_{C\rightarrow A}^{x},
\label{eq:LOCC-superch-Lam}
\end{equation}
where $\{\mathcal{E}_{C\rightarrow A}^{x}\}_{x}$ is a set of completely
positive maps such that the sum map $\sum_{x}\mathcal{E}_{C\rightarrow A}^{x}$
is trace preserving and $\{\mathcal{D}_{B\rightarrow D}^{x}\}_{x}$ is a set of
quantum channels. This is equivalent to the quantum memory system $Q$\ in
\eqref{eq:superch-decomp-pre-post} being a classical system $X$, with%
\begin{align}
\mathcal{E}_{C\rightarrow AX}(\rho_{C})  &  \coloneqq\sum_{x}\mathcal{E}%
_{C\rightarrow A}^{x}(\rho_{C})\otimes|x\rangle\!\langle x|_{X},\\
\mathcal{D}_{BX\rightarrow D}(\omega_{BX})  &  \coloneqq\sum_{x}%
\mathcal{D}_{B\rightarrow D}^{x}(\langle x|_{X}\omega_{BX}|x\rangle_{X}),
\end{align}
so that%
\begin{equation}
\Lambda(\mathcal{N}_{A\rightarrow B})=\mathcal{D}_{BX\rightarrow D}%
\circ\mathcal{N}_{A\rightarrow B}\circ\mathcal{E}_{C\rightarrow AX}.
\end{equation}
In this case, the bipartite channel in
\eqref{eq:superch-bipart-factorization}, but corresponding to $\Lambda$ in \eqref{eq:LOCC-superch-Lam},
becomes the following one-way LOCC channel:%
\begin{equation}
\mathcal{L}_{CB\rightarrow AD}\coloneqq\sum_{x}\mathcal{E}_{C\rightarrow
A}^{x}\otimes\mathcal{D}_{B\rightarrow D}^{x}.
\end{equation}
Thus, the set of one-way LOCC superchannels is in direct correspondence with
the set of one-way LOCC bipartite channels.

\subsection{LOCR\ superchannels}\label{sec:LOCR superchannels}

A superchannel $\Upsilon\equiv\Upsilon_{\left(  A\rightarrow B\right)
\rightarrow\left(  C\rightarrow D\right)  }$ is implementable by local
operations and common randomness (LOCR)\ if it can be written in the following
form:%
\begin{equation}
\Upsilon(\mathcal{N}_{A\rightarrow B})\coloneqq\sum_{y}p(y)\mathcal{D}%
_{B\rightarrow D}^{y}\circ\mathcal{N}_{A\rightarrow B}\circ\mathcal{E}%
_{C\rightarrow A}^{y},
\label{eq:LOCR-superch-Ups}
\end{equation}
where $\{p(y)\}_{y}$ is a probability distribution and $\{\mathcal{E}%
_{C\rightarrow A}^{y}\}_{y}$ and $\{\mathcal{D}_{B\rightarrow D}^{y}\}_{y}$
are sets of quantum channels. In more detail, the superchannel $\Upsilon
_{\left(  A\rightarrow B\right)  \rightarrow\left(  C\rightarrow D\right)  }$
can be realized as%
\begin{equation}
\Upsilon(\mathcal{N}_{A\rightarrow B})=\mathcal{D}_{BY_{B}\rightarrow D}%
\circ\mathcal{N}_{A\rightarrow B}\circ\mathcal{E}_{CY_{A}\rightarrow A}%
\circ\mathcal{P}_{Y_{A}Y_{B}},
\end{equation}
where $\mathcal{P}_{Y_{A}Y_{B}}$ is a preparation channel that prepares the
common randomness state%
\begin{equation}
\sum_{y}p(y)|y\rangle\!\langle y|_{Y_{A}}\otimes|y\rangle\!\langle y|_{Y_{B}},
\end{equation}
and the channels $\mathcal{E}_{CY_{A}\rightarrow A}$ and $\mathcal{D}%
_{BY_{B}\rightarrow D}$ are defined as%
\begin{align}
\mathcal{E}_{CY_{A}\rightarrow A}(\rho_{CY_{A}})  &  \coloneqq\sum
_{y}\mathcal{E}_{C\rightarrow A}^{y}(\langle y|_{Y_{A}}\rho_{CY_{A}}%
|y\rangle_{Y_{A}}),\\
\mathcal{D}_{BY_{B}\rightarrow D}(\omega_{BY_{B}})  &  \coloneqq\sum
_{y}\mathcal{D}_{B\rightarrow D}^{y}(\langle y|_{Y_{B}}\omega_{BY_{B}%
}|y\rangle_{Y_{B}}),
\end{align}
In this case, the bipartite channel in
\eqref{eq:superch-bipart-factorization}, but corresponding to $\Upsilon$ in \eqref{eq:LOCR-superch-Ups},
becomes the following LOCR bipartite channel:%
\begin{equation}\label{eq:LOCR_superchannel_bipartite_rep}
\mathcal{C}_{CB\rightarrow AD}\coloneqq\sum_{y}p(y)\mathcal{E}_{C\rightarrow
A}^{y}\otimes\mathcal{D}_{B\rightarrow D}^{y}.
\end{equation}
Thus, the set of LOCR\ superchannels is in direct correspondence with the set
of LOCR\ bipartite channels.

\subsection{Two-extendible superchannels}

A superchannel $\Theta_{\left(  A\rightarrow B\right)  \rightarrow\left(
C\rightarrow D\right)  }$ is defined to be two-extendible if there exists an
extension channel $\mathcal{M}_{CB_{1}B_{2}\rightarrow AD_{1}D_{2}}$ of its
corresponding bipartite channel $\mathcal{P}_{CB\rightarrow AD}$\ that obeys
the conditions in \eqref{eq:perm-cov-ch} and \eqref{eq:marg-ch-ch}.
Furthermore, due to the fact that \eqref{eq:perm-cov-ch} and
\eqref{eq:marg-ch-ch} imply \eqref{eq:non-sig-implied-by-two-ext}, there is no
need to explicitly indicate that \eqref{eq:non-sig-bipartite-superch} holds.
Two-extendible superchannels were considered in \cite{BBFS21}, but this
terminology was not employed there.

The specific constraints on the Choi operator of $\mathcal{M}_{CB_{1}%
B_{2}\rightarrow AD_{1}D_{2}}$ are precisely the same as those in
\eqref{eq:perm-cov-choi}--\eqref{eq:ext-ch-TP}, with the identifications
$C\leftrightarrow A$, $B\leftrightarrow B$, $A\leftrightarrow A^{\prime}$, and
$D\leftrightarrow B^{\prime}$. Explicitly, a superchannel $\Theta_{\left(
A\rightarrow B\right)  \rightarrow\left(  C\rightarrow D\right)  }$ is
two-extendible if the Choi operator $\Gamma_{CBAD}^{\mathcal{P}}$ of its
corresponding bipartite channel $\mathcal{P}_{CB\rightarrow AD}$ satisfies the
following conditions: there exists a Hermitian operator $\Gamma_{CB_{1}%
B_{2}AD_{1}D_{2}}^{\mathcal{M}}$ such that%
\begin{align}
(\mathcal{F}_{B_{1}B_{2}}\otimes\mathcal{F}_{D_{1}D_{2}})(\Gamma_{CB_{1}%
B_{2}AD_{1}D_{2}}^{\mathcal{M}})  &  =\Gamma_{CB_{1}B_{2}AD_{1}D_{2}%
}^{\mathcal{M}},\label{eq:two-ext-superch-1}\\
\operatorname{Tr}_{D_{2}}[\Gamma_{CB_{1}B_{2}AD_{1}D_{2}}^{\mathcal{M}}]  &
=\Gamma_{CB_{1}AD_{1}}^{\mathcal{P}}\otimes I_{B_{2}},\\
\Gamma_{CB_{1}B_{2}AD_{1}D_{2}}^{\mathcal{M}}  &  \geq0,\\
\operatorname{Tr}_{AD_{1}D_{2}}[\Gamma_{CB_{1}B_{2}AD_{1}D_{2}}^{\mathcal{M}%
}]  &  =I_{CB_{1}B_{2}}. \label{eq:two-ext-superch-4}%
\end{align}
Every one-way LOCC superchannel is two-extendible.

\subsection{Completely PPT\ preserving superchannels}

A superchannel $\Theta_{\left(  A\rightarrow B\right)  \rightarrow\left(
C\rightarrow D\right)  }$\ is C-PPT-P if its corresponding bipartite channel
$\mathcal{P}_{CB\rightarrow AD}$\ in
\eqref{eq:superch-bipart-factorization}\ is C-PPT-P and obeys the
non-signaling constraint in \eqref{eq:non-sig-bipartite-superch} \cite{LM15}.
This implies the following for its Choi operator $\Gamma_{CBAD}^{\mathcal{P}}%
$:%
\begin{align}
\Gamma_{CBAD}^{\mathcal{P}}  &  \geq0,\\
\operatorname{Tr}_{AD}[\Gamma_{CBAD}^{\mathcal{P}}]  &  =I_{CB},\\
\operatorname{Tr}_{D}[\Gamma_{CBAD}^{\mathcal{P}}]  &  =\frac{1}{d_{B}%
}\operatorname{Tr}_{BD}[\Gamma_{CBAD}^{\mathcal{P}}]\otimes I_{B},\\
T_{BD}(\Gamma_{CBAD}^{\mathcal{P}})  &  \geq0.
\end{align}

\subsection{Two-PPT-extendible superchannels}

A superchannel $\Theta_{\left(  A\rightarrow B\right)  \rightarrow\left(
C\rightarrow D\right)  }$\ is two-PPT-extendible if its corresponding
bipartite channel $\mathcal{P}_{CB\rightarrow AD}$\ in
\eqref{eq:superch-bipart-factorization}\ is two-PPT-extendible. Again, there
is no need to explicitly indicate that \eqref{eq:non-sig-bipartite-superch}
holds. The following conditions hold for the Choi operator $\Gamma
_{CBAD}^{\mathcal{P}}$ of a two-PPT-extendible superchannel: there exists a
Hermitian operator $\Gamma_{CB_{1}B_{2}AD_{1}D_{2}}^{\mathcal{M}}$ such that
\eqref{eq:two-ext-superch-1}--\eqref{eq:two-ext-superch-4} hold, as well as%
\begin{align*}
T_{B_{2}D_{2}}(\Gamma_{CB_{1}B_{2}AD_{1}D_{2}}^{\mathcal{M}})  &  \geq0,\\
T_{CA}(\Gamma_{CB_{1}B_{2}AD_{1}D_{2}}^{\mathcal{M}})  &  \geq0.
\end{align*}
Similar to what was already discussed in Section~\ref{sec:2pk-ch}, the
following constraints are redundant:%
\begin{align}
T_{B_{1}D_{1}}(\Gamma_{CB_{1}B_{2}AD_{1}D_{2}}^{\mathcal{M}})  &  \geq0,\\
T_{CAB_{2}D_{2}}(\Gamma_{CB_{1}B_{2}AD_{1}D_{2}}^{\mathcal{M}})  &  \geq0,\\
T_{CAB_{1}D_{1}}(\Gamma_{CB_{1}B_{2}AD_{1}D_{2}}^{\mathcal{M}})  &  \geq0,\\
T_{B_{1}D_{1}B_{2}D_{2}}(\Gamma_{CB_{1}B_{2}AD_{1}D_{2}}^{\mathcal{M}})  &
\geq0.
\end{align}

Note that every one-way LOCC superchannel is two-PPT-extendible.

\subsection{Two-PPT-extendible non-signaling superchannels}

\label{sec_2-ppt-ext-non-sig-superchannels}

Finally, we can impose an additional non-signaling constraint on
two-PPT-extendible superchannels, such that the extension of its corresponding
bipartite channel is non-signaling from Alice to both Bobs. The additional
constraint on the Choi operator $\Gamma_{CB_{1}B_{2}AD_{1}D_{2}}^{\mathcal{M}%
}$ of the extension channel $\mathcal{M}_{CB_{1}B_{2}\rightarrow AD_{1}D_{2}}$
is as follows:%
\begin{equation}
\operatorname{Tr}_{A}[\Gamma_{CB_{1}B_{2}AD_{1}D_{2}}^{\mathcal{M}}]=\frac
{1}{d_{C}}\operatorname{Tr}_{AC}[\Gamma_{CB_{1}B_{2}AD_{1}D_{2}}^{\mathcal{M}%
}]\otimes I_{C}.
\end{equation}

Every LOCR superchannel is non-signaling and two-PPT-extendible, which follows from definitions and the form of the corresponding bipartite channel in \eqref{eq:LOCR_superchannel_bipartite_rep}. This fact
plays an important role in our analysis of approximate quantum error
correction. In more detail, we obtain our tightest lower bound on the simulation error of
approximate quantum error correction by relaxing the set of
LOCR\ superchannels to the set of non-signaling and two-PPT-extendible
superchannels. We note here that this approach was already considered in
\cite{BBFS21}, and our main contribution here is to employ unitary covariance symmetry of
the identity channel to reduce the complexity of the SDPs from that work.

\section{Quantifying the performance of approximate quantum error correction} \label{sec: VI}

\subsection{Quantifying simulation error with normalized diamond distance and
channel infidelity}

In approximate quantum error correction \cite{qip2002schu} or quantum
communication \cite{BDSW96}, the resource available is a quantum channel
$\mathcal{N}_{\hat{A}\rightarrow\hat{B}}$ and the goal is to use it, along
with an encoding channel $\mathcal{E}_{A\rightarrow\hat{A}}$\ and a decoding
channel $\mathcal{D}_{\hat{B}\rightarrow B}$, to simulate a $d$-dimensional
identity channel $\operatorname{id}_{A\rightarrow B}^{d}$. We can use the
normalized diamond distance to quantify the error for a fixed encoding and
decoding, as follows:%
\begin{multline}
e(\mathcal{N}_{\hat{A}\rightarrow\hat{B}},(\mathcal{E}_{A\rightarrow\hat{A}%
},\mathcal{D}_{\hat{B}\rightarrow B}))\coloneqq\\
\frac{1}{2}\left\Vert \operatorname{id}_{A\rightarrow B}^{d}-\mathcal{D}%
_{\hat{B}\rightarrow B}\circ\mathcal{N}_{\hat{A}\rightarrow\hat{B}}%
\circ\mathcal{E}_{A\rightarrow\hat{A}}\right\Vert _{\diamond}.
\end{multline}
By minimizing over all encodings and decodings, we arrive at the error in
using the channel $\mathcal{N}_{\hat{A}\rightarrow\hat{B}}$ to simulate the
identity channel:%
\begin{equation}
e(\mathcal{N}_{\hat{A}\rightarrow\hat{B}})\coloneqq\inf_{(\mathcal{E}%
,\mathcal{D})}e(\mathcal{N}_{\hat{A}\rightarrow\hat{B}},(\mathcal{E}%
_{A\rightarrow\hat{A}},\mathcal{D}_{\hat{B}\rightarrow B})).
\end{equation}
We can alternatively employ channel infidelity to quantify the error:%
\begin{multline}
e^{F}(\mathcal{N}_{\hat{A}\rightarrow\hat{B}},(\mathcal{E}_{A\rightarrow
\hat{A}},\mathcal{D}_{\hat{B}\rightarrow B}))\coloneqq\\
1-F(\operatorname{id}_{A\rightarrow B}^{d},\mathcal{D}_{\hat{B}\rightarrow
B}\circ\mathcal{N}_{\hat{A}\rightarrow\hat{B}}\circ\mathcal{E}_{A\rightarrow
\hat{A}}),
\end{multline}%
\begin{equation}
e^{F}(\mathcal{N}_{\hat{A}\rightarrow\hat{B}})\coloneqq\inf_{(\mathcal{E}%
,\mathcal{D})}e^{F}(\mathcal{N}_{\hat{A}\rightarrow\hat{B}},(\mathcal{E}%
_{A\rightarrow\hat{A}},\mathcal{D}_{\hat{B}\rightarrow B})).
\end{equation}
Note that the transformation of the channel given by%
\begin{equation}
\mathcal{D}_{\hat{B}\rightarrow B}\circ\mathcal{N}_{\hat{A}\rightarrow\hat{B}%
}\circ\mathcal{E}_{A\rightarrow\hat{A}} \label{eq:basic-trans-q-comm}%
\end{equation}
is a superchannel, as discussed in Section~\ref{sec:superchs}, with
corresponding bipartite channel%
\begin{equation}
\mathcal{P}_{A\hat{B}\rightarrow\hat{A}B}\coloneqq\mathcal{E}_{A\rightarrow
\hat{A}}\otimes\mathcal{D}_{\hat{B}\rightarrow B}.
\end{equation}
As this bipartite channel is a product channel, it is contained within the set
of LOCR superchannels, which in turn is contained in the set of one-way LOCC superchannels.

By supplementing the encoding and decoding with common randomness, the resulting error
correction scheme $\Upsilon\equiv\Upsilon_{(\hat{A}\rightarrow\hat
{B})\rightarrow(A\rightarrow B)}$ realizes the following simulation channel:%
\begin{equation}
\Upsilon(\mathcal{N}_{\hat{A}\rightarrow\hat{B}})\coloneqq\sum_{y}%
p(y)\mathcal{D}_{\hat{B}\rightarrow B}^{y}\circ\mathcal{N}_{\hat{A}%
\rightarrow\hat{B}}\circ\mathcal{E}_{A\rightarrow\hat{A}}^{y},
\label{eq:1WL-trans-channel-N}%
\end{equation}
where $\{p(y)\}_{y}$ is a probability distribution and $\{\mathcal{E}%
_{A\rightarrow\hat{A}}^{y}\}_{y}$ and $\{\mathcal{D}_{\hat{B}\rightarrow
B}^{y}\}_{y}$ are sets of quantum channels. Recall from
Section~\ref{sec:LOCR superchannels} that $\Upsilon$ is an LOCR superchannel, and let
LOCR\ denote the set of all LOCR\ superchannels. Then we can quantify the
simulation error under LOCR in a manner similar to Section~\ref{sec:quant-error-dia-dist}: we can use the normalized diamond
distance to quantify the error for a fixed LOCR\ superchannel$~\Upsilon$, as
follows:%
\begin{multline}
e_{\operatorname{LOCR}}(\mathcal{N}_{\hat{A}\rightarrow\hat{B}},\Upsilon
_{(\hat{A}\rightarrow\hat{B})\rightarrow(A\rightarrow B)})\coloneqq\\
\frac{1}{2}\left\Vert \operatorname{id}_{A\rightarrow B}^{d}-\Upsilon
_{(\hat{A}\rightarrow\hat{B})\rightarrow(A\rightarrow B)}(\mathcal{N}_{\hat
{A}\rightarrow\hat{B}})\right\Vert _{\diamond}.
\end{multline}
By minimizing over all such superchannels, we arrive at the error in using the
channel $\mathcal{N}_{\hat{A}\rightarrow\hat{B}}$ to simulate the identity
channel:%
\begin{multline}
e_{\operatorname{LOCR}}(\mathcal{N}_{\hat{A}\rightarrow\hat{B}}%
)\coloneqq\label{eq:diamond-err-sim-QEC}\\
\inf_{\Upsilon\in\operatorname{LOCR}}e(\mathcal{N}_{\hat{A}\rightarrow\hat{B}%
},\Upsilon_{(\hat{A}\rightarrow\hat{B})\rightarrow(A\rightarrow B)}).
\end{multline}
As before, we can alternatively employ channel infidelity to quantify the
error:%
\begin{multline}
e_{\operatorname{LOCR}}^{F}(\mathcal{N}_{\hat{A}\rightarrow\hat{B}}%
,\Upsilon_{(\hat{A}\rightarrow\hat{B})\rightarrow(A\rightarrow B)})\coloneqq\label{eq:fidelity-err-sim-QEC}\\
1-F(\operatorname{id}_{A\rightarrow B}^{d},\Upsilon_{(\hat{A}\rightarrow
\hat{B})\rightarrow(A\rightarrow B)}(\mathcal{N}_{\hat{A}\rightarrow\hat{B}%
})),
\end{multline}%
\begin{multline}
e_{\operatorname{LOCR}}^{F}(\mathcal{N}_{\hat{A}\rightarrow\hat{B}%
})\coloneqq\label{eq:fid-err-sim-QEC}\\
\inf_{\Upsilon\in\operatorname{LOCR}}e_{\operatorname{LOCR}}^{F}%
(\mathcal{N}_{\hat{A}\rightarrow\hat{B}},\Upsilon_{(\hat{A}\rightarrow\hat
{B})\rightarrow(A\rightarrow B)}).
\end{multline}

However, we have the following:

\begin{proposition}
\label{prop:equality-sim-errs-LOCR-QEC}For a channel $\mathcal{N}_{\hat
{A}\rightarrow\hat{B}}$, the LOCR\ simulation errors defined from normalized
diamond distance and channel infidelity are equal to each other:%
\begin{equation}
e_{\operatorname{LOCR}}(\mathcal{N}_{\hat{A}\rightarrow\hat{B}}%
)=e_{\operatorname{LOCR}}^{F}(\mathcal{N}_{\hat{A}\rightarrow\hat{B}}).
\end{equation}

\end{proposition}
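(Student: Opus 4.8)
The plan is to mirror the symmetrization argument behind Proposition~\ref{prop:sim-errs-equal-1WL}, with the resource channel $\mathcal{N}_{\hat{A}\to\hat{B}}$ and LOCR superchannels now playing the roles previously played by the resource state and one-way LOCC channels. The common randomness available in the LOCR class (Section~\ref{sec:LOCR superchannels}) is precisely what makes the argument go through, which is why the equality is expected to hold for LOCR but not for a single deterministic encoding--decoding pair.

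First I would exploit the unitary covariance of the identity channel in \eqref{eq:identity-ch-symmetries} to reduce both optimizations to covariant simulations. Given any LOCR superchannel $\Upsilon$ with simulated channel $\mathcal{S}\coloneqq\Upsilon(\mathcal{N}_{\hat{A}\to\hat{B}})$, I define the twirled channel $\overline{\mathcal{S}}\coloneqq\int dU\,\mathcal{U}_{B}^{\dagger}\circ\mathcal{S}\circ\mathcal{U}_{A}$, the integral being over the Haar measure (equivalently a unitary one-design such as the Heisenberg--Weyl group). Since the random $U$ must only be \emph{correlated} between the pre- and post-processing, rather than communicated, $\overline{\mathcal{S}}$ is again an LOCR simulation: the encoder prepends $\mathcal{U}_{A}$ and the decoder appends $\mathcal{U}_{B}^{\dagger}$, both conditioned on the shared random value, which merely augments the common randomness already present in \eqref{eq:LOCR_superchannel_bipartite_rep}. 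I would then show this twirl worsens neither error. Because $\operatorname{id}_{A\to B}^{d}$ is itself invariant under the twirl by \eqref{eq:identity-ch-symmetries}, convexity of the diamond distance with unitary invariance gives $\tfrac{1}{2}\lVert\operatorname{id}-\overline{\mathcal{S}}\rVert_{\diamond}\le\tfrac{1}{2}\lVert\operatorname{id}-\mathcal{S}\rVert_{\diamond}$, while joint concavity and unitary invariance of fidelity give $F(\operatorname{id},\overline{\mathcal{S}})\ge F(\operatorname{id},\mathcal{S})$. Hence both infima, in \eqref{eq:diamond-err-sim-QEC} and \eqref{eq:fid-err-sim-QEC}, are attained on covariant simulations.

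Next I would identify these covariant simulations explicitly. A channel $\overline{\mathcal{S}}_{A\to B}$ obeying $\overline{\mathcal{S}}\circ\mathcal{U}_{A}=\mathcal{U}_{B}\circ\overline{\mathcal{S}}$ for all unitaries is, by Schur's lemma applied to its Choi operator, a depolarizing channel, which I would write as $\overline{\mathcal{S}}=p\,\operatorname{id}_{A\to B}^{d}+(1-p)\,\mathcal{D}_{A\to B}$ with $p\in[0,1]$ and $\mathcal{D}_{A\to B}$ the randomizing channel from Proposition~\ref{prop:sim-errs-equal-1WL}. For this one-parameter family the two error measures are computed directly. Evaluating on the maximally entangled input gives $F(\operatorname{id},\overline{\mathcal{S}})=p$, since the orthogonality of the Bell states annihilates the $\mathcal{D}$ contribution, and as the maximally entangled state is the worst-case input the channel infidelity is $1-p$. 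The diamond distance reduces to $\tfrac{1}{2}(1-p)\lVert\operatorname{id}-\mathcal{D}\rVert_{\diamond}$, and on the maximally entangled input the two outputs have orthogonal supports, so $\lVert\operatorname{id}-\mathcal{D}\rVert_{\diamond}=2$ and the diamond error is again $1-p$. Thus both measures coincide on every covariant simulation, and combining with the two reductions above shows that the diamond-distance infimum equals $\inf(1-p)$ over covariant LOCR simulations, which equals the infidelity infimum, yielding the claimed equality.

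The main obstacle I anticipate is the first step: verifying carefully that the twirl genuinely remains inside the LOCR class, i.e.\ that only common randomness and no classical communication is needed to correlate the pre- and post-unitaries, and confirming that after symmetrization the worst-case input in the channel-fidelity definition \eqref{eq:ch-fid-simplified-1} is indeed the maximally entangled state, so that the infimum over inputs collapses to the single scalar $1-p$. Once these two points are secured, the depolarizing-channel computations are routine.
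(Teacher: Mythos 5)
Your proposal is correct and follows essentially the same route as the paper: twirl the simulation channel by $\int dU\,\mathcal{U}_{B}^{\dag}\circ(\cdot)\circ\mathcal{U}_{A}$, observe that the twirl stays inside LOCR because the pre- and post-unitaries need only be \emph{correlated} via common randomness, invoke the covariance \eqref{eq:identity-ch-symmetries} of the identity channel to conclude that both infima are attained on covariant simulations, and check that the two error measures coincide there. The only difference is in the last step: the paper (Appendix~\ref{app:proof-errors-equal} together with Appendix~\ref{sec:sim-errs-equal-1WL-QEC-proof}) evaluates both error measures by symmetrizing the variables of the SDPs \eqref{eq:diamond-d-SDP} and \eqref{eq:SDP-ch-fid-1}, whereas you evaluate them directly on the one-parameter covariant family $p\operatorname{id}_{A\rightarrow B}^{d}+(1-p)\mathcal{D}_{A\rightarrow B}$; your computation is more elementary and is sound, and your worry about the worst-case input is easily dispatched, since $F(\psi_{RA},\overline{\mathcal{S}}(\psi_{RA}))=p+(1-p)\tfrac{1}{d^{2}-1}\sum_{(z,x)\neq(0,0)}\vert\langle\psi\vert(I_{R}\otimes W^{z,x})\vert\psi\rangle\vert^{2}\geq p$, with equality at $\Phi_{RA}$.

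Two slips, neither fatal. First, the parenthetical asserting that the Haar twirl is ``equivalently a unitary one-design such as the Heisenberg--Weyl group'' is wrong: twirling a channel over the Heisenberg--Weyl group projects onto the set of Pauli channels (the commutant of $\{U\otimes\overline{U}\}$ for that group is spanned by the $d^{2}$ Bell projectors), \emph{not} onto the two-dimensional isotropic family $\operatorname{span}\{\operatorname{id},\mathcal{D}\}$ that your Schur's-lemma step requires. If you want to replace the Haar integral by a finite mixture---which matters if one reads the paper's discrete-sum definition of LOCR literally---you need a unitary 2-design (e.g., the Clifford group), or any finite group whose $U\otimes\overline{U}$-commutant is $\operatorname{span}\{I,\Gamma\}$. (As it happens, the two error measures also coincide on Pauli channels by the same orthogonal-support computation, so a Heisenberg--Weyl twirl could be salvaged, but not through the covariant-channel characterization as you wrote it.) Second, the fidelity direction needs only concavity of $F$ in its second argument, not joint concavity; joint concavity is the property of the \emph{root} fidelity, while separate concavity of the squared fidelity used in this paper holds and is exactly what yields $F(\operatorname{id},\overline{\mathcal{S}})\geq F(\operatorname{id},\mathcal{S})$.
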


\begin{proof}
The proof of this equality is similar to the proof of
Proposition~\ref{prop:sim-errs-equal-1WL}, following again from the symmetry
of the target channel, which is an identity channel having the symmetry in
\eqref{eq:identity-ch-symmetries}, and the fact that a channel twirl can be
implemented by means of LOCR. Note that a channel twirl of a channel $\mathcal{M}_{A\to B}$ has the following form:
\begin{equation}
 \int dU\ \mathcal{U}_{B}^{\dag}\circ\mathcal{M}_{A\rightarrow B}\circ\mathcal{U}_{A},
\end{equation}
where $\mathcal{U}$ is a unitary channel.
\end{proof}

By exploiting the fact that a superchannel of the form in
\eqref{eq:basic-trans-q-comm} is contained in the set of LOCR\ superchannels,
the following inequality holds%
\begin{equation}
e_{\operatorname{LOCR}}(\mathcal{N}_{\hat{A}\rightarrow\hat{B}})\leq
\min\left\{  e(\mathcal{N}_{\hat{A}\rightarrow\hat{B}}),e^{F}(\mathcal{N}%
_{\hat{A}\rightarrow\hat{B}})\right\}  .
\end{equation}
It is unclear if $e(\mathcal{N}_{\hat{A}\rightarrow\hat{B}})$ is equal to
$e^{F}(\mathcal{N}_{\hat{A}\rightarrow\hat{B}})$ in general:\ a critical
aspect of the proof of Proposition~\ref{prop:equality-sim-errs-LOCR-QEC} is
the fact that LOCR superchannels are allowed for free, so that the
symmetrizing twirling superchannel can be used. In the unassisted setting, we
cannot use twirling because it is an LOCR superchannel and thus not allowed
for free.

Recall again that the identity channel $\operatorname{id}_{A\rightarrow
B}^{d}$ possesses the unitary covariance symmetry in \eqref{eq:identity-ch-symmetries}. Considering this leads to
the following proposition:

\begin{proposition}
\label{prop:sim-errs-equal-1WL-QEC}The optimization problems in
\eqref{eq:diamond-err-sim-QEC} and \eqref{eq:fid-err-sim-QEC}, for the error
in simulating the identity channel $\operatorname{id}_{A\rightarrow B}^{d}$,
simplify as follows:%
\begin{align}
e_{\operatorname{LOCR}}(\mathcal{N}_{\hat{A}\rightarrow\hat{B}})  &
=e_{\operatorname{LOCR}}^{F}(\mathcal{N}_{\hat{A}\rightarrow\hat{B}%
}) \label{eq:dd_fid_err_equality_QEC}\\
&  =1-\sup_{\mathcal{P}}E_{F}(\mathcal{N}_{\hat{A}\rightarrow\hat{B}%
};\mathcal{P}), \label{eq:e-fid-opt}%
\end{align}
where the optimization in \eqref{eq:e-fid-opt}\ is over every LOCR\ protocol~$\mathcal{P}$, defined as%
\begin{equation}
\mathcal{P}\coloneqq \{(p(y),\mathcal{E}_{A\rightarrow\hat{A}}^{y}%
,\mathcal{D}_{\hat{B}\rightarrow B}^{y})\}_{y},
\end{equation}
and $E_{F}(\mathcal{N}_{\hat{A}\rightarrow\hat{B}};\mathcal{P})\in\left[
0,1\right]  $ is the entanglement fidelity:%
\begin{align}
E_{F}  &  \equiv E_{F}(\mathcal{N}_{\hat{A}\rightarrow\hat{B}};\mathcal{P})\\
&  \coloneqq \sum_{y}p(y)\operatorname{Tr}[\Phi_{AB}^{d}(\mathcal{D}_{\hat
{B}\rightarrow B}^{y}\circ\mathcal{N}_{\hat{A}\rightarrow\hat{B}}%
\circ\mathcal{E}_{A\rightarrow\hat{A}}^{y})(\Phi_{AB}^{d})].
\end{align}
An optimal LOCR simulation channel for both $e_{\operatorname{LOCR}%
}(\mathcal{N}_{\hat{A}\rightarrow\hat{B}})$ and
$e_{\operatorname{LOCR}}^{F}(\mathcal{N}_{\hat{A}\rightarrow\hat{B}%
})$ has the following form:
\begin{equation}
E_{F}\operatorname{id}_{A\rightarrow B}^{d}+\left(  1-E_{F}\right)
\mathcal{D}_{A\rightarrow B},
\label{eq:ent-fid-ch-spec-form}
\end{equation}
where $\mathcal{D}_{A\rightarrow B}$ is the channel defined in \eqref{eq:randomizing channel}. Thus,
the LOCR\ simulation channel applies the identity channel $\operatorname{id}%
_{A\rightarrow B}^{d}$ with probability $E_{F}$ and the randomizing channel
$\mathcal{D}_{A\rightarrow B}$ with probability $1-E_{F}$.
\end{proposition}

\begin{proof}
    See Appendix~\ref{sec:sim-errs-equal-1WL-QEC-proof}.
\end{proof}

\subsection{LOCR\ simulation of general point-to-point channels}

We can use a point-to-point channel $\mathcal{N}_{\hat{A}\rightarrow\hat{B}}$,
along with LOCR, to simulate another general point-to-point channel
$\mathcal{O}_{A\rightarrow B}$. In this case, the simulation channel
$\widetilde{\mathcal{O}}_{A\rightarrow B}$ has the form%
\begin{equation}\label{eq:O-tilde-definition}
\widetilde{\mathcal{O}}_{A\rightarrow B}\coloneqq \Upsilon_{(\hat
{A}\rightarrow\hat{B})\rightarrow(A\rightarrow B)}(\mathcal{N}_{\hat
{A}\rightarrow\hat{B}}),
\end{equation}
where $\Upsilon_{(\hat{A}\rightarrow\hat{B})\rightarrow(A\rightarrow B)}$ is
an LOCR superchannel, as discussed in Section~\ref{sec:LOCR superchannels}. The simulation error when
employing a specific LOCR\ superchannel $\Upsilon_{(\hat{A}\rightarrow\hat
{B})\rightarrow(A\rightarrow B)}$ is%
\begin{multline}
e_{\operatorname{LOCR}}(\mathcal{O}_{A\rightarrow B},\mathcal{N}%
_{\hat{A}\rightarrow\hat{B}},\Upsilon_{(\hat{A}\rightarrow\hat{B}%
)\rightarrow(A\rightarrow B)})\\
\coloneqq \frac{1}{2}\left\Vert \mathcal{O}_{A\rightarrow B}-\widetilde
{\mathcal{O}}_{A\rightarrow B}\right\Vert _{\diamond},
\end{multline}
and the simulation error minimized over all possible LOCR\ superchannels is%
\begin{multline}\label{eq:QEC_sim_error_optimization}
e_{\operatorname{LOCR}}(\mathcal{O}_{A\rightarrow B},\mathcal{N}%
_{\hat{A}\rightarrow\hat{B}})\coloneqq\\
\inf_{\Upsilon\in\operatorname{LOCR}}e_{\text{$\operatorname*{LOCR}%
$}}(\mathcal{O}_{A\rightarrow B},\mathcal{N}_{\hat{A}\rightarrow\hat{B}%
},\Upsilon_{(\hat{A}\rightarrow\hat{B})\rightarrow(A\rightarrow B)}).
\end{multline}

Again we can alternatively consider quantifying simulation error in terms of
the channel infidelity:%
\begin{multline}
e_{\operatorname{LOCR}}^{F}(\mathcal{O}_{A\rightarrow B}%
,\mathcal{N}_{\hat{A}\rightarrow\hat{B}},\Upsilon_{(\hat{A}\rightarrow\hat
{B})\rightarrow(A\rightarrow B)})\\
\coloneqq 1-F(\mathcal{O}_{A\rightarrow B},\widetilde{\mathcal{O}%
}_{A\rightarrow B}),
\end{multline}%
\begin{multline}
e_{\operatorname{LOCR}}^{F}(\mathcal{O}_{A\rightarrow B}%
,\mathcal{N}_{\hat{A}\rightarrow\hat{B}})\coloneqq\\
\inf_{\Upsilon\in\operatorname{LOCR}}e_{\text{$\operatorname*{LOCR}%
$}}^{F}(\mathcal{O}_{A\rightarrow B},\mathcal{N}_{\hat{A}\rightarrow\hat{B}%
},\Upsilon_{(\hat{A}\rightarrow\hat{B})\rightarrow(A\rightarrow B)}).
\end{multline}

\section{SDP\ lower bounds on the performance of approximate quantum error
correction based on two-PPT extendibility and non-signaling constraints}

\label{sec: VII}

\subsection{SDP\ lower bound on the error in LOCR\ simulation of a channel}

Using \eqref{eq:QEC_sim_error_optimization} to calculate the simulation error, we again encounter an intractable optimization task. Employing the same idea from Section~\ref{sec_SDP_lower_bound_tel}, we enlarge the set of LOCR superchannels to two-PPT-extendible, non-signaling superchannels (abbreviated henceforth as 2PENS). As noted in Section~\ref{sec_2-ppt-ext-non-sig-superchannels}, the 2PENS set  strictly contains the set of LOCR superchannels. Thus, we can obtain a lower bound on the simulation error by optimizing over all 2PENS superchannels. We define the simulation error under 2PENS superchannels as
\begin{multline}\label{eq:QEC_sim_error_2PE}
e_{\operatorname{2PENS}}(\mathcal{O}_{A\rightarrow B},\mathcal{N}%
_{\hat{A}\rightarrow\hat{B}})\coloneqq
\inf_{\Upsilon\in\operatorname{2PENS}}\frac{1}{2}\left\Vert \mathcal{O}_{A\rightarrow B}-\widetilde
{\mathcal{O}}_{A\rightarrow B}\right\Vert _{\diamond},
\end{multline}
where $\widetilde{\mathcal{O}}_{A\rightarrow B}$ is defined in \eqref{eq:O-tilde-definition}.

As a result of the strict containment
\begin{equation}
    \operatorname{LOCR} \subset \operatorname{2PENS},
\end{equation}
we have the relation
\begin{equation}
    e_{\operatorname{2PENS}}(\mathcal{O}_{A\rightarrow B},\mathcal{N}%
    _{\hat{A}\rightarrow\hat{B}}) \le e_{\operatorname{LOCR}}(\mathcal{O}_{A\rightarrow B},\mathcal{N}%
    _{\hat{A}\rightarrow\hat{B}}).
\end{equation}

We now state that the simulation error in \eqref{eq:QEC_sim_error_2PE} can be calculated by means of a semi-definite program.
\begin{proposition}
\label{prop:two-ext-sim-err-gen-ch-ch}The simulation error in \eqref{eq:QEC_sim_error_2PE} can be
calculated by means of the following semi-definite program:%
\begin{equation}
e_{\operatorname{2PENS}}(\mathcal{O}_{A\rightarrow B},\mathcal{N}_{\hat
{A}\rightarrow\hat{B}})=\inf_{\substack{\mu\geq0,Z_{AB}\geq0,\\ M_{A\hat{A}%
\hat{B}_{1}B_{1}\hat{B}_{2}B_{2}}\geq0}}\mu,
\label{eq:obj-func-qec-gen}
\end{equation}
subject to%
\begin{align}
\mu I_{A}  &  \geq Z_{A},\\
Z_{AB}  &  \geq\Gamma_{AB}^{\mathcal{O}}-\operatorname{Tr}_{\hat{A}\hat{B}%
_{1}}[T_{\hat{A}\hat{B}_{1}}(\Gamma_{\hat{A}\hat{B}_{1}}^{\mathcal{N}%
})M_{A\hat{A}\hat{B}_{1}B_{1}}/d_{\hat{B}}],
\label{eq:general-SDP-two-ext-superch-sim}%
\end{align}%
\begin{align}
\operatorname{Tr}_{\hat{A}B_{1}B_{2}}[M_{A\hat{A}\hat{B}_{1}B_{1}\hat{B}%
_{2}B_{2}}]  &  =I_{A\hat{B}_{1}\hat{B}_{2}},
\label{eq:sim-two-ext-superch-TP}\\
(\mathcal{F}_{\hat{B}_{1}\hat{B}_{2}}\otimes\mathcal{F}_{B_{1}B_{2}}%
)(M_{A\hat{A}\hat{B}_{1}B_{1}\hat{B}_{2}B_{2}})  &  =M_{A\hat{A}\hat{B}%
_{1}B_{1}\hat{B}_{2}B_{2}},\label{eq:sim-two-ext-superch-2EXT-constr}\\
\operatorname{Tr}_{B_{2}}[M_{A\hat{A}\hat{B}_{1}B_{1}\hat{B}_{2}B_{2}}]  &
=\frac{M_{A\hat{A}\hat{B}_{1}B_{1}}}{d_{\hat{B}}}\otimes I_{\hat{B}_{2}%
},\label{eq:sim-two-ext-superch-constr-ext}\\
T_{A\hat{A}}(M_{A\hat{A}\hat{B}_{1}B_{1}\hat{B}_{2}B_{2}})  &  \geq0,
\label{eq:PPTA-gen-ch-sim}\\
T_{\hat{B}_{2}B_{2}}(M_{A\hat{A}\hat{B}_{1}B_{1}\hat{B}_{2}B_{2}})  &  \geq0,
\label{eq:PPTB-gen-ch-sim}
\end{align}%
\begin{equation}
\operatorname{Tr}_{\hat{A}}[M_{A\hat{A}\hat{B}_{1}B_{1}\hat{B}_{2}B_{2}%
}]=I_{A}\otimes\frac{1}{d_{A}}\operatorname{Tr}_{\hat{A}A}[M_{A\hat{A}\hat
{B}_{1}B_{1}\hat{B}_{2}B_{2}}].
\label{eq:non-sig-constr-gen-ch-sim}
\end{equation}

\end{proposition}

The objective function and the first two constraints follow from the
semi-definite program in \eqref{eq:diamond-d-SDP} for the normalized diamond
distance. The quantity
\begin{equation}
\operatorname{Tr}_{\hat{A}\hat{B}_{1}}[T_{\hat{A}%
\hat{B}_{1}}(\Gamma_{\hat{A}\hat{B}_{1}}^{\mathcal{N}})M_{A\hat{A}\hat{B}%
_{1}B_{1}}/d_{\hat{B}}]
\end{equation}
in \eqref{eq:general-SDP-two-ext-superch-sim}
is the Choi operator of the serial  composition of the available channel $\mathcal{N}_{\hat
{A}\rightarrow\hat{B}}$\ and the superchannel with corresponding bipartite
channel $\mathcal{K}_{A\hat{B}_{1}\rightarrow\hat{A}B_{1}}$, with Choi
operator $M_{A\hat{A}\hat{B}_{1}B_{1}}/d_{\hat{B}}$, where $\mathcal{K}%
_{A\hat{B}_{1}\rightarrow\hat{A}B_{1}}$ is the marginal channel of
$\mathcal{M}_{A\hat{B}_{1}\hat{B}_{2}\rightarrow\hat{A}B_{1}B_{2}}$, defined
as%
\begin{multline}
\mathcal{K}_{A\hat{B}_{1}\rightarrow\hat{A}B_{1}}(\omega_{A\hat{B}_{1}})\\
\coloneqq\operatorname{Tr}_{B_{2}}[\mathcal{M}_{A\hat{B}_{1}\hat{B}%
_{2}\rightarrow\hat{A}B_{1}B_{2}}(\omega_{A\hat{B}_{1}}\otimes\pi_{\hat{B}%
_{2}})].
\end{multline}
The constraint in \eqref{eq:sim-two-ext-superch-TP} forces $\mathcal{M}%
_{A\hat{B}_{1}\hat{B}_{2}\rightarrow\hat{A}B_{1}B_{2}}$ to be trace
preserving, that in \eqref{eq:sim-two-ext-superch-2EXT-constr} forces
$\mathcal{M}_{A\hat{B}_{1}\hat{B}_{2}\rightarrow\hat{A}B_{1}B_{2}}$ to be
permutation covariant with respect to the $B$ systems (see
\eqref{eq:perm-cov-sim-ch}), and that in
\eqref{eq:sim-two-ext-superch-constr-ext} forces $\mathcal{M}_{A\hat{B}%
_{1}\hat{B}_{2}\rightarrow\hat{A}B_{1}B_{2}}$ to be the extension of the
marginal channel $\mathcal{K}_{A\hat{B}_{1}\rightarrow\hat{A}B_{1}}$. The
final two PPT\ constraints are equivalent to the C-PPT-P\ constraints in
\eqref{eq:C-PPT-P-constr-1-B2} and \eqref{eq:C-PPT-P-constr-2-A}, respectively.

\subsection{SDP\ lower bound on the error of approximate quantum error
correction}

The semi-definite program in Proposition~\ref{prop:two-ext-sim-err-gen-ch-ch}%
\ can be simplified for the special case
$\mathcal{N}_{A\rightarrow B}=\operatorname{id}_{A\rightarrow B}^{d}$ by  exploiting the unitary covariance symmetry of
the identity channel, as stated in \eqref{eq:identity-ch-symmetries}.

\begin{proposition}
\label{prop:simplified-SDP-two-ext-sim-id-QEC}The semi-definite program in
Proposition~\ref{prop:two-ext-sim-err-gen-ch-ch}, for the special case of simulating the identity
channel $\operatorname{id}_{A\rightarrow B}^{d}$, simplifies as follows for $d\geq 3$:%
\begin{align}
&  e_{\operatorname{2PENS}}(\mathcal{N}_{\hat{A}\rightarrow\hat{B}})\nonumber\\
&  =e_{\operatorname{2PENS}}^{F}(\mathcal{N}_{\hat{A}\rightarrow\hat{B}})\\
&  =1-\sup_{\substack{M^{+},M^{-},M^{0}\geq
0,\\M^{1},M^{2},M^{3}\in\operatorname{LinOp}}}\operatorname{Tr}\!\left[T_{\hat{A}\hat{B}_{1}}(\Gamma
_{\hat{A}\hat{B}_{1}}^{\mathcal{N}})\frac{P_{\hat{A}\hat{B}_{1}\hat{B}_{2}}}{d_{\hat{B}}}\right],
\label{eq:obj-func-qec-simp}
\end{align}
subject to%
\begin{equation}%
\begin{bmatrix}
M^{0}+M^{3} & M^{1}-iM^{2}\\
M^{1}+iM^{2} & M^{0}-M^{3}%
\end{bmatrix}
\geq0,
\label{eq:M-matrix-PSD-qec-constr}
\end{equation}%
\begin{align}
I_{\hat{B}_{1}\hat{B}_{2}} &  =\operatorname{Tr}_{\hat{A}}[M_{\hat{A}\hat
{B}_{1}\hat{B}_{2}}^{+}+M_{\hat{A}\hat{B}_{1}\hat{B}_{2}}^{-}+M_{\hat{A}%
\hat{B}_{1}\hat{B}_{2}}^{0}],
\label{eq:TP-constr-simp-QEC}\\
M_{\hat{A}\hat{B}_{1}\hat{B}_{2}}^{i} &  =\mathcal{F}_{\hat{B}_{1}\hat{B}_{2}%
}(M_{\hat{A}\hat{B}_{1}\hat{B}_{2}}^{i})\quad\forall i\in\left\{
+,-,0,1\right\}  , \label{eq:perm-cov-1-constr-simp-QEC}\\
M_{\hat{A}\hat{B}_{1}\hat{B}_{2}}^{j} &  =-\mathcal{F}_{\hat{B}_{1}\hat{B}%
_{2}}(M_{\hat{A}\hat{B}_{1}\hat{B}_{2}}^{j})\quad\forall j\in\left\{
2,3\right\}  ,\label{eq:perm-cov-2-constr-simp-QEC}\\
P_{\hat{A}\hat{B}_{1}\hat{B}_{2}} &  =\frac{1}{d_{\hat{B}}}\operatorname{Tr}%
_{\hat{B}_{2}}[P_{\hat{A}\hat{B}_{1}\hat{B}_{2}}]\otimes I_{\hat{B}_{2}}, \label{eq:P-constr-no-sig-qec}\\
Q_{\hat{A}\hat{B}_{1}\hat{B}_{2}} 
&  =\frac{1}{d_{\hat{B}}}\operatorname{Tr}%
_{\hat{B}_{2}}[Q_{\hat{A}\hat{B}_{1}\hat{B}_{2}}]\otimes I_{\hat{B}_{2}}, \label{eq:Q-constr-no-sig-qec}\\
P_{\hat{A}\hat{B}_{1}\hat{B}_{2}} &  \coloneqq\frac{1}{2d}\left[  dM^{0}%
+M^{1}+\sqrt{d^{2}-1}\,M^{2}\right]  ,\\
Q_{\hat{A}\hat{B}_{1}\hat{B}_{2}} &  \coloneqq\frac{1}{2d}\left[
\begin{array}
[c]{c}%
2d\left(  M_{\hat{A}\hat{B}_{1}\hat{B}_{2}}^{+}+M_{\hat{A}\hat{B}_{1}\hat
{B}_{2}}^{-}\right)  +dM_{\hat{A}\hat{B}_{1}\hat{B}_{2}}^{0}\\
-M_{\hat{A}\hat{B}_{1}\hat{B}_{2}}^{1}-\sqrt{d^{2}-1}\,M_{\hat{A}\hat{B}_{1}%
\hat{B}_{2}}^{2}%
\end{array}
\right]  ,
\end{align}%
\begin{align}
T_{\hat{A}}\!\left(  \frac{2M_{\hat{A}\hat{B}_{1}\hat{B}_{2}}^{+}}%
{d+2}+M_{\hat{A}\hat{B}_{1}\hat{B}_{2}}^{0}+M_{\hat{A}\hat{B}_{1}\hat{B}_{2}%
}^{1}\right)   &  \geq0,\label{eq:PPT-begin-QEC}\\
T_{\hat{A}}\!\left(  \frac{2M_{\hat{A}\hat{B}_{1}\hat{B}_{2}}^{-}}%
{d-2}+M_{\hat{A}\hat{B}_{1}\hat{B}_{2}}^{1}-M_{\hat{A}\hat{B}_{1}\hat{B}_{2}%
}^{0}\right)   &  \geq0,\label{eq:PPT-begin-QEC-2}\\%
\begin{bmatrix}
G^{0}+G^{3} & G^{1}-iG^{2}\\
G^{1}+iG^{2} & G^{0}-G^{3}%
\end{bmatrix}
&  \geq0,
\label{eq:PPTA-end-QEC}
\end{align}%
\begin{align}
&  G_{\hat{A}\hat{B}_{1}\hat{B}_{2}}^{0}\coloneqq T_{\hat{A}}\left(
M^{+}+M^{-}+\frac{M^{0}-dM^{1}}{2}\right)  ,\\
&  G_{\hat{A}\hat{B}_{1}\hat{B}_{2}}^{1}\coloneqq T_{\hat{A}}\left(
M^{+}-M^{-}+\frac{M^{1}-dM^{0}}{2}\right)  ,\\
&  G_{\hat{A}\hat{B}_{1}\hat{B}_{2}}^{2}\coloneqq\frac{\sqrt{3\left(
d^{2}-1\right)  }}{2}T_{\hat{A}}(M_{\hat{A}\hat{B}_{1}\hat{B}_{2}}^{2}),\\
&  G_{\hat{A}\hat{B}_{1}\hat{B}_{2}}^{3}\coloneqq\frac{\sqrt{3\left(
d^{2}-1\right)  }}{2}T_{\hat{A}}(M_{\hat{A}\hat{B}_{1}\hat{B}_{2}}^{3}),
\end{align}%
\begin{align}
T_{\hat{A}\hat{B}_{1}}\!\left(  \frac{dM^{+}}{d+2}+M^{-}+\frac{dM^{0}%
-M^{1}-\sqrt{d^{2}-1}\,M^{2}}{2}\right)   &  \geq0,
\label{eq:PPTB-constr-qec-simp-1}\\
T_{\hat{A}\hat{B}_{1}}\!\left(  M^{+}+\frac{dM^{-}}{d-2}-\frac{dM^{0}%
-M^{1}-\sqrt{d^{2}-1}\,M^{2}}{2}\right)   &  \geq0,
\label{eq:PPTB-constr-qec-simp-2}
\end{align}%
\begin{equation}%
\begin{bmatrix}
E^{0}+E^{3} & E^{1}-iE^{2}\\
E^{1}+iE^{2} & E^{0}-E^{3}%
\end{bmatrix}
\geq0,
\label{eq:PPTB-constr-qec-simp-3}
\end{equation}%
\begin{align}
E_{\hat{A}\hat{B}_{1}\hat{B}_{2}}^{0} &  \coloneqq\frac{T_{\hat{A}\hat{B}_{1}%
}\!\left(  d(M^{+}-M^{-})+\frac{L^{0}}{2}\right)  }{d^{2}-1},\\
E_{\hat{A}\hat{B}_{1}\hat{B}_{2}}^{1} &  \coloneqq\frac{T_{\hat{A}\hat{B}_{1}%
}\!\left(  -M^{+}+M^{-}+\frac{L^{1}}{2}\right)  }{d^{2}-1},\\
E_{\hat{A}\hat{B}_{1}\hat{B}_{2}}^{2} &  \coloneqq\frac{T_{\hat{A}\hat{B}_{1}%
}\!(M^{+}-M^{-}+\frac{L^{2}}{2})}{\sqrt{d^{2}-1}},\\
E_{\hat{A}\hat{B}_{1}\hat{B}_{2}}^{3} &  \coloneqq T_{\hat{A}\hat{B}_{1}%
}(M_{\hat{A}\hat{B}_{1}\hat{B}_{2}}^{3}),\\
L^{0} &  \coloneqq\left(  d^{2}-2\right)  M^{0}+dM^{1}+d\sqrt{d^{2}-1}\,M^{2},\\
L^{1} &  \coloneqq dM^{0}+\left(  2d^{2}-3\right)  M^{1}-\sqrt{d^{2}-1}\,
M^{2},\\
L^{2} &  \coloneqq M^{1}-dM^{0}-\sqrt{d^{2}-1}\,M^{2},
\end{align}%
\begin{align}
\frac{\operatorname{Tr}_{\hat{A}}[2M^{+}]}{\left(  d+2\right)  \left(
d-1\right)  } &  =\frac{\operatorname{Tr}_{\hat{A}}[2M^{+}+M^{0}+M^{1}%
]}{d\left(  d+1\right)  },
\label{eq:non-sig-A-simp-QEC}
\\
\frac{\operatorname{Tr}_{\hat{A}}[2M^{-}]}{\left(  d-2\right)  \left(
d+1\right)  } &  =\frac{\operatorname{Tr}_{\hat{A}}[2M^{-}+M^{0}-M^{1}%
]}{d\left(  d-1\right)  }, \label{eq:non-sig-B-simp-QEC}\\
\frac{1}{2}\operatorname{Tr}_{\hat{A}}[M^{0}] &  =\frac{dI_{\hat{B}_{1}\hat
{B}_{2}}+\operatorname{Tr}_{\hat{A}}[M^{-}-M^{+}-M^{1}]}{d\left(
d^{2}-1\right)  },\\
\frac{1}{2}\operatorname{Tr}_{\hat{A}}[M^{1}] &  =\frac{-I_{\hat{B}_{1}\hat
{B}_{2}}+d\operatorname{Tr}_{\hat{A}}[M^{+}-M^{-}+M^{1}]}{d\left(
d^{2}-1\right)  },\\
\operatorname{Tr}_{\hat{A}}[M^{2}] & = \operatorname{Tr}_{\hat{A}}[M^{3}] = 0.
\label{eq:PPT-end-QEC}
\end{align}
For the case of $d=2$, the SDP is the same, with the exception that we set $M^-_{\hat{A}\hat{B}_1\hat{B}_2} = 0$ and the constraints in \eqref{eq:PPT-begin-QEC-2},  \eqref{eq:PPTB-constr-qec-simp-2}, and \eqref{eq:non-sig-B-simp-QEC} are not used.
\end{proposition}

\begin{proof}
See Appendix~\ref{app:proof-approx-QEC-SDP-simplify}.
\end{proof}

\bigskip 

We now provide expository remarks similar to Remarks~\ref{rem:SDP-explain} and \ref{rem:SDP-many-constr}, as well as an additional remark about approximate quantum error correction assisted by one-way LOCC.

\begin{remark}\label{rem:NS_LOCC}
    The SDP in the statement of Proposition~\ref{prop:simplified-SDP-two-ext-sim-id-QEC} is rather lengthy, and so we provide some explanation here. The constraint in \eqref{eq:M-matrix-PSD-qec-constr} and the constraints $M^+, M^-, M^0 \geq 0$  in \eqref{eq:obj-func-qec-simp} correspond to the constraint of complete positivity in \eqref{eq:obj-func-qec-gen} (i.e., $M_{A\hat{A}%
\hat{B}_{1}B_{1}\hat{B}_{2}B_{2}}\geq0$). The constraint in \eqref{eq:TP-constr-simp-QEC} corresponds to the constraint of trace preservation in \eqref{eq:sim-two-ext-superch-TP}. The constraints in \eqref{eq:perm-cov-1-constr-simp-QEC}--\eqref{eq:perm-cov-2-constr-simp-QEC} correspond to the constraint of permutation covariance in \eqref{eq:sim-two-ext-superch-2EXT-constr}. The constraints in \eqref{eq:P-constr-no-sig-qec}--\eqref{eq:Q-constr-no-sig-qec} correspond to the non-signaling constraint in \eqref{eq:sim-two-ext-superch-constr-ext}. The constraints in \eqref{eq:PPT-begin-QEC}--\eqref{eq:PPTA-end-QEC} correspond to the PPT constraint in \eqref{eq:PPTA-gen-ch-sim}, and the constraints in \eqref{eq:PPTB-constr-qec-simp-1}--\eqref{eq:PPTB-constr-qec-simp-3} correspond to the PPT constraint in \eqref{eq:PPTB-gen-ch-sim}. Finally, the constraints in \eqref{eq:non-sig-A-simp-QEC}--\eqref{eq:PPT-end-QEC} correspond to the non-signaling constraint in \eqref{eq:non-sig-constr-gen-ch-sim}.
\end{remark}

\begin{remark}
\label{rem:SDP-many-constr-QEC}
Even though the number of constraints in the SDP above appears to increase when compared with the SDP from Proposition~\ref{prop:two-ext-sim-err-gen-ch-ch}, we note that the runtime of the SDP above is significantly reduced because the size of the matrices involved in each of the constraints is much smaller. This is the main advantage that we get by incorporating unitary covariance symmetry of the identity channel.

If we only optimized over the larger set of two-extendible channels instead of the set of two-PPT-extendible non-signaling channels, the SDP would be much simpler, given by \eqref{eq:obj-func-qec-simp}--\eqref{eq:Q-constr-no-sig-qec}. However, optimizing over the smaller set of two-PPT-extendible non-signaling channels gives tighter bounds at a marginal increase in computational cost, and thus we also include the PPT constraints in \eqref{eq:PPT-begin-QEC}--\eqref{eq:PPTA-end-QEC} and \eqref{eq:PPTB-constr-qec-simp-1}--\eqref{eq:PPTB-constr-qec-simp-3} and the non-signaling constraints in \eqref{eq:non-sig-A-simp-QEC}--\eqref{eq:PPT-end-QEC}.
\end{remark}

\begin{remark}
By excluding the non-signaling constraints in \eqref{eq:non-sig-A-simp-QEC}--\eqref{eq:PPT-end-QEC}, the resulting SDP gives a lower bound on the simulation error of approximate quantum error correction assisted by a one-way LOCC channel. That is, the resulting SDP gives a lower bound on 
\begin{equation}
e_{\operatorname{1WL}}(\mathcal{N}_{\hat{A}\rightarrow\hat{B}}%
)\coloneqq
\inf_{\Lambda\in\operatorname{1WL}}e(\mathcal{N}_{\hat{A}\rightarrow\hat{B}%
},\Lambda_{(\hat{A}\rightarrow\hat{B})\rightarrow(A\rightarrow B)}),
\end{equation}
where
\begin{multline}e_{\operatorname{1WL}}(\mathcal{N}_{\hat{A}\rightarrow\hat{B}},\Lambda
_{(\hat{A}\rightarrow\hat{B})\rightarrow(A\rightarrow B)})\coloneqq\\
\frac{1}{2}\left\Vert \operatorname{id}_{A\rightarrow B}^{d}-\Lambda
_{(\hat{A}\rightarrow\hat{B})\rightarrow(A\rightarrow B)}(\mathcal{N}_{\hat
{A}\rightarrow\hat{B}})\right\Vert _{\diamond},
\end{multline}
with $\Lambda$ a one-way LOCC superchannel, as defined in \eqref{eq:LOCC-superch-Lam}. By the same reasoning given for Proposition~\ref{prop:equality-sim-errs-LOCR-QEC}, this error is no different if we use infidelity instead of normalized diamond distance.
\end{remark}



\section{Examples}\label{sec: VIII}

In this section we present some numerical results from our semi-definite programs. To perform these numerical calculations, we employed  CVXPY \cite{diamond2016cvxpy, agrawal2018rewriting} with the interior point optimizer MOSEK. All of our Python source code is available with the arXiv posting of our paper.

\subsection{Approximate teleportation and quantum error correction using special mixed states and channels}

\label{sec:ex-1}

First, we provide bounds on the performance of approximate teleportation (i.e.,  on the error in simulating an identity channel), when using a particular set of imperfect resource states.
In the past, PPT constraints alone (i.e., without two-extendibility) have been used to obtain bounds on objective functions involving an optimization over the set of LOCC channels (see, e.g., \cite{LM15,WXD18,WD16,WD16pra,BW17,SW21}). We can also use them to obtain a lower bound on the simulation error of approximate teleportation. By following techniques similar to those in \cite{LM15,SW21}, we find the following SDP gives a lower bound on the simulation error of approximate  teleportation:
\begin{equation}
    1-\sup_{K_{\hat{A}\hat{B}}\geq0}\left\{
\begin{array}
[c]{c}%
\operatorname{Tr}[K_{\hat{A}\hat{B}}\rho_{\hat{A}\hat{B}}]:\\
K_{\hat{A}\hat{B}}\leq I_{\hat{A}\hat{B}},\\
-I_{\hat{A}\hat{B}}\leq d\ T_{\hat{B}}\left(  K_{\hat{A}\hat{B}}\right)  \leq
I_{\hat{A}\hat{B}}%
\end{array}
\right\}.
\label{eq:sdp-approx-tele-just-ppt}
\end{equation}
See Appendix~\ref{app:approx-tele-just-PPT} for a proof. We note here that PPT constraints are implied by the two-PPT-extendibility constraints given in Proposition~\ref{prop:simplified-SDP-two-ext-sim-id}, so that the optimal value in \eqref{eq:sdp-approx-tele-just-ppt} is not smaller than the optimal value in \eqref{eq:obj-func-qec-simp}. We also note that an SDP bearing some similarities to that in \eqref{eq:sdp-approx-tele-just-ppt} was presented in \cite{FWTD19}, but that SDP calculates a bound on one-shot distillable entanglement, whereas the SDP in \eqref{eq:sdp-approx-tele-just-ppt} calculates a bound on the error of approximate teleportation.

In the following example, we show that two-PPT-extendibility gives strictly stronger bounds than PPT constraints alone, when optimizing over one-way LOCC channels.
Consider the following mixed state:
\begin{equation}
    p~\Phi_{\hat{A}\hat{B}} + (1-p)~\pi_{\hat{A}}\otimes\sigma_{\hat{B}},
    \label{eq:special-mixed-state}
\end{equation}
where $p \in [0,1]$, $\Phi_{\hat{A}\hat{B}}$ is the maximally entangled state of Schmidt rank three, $\pi_{\hat{A}}$ is the maximally mixed state of dimension three, and $\sigma_{\hat{B}}$ is a randomly selected $3\times 3$ density matrix. Using the state in \eqref{eq:special-mixed-state} as the resource for approximate teleportation, lower bounds on the simulation error, as given by two-PPT-extendibility, are stronger than those given by PPT constraints alone, for small values of $p$. Figure~\ref{fig:difference} compares the lower bounds obtained for different values of $p$ and randomly generated $\sigma_{\hat{B}}$. The state $\sigma_{\hat{B}}$ that was used to generate data for Figure~\ref{fig:difference} is as follows:
\begin{equation}
    \begin{bmatrix}
    0.140 & 0.043 + 0.024i & -0.143 + 0.028i\\
    0.043 - 0.024i & 0.222 & -0.257 + 0.006i\\
    -0.143 - 0.028i & -0.257 - 0.006i & 0.638
    \end{bmatrix}.
    \label{eq:sigma_B-state-example}
\end{equation}

\begin{figure}
    \centering
    \includegraphics[width = \linewidth]{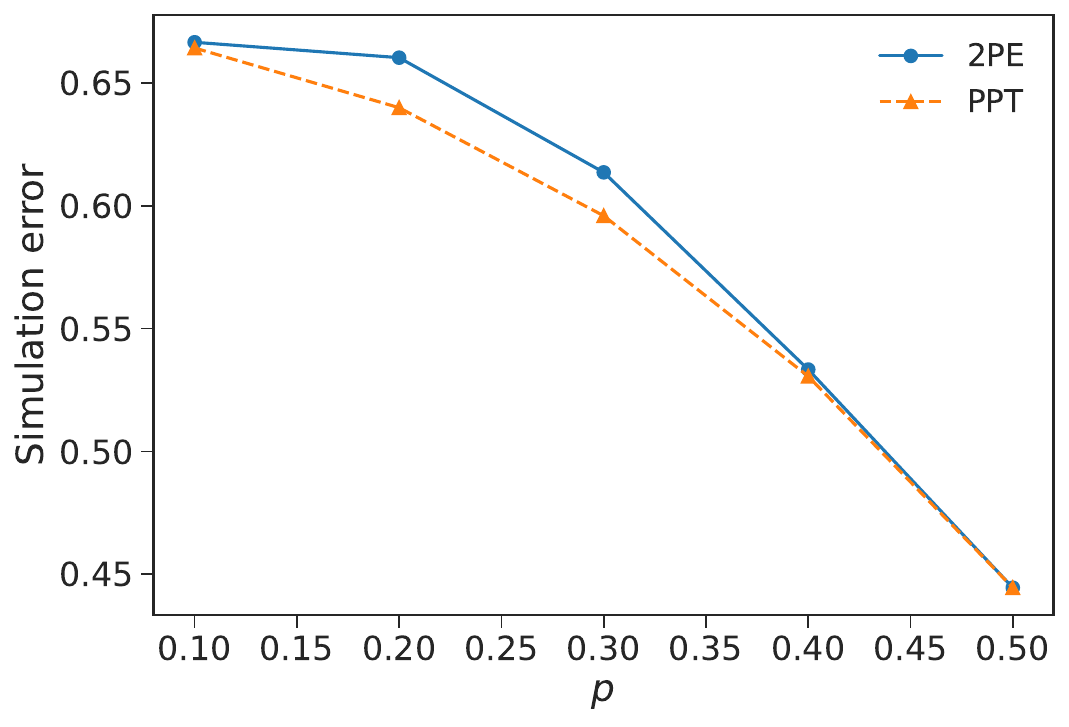}
    \caption{Comparison between two-PPT-extendiblity and PPT constraints for bounding the simulation error in approximate teleportation, when using the resource state $ p~\Phi_{\hat{A}\hat{B}} + (1-p)~\pi_{\hat{A}}\otimes\sigma_{\hat{B}}$, where $p\in [0,1]$. The plot shows that two-PPT-extendibility gives slightly better bounds for $p < 0.5$. For higher values of $p$, the two curves become indistinguishable.}
    \label{fig:difference}
\end{figure}

We note here that the SDP calculations depend on the choice of $\sigma_{\hat{B}}$.
For certain choices of $\sigma_{\hat{B}}$, the difference in the errors disappears for all values of $p$, e.g., when $\sigma_{\hat{B}}$ is a maximally mixed state. 
It still remains open to determine the full set  of resource states for which  two-PPT-extendibility  gives stronger bounds on the simulation error. Regardless, this example demonstrates that including two-PPT-extendibility constraints can improve the bounds obtained using the PPT constraints alone. 

One can consider the same comparison for approximate quantum error correction. Using similar techniques, we derive the following SDP lower bound on the simulation error of approximate quantum error correction for a channel $\mathcal{N}_{\hat{A}\to \hat{B}}$, when using PPT and non-signaling constraints only:
\begin{equation}
1-\sup_{K_{\hat{A}\hat{B}},\sigma_{\hat{A}}\geq0}\left\{
\begin{array}
[c]{c}%
\operatorname{Tr}[K_{\hat{A}\hat{B}}\Gamma_{\hat{A}\hat{B}}^{\mathcal{N}}]:\\
K_{\hat{A}\hat{B}}\leq\sigma_{\hat{A}}\otimes I_{\hat{B}},\\
d^{2}\operatorname{Tr}_{\hat{A}}[K_{\hat{A}\hat{B}}]=I_{\hat{B}},\\
\sigma_{\hat{A}}\otimes I_{\hat{B}}\pm d\ T_{\hat{B}}(K_{\hat{A}\hat{B}}%
)\geq0,\\
\operatorname{Tr}[\sigma_{\hat{A}}]=1.
\end{array}
\right\}  .
\label{eq:PPT-SDP-approx-QEC}
\end{equation}
See Appendix~\ref{sec:PPT-NS-alone-approx-QEC} for a proof.  We note here that essentially the same SDP was given in \cite{LM15} (up to a transpose in the objective function). The SDP in \cite{LM15} resulted from taking the error criterion to be in terms of entanglement fidelity when transmitting the maximally entangled state. Our proof here clarifies that essentially the same SDP results when using normalized diamond distance or channel infidelity as the error criterion. The second constraint in the SDP ($d^2\operatorname{Tr}_{\hat{A}}[K_{\hat{A}\hat{B}}]=I_{\hat{B}}$) corresponds to the non-signaling condition. Following the same reasoning as in Remark~\ref{rem:NS_LOCC}, removing this constraint leads to an SDP that provides  a lower bound on the simulation error of approximate quantum error correction assisted by one-way LOCC. 

The example state in \eqref{eq:special-mixed-state} can also serve as the Choi state of a channel, due to the fact that the reduced state of system~$\hat{A}$ is maximally mixed. In Figure~\ref{fig:difference_qec}, we plot the  lower bound in \eqref{eq:PPT-SDP-approx-QEC} and the lower bound from Proposition~\ref{prop:simplified-SDP-two-ext-sim-id-QEC} for the corresponding channel. Additionally, we also plot the simulation errors that result from excluding the non-signaling constraints from both SDPs. The resulting SDPs provide lower bounds on the errors in approximate quantum error correction assisted by one-way LOCC using PPT and two-PPT-extendibility, respectively. Figure~\ref{fig:difference_qec} demonstrates that the lower bound in Proposition~\ref{prop:simplified-SDP-two-ext-sim-id-QEC} improves upon \eqref{eq:PPT-SDP-approx-QEC} for one-way LOCC simulation but provides no advantage for LOCR simulation. The difference between all four curves becomes very small (less than $10^{-3}$) for higher values of $p$. 

\begin{figure}
    \centering
    \includegraphics[width = \linewidth]{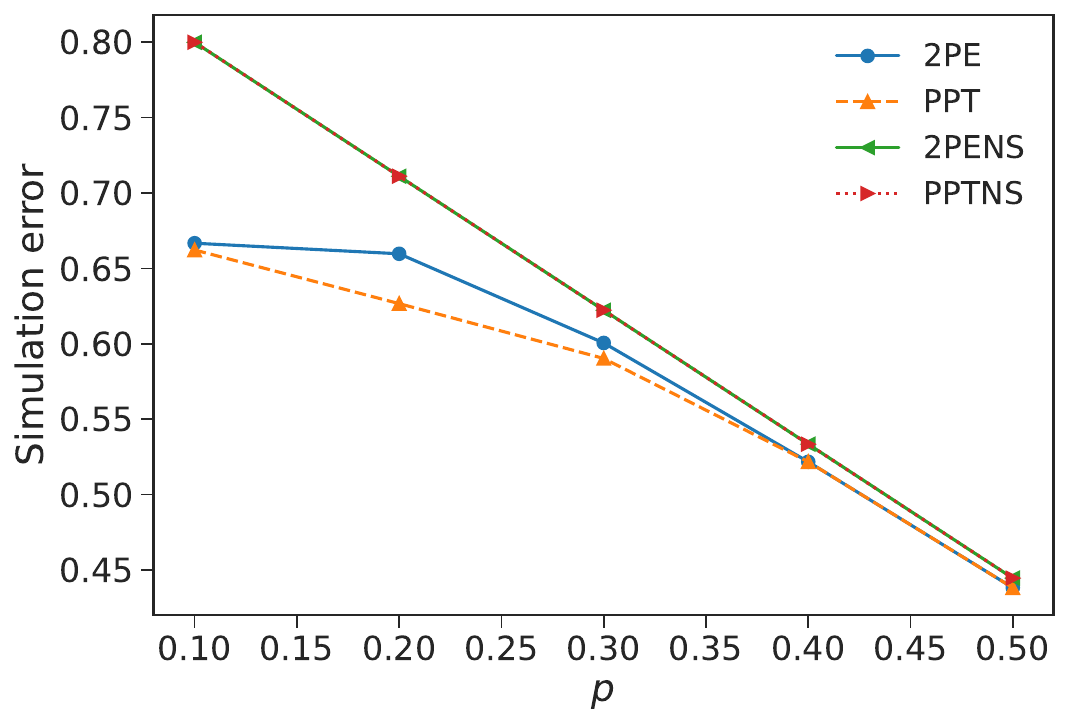}
    \caption{Comparison between two-PPT-extendiblity and PPT constraints for bounding the simulation error in approximate quantum error correction when using the resource channel with Choi state $ p~\Phi_{\hat{A}\hat{B}} + (1-p)~\pi_{\hat{A}}\otimes\sigma_{\hat{B}}$, where $p\in [0,1]$ and $\sigma_{\hat{B}}$ is defined in \eqref{eq:sigma_B-state-example}. PPTNS and 2PENS are the curves obtained using the SDPs in \eqref{eq:PPT-SDP-approx-QEC} and Proposition~\ref{prop:simplified-SDP-two-ext-sim-id-QEC}, respectively, giving lower bounds on the error in approximate quantum error correction. There is no signficant difference in the numercial values obtained from these two conditions. PPT and 2PE are the curves obtained using the same SDPs but without the non-signaling constraints, hence, giving lower bounds on the error in one-way LOCC-assisted approximate error correction.}
    \label{fig:difference_qec}
\end{figure}

\subsection{Three-dimensional approximate teleportation using two-dimensional special mixed states}

\label{sec:ex-3d-tele-2d-resource}

In this example, we investigate the simulation error in approximate teleportation when a lower dimensional imperfect resource state is used to teleport a higher dimensional state. We use a similar resource state as in \eqref{eq:special-mixed-state}:
\begin{equation}
    \rho_{\hat{A}\hat{B}} = p~\Phi_{\hat{A}\hat{B}} + (1-p)~\pi_{\hat{A}}\otimes\sigma^{\prime}_{\hat{B}},
    \label{eq:new-special-mixed-state}
\end{equation}
but the maximally entangled and maximally mixed states are two-dimensional. Additionally, $\sigma^{\prime}_{\hat{B}}$ was generated randomly and is taken as
\begin{equation}\label{eq:sigma_2d}
    \sigma^{\prime}_{\hat{B}} = \begin{bmatrix}
    0.287 & -0.347 + 0.132i \\
    -0.347 - 0.132i & 0.713
    \end{bmatrix}.
\end{equation}
In Figure~\ref{fig:higher_dim_tel}, we plot the bounds on the simulation error versus the parameter $p$ in \eqref{eq:new-special-mixed-state}, when using the 2PE constraints given in Proposition~\ref{prop:simplified-SDP-two-ext-sim-id} and the PPT constraints given in \eqref{eq:sdp-approx-tele-just-ppt}. We also compare this to the bounds on the  simulation error when using a three-dimensional special mixed state instead. The resource state used is the same as the state in \eqref{eq:special-mixed-state}, but $\sigma_{\hat{B}}$ is chosen as follows:
\begin{equation}\label{eq:sigma_3d}
    \begin{bmatrix}
    0.287 & -0.347 + 0.132i & 0 \\
    -0.347 - 0.132i & 0.713 & 0 \\
    0 & 0 & 0
    \end{bmatrix},
\end{equation}
in order to provide a closer comparison with the two-dimensional case in \eqref{eq:new-special-mixed-state}.

\begin{figure}
    \centering
    \includegraphics[width = \linewidth]{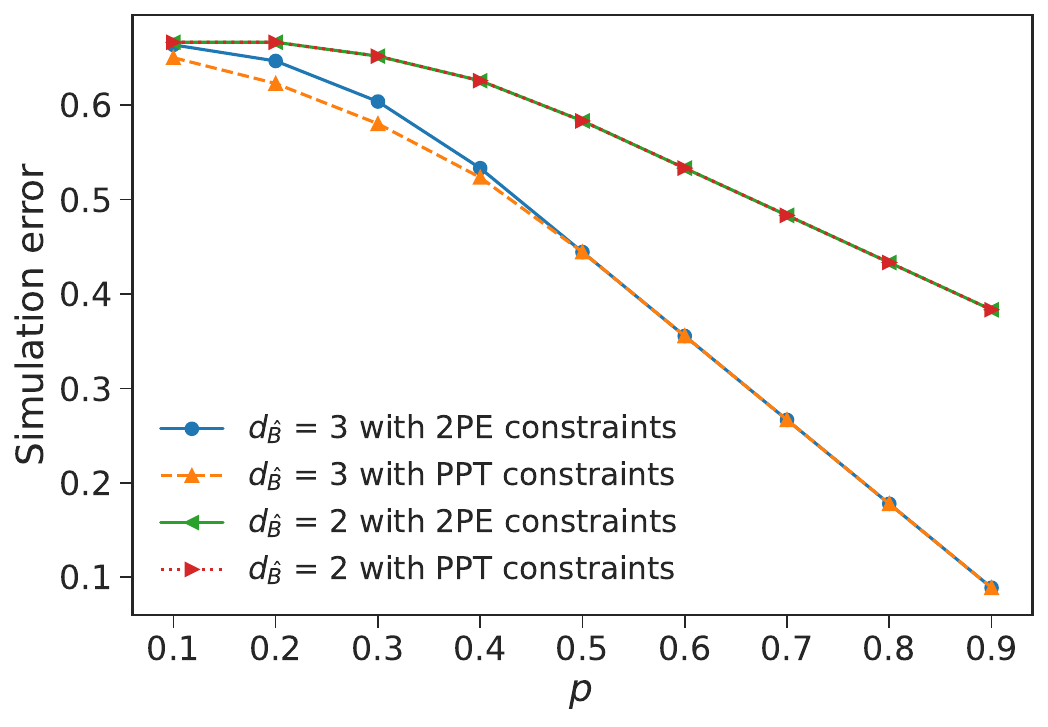}
    \caption{Comparison between bounds on the simulation error for approximate teleportation when using a two-dimensional special mixed state and a three-dimensional special mixed state as a resource. The resource state is of the form $ p~\Phi_{\hat{A}\hat{B}} + (1-p)~\pi_{\hat{A}}\otimes\sigma_{\hat{B}}$, where $p\in [0,1]$ and $\sigma_{\hat{B}}$ is chosen to be \eqref{eq:sigma_2d} when $d_{\hat{B}}=2$ and \eqref{eq:sigma_3d} when $d_{\hat{B}}=3$. The bounds on the simulation error are calculated using both the 2PE constraints given in Proposition~\ref{prop:simplified-SDP-two-ext-sim-id} and the PPT constraints given in~\eqref{eq:sdp-approx-tele-just-ppt}. There is no significiant difference in the numerical values obtained from both the constraints for $d_{\hat{B}} = 2$.}
    \label{fig:higher_dim_tel}
\end{figure}

We see from Figure~\ref{fig:higher_dim_tel} that a two-dimensional resource state with a small amount of imperfection can outperform a three-dimensional resource with higher amounts of imperfection for the task of three-dimensional approximate teleportation. We also notice that the 2PE constraints and the PPT constraints give the same error values when $d_{\hat{B}}=2$, but give different values when $d_{\hat{B}}=3$, as seen in Figure~\ref{fig:difference} as well.

\subsection{Approximate quantum error correction for depolarizing channels}

\label{sec:ex-depol-ch}

In this example, we investigate the simulation error in approximate error correction for qubit and qutrit depolarizing channels, with the goal of simulating a qubit identity channel. The Choi state of the depolarizing channel $\mathcal{D}_{\hat{A}\to\hat{B}}$ is given by
\begin{equation}
    \Phi_{\hat{A}\hat{B}}^{\mathcal{D}} \coloneqq p~\Phi_{\hat{A}\hat{B}} + (1-p)~\pi_{\hat{A}\hat{B}},
\end{equation}
where $p\in[0,1]$, $\Phi_{\hat{A}\hat{B}}$ is the maximally entangled state, and $\pi_{\hat{A}\hat{B}}$ is the maximally mixed state. For a qubit depolarizing channel, $d_{\hat{A}} = d_{\hat{B}} = 2$, and for a qutrit depolarizing channel, $d_{\hat{A}} = d_{\hat{B}} = 3$.
\begin{figure}
    \centering
    \includegraphics[width = \linewidth]{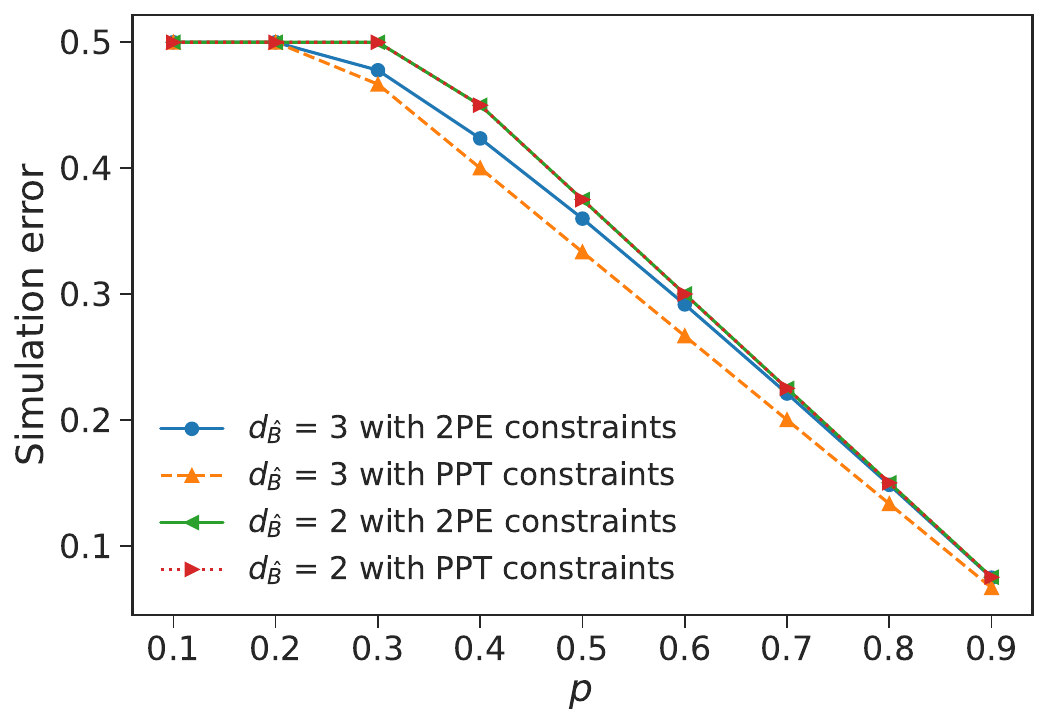}
    \caption{Lower bounds on the simulation error of approximate quantum error correction for depolarizing channels when simulating a two-dimensional identity channel. The bounds are calculated using the SDP in Proposition~\ref{prop:simplified-SDP-two-ext-sim-id-QEC} with two-PPT-extendibility constraints, and the SDP in \eqref{eq:PPT-SDP-approx-QEC} with PPT constraints only, for different dimensions of the depolarizing channel ($d_{\hat{B}}=2$ and $d_{\hat{B}}=3$). The bounds are obtained without the non-signaling constraints, hence, corresponding to one-way LOCC simulation.}
    \label{fig:depolarizing_qec}
\end{figure}

In Figure~\ref{fig:depolarizing_qec}, we plot the lower bounds on the simulation error of approximate error correction for a depolarizing channel, when simulating a qubit identity channel. The bounds are obtained using the two-PPT-extendibility conditions from Proposition~\ref{prop:simplified-SDP-two-ext-sim-id-QEC} and the PPT conditions from \eqref{eq:PPT-SDP-approx-QEC}. The bounds are calculated for the case of one-way LOCC assistance, i.e., by ignoring the non-signaling constraints in \eqref{eq:non-sig-A-simp-QEC}--\eqref{eq:PPT-end-QEC} and the constraint $d^{2}\operatorname{Tr}_{\hat{A}}[K_{\hat{A}\hat{B}}]=I_{\hat{B}}$ in \eqref{eq:PPT-SDP-approx-QEC}, respectively. We notice from Figure~\ref{fig:depolarizing_qec} that the two-PPT-extendibility constraints give better bounds compared to the PPT constraints when using a three-dimensional depolarizing channel to simulate a two-dimensional identity channel. However, both sets of constraints give the same bounds when a two-dimensional depolarizing channel is used to simulate a two-dimensional identity channel. This was also observed in the numerical calculations of \cite{BBFS21}. 

We also note that a three-dimensional depolarizing channel provides little advantage over a two-dimensional depolarizing channel for simulating a two-dimensional identity channel. Therefore, a two-dimensional depolarizing channel with a slightly higher value of the parameter $p$ can outperform a three-dimensional depolarizing channel with a lower value of $p$, for the purpose of approximating a qubit identity channel.

\subsection{Approximate teleportation using the two-mode squeezed vacuum state}

\label{sec:ex-2}

Two-mode squeezed vacuum states are easily prepared in laboratories and have entanglement content that can be parameterized by $\lambda \geq 0$. They are defined as \cite{S17}
\begin{equation}
    \sqrt{1-\lambda^2}\sum^{\infty}_{n=0} \lambda^n\ket{n}\ket{n}.
\end{equation}
They are used as a resource state in continuous-variable quantum teleportation \cite{braunstein1998teleportation} and have also been used as a resource in experiments on teleportation of photonic qubits \cite{furusawa1998unconditional,takeda2013deterministic}. Here we investigate bounds on the performance of qudit teleportation with the two-mode squeezed vacuum state as the resource state.

The parameter $\lambda$ denotes the strength of squeezing applied ($\lambda = \operatorname{tanh}(r)$, where $r$ is the squeezing parameter). For low squeezing strength, we can ignore higher order terms in $\lambda$ without inducing much error.   We use the following state in our calculations for qudit teleportation:
\begin{equation}
     \frac{1}{\sqrt{1+\lambda^2+\lambda^4}}\sum^{2}_{n=0} \lambda^n\ket{n}\ket{n}.
\end{equation}
However, for higher values of the squeezing strength (i.e., $\lambda$ near to one), we do not expect this approximation to be good.

\begin{figure}
    \centering
    \includegraphics[width = \linewidth]{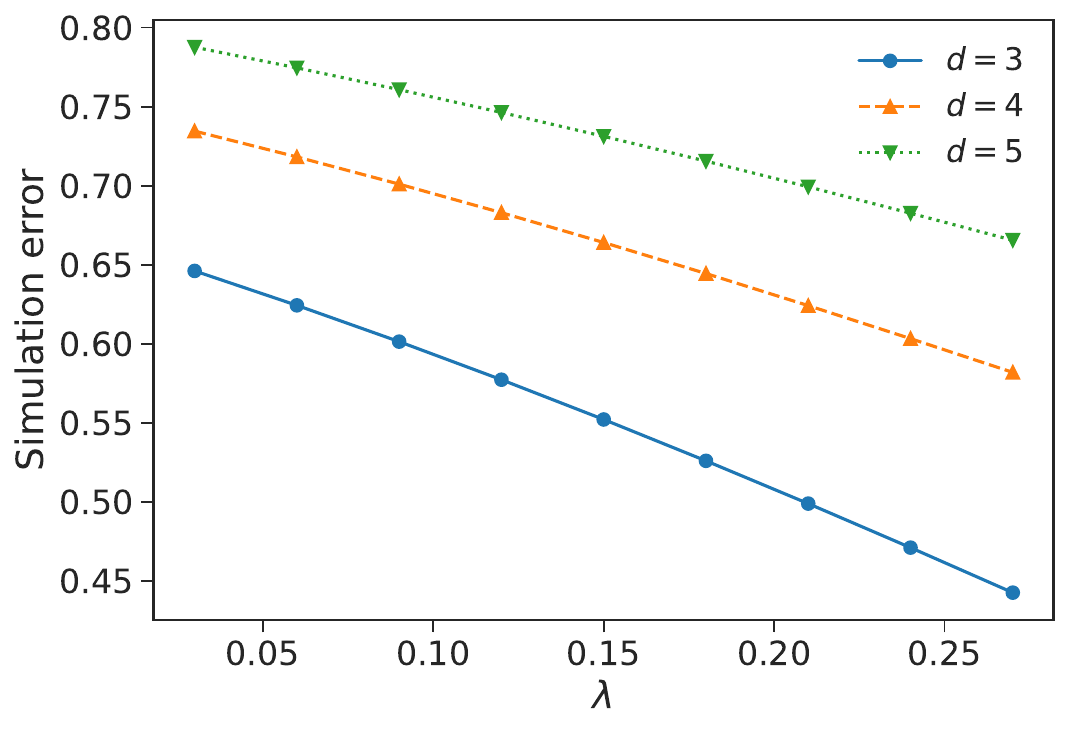}
    \caption{Lower bounds on the simulation error of unideal telportation when using the two-mode squeezed vacuum state as the resource state. The parameter $\lambda = \operatorname{tanh}(r)$, where $r$ is the squeezing parameter. Larger values of  $\lambda$ correspond to larger values of entanglement, which leads to a smaller error in simulating the identity channel.}
    \label{fig:teleportation}
\end{figure}

Figure~\ref{fig:teleportation} demonstrates that the simulation error increases with $d$ for fixed values of $\lambda$, where $d$ is the dimension of the target identity channel that the protocol is simulating. The simulation error does not go to zero for $d>3$, even for maximally entangled qutrit resource states. Therefore, projecting this trend further, we conclude that simulation of a higher-dimensional identity channel with a lower-dimensional resource state incurs larger errors in the simulation. We note here that we observed no difference in the values calculated by the SDPs in \eqref{eq:PPT-SDP-approx-QEC} and Proposition~\ref{prop:simplified-SDP-two-ext-sim-id-QEC}.

\subsection{Approximate quantum error correction for a three-level amplitude damping channel}

\label{sec:ex-3}

Here we present an example of our bound for the simulation error in approximate error correction. We consider a three-level amplitude damping channel, as defined in \cite{Chessa2021Feb}, to demonstrate our SDP in Proposition~\ref{prop:simplified-SDP-two-ext-sim-id-QEC}.

\begin{figure}
    \centering
    \includegraphics[width = \linewidth]{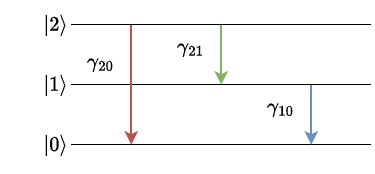}
    \caption{Action of an amplitude damping channel on a three-level quantum system. The parameters $\gamma_{10}$, $\gamma_{20}$, and $\gamma_{21}$ represent decay rates between the respective levels.}
    \label{fig:amp_damp}
\end{figure}

The channel can be defined using three decay parameters, labeled by the states involved: $(\gamma_{10}, \gamma_{21},\gamma_{20})$. See Figure~\ref{fig:amp_damp} for a depiction. 
The Kraus operators for the three-level amplitude damping channel are as follows:
\begin{align}
    K_0 & \coloneqq \begin{bmatrix}
    1 & 0 & 0\\
    0 & \sqrt{1-\gamma_{10}} & 0\\
    0 & 0 & \sqrt{1-\gamma_{21}-\gamma_{20}}
    \end{bmatrix},\\
    K_1 & \coloneqq \begin{bmatrix}
    0 & \sqrt{\gamma_{10}} & 0\\
    0 & 0 & 0\\
    0 & 0 & 0
    \end{bmatrix},\\
    K_2 & \coloneqq \begin{bmatrix}
    0 & 0 & 0\\
    0 & 0 & \sqrt{\gamma_{21}}\\
    0 & 0 & 0
    \end{bmatrix}, \\
    K_3 & \coloneqq \begin{bmatrix}
    0 & 0 & \sqrt{\gamma_{20}}\\
    0 & 0 & 0\\
    0 & 0 & 0
    \end{bmatrix},
\end{align}
so that its action on an input state $\rho$ is given by
$\sum_{i=0}^{3} K_i \rho K_i^{\dag}$. For the map to be completely positive and trace preserving, the decay parameters must obey
\begin{equation}
    \begin{cases}
    0\le \gamma_i \le 1 & \forall i\in\{10,21,20\}\\
    \gamma_{21} + \gamma_{20} \le 1
    \end{cases}.
\end{equation}

\begin{figure}
    \centering
    \includegraphics[width = \linewidth]{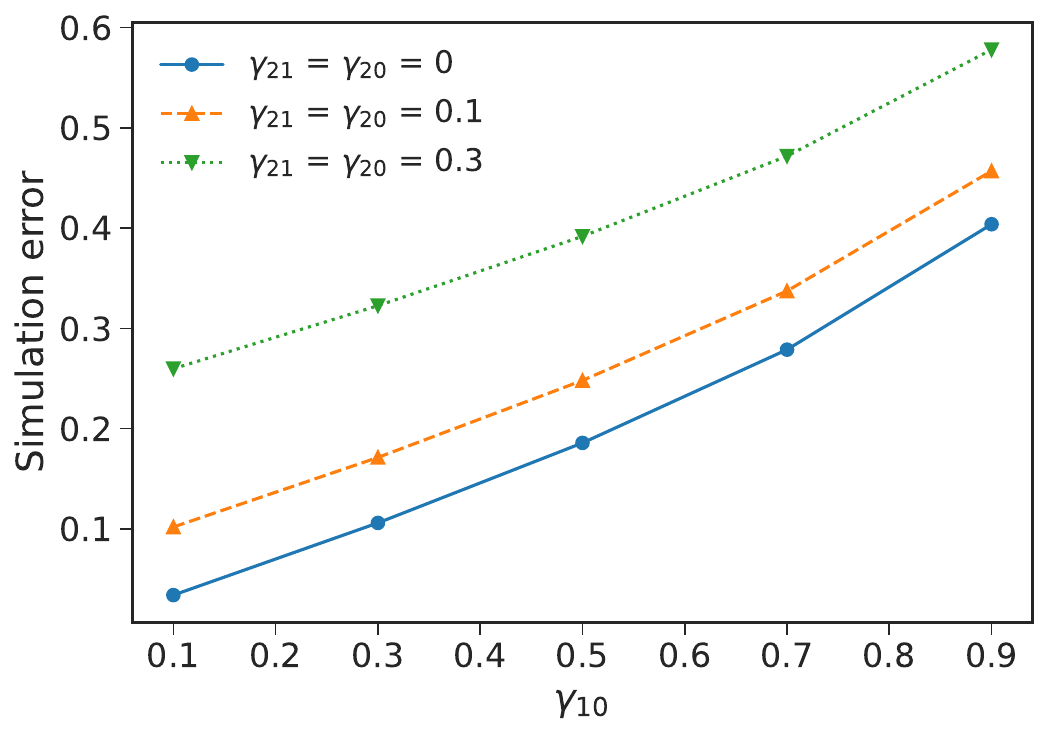}
    \caption{Lower bounds on the simulation error of approximate quantum error correction when using a three-level amplitude damping channel. The parameters $\gamma_{10}$, $\gamma_{21}$, and $\gamma_{20}$ are decay parameters for the labeled states. The dimension $d$ of the target identity channel is set equal to the input and output dimensions (equal to three) of  the amplitude-damping channel. }
    \label{fig:error_correction}
\end{figure}

Figure~\ref{fig:error_correction} plots the lower bound on the simulation error as a function of the decay parameter $\gamma_{10}$, for various values of the other decay parameters.
We notice in Figure~\ref{fig:error_correction} that the simulation error monotonically increases with the decay parameters. As all three decay parameters approach zero, the channel becomes close to an identity channel. This is reflected in the plot as the simulation error also approaches zero. We note here that we observed no difference in the values calculated by the SDPs in \eqref{eq:PPT-SDP-approx-QEC} and Proposition~\ref{prop:simplified-SDP-two-ext-sim-id-QEC}.

\subsection{Comparison of computational runtimes}

In this section we present the average runtime to execute various SDPs listed in this work. The calculations were performed on a computer with 16 GB RAM and an Intel i7-9750H processor. 

All calculations that generated the entries in Table~\ref{tab:runtime} employed the two-dimensional maximally entangled state. For approximate teleportation, the input is the maximally entangled state of Schmidt rank two, and for approximate error correction, the input is the qubit identity channel. The simulated channel is also the qubit identity channel in all cases.
\begin{table}
    \centering
    \begin{tabular}{|c|c|}
    \hline
    SDP & Runtime (seconds) \\
    \hline
    \hline
    Teleportation unsimplified 2PE & 253.03\\
    \hline
    Teleportation 2PE & 10.34 \\
    \hline
    Teleportation PPT & 0.19 \\
    \hline
    Error Correction unsimplified & 147.75\\
    \hline
    Error Correction unsimplified 2PENS & 158.22\\
    \hline
    Error Correction 2PENS & 5.65\\
    \hline
    Error Correction 2PE & 5.38\\
    \hline
    Error Correction PPTNS & 0.20\\
    \hline
    Error Correction PPT & 0.16\\
    \hline
\end{tabular}
    \caption{Comparing the runtime of different SDPs presented in this work. 2PE refers to two-PPT-extendibility constraints and NS indicates that non-signaling conditions were used. All calculations were done for a two-dimensional resource state and simulating the two-dimensional identity channel.}
    \label{tab:runtime}
\end{table}
The runtimes were calculated using time.time() function in Python. They are only presented for the purpose of comparison and can vary moderately. 

All runtimes are listed in Table~\ref{tab:runtime}, where we see that the unsimplified SDP for approximate teleportation with two-PPT-extendibility, given in Proposition~\ref{prop:two-ext-sim-err-gen-ch}, is around 25 times slower than the simplified SDP for the same in Proposition~\ref{prop:simplified-SDP-two-ext-sim-id}. The SDP for the simulation error in approximate teleportation using PPT constraints that is given in \eqref{eq:sdp-approx-tele-just-ppt} is several times faster than when two-PPT-extendibility constraints are employed, but we have seen in the examples that two-PPT-extendibility constraints can give tighter lower bounds on the simulation error.

Similarly, we see that the unsimplified SDP for approximate error correction when using two-PPT-extendibility constraints (Proposition~\ref{prop:two-ext-sim-err-gen-ch-ch}) is several times slower than the simplified SDP given in Proposition~\ref{prop:simplified-SDP-two-ext-sim-id-QEC}. Again, the SDP with PPT constraints given in \eqref{eq:PPT-SDP-approx-QEC} is much faster than the SDP with two-PPT-extendibility constraints, but we have demonstrated examples for which two-PPT-extendibility constraints provide a tighter lower bound on the simulation error.

\section{Conclusion}

\label{sec: Conclusion}

In this work, we developed a technique for quantifying the performance of approximate teleportation using an arbitrary resource state, by establishing a lower bound on the error in simulating a teleportation protocol that uses an imperfect resource state and one-way LOCC channels. We accomplished this by combining the notions of C-PPT-P channels and two-extendible channels to give a relaxation of the set of one-way LOCC channels, as was done previously in \cite{BBFS21} but for approximate quantum error correction. We significantly reduced the complexity of our semi-definite program by exploiting the unitary covariance symmetry  of the simulated identity channel. This symmetry is useful in semi-definite programs and can have much wider applications with respect to dynamical resource theories. As an example, we evaluated our lower bound when using a two-mode squeezed vacuum state as the resource state for approximate teleportation. 

We used related techniques to quantify the performance of approximate quantum error correction. Incorporating two-PPT-extendibility constraints again led to  computationally feasible semi-definite optimizations for evaluating lower bounds on the error in approximate quantum error correction. We further exploited the unitary covariance symmetry of the identity channel to give a less computationally taxing semi-definite program to calculate the error. Finally, we demonstrated some calculations for amplitude damping channels as the resource channels.

The SDPs in this work provide computational support to ongoing experimental research in quantum information by providing tools to analyse available resources and identify valuable states and channels.

Several directions for future work remain open: 
\begin{enumerate}
    \item We have only considered two-extendible channels; incorporation of $k$-extendible channels for $k > 2$ into our semi-definite optimization could offer tighter bounds on the measures we have described. The recent work of \cite{GO22} might be helpful for addressing this problem.
The notion of two-PPT-extendible channels is interesting in its own right via its connection with one-way LOCC channels.

\item It would also be interesting to find semi-definite constraints on one-way LOCC and LOCR channels, beyond those presented here, which include $k$-extendibility, PPT, and non-signaling. 

\item One could also try to find semi-definite tightenings of one-way LOCC and LOCR, which would lead to upper bounds on the simulation errors.

\item The paper \cite{SW21} shows that PPT constraints are sufficient to determine the exact simulation error in bidirectional teleportation for certain special states. Future work can identify a class of resource states that saturate the error bound using two-PPT-extendibility constraints, e.g., states that are PPT but two-unextendible. Such a class of states can offer insight not only in the study of teleportation protocols, but also to entanglement of states and channels.
\end{enumerate}
 
\bigskip

\begin{acknowledgments}

We thank Mario Berta, Eric Chitambar, Arshag Danageozian, Felix Leditzky, and Aliza Siddiqui for helpful discussions. This material is based upon work supported by the National Science Foundation under award OAC-1852454 with additional support from the Center for Computation \& Technology at Louisiana State University. VS and MMW acknowledge support from the National Science Foundation under grant no.~1907615.
\end{acknowledgments}

\bibliographystyle{alpha}
\bibliography{Ref}

\newcommand{\etalchar}[1]{$^{#1}$}
\begin{thebibliography}{BBDM{\etalchar{+}}98}

\bibitem[AVDB18]{agrawal2018rewriting}
Akshay Agrawal, Robin Verschueren, Steven Diamond, and Stephen Boyd.
\newblock A rewriting system for convex optimization problems.
\newblock {\em Journal of Control and Decision}, 5(1):42--60, 2018.

\bibitem[BBC{\etalchar{+}}93]{PhysRevLett.70.1895}
Charles~H. Bennett, Gilles Brassard, Claude Cr\'epeau, Richard Jozsa, Asher
  Peres, and William~K. Wootters.
\newblock Teleporting an unknown quantum state via dual classical and
  {Einstein-Podolsky-Rosen} channels.
\newblock {\em Physical Review Letters}, 70(13):1895--1899, March 1993.

\bibitem[BBDM{\etalchar{+}}98]{PhysRevLett.80.1121}
D.~Boschi, S.~Branca, F.~De~Martini, L.~Hardy, and S.~Popescu.
\newblock Experimental realization of teleporting an unknown pure quantum state
  via dual classical and {Einstein-Podolsky-Rosen} channels.
\newblock {\em Physical Review Letters}, 80(6):1121--1125, February 1998.

\bibitem[BBFS21]{BBFS21}
Mario Berta, Francesco Borderi, Omar Fawzi, and Volkher Scholz.
\newblock Semidefinite programming hierarchies for constrained bilinear
  optimization.
\newblock {\em Mathematical Programming}, 194:781--829, 2021.
\newblock arXiv:1810.12197.

\bibitem[BBP{\etalchar{+}}96]{BBPSSW96EPP}
Charles~H. Bennett, Gilles Brassard, Sandu Popescu, Benjamin Schumacher,
  John~A. Smolin, and William~K. Wootters.
\newblock Purification of noisy entanglement and faithful teleportation via
  noisy channels.
\newblock {\em Physical Review Letters}, 76(5):722--725, January 1996.
\newblock arXiv:quant-ph/9511027.

\bibitem[BDSW96]{BDSW96}
Charles~H. Bennett, David~P. DiVincenzo, John~A. Smolin, and William~K.
  Wootters.
\newblock Mixed-state entanglement and quantum error correction.
\newblock {\em Physical Review A}, 54(5):3824--3851, November 1996.
\newblock arXiv:quant-ph/9604024.

\bibitem[BK98]{braunstein1998teleportation}
Samuel~L. Braunstein and H.~Jeff Kimble.
\newblock Teleportation of continuous quantum variables.
\newblock {\em Physical Review Letters}, 80(4):869, January 1998.

\bibitem[BPM{\etalchar{+}}97]{Bouwmeester1997}
Dik Bouwmeester, Jian-Wei Pan, Klaus Mattle, Manfred Eibl, Harald Weinfurter,
  and Anton Zeilinger.
\newblock Experimental quantum teleportation.
\newblock {\em Nature}, 390(6660):575--579, December 1997.
\newblock arXiv:1901.11004.

\bibitem[BW18]{BW17}
Mario Berta and Mark~M. Wilde.
\newblock Amortization does not enhance the max-{Rains} information of a
  quantum channel.
\newblock {\em New Journal of Physics}, 20(5):053044, May 2018.
\newblock arXiv:1709.00200.

\bibitem[CDP08a]{CDP08a}
Giulio Chiribella, Giacomo~Mauro D'Ariano, and Paolo Perinotti.
\newblock Memory effects in quantum channel discrimination.
\newblock {\em Physical Review Letters}, 101(18):180501, October 2008.
\newblock arXiv:0803.3237.

\bibitem[CDP08b]{CDP08}
Giulio Chiribella, Giacomo~Mauro D'Ariano, and Paolo Perinotti.
\newblock Transforming quantum operations: Quantum supermaps.
\newblock {\em Europhysics Letters}, 83(3):30004, August 2008.
\newblock arXiv:0804.0180.

\bibitem[CDP09]{CDP09}
Giulio Chiribella, Giacomo~Mauro D'Ariano, and Paolo Perinotti.
\newblock Theoretical framework for quantum networks.
\newblock {\em Physical Review A}, 80(2):022339, August 2009.
\newblock arXiv:0904.4483.

\bibitem[CdVGG20]{CDGG20}
Eric Chitambar, Julio~I. de~Vicente, Mark~W. Girard, and Gilad Gour.
\newblock Entanglement manipulation beyond local operations and classical
  communication.
\newblock {\em Journal of Mathematical Physics}, 61(4):042201, April 2020.
\newblock arXiv:1711.03835.

\bibitem[CG21]{Chessa2021Feb}
Stefano Chessa and Vittorio Giovannetti.
\newblock Quantum capacity analysis of multi-level amplitude damping channels.
\newblock {\em Communications Physics}, 4(22):1--12, February 2021.
\newblock arXiv:2008.00477.

\bibitem[DB16]{diamond2016cvxpy}
Steven Diamond and Stephen Boyd.
\newblock {CVXPY}: {A} {P}ython-embedded modeling language for convex
  optimization.
\newblock {\em Journal of Machine Learning Research}, 17(83):1--5, 2016.

\bibitem[DPS02]{DPS02}
Andrew~C. Doherty, Pablo~A. Parrilo, and Federico~M. Spedalieri.
\newblock Distinguishing separable and entangled states.
\newblock {\em Physical Review Letters}, 88(18):187904, April 2002.
\newblock arXiv:quant-ph/0112007.

\bibitem[DPS04]{DPS04}
Andrew~C. Doherty, Pablo~A. Parrilo, and Federico~M. Spedalieri.
\newblock Complete family of separability criteria.
\newblock {\em Physical Review A}, 69(2):022308, February 2004.
\newblock arXiv:quant-ph/0308032.

\bibitem[EW01]{EW01}
Tilo Eggeling and Reinhard~F. Werner.
\newblock Separability properties of tripartite states with
  $u\ensuremath{\otimes}u\ensuremath{\otimes}u$ symmetry.
\newblock {\em Physical Review A}, 63(4):042111, March 2001.
\newblock arXiv:quant-ph/0010096.

\bibitem[FSB{\etalchar{+}}98]{furusawa1998unconditional}
Akira Furusawa, Jens~Lykke S{\o}rensen, Samuel~L. Braunstein, Christopher~A.
  Fuchs, H.~Jeff Kimble, and Eugene~S. Polzik.
\newblock Unconditional quantum teleportation.
\newblock {\em Science}, 282(5389):706--709, October 1998.

\bibitem[FWTB20]{FWTB18}
Kun Fang, Xin Wang, Marco Tomamichel, and Mario Berta.
\newblock Quantum channel simulation and the channel's smooth max-information.
\newblock {\em IEEE Transactions on Information Theory}, 66(4):2129--2140,
  April 2020.
\newblock arXiv:1807.05354.

\bibitem[FWTD19]{FWTD19}
Kun Fang, Xin Wang, Marco Tomamichel, and Runyao Duan.
\newblock Non-asymptotic entanglement distillation.
\newblock {\em IEEE Transactions on Information Theory}, 65(10):6454--6465,
  October 2019.
\newblock arXiv:1706.06221.

\bibitem[Gha10]{Ghar10}
Sevag Gharibian.
\newblock Strong {NP}-hardness of the quantum separability problem.
\newblock {\em Quantum Information and Computation}, 10(3):343--360, March
  2010.
\newblock arXiv:0810.4507.

\bibitem[GLN05]{GLN04}
Alexei Gilchrist, Nathan~K. Langford, and Michael~A. Nielsen.
\newblock Distance measures to compare real and ideal quantum processes.
\newblock {\em Physical Review A}, 71(6):062310, June 2005.
\newblock arXiv:quant-ph/0408063.

\bibitem[GO22]{GO22}
Dmitry Grinko and Maris Ozols.
\newblock Linear programming with unitary-equivariant constraints, July 2022.
\newblock arXiv:2207.05713.

\bibitem[Gou19]{Gour18}
Gilad Gour.
\newblock Comparison of quantum channels with superchannels.
\newblock {\em IEEE Transactions on Information Theory}, 65(9):5880--5904,
  September 2019.
\newblock arXiv:1808.02607.

\bibitem[Gur04]{Gur04}
Leonid Gurvits.
\newblock Classical complexity and quantum entanglement.
\newblock {\em Journal of Computer and System Sciences}, 69(3):448--484, 2004.
\newblock arXiv:quant-ph/0303055.

\bibitem[Hay17]{H17}
Masahito Hayashi.
\newblock {\em Quantum Information Theory: Mathematical Foundation}.
\newblock Springer, second edition, 2017.

\bibitem[HH99]{Horodecki99}
{Micha\l{}} Horodecki and {Pawe\l{}} Horodecki.
\newblock Reduction criterion of separability and limits for a class of
  distillation protocols.
\newblock {\em Physical Review A}, 59(6):4206--4216, June 1999.
\newblock arXiv:quant-ph/9708015.

\bibitem[HHH99]{PhysRevA.60.1888}
Micha\l{} Horodecki, Pawe\l{} Horodecki, and Ryszard Horodecki.
\newblock General teleportation channel, singlet fraction, and
  quasidistillation.
\newblock {\em Physical Review A}, 60(3):1888--1898, September 1999.
\newblock arXiv:quant-ph/9807091.

\bibitem[Hol19]{H13book}
Alexander~S. Holevo.
\newblock {\em Quantum Systems, Channels, Information: A Mathematical
  Introduction}.
\newblock Walter de Gruyter, second edition, 2019.

\bibitem[JV13]{JV13}
Peter~D. Johnson and Lorenza Viola.
\newblock Compatible quantum correlations: Extension problems for {Werner} and
  isotropic states.
\newblock {\em Physical Review A}, 88(3):032323, September 2013.
\newblock arXiv:1305.1342.

\bibitem[KDWW19]{KDWW19}
Eneet Kaur, Siddhartha Das, Mark~M. Wilde, and Andreas Winter.
\newblock Extendibility limits the performance of quantum processors.
\newblock {\em Physical Review Letters}, 123(7):070502, August 2019.
\newblock arXiv:1803.10710.

\bibitem[KDWW21]{KDWW21}
Eneet Kaur, Siddhartha Das, Mark~M. Wilde, and Andreas Winter.
\newblock Resource theory of unextendibility and nonasymptotic quantum
  capacity.
\newblock {\em Physical Review A}, 104(2):022401, August 2021.
\newblock arXiv:2108.03137.

\bibitem[Kit97]{Kit97}
Alexei Kitaev.
\newblock Quantum computations: algorithms and error correction.
\newblock {\em Russian Mathematical Surveys}, 52(6):1191--1249, 1997.

\bibitem[KW17]{KW17}
Eneet Kaur and Mark~M. Wilde.
\newblock Amortized entanglement of a quantum channel and approximately
  teleportation-simulable channels.
\newblock {\em Journal of Physics A: Mathematical and Theoretical},
  51(3):035303, December 2017.
\newblock arXiv:1707.07721.

\bibitem[KW20]{KW20book}
Sumeet Khatri and Mark~M. Wilde.
\newblock {\em Principles of Quantum Communication Theory: A Modern Approach}.
\newblock November 2020.
\newblock arXiv:2011.04672v1.

\bibitem[KW21]{KW20}
Vishal Katariya and Mark~M. Wilde.
\newblock Geometric distinguishability measures limit quantum channel
  estimation and discrimination.
\newblock {\em Quantum Information Processing}, 20:78, February 2021.
\newblock arXiv:2004.10708.

\bibitem[LM15]{LM15}
Debbie Leung and William Matthews.
\newblock On the power of {PPT}-preserving and non-signalling codes.
\newblock {\em IEEE Transactions on Information Theory}, 61(8):4486--4499,
  August 2015.
\newblock arXiv:1406.7142.

\bibitem[MHS{\etalchar{+}}12]{Ma2012}
Xiao-Song Ma, Thomas Herbst, Thomas Scheidl, Daqing Wang, Sebastian
  Kropatschek, William Naylor, Bernhard Wittmann, Alexandra Mech, Johannes
  Kofler, Elena Anisimova, Vadim Makarov, Thomas Jennewein, Rupert Ursin, and
  Anton Zeilinger.
\newblock Quantum teleportation over 143 kilometres using active feed-forward.
\newblock {\em Nature}, 489(7415):269--273, September 2012.

\bibitem[Pop94]{PhysRevLett.72.797}
Sandu Popescu.
\newblock Bell's inequalities versus teleportation: What is nonlocality?
\newblock {\em Physical Review Letters}, 72(6):797--799, February 1994.

\bibitem[Rai99]{Rai99}
Eric~M. Rains.
\newblock Bound on distillable entanglement.
\newblock {\em Physical Review A}, 60(1):179--184, July 1999.
\newblock arXiv:quant-ph/9809082.

\bibitem[Rai01]{Rai01}
Eric~M. Rains.
\newblock A semidefinite program for distillable entanglement.
\newblock {\em IEEE Transactions on Information Theory}, 47(7):2921--2933,
  November 2001.
\newblock arXiv:quant-ph/0008047.

\bibitem[RHR{\etalchar{+}}04]{Riebe2004}
M.~Riebe, H.~H\"affner, C.~F. Roos, W.~H\"ansel, J.~Benhelm, G.~P.~T.
  Lancaster, T.~W. K\"orber, C.~Becher, F.~Schmidt-Kaler, D.~F.~V. James, and
  R.~Blatt.
\newblock Deterministic quantum teleportation with atoms.
\newblock {\em Nature}, 429(6993):734--737, June 2004.

\bibitem[RXY{\etalchar{+}}17]{Ren2017}
Ji-Gang Ren, Ping Xu, Hai-Lin Yong, Liang Zhang, Sheng-Kai Liao, Juan Yin,
  Wei-Yue Liu, Wen-Qi Cai, Meng Yang, Li~Li, Kui-Xing Yang, Xuan Han,
  Yong-Qiang Yao, Ji~Li, Hai-Yan Wu, Song Wan, Lei Liu, Ding-Quan Liu, Yao-Wu
  Kuang, Zhi-Ping He, Peng Shang, Cheng Guo, Ru-Hua Zheng, Kai Tian, Zhen-Cai
  Zhu, Nai-Le Liu, Chao-Yang Lu, Rong Shu, Yu-Ao Chen, Cheng-Zhi Peng, Jian-Yu
  Wang, and Jian-Wei Pan.
\newblock Ground-to-satellite quantum teleportation.
\newblock {\em Nature}, 549(7670):70--73, August 2017.

\bibitem[Ser17]{S17}
Alessio Serafini.
\newblock {\em Quantum Continuous Variables: A Primer of Theoretical Methods}.
\newblock CRC Press, 2017.

\bibitem[SKO{\etalchar{+}}06]{Sherson2006}
Jacob~F. Sherson, Hanna Krauter, Rasmus~K. Olsson, Brian Julsgaard, Klemens
  Hammerer, Ignacio Cirac, and Eugene~S. Polzik.
\newblock Quantum teleportation between light and matter.
\newblock {\em Nature}, 443(7111):557--560, October 2006.

\bibitem[SW02]{qip2002schu}
Benjamin Schumacher and Michael~D. Westmoreland.
\newblock Approximate quantum error correction.
\newblock {\em Quantum Information Processing}, 1(1/2):5--12, April 2002.
\newblock arXiv:quant-ph/0112106.

\bibitem[SW20]{SW21}
Aliza~U. Siddiqui and Mark~M. Wilde.
\newblock Quantifying the performance of bidirectional quantum teleportation.
\newblock October 2020.
\newblock arXiv:2010.07905v2.

\bibitem[TMF{\etalchar{+}}13]{takeda2013deterministic}
Shuntaro Takeda, Takahiro Mizuta, Maria Fuwa, Peter Van~Loock, and Akira
  Furusawa.
\newblock Deterministic quantum teleportation of photonic quantum bits by a
  hybrid technique.
\newblock {\em Nature}, 500(7462):315--318, 2013.

\bibitem[Uhl76]{Uhl76}
Armin Uhlmann.
\newblock The ``transition probability'' in the state space of a *-algebra.
\newblock {\em Reports on Mathematical Physics}, 9(2):273--279, April 1976.

\bibitem[UJA{\etalchar{+}}04]{Ursin2004}
Rupert Ursin, Thomas Jennewein, Markus Aspelmeyer, Rainer Kaltenbaek, Michael
  Lindenthal, Philip Walther, and Anton Zeilinger.
\newblock Quantum teleportation across the {D}anube.
\newblock {\em Nature}, 430(7002):849--849, August 2004.

\bibitem[Wat09]{Wat09}
John Watrous.
\newblock Semidefinite programs for completely bounded norms.
\newblock {\em Theory of Computing}, 5(11):217--238, November 2009.
\newblock arXiv:0901.4709.

\bibitem[Wat18]{Watrous2018}
John Watrous.
\newblock {\em The Theory of Quantum Information}.
\newblock Cambridge University Press, 2018.

\bibitem[WD16a]{WD16pra}
Xin Wang and Runyao Duan.
\newblock An improved semidefinite programming upper bound on distillable
  entanglement.
\newblock {\em Physical Review A}, 94(5):050301, November 2016.
\newblock arXiv:1601.07940.

\bibitem[WD16b]{WD16}
Xin Wang and Runyao Duan.
\newblock A semidefinite programming upper bound of quantum capacity.
\newblock {\em 2016 IEEE International Symposium on Information Theory (ISIT)},
  pages 1690--1694, July 2016.
\newblock arXiv:1601.06888.

\bibitem[Wer89]{W89}
Reinhard~F. Werner.
\newblock Quantum states with {Einstein-Podolsky-Rosen} correlations admitting
  a hidden-variable model.
\newblock {\em Physical Review A}, 40(8):4277--4281, October 1989.

\bibitem[Wil17]{W17}
Mark~M. Wilde.
\newblock {\em Quantum Information Theory}.
\newblock Cambridge University Press, second edition, 2017.
\newblock arXiv:1106.1445.

\bibitem[WXD18]{WXD18}
Xin Wang, Wei Xie, and Runyao Duan.
\newblock Semidefinite programming strong converse bounds for classical
  capacity.
\newblock {\em IEEE Transactions on Information Theory}, 64(1):640--653,
  January 2018.
\newblock arXiv:1610.06381.

\bibitem[YF17]{Yuan2017}
Haidong Yuan and Chi-Hang~Fred Fung.
\newblock Fidelity and {{Fisher}} information on quantum channels.
\newblock {\em New Journal of Physics}, 19(11):113039, November 2017.
\newblock arXiv:1506.00819.

\end{thebibliography}

\appendix

\section{Proof of Equation \eqref{eq:non-sig-implied-by-two-ext}}

\label{app:proof-2ext-non-sig-prop}

We provide a proof of \eqref{eq:non-sig-implied-by-two-ext} here. Consider that%
\begin{align}
& \operatorname{Tr}_{B_{1}^{\prime}}\circ\mathcal{N}_{AB_{1}\rightarrow
A^{\prime}B_{1}^{\prime}}\nonumber\\
& =\operatorname{Tr}_{B_{1}^{\prime}B_{2}}\circ(\mathcal{N}_{AB_{1}\rightarrow
A^{\prime}B_{1}^{\prime}}\otimes\mathcal{P}_{B_{2}}^{\pi}) \\
& =\operatorname{Tr}_{B_{1}^{\prime}B_{2}^{\prime}}\circ\mathcal{M}%
_{AB_{1}B_{2}\rightarrow A^{\prime}B_{1}^{\prime}B_{2}^{\prime}}%
\circ\mathcal{P}_{B_{2}}^{\pi}\\
& =\operatorname{Tr}_{B_{1}^{\prime}B_{2}^{\prime}}\circ\mathcal{F}%
_{B_{1}^{\prime}B_{2}^{\prime}}\circ\mathcal{M}_{AB_{1}B_{2}\rightarrow
A^{\prime}B_{1}^{\prime}B_{2}^{\prime}}\circ\mathcal{F}_{B_{1}B_{2}}%
\circ\mathcal{P}_{B_{2}}^{\pi}\\
& =\operatorname{Tr}_{B_{1}^{\prime}B_{2}^{\prime}}\circ\mathcal{M}%
_{AB_{1}B_{2}\rightarrow A^{\prime}B_{1}^{\prime}B_{2}^{\prime}}%
\circ\mathcal{P}_{B_{1}}^{\pi}\circ\operatorname{id}_{B_{1}\rightarrow B_{2}%
}\\
& =\operatorname{Tr}_{B_{1}^{\prime}}\circ\mathcal{N}_{AB_{1}\rightarrow
A^{\prime}B_{1}^{\prime}}\circ\mathcal{P}_{B_{1}}^{\pi}\circ\operatorname{Tr}%
_{B_{2}}\circ\operatorname{id}_{B_{1}\rightarrow B_{2}}\\
& =\operatorname{Tr}_{B_{1}^{\prime}}\circ\mathcal{N}_{AB_{1}\rightarrow
A^{\prime}B_{1}^{\prime}}\circ\mathcal{R}_{B_{1}}^{\pi}.
\end{align}
The first equality follows because $\mathcal{P}_{B_{2}}^{\pi}$ is a
preparation channel that prepares the maximally mixed state $\pi_{B_{2}}$ on
system $B_{2}$, and then we trace it out. The second equality follows by using
the non-signaling property in \eqref{eq:marg-ch-ch}. The third equality follows from
permutation covariance of the channel $\mathcal{M}_{AB_{1}B_{2}\rightarrow
A^{\prime}B_{1}^{\prime}B_{2}^{\prime}}$ (i.e., the assumption that \eqref{eq:perm-cov-ch} holds). The fourth equality follows because
$\mathcal{F}_{B_{1}^{\prime}B_{2}^{\prime}}$ is\ a unitary channel, so that
\begin{equation}
\operatorname{Tr}_{B_{1}^{\prime}B_{2}^{\prime}}\circ\mathcal{F}%
_{B_{1}^{\prime}B_{2}^{\prime}}=\operatorname{Tr}_{B_{1}^{\prime}B_{2}%
^{\prime}}.    
\end{equation}
Additionally, we used the fact that 
\begin{equation}
\mathcal{F}_{B_{1}B_{2}%
}\circ\mathcal{P}_{B_{2}}^{\pi}=\mathcal{P}_{B_{1}}^{\pi}\circ
\operatorname{id}_{B_{1}\rightarrow B_{2}},    
\end{equation}
 where $\operatorname{id}%
_{B_{1}\rightarrow B_{2}}$ is an identity channel that transforms system
$B_{1}$ to $B_{2}$. The fifth equality again invokes the non-signaling
property in \eqref{eq:marg-ch-ch}. The last equality follows because 
\begin{equation}
\mathcal{P}_{B_{1}}%
^{\pi}\circ\operatorname{Tr}_{B_{2}}\circ\operatorname{id}_{B_{1}\rightarrow
B_{2}}=\mathcal{R}_{B_{1}}^{\pi}.    
\end{equation}
That is, $\operatorname{Tr}_{B_{2}}%
\circ\operatorname{id}_{B_{1}\rightarrow B_{2}}$ is equivalent to
$\operatorname{Tr}_{B_{1}}$, so that this action combined with $\mathcal{P}_{B_{1}%
}^{\pi}$ realizes a replacer channel.

\section{Proof of Proposition~\ref{prop:sim-errs-equal-1WL}}

\label{app:proof-errors-equal}

The proof given here is similar to that of the proof of Proposition~1 in
\cite{SW21}. We include it for completeness.

\subsection{Exploiting unitary covariance symmetry of the identity channel}

\label{app:exploiting-sym}

The main idea behind this proof is to simplify the optimization problems in
\eqref{eq:diamond-err-sim-tele} and \eqref{eq:fid-err-sim-tele} by exploiting
the symmetries of the identity channel, as stated in
\eqref{eq:identity-ch-symmetries}. To begin, let us note that the Choi
operator of the identity channel is simply $\Gamma_{AB}$, where%
\begin{equation}
\Gamma_{AB}\coloneqq\sum_{i,j}|i\rangle\!\langle j|_{A}\otimes|i\rangle
\!\langle j|_{B},
\end{equation}
and $\{|i\rangle_{A}\}_{i}$ and $\{|i\rangle_{B}\}_{i}$ are orthonormal bases.
Defining the unitary channel $\mathcal{U}(\cdot)=U(\cdot)U^{\dag}$, let us
recall from \eqref{eq:identity-ch-symmetries} the following covariance
property of the identity channel:%
\begin{equation}
\operatorname{id}_{A\rightarrow B}^{d}=\mathcal{U}_{B}^{\dag}\circ
\operatorname{id}_{A\rightarrow B}^{d}\circ\mathcal{U}_{A},
\label{eq-app:cov-prop-id}%
\end{equation}
which holds for every unitary channel $\mathcal{U}$. Let $\mathcal{A}_{\hat
{A}\hat{B}}^{\rho}$ denote the channel that appends the bipartite state
$\rho_{\hat{A}\hat{B}}$ to its input:%
\begin{equation}
\mathcal{A}_{\hat{A}\hat{B}}^{\rho}(\omega_{A})=\omega_{A}\otimes\rho_{\hat
{A}\hat{B}}.
\end{equation}

Let us begin our analysis with the diamond distance, but we note here that all
of the reasoning employed in the first part of our proof applies to the
channel infidelity error measure as well. Let $\mathcal{L}_{A\hat{A}\hat
{B}\rightarrow B}$ be an arbitrary one-way LOCC\ channel to consider for the
optimization problem in \eqref{eq:diamond-err-sim-tele}. Exploiting the
unitary invariance of the diamond distance with respect to input and output
unitaries \cite[Proposition~3.44]{Watrous2018}, we find that%
\begin{align}
&  \left\Vert \mathcal{L}_{A\hat{A}\hat{B}\rightarrow B}\circ\mathcal{A}%
_{\hat{A}\hat{B}}^{\rho}-\operatorname{id}_{A\rightarrow B}^{d}\right\Vert
_{\diamond}\nonumber\\
&  =\left\Vert \mathcal{U}_{B}^{\dag}\circ\mathcal{L}_{A\hat{A}\hat
{B}\rightarrow B}\circ\mathcal{A}_{\hat{A}\hat{B}}^{\rho}\circ\mathcal{U}%
_{A}-\mathcal{U}_{B}^{\dag}\circ\operatorname{id}_{A\rightarrow B}^{d}%
\circ\mathcal{U}_{A}\right\Vert _{\diamond}\nonumber\\
&  =\left\Vert \mathcal{U}_{B}^{\dag}\circ\mathcal{L}_{A\hat{A}\hat
{B}\rightarrow B}\circ\mathcal{U}_{A}\circ\mathcal{A}_{\hat{A}\hat{B}}^{\rho
}-\operatorname{id}_{A\rightarrow B}^{d}\right\Vert _{\diamond}.
\label{eq:-diamond-d-unitary-equiv}%
\end{align}
Thus, the channels $\mathcal{L}_{A\hat{A}\hat{B}\rightarrow B}$ and
$\mathcal{U}_{B}^{\dag}\circ\mathcal{L}_{A\hat{A}\hat{B}\rightarrow B}%
\circ\mathcal{U}_{A}$ perform equally well for the optimization. Now let us
exploit the convexity of diamond distance with respect to one of the channels
\cite{Watrous2018}, as well as the Haar probability measure, to conclude that%
\begin{align}
&  \left\Vert \mathcal{L}_{A\hat{A}\hat{B}\rightarrow B}\circ\mathcal{A}%
_{\hat{A}\hat{B}}^{\rho}-\operatorname{id}_{A\rightarrow B}^{d}\right\Vert
_{\diamond}\nonumber\\
&  =\int dU\ \left\Vert \mathcal{U}_{B}^{\dag}\circ\mathcal{L}_{A\hat{A}%
\hat{B}\rightarrow B}\circ\mathcal{U}_{A}\circ\mathcal{A}_{\hat{A}\hat{B}%
}^{\rho}-\operatorname{id}_{A\rightarrow B}^{d}\right\Vert _{\diamond}
\label{eq:twirl-optimal-best-1}\\
&  \geq\left\Vert \widetilde{\mathcal{L}}_{A\hat{A}\hat{B}\rightarrow B}%
\circ\mathcal{A}_{\hat{A}\hat{B}}^{\rho}-\operatorname{id}_{A\rightarrow
B}^{d}\right\Vert _{\diamond}, \label{eq:convexity-dd-1wl}%
\end{align}
where%
\begin{equation}
\widetilde{\mathcal{L}}_{A\hat{A}\hat{B}\rightarrow B}\coloneqq\int
dU\ \mathcal{U}_{B}^{\dag}\circ\mathcal{L}_{A\hat{A}\hat{B}\rightarrow B}%
\circ\mathcal{U}_{A}. \label{eq:twirled-channel-1W-LOCC}%
\end{equation}
Thus, we conclude that it suffices to optimize \eqref{eq:diamond-err-sim-tele}
over one-way LOCC\ channels that possess this symmetry. It is important to
note that the channel twirl in \eqref{eq:twirled-channel-1W-LOCC} can be
realized by one-way LOCC, so that $\widetilde{\mathcal{L}}_{A\hat{A}\hat
{B}\rightarrow B}$ is a one-way LOCC channel.

Let us determine the form of one-way LOCC\ channels that possess this
symmetry. Let $L_{A\hat{A}\hat{B}B}$ denote the Choi operator of the channel
$\mathcal{L}_{A\hat{A}\hat{B}\rightarrow B}$. Then the Choi operator
$\widetilde{L}_{A\hat{A}\hat{B}B}$ of the twirled channel $\widetilde
{\mathcal{L}}_{A\hat{A}\hat{B}\rightarrow B}$ is as follows:%
\begin{equation}
\widetilde{L}_{A\hat{A}\hat{B}B}=\int dU\ (\mathcal{U}_{A}\otimes
\overline{\mathcal{U}}_{B})(L_{A\hat{A}\hat{B}B}),
\end{equation}
where $\overline{\mathcal{U}}(\cdot)\coloneqq\overline{U}(\cdot)\overline
{U}^{\dag}$, with the overbar denoting the entrywise complex conjugate. Now
let us recall the following identity from \cite{W89,Horodecki99,Watrous2018}:%
\begin{align}
&  \widetilde{\mathcal{T}}_{CD}(X_{CD})\nonumber\\
&  \coloneqq\int dU\ (\mathcal{U}_{C}\otimes\overline{\mathcal{U}}_{D}%
)(X_{CD})\\
&  =\Phi_{CD}\operatorname{Tr}_{CD}[\Phi_{CD}X_{CD}]\nonumber\\
&  \qquad+\frac{I_{CD}-\Phi_{CD}}{d^{2}-1}\operatorname{Tr}_{CD}[\left(
I_{CD}-\Phi_{CD}\right)  X_{CD}].
\label{eq:twirl-ch-def}
\end{align}
We can apply this identity to find that%
\begin{align}
\widetilde{L}_{A\hat{A}\hat{B}B}  &  =\widetilde{\mathcal{T}}_{AB}(L_{A\hat
{A}\hat{B}B})\\
&  =\Phi_{AB}\operatorname{Tr}_{AB}[\Phi_{AB}L_{A\hat{A}\hat{B}B}]\nonumber\\
&  \qquad+\frac{I_{AB}-\Phi_{AB}}{d^{2}-1}\operatorname{Tr}_{AB}[\left(
I_{AB}-\Phi_{AB}\right)  L_{A\hat{A}\hat{B}B}].
\end{align}
Now defining%
\begin{align}
K_{\hat{A}\hat{B}}^{\prime}  &  \coloneqq\operatorname{Tr}_{AB}[\Phi
_{AB}L_{A\hat{A}\hat{B}B}],\\
L_{\hat{A}\hat{B}}^{\prime}  &  \coloneqq\operatorname{Tr}_{AB}[\left(
I_{AB}-\Phi_{AB}\right)  L_{A\hat{A}\hat{B}B}],
\end{align}
we can write%
\begin{equation}
\widetilde{L}_{A\hat{A}\hat{B}B}=\Phi_{AB}\otimes K_{\hat{A}\hat{B}}^{\prime
}+\frac{I_{AB}-\Phi_{AB}}{d^{2}-1}\otimes L_{\hat{A}\hat{B}}^{\prime}.
\label{eq:Kprime-Lprime-channel}%
\end{equation}
Let us determine the conditions on $K_{\hat{A}\hat{B}}^{\prime}$ and
$L_{\hat{A}\hat{B}}^{\prime}$ for $\widetilde{L}_{A\hat{A}\hat{B}B}$ to be the
Choi operator of a channel. Consider that $\widetilde{L}_{A\hat{A}\hat{B}B}$
is a Choi operator if and only if
\begin{align}
\widetilde{L}_{A\hat{A}\hat{B}B}  &  \geq0,\label{eq:choi-cond-01}\\
\operatorname{Tr}_{B}[\widetilde{L}_{A\hat{A}\hat{B}B}]  &  =I_{A\hat{A}%
\hat{B}}. \label{eq:choi-cond-02}%
\end{align}
This implies that%
\begin{equation}
K_{\hat{A}\hat{B}}^{\prime},\ L_{\hat{A}\hat{B}}^{\prime}\geq0
\label{eq:psd-KL-prime}%
\end{equation}
and%
\begin{equation}
\pi_{A}\otimes K_{\hat{A}\hat{B}}^{\prime}+\pi_{A}\otimes L_{\hat{A}\hat{B}%
}^{\prime}=I_{A\hat{A}\hat{B}},
\end{equation}
which is equivalent to the following condition:%
\begin{equation}
K_{\hat{A}\hat{B}}^{\prime}+L_{\hat{A}\hat{B}}^{\prime}=dI_{\hat{A}\hat{B}}.
\label{eq:dim-scaled-completeness-KL-cond}%
\end{equation}
Let us define%
\begin{equation}
K_{\hat{A}\hat{B}}\coloneqq\frac{1}{d}K_{\hat{A}\hat{B}}^{\prime},\qquad
L_{\hat{A}\hat{B}}\coloneqq\frac{1}{d}L_{\hat{A}\hat{B}}^{\prime}.
\label{eq:rescale-K-L}%
\end{equation}
Then the conditions in \eqref{eq:psd-KL-prime} and
\eqref{eq:dim-scaled-completeness-KL-cond} are equivalent to%
\begin{align}
K_{\hat{A}\hat{B}},\ L_{\hat{A}\hat{B}}  &  \geq0,\label{eq:K-L-conds-1}\\
K_{\hat{A}\hat{B}}+L_{\hat{A}\hat{B}}  &  =I_{\hat{A}\hat{B}},
\label{eq:K-L-conds-2}%
\end{align}
and we can write the Choi operator $\widetilde{L}_{A\hat{A}\hat{B}B}$ as%
\begin{equation}
\widetilde{L}_{A\hat{A}\hat{B}B}=\Gamma_{AB}\otimes K_{\hat{A}\hat{B}}%
+\frac{dI_{AB}-\Gamma_{AB}}{d^{2}-1}\otimes L_{\hat{A}\hat{B}}.
\end{equation}
The Choi operator of the composed channel $\widetilde{\mathcal{L}}_{A\hat
{A}\hat{B}\rightarrow B}\circ\mathcal{A}_{\hat{A}\hat{B}}^{\rho}$ is given by%
\begin{align}
&  \operatorname{Tr}_{\hat{A}\hat{B}}[T_{\hat{A}\hat{B}}(\rho_{\hat{A}\hat{B}%
})\widetilde{L}_{A\hat{A}\hat{B}B}]\nonumber\\
&  =\Gamma_{AB}\operatorname{Tr}[T_{\hat{A}\hat{B}}(\rho_{\hat{A}\hat{B}%
})K_{\hat{A}\hat{B}}]\nonumber\\
&  \qquad+\frac{dI_{AB}-\Gamma_{AB}}{d^{2}-1}\operatorname{Tr}[T_{\hat{A}%
\hat{B}}(\rho_{\hat{A}\hat{B}})L_{\hat{A}\hat{B}}]\\
&  =\Gamma_{AB}\operatorname{Tr}[\rho_{\hat{A}\hat{B}}T_{\hat{A}\hat{B}%
}(K_{\hat{A}\hat{B}})]\nonumber\\
&  \qquad+\frac{dI_{AB}-\Gamma_{AB}}{d^{2}-1}\operatorname{Tr}[\rho_{\hat
{A}\hat{B}}T_{\hat{A}\hat{B}}(L_{\hat{A}\hat{B}})].
\end{align}
We can make the substitutions $T_{\hat{A}\hat{B}}(K_{\hat{A}\hat{B}%
})\rightarrow K_{\hat{A}\hat{B}}$ and $T_{\hat{A}\hat{B}}(L_{\hat{A}\hat{B}%
})\rightarrow L_{\hat{A}\hat{B}}$ without changing the optimization problem
because the conditions in \eqref{eq:K-L-conds-1} and \eqref{eq:K-L-conds-2}
hold if and only if the following conditions hold%
\begin{align}
T_{\hat{A}\hat{B}}(K_{\hat{A}\hat{B}}),\ T_{\hat{A}\hat{B}}(L_{\hat{A}\hat{B}%
})  &  \geq0,\\
T_{\hat{A}\hat{B}}(K_{\hat{A}\hat{B}})+T_{\hat{A}\hat{B}}(L_{\hat{A}\hat{B}})
&  =I_{\hat{A}\hat{B}}.
\end{align}
Additionally, the channel $\widetilde{\mathcal{L}}_{A\hat{A}\hat{B}\rightarrow
B}$ remains a one-way LOCC\ channel under these substitutions. Thus, we can
take the Choi operator of the composed channel $\widetilde{\mathcal{L}}%
_{A\hat{A}\hat{B}\rightarrow B}\circ\mathcal{A}_{\hat{A}\hat{B}}^{\rho}$ to be%
\begin{equation}
\Gamma_{AB}\operatorname{Tr}[K_{\hat{A}\hat{B}}\rho_{\hat{A}\hat{B}}%
]+\frac{dI_{AB}-\Gamma_{AB}}{d^{2}-1}\operatorname{Tr}[L_{\hat{A}\hat{B}}%
\rho_{\hat{A}\hat{B}}],
\end{equation}
which corresponds to the following channel:%
\begin{multline}
\sigma\rightarrow\operatorname{Tr}[\rho_{\hat{A}\hat{B}}K_{\hat{A}\hat{B}%
}]\operatorname{id}_{A\rightarrow B}^{d}(\sigma
)\label{eq:final-channel-LOCC-opt}\\
+\operatorname{Tr}[\rho_{\hat{A}\hat{B}}L_{\hat{A}\hat{B}}]\frac{1}{d^{2}%
-1}\sum_{(z,x)\neq(0,0)}W^{z,x}\sigma(W^{z,x})^{\dag},
\end{multline}
which follows because
\begin{equation}
dI_{AB}-\Gamma_{AB} = \sum_{(z,x)\neq(0,0)}W^{z,x}_B\Gamma_{AB}(W^{z,x})^{\dag}_B.    
\end{equation}

Let us now incorporate the one-way LOCC constraint and justify the claim in
\eqref{eq:one-way-LOCC-K-form}. The general form of a one-way LOCC channel
$\mathcal{L}_{A\hat{A}\hat{B}\rightarrow B}$, with forward classical
communication from Alice to Bob, is as follows:%
\begin{equation}
\mathcal{L}_{A\hat{A}\hat{B}\rightarrow B}(\omega_{A\hat{A}\hat{B}})=\sum
_{x}\mathcal{D}_{\hat{B}\rightarrow B}^{x}(\operatorname{Tr}_{A\hat{A}%
}[\Lambda_{A\hat{A}}^{x}\omega_{A\hat{A}\hat{B}}]),
\end{equation}
where $\{\Lambda_{A\hat{A}}^{x}\}_{x}$ is a POVM, satisfying $\Lambda
_{A\hat{A}}^{x}\geq0$ for all $x$ and $\sum_{x}\Lambda_{A\hat{A}}^{x}%
=I_{A\hat{A}}$, and $\{\mathcal{D}_{\hat{B}\rightarrow B}^{x}\}_{x}$ is a set
of quantum channels. The Choi operator of such a channel has the form%
\begin{align}
&  \Gamma_{A\hat{A}\hat{B}B}^{\mathcal{L}}\nonumber\\
&  =\mathcal{L}_{A\hat{A}\hat{B}\rightarrow B}(\Gamma_{AA}\otimes\Gamma
_{\hat{A}\hat{A}}\otimes\Gamma_{\hat{B}\hat{B}})\\
&  =\sum_{x}\mathcal{D}_{\hat{B}\rightarrow B}^{x}(\operatorname{Tr}_{A\hat
{A}}[\Lambda_{A\hat{A}}^{x}\left(  \Gamma_{AA}\otimes\Gamma_{\hat{A}\hat{A}%
}\otimes\Gamma_{\hat{B}\hat{B}}\right)  ])\\
&  =\sum_{x}T_{A\hat{A}}(\Lambda_{A\hat{A}}^{x})\otimes\mathcal{D}_{\hat
{B}\rightarrow B}^{x}(\Gamma_{\hat{B}\hat{B}})\\
&  =\sum_{x}T_{A\hat{A}}(\Lambda_{A\hat{A}}^{x})\otimes\Gamma_{\hat{B}%
B}^{\mathcal{D}^{x}},
\end{align}
where $\Gamma_{\hat{B}B}^{\mathcal{D}^{x}}$ is the Choi operator of the
channel $\mathcal{D}_{\hat{B}\rightarrow B}^{x}$. Since the conditions for
being a POVM\ are invariant under a full transpose and since we are performing
an optimization over all one-way LOCC channels, we can take the Choi operator
to be as follows without loss of generality:%
\begin{equation}
\Gamma_{A\hat{A}\hat{B}B}^{\mathcal{L}}=\sum_{x}\Lambda_{A\hat{A}}^{x}%
\otimes\Gamma_{\hat{B}B}^{\mathcal{D}^{x}}.
\end{equation}
After performing a bilateral twirl of the systems $A$ and $B$, the Choi
operator becomes%
\begin{multline}
\widetilde{\mathcal{T}}_{AB}\!\left(  \sum_{x}\Lambda_{A\hat{A}}^{x}%
\otimes\Gamma_{\hat{B}B}^{\mathcal{D}^{x}}\right) \\
=\Phi_{AB}\operatorname{Tr}_{AB}\!\left[  \Phi_{AB}\sum_{x}\Lambda_{A\hat{A}%
}^{x}\otimes\Gamma_{\hat{B}B}^{\mathcal{D}^{x}}\right] \\
+\frac{I_{AB}-\Phi_{AB}}{d^{2}-1}\operatorname{Tr}_{AB}\!\left[  \left(
I_{AB}-\Phi_{AB}\right)  \sum_{x}\Lambda_{A\hat{A}}^{x}\otimes\Gamma_{\hat
{B}B}^{\mathcal{D}^{x}}\right]  .
\end{multline}
Now consider that%
\begin{multline}
\operatorname{Tr}_{AB}\!\left[  \Phi_{AB}\sum_{x}\Lambda_{A\hat{A}}^{x}%
\otimes\Gamma_{\hat{B}B}^{\mathcal{D}^{x}}\right] \\
=\frac{1}{d}\sum_{x}\operatorname{Tr}_{B}\!\left[  \Gamma_{\hat{B}%
B}^{\mathcal{D}^{x}}T_{B}(\Lambda_{B\hat{A}}^{x})\right]  ,
\end{multline}%
\begin{align}
&  \operatorname{Tr}_{AB}\!\left[  I_{AB}\sum_{x}\Lambda_{A\hat{A}}^{x}%
\otimes\Gamma_{\hat{B}B}^{\mathcal{D}^{x}}\right] \nonumber\\
&  =\operatorname{Tr}_{AB}\!\left[  \sum_{x}\Lambda_{A\hat{A}}^{x}%
\otimes\Gamma_{\hat{B}B}^{\mathcal{D}^{x}}\right] \\
&  =\operatorname{Tr}_{A}\!\left[  \sum_{x}\Lambda_{A\hat{A}}^{x}\otimes
I_{\hat{B}}\right] \\
&  =\operatorname{Tr}_{A}\!\left[  I_{A\hat{A}}\otimes I_{\hat{B}}\right] \\
&  =d\,I_{\hat{A}\hat{B}}.
\end{align}
Revisiting \eqref{eq:Kprime-Lprime-channel}, this implies that we can take%
\begin{align}
K_{\hat{A}\hat{B}}^{\prime}  &  =\frac{1}{d}\sum_{x}\operatorname{Tr}%
_{B}\!\left[  \Gamma_{\hat{B}B}^{\mathcal{D}^{x}}T_{B}(\Lambda_{B\hat{A}}%
^{x})\right]  ,\\
L_{\hat{A}\hat{B}}^{\prime}  &  =d\,I_{\hat{A}\hat{B}}-K_{\hat{A}\hat{B}%
}^{\prime}.
\end{align}
After a rescaling as in \eqref{eq:rescale-K-L}, we set%
\begin{equation}
K_{\hat{A}\hat{B}}=\frac{1}{d^{2}}\sum_{x}\operatorname{Tr}_{B}\!\left[
\Gamma_{\hat{B}B}^{\mathcal{D}^{x}}T_{B}(\Lambda_{B\hat{A}}^{x})\right]  ,
\end{equation}
proceed through the rest of the steps given in
\eqref{eq:K-L-conds-1}--\eqref{eq:final-channel-LOCC-opt}, and arrive at the
claim in \eqref{eq:one-way-LOCC-K-form}.

\subsection{Evaluating normalized diamond distance}

With the reasoning from the previous section, we have reduced the optimization
problem in \eqref{eq:diamond-err-sim-tele}, for simulating the identity
channel, to the following:%
\begin{multline}
e_{\text{1WL}}(\rho_{\hat{A}\hat{B}})=\\
\inf_{\widetilde{\mathcal{L}}_{A\hat{A}\hat{B}\rightarrow B}\in\text{1WL}%
}\frac{1}{2}\left\Vert \widetilde{\mathcal{L}}_{A\hat{A}\hat{B}\rightarrow
B}\circ\mathcal{A}_{\hat{A}\hat{B}}^{\rho}-\operatorname{id}_{A\rightarrow
B}^{d}\right\Vert _{\diamond},
\end{multline}
subject to%
\begin{multline}
\widetilde{\mathcal{L}}_{A\hat{A}\hat{B}\rightarrow B}(\sigma_{A}\otimes
\rho_{\hat{A}\hat{B}})=\operatorname{Tr}[\rho_{\hat{A}\hat{B}}K_{\hat{A}%
\hat{B}}]\operatorname{id}_{A\rightarrow B}^{d}(\sigma_{A})\\
+\operatorname{Tr}[\rho_{\hat{A}\hat{B}}L_{\hat{A}\hat{B}}]\frac{1}{d^{2}%
-1}\sum_{(z,x)\neq(0,0)}W^{z,x}\sigma_{A}(W^{z,x})^{\dag},
\label{eq:restrict-LOCC-opt}%
\end{multline}
and there existing a POVM\ $\{\Lambda_{B\hat{A}}^{x}\}_{x}$ and a set
$\{\mathcal{D}_{\hat{B}\rightarrow B}^{x}\}_{x}$\ of channels such that%
\begin{equation}
K_{\hat{A}\hat{B}}=\frac{1}{d^{2}}\sum_{x}\operatorname{Tr}_{B}[T_{B}%
(\Lambda_{B\hat{A}}^{x})\Gamma_{\hat{B}B}^{\mathcal{D}^{x}}],
\label{eq:1wl-form-K-op}%
\end{equation}
where $\Gamma_{\hat{B}B}^{\mathcal{D}^{x}}$ is the Choi operator of the
channel $\mathcal{D}_{\hat{B}\rightarrow B}^{x}$. That is, there is no need to
optimize over all one-way LOCC channels, but only those satisfying the
constraints in \eqref{eq:restrict-LOCC-opt} and \eqref{eq:1wl-form-K-op}.

We now exploit the form of the optimization of the normalized diamond distance
from \eqref{eq:diamond-d-SDP}, in order to rewrite the optimization problem as%
\begin{equation}
\inf_{\mu,Z_{AB},K_{\hat{A}\hat{B}},L_{\hat{A}\hat{B}}\geq0}\mu,
\end{equation}
subject to%
\begin{align}
\mu I_{A}  &  \geq Z_{A},\\
Z_{AB}  &  \geq\Gamma_{AB}\left(  1-\operatorname{Tr}[K_{\hat{A}\hat{B}}%
\rho_{\hat{A}\hat{B}}]\right) \nonumber\\
&  \qquad-\frac{dI_{AB}-\Gamma_{AB}}{d^{2}-1}\operatorname{Tr}[L_{\hat{A}%
\hat{B}}\rho_{\hat{A}\hat{B}}],\label{eq:Z-op-ineq}\\
K_{\hat{A}\hat{B}}+L_{\hat{A}\hat{B}}  &  =I_{\hat{A}\hat{B}},
\end{align}
with $K_{\hat{A}\hat{B}}$ further subject to having the form in
\eqref{eq:1wl-form-K-op}. Since we are minimizing $\mu$ with respect to $\mu$
and $Z_{AB}$, we can choose $Z_{AB}$ to be the smallest positive semi-definite
operator such that the operator inequality in \eqref{eq:Z-op-ineq} holds. This
smallest positive semi-definite operator is the positive part of the operator
on the right-hand side of the inequality, which is given by%
\begin{equation}
\Gamma_{AB}\left(  1-\operatorname{Tr}[K_{\hat{A}\hat{B}}\rho_{\hat{A}\hat{B}%
}]\right)  .
\end{equation}
This follows because $\Gamma_{AB}$ and $\frac{dI_{AB}-\Gamma_{AB}}{d^{2}-1}$
are each positive semi-definite and orthogonal to each other. Thus, an optimal
solution is%
\begin{equation}
Z_{AB}=\Gamma_{AB}\left(  1-\operatorname{Tr}[K_{\hat{A}\hat{B}}\rho_{\hat
{A}\hat{B}}]\right)  ,
\end{equation}
for which the smallest $\mu$ possible is%
\begin{equation}
\mu=1-\operatorname{Tr}[K_{\hat{A}\hat{B}}\rho_{\hat{A}\hat{B}}],
\label{eq:mu-opt-choice}%
\end{equation}
because%
\begin{align}
Z_{A}  &  =\operatorname{Tr}_{B}[Z_{AB}]\\
&  =\operatorname{Tr}_{B}[\Gamma_{AB}\left(  1-\operatorname{Tr}[K_{\hat
{A}\hat{B}}\rho_{\hat{A}\hat{B}}]\right)  ]\\
&  =I_{A}\left(  1-\operatorname{Tr}[K_{\hat{A}\hat{B}}\rho_{\hat{A}\hat{B}%
}]\right)  .
\end{align}
We then conclude that%
\begin{equation}
e_{\text{$\operatorname{1WL}$}}(\rho_{\hat{A}\hat{B}})=1-\sup_{K_{\hat{A}%
\hat{B}},L_{\hat{A}\hat{B}}\geq0}\operatorname{Tr}[K_{\hat{A}\hat{B}}%
\rho_{\hat{A}\hat{B}}],
\end{equation}
subject to $K_{\hat{A}\hat{B}}+L_{\hat{A}\hat{B}}=I_{\hat{A}\hat{B}}$ and the
following channel $\mathcal{L}_{A\hat{A}\hat{B}\rightarrow B}$ being a one-way
LOCC\ channel:%
\begin{multline}
\mathcal{L}_{A\hat{A}\hat{B}\rightarrow B}(\omega_{A\hat{A}\hat{B}%
})=\operatorname{id}_{A\rightarrow B}^{d}(\operatorname{Tr}_{\hat{A}\hat{B}%
}[K_{\hat{A}\hat{B}}\omega_{A\hat{A}\hat{B}}])\\
+\mathcal{D}_{A\rightarrow B}(\operatorname{Tr}_{\hat{A}\hat{B}}[L_{\hat
{A}\hat{B}}\omega_{A\hat{A}\hat{B}}]).
\end{multline}
As indicated previously, this latter constraint is equivalent to \eqref{eq:1wl-form-K-op}.

\subsection{Evaluating channel infidelity}

Let us recall the symmetries of the identity channel in
\eqref{eq-app:cov-prop-id}, which implies the following symmetry for its Choi
operator:%
\begin{equation}
\Gamma_{AB}=\left(  \overline{\mathcal{U}}_{A}\otimes\mathcal{U}_{B}\right)
(\Gamma_{AB}),
\end{equation}
for every unitary channel $\mathcal{U}(\cdot)=U(\cdot)U^{\dag}$. This implies
that%
\begin{equation}
\Gamma_{AB}=\int dU\ \left(  \overline{\mathcal{U}}_{A}\otimes\mathcal{U}%
_{B}\right)  (\Gamma_{AB}).
\end{equation}
Now applying the semi-definite program in \eqref{eq:SDP-ch-fid-1} for channel
infidelity, we find that%
\begin{equation}
e_{\text{1WL}}^{F}(\rho_{\hat{A}\hat{B}})=1-\left[  \sup_{\lambda
\geq0,L_{A\hat{A}\hat{B}B}\geq0,Q_{AB}}\lambda\right]  ^{2},
\label{eq-app:ch-infid-opt}%
\end{equation}
subject to%
\begin{align}
\lambda I_{A}  &  \leq\operatorname{Re}[\operatorname{Tr}_{B}[Q_{AB}%
]],\label{eq:lam-Q-constraint-ch-inf}\\
\operatorname{Tr}_{B}[L_{A\hat{A}\hat{B}B}]  &  =I_{A\hat{A}\hat{B}},
\end{align}%
\begin{equation}%
\begin{bmatrix}
\Gamma_{AB} & Q_{AB}^{\dag}\\
Q_{AB} & \operatorname{Tr}_{\hat{A}\hat{B}}[T_{\hat{A}\hat{B}}(\rho_{\hat
{A}\hat{B}})L_{A\hat{A}\hat{B}B}]
\end{bmatrix}
\geq0,
\end{equation}
and $L_{A\hat{A}\hat{B}B}$ is the Choi operator for a one-way LOCC channel.
Note that the last constraint is equivalent to%
\begin{multline}
|0\rangle\!\langle0|\otimes\Gamma_{AB}+|0\rangle\!\langle1|\otimes
Q_{AB}^{\dag}+|1\rangle\!\langle0|\otimes Q_{AB}\label{eq:Q-L-constr-ch-infid}%
\\
+|1\rangle\!\langle1|\otimes\operatorname{Tr}_{\hat{A}\hat{B}}[T_{\hat{A}%
\hat{B}}(\rho_{\hat{A}\hat{B}})L_{A\hat{A}\hat{B}B}]\geq0.
\end{multline}
Let $\lambda$, $L_{A\hat{A}\hat{B}B}$, and $Q_{AB}$ be an optimal solution.
Then it follows that $\lambda$, $\left(  \overline{\mathcal{U}}_{A}%
\otimes\mathcal{U}_{B}\right)  (L_{A\hat{A}\hat{B}B})$, and $\left(
\overline{\mathcal{U}}_{A}\otimes\mathcal{U}_{B}\right)  (Q_{AB})$ is an
optimal solution also. This follows because all of the constraints are
satisfied for these choices while still obtaining the same optimal value. To
see this, consider that%
\begin{align}
\lambda I_{A}  &  \leq\operatorname{Re}[\operatorname{Tr}_{B}[Q_{AB}]]\\
\Leftrightarrow\lambda\overline{\mathcal{U}}_{A}(I_{A})  &  \leq
\overline{\mathcal{U}}_{A}(\operatorname{Re}[\operatorname{Tr}_{B}[Q_{AB}]])\\
\Leftrightarrow\lambda I_{A}  &  \leq\operatorname{Re}[\operatorname{Tr}%
_{B}[\overline{\mathcal{U}}_{A}Q_{AB}]]\\
\Leftrightarrow\lambda I_{A}  &  \leq\operatorname{Re}[\operatorname{Tr}%
_{B}[(\overline{\mathcal{U}}_{A}\otimes\mathcal{U}_{B})(Q_{AB})]]
\end{align}%
\begin{align}
\operatorname{Tr}_{B}[L_{A\hat{A}\hat{B}B}]  &  =I_{A\hat{A}\hat{B}}\\
\Leftrightarrow\operatorname{Tr}_{B}[(\overline{\mathcal{U}}_{A}%
\otimes\mathcal{U}_{B})L_{A\hat{A}\hat{B}B}]  &  =I_{A\hat{A}\hat{B}},
\end{align}
and%
\begin{align}
&  |0\rangle\!\langle0|\otimes\Gamma_{AB}+|0\rangle\!\langle1|\otimes
Q_{AB}^{\dag}+|1\rangle\!\langle0|\otimes Q_{AB}\nonumber\\
&  \quad+|1\rangle\!\langle1|\otimes\operatorname{Tr}_{\hat{A}\hat{B}}%
[T_{\hat{A}\hat{B}}(\rho_{\hat{A}\hat{B}})L_{A\hat{A}\hat{B}B}]\geq0\\
&  \Leftrightarrow(\operatorname{id}\otimes(\overline{\mathcal{U}}_{A}%
\otimes\mathcal{U}_{B}))(|0\rangle\!\langle0|\otimes\Gamma_{AB}\nonumber\\
&  \quad+|0\rangle\!\langle1|\otimes Q_{AB}^{\dag}+|1\rangle\!\langle0|\otimes
Q_{AB}\nonumber\\
&  \quad+|1\rangle\!\langle1|\otimes\operatorname{Tr}_{\hat{A}\hat{B}}%
[T_{\hat{A}\hat{B}}(\rho_{\hat{A}\hat{B}})L_{A\hat{A}\hat{B}B}])\geq0\\
&  \Leftrightarrow|0\rangle\!\langle0|\otimes(\overline{\mathcal{U}}%
_{A}\otimes\mathcal{U}_{B})(\Gamma_{AB})\nonumber\\
&  \quad+|0\rangle\!\langle1|\otimes(\overline{\mathcal{U}}_{A}\otimes
\mathcal{U}_{B})(Q_{AB}^{\dag})\nonumber\\
&  \quad+|1\rangle\!\langle0|\otimes(\overline{\mathcal{U}}_{A}\otimes
\mathcal{U}_{B})(Q_{AB})\nonumber\\
&  \quad+|1\rangle\!\langle1|\otimes(\overline{\mathcal{U}}_{A}\otimes
\mathcal{U}_{B})(\operatorname{Tr}_{\hat{A}\hat{B}}[T_{\hat{A}\hat{B}}%
(\rho_{\hat{A}\hat{B}})L_{A\hat{A}\hat{B}B}])\geq0\\
&  \Leftrightarrow|0\rangle\!\langle0|\otimes\Gamma_{AB}+|0\rangle
\!\langle1|\otimes\lbrack(\overline{\mathcal{U}}_{A}\otimes\mathcal{U}%
_{B})(Q_{AB})]^{\dag}\nonumber\\
&  \quad+|1\rangle\!\langle0|\otimes(\overline{\mathcal{U}}_{A}\otimes
\mathcal{U}_{B})(Q_{AB})\nonumber\\
&  \quad+|1\rangle\!\langle1|\otimes(\operatorname{Tr}_{\hat{A}\hat{B}%
}[T_{\hat{A}\hat{B}}(\rho_{\hat{A}\hat{B}})(\overline{\mathcal{U}}_{A}%
\otimes\mathcal{U}_{B})(L_{A\hat{A}\hat{B}B})])\geq0.
\end{align}
Note that $(\overline{\mathcal{U}}_{A}\otimes\mathcal{U}_{B})(L_{A\hat{A}%
\hat{B}B})$ is the Choi operator for a one-way LOCC channel if $L_{A\hat
{A}\hat{B}B}$ is. Also, due to the fact that the objective function is linear
and the constraints are linear operator inequalities, it follows that convex
combinations of solutions are solutions as well. So this implies that if
$\lambda$, $L_{A\hat{A}\hat{B}B}$, and $Q_{AB}$ is an optimal solution, then
so is $\lambda$,%
\begin{align}
\widetilde{L}_{A\hat{A}\hat{B}B}  &  =\int dU\ (\overline{\mathcal{U}}%
_{A}\otimes\mathcal{U}_{B})(L_{A\hat{A}\hat{B}B}),\\
\widetilde{Q}_{AB}  &  =\int dU\ (\overline{\mathcal{U}}_{A}\otimes
\mathcal{U}_{B})(Q_{AB}).
\end{align}
Additionally, $\widetilde{L}_{A\hat{A}\hat{B}B}$ is the Choi operator for a
one-way LOCC channel if $L_{A\hat{A}\hat{B}B}$ is. As argued in
Appendix~\ref{app:exploiting-sym}, $\widetilde{L}_{A\hat{A}\hat{B}B}$ has a
simpler form as%
\begin{multline}
\widetilde{L}_{A\hat{A}\hat{B}B}=\Gamma_{AB}\operatorname{Tr}[K_{\hat{A}%
\hat{B}}\rho_{\hat{A}\hat{B}}]\\
+\frac{dI_{AB}-\Gamma_{AB}}{d^{2}-1}\operatorname{Tr}[L_{\hat{A}\hat{B}}%
\rho_{\hat{A}\hat{B}}],
\end{multline}
with the constraints in \eqref{eq:choi-cond-01}--\eqref{eq:choi-cond-02} on
$\widetilde{L}_{A\hat{A}\hat{B}B}$ simplifying to%
\begin{align}
K_{\hat{A}\hat{B}},L_{\hat{A}\hat{B}}  &  \geq0,\\
K_{\hat{A}\hat{B}}+L_{\hat{A}\hat{B}}  &  =I_{\hat{A}\hat{B}},
\end{align}
such that $K_{\hat{A}\hat{B}}$ and $L_{\hat{A}\hat{B}}$ are measurement
operators for a one-way LOCC channel (i.e., satisfy \eqref{eq:1wl-form-K-op}).
We also find that $\widetilde{Q}_{AB}$ simplifies to%
\begin{equation}
\widetilde{Q}_{AB}=q_{1}\Gamma_{AB}+q_{2}\frac{dI_{AB}-\Gamma_{AB}}{d^{2}-1},
\end{equation}
where $q_{1},q_{2}\in\mathbb{C}$. The constraint in
\eqref{eq:lam-Q-constraint-ch-inf} reduces to%
\begin{equation}
\lambda\leq\operatorname{Re}[q_{1}+q_{2}],
\end{equation}
because%
\begin{equation}
\operatorname{Tr}_{B}[\widetilde{Q}_{AB}]=\left(  q_{1}+q_{2}\right)  I_{A}.
\end{equation}
The constraint in \eqref{eq:Q-L-constr-ch-infid} reduces to%
\begin{multline}
|0\rangle\!\langle0|\otimes\Gamma_{AB}+|0\rangle\!\langle1|\otimes\left[
q_{1}\Gamma_{AB}+q_{2}\frac{dI_{AB}-\Gamma_{AB}}{d^{2}-1}\right]  ^{\dag}\\
+|1\rangle\!\langle0|\otimes\left[  q_{1}\Gamma_{AB}+q_{2}\frac{dI_{AB}%
-\Gamma_{AB}}{d^{2}-1}\right] \\
+|1\rangle\!\langle1|\otimes\left[
\begin{array}
[c]{c}%
\Gamma_{AB}\operatorname{Tr}[K_{\hat{A}\hat{B}}\rho_{\hat{A}\hat{B}}]\\
+\frac{dI_{AB}-\Gamma_{AB}}{d^{2}-1}\operatorname{Tr}[L_{\hat{A}\hat{B}}%
\rho_{\hat{A}\hat{B}}]
\end{array}
\right]  \geq0.
\end{multline}
Exploiting the orthogonality of the operators $\Gamma_{AB}$ and $\frac
{dI_{AB}-\Gamma_{AB}}{d^{2}-1}$, we conclude that the single constraint above
is equivalent to the following two constraints:%
\begin{align}
|0\rangle\!\langle0|+q_{1}^{\ast}|0\rangle\!\langle1|+q_{1}|1\rangle
\!\langle0|+\operatorname{Tr}[K_{\hat{A}\hat{B}}\rho_{\hat{A}\hat{B}%
}]|1\rangle\!\langle1|  &  \geq0,\\
q_{2}^{\ast}|0\rangle\!\langle1|+q_{2}|1\rangle\!\langle0|+\operatorname{Tr}%
[L_{\hat{A}\hat{B}}\rho_{\hat{A}\hat{B}}]|1\rangle\!\langle1|  &  \geq0.
\end{align}
We can rewrite these as the following two matrix inequalities:%
\begin{align}%
\begin{bmatrix}
1 & q_{1}^{\ast}\\
q_{1} & \operatorname{Tr}[K_{\hat{A}\hat{B}}\rho_{\hat{A}\hat{B}}]
\end{bmatrix}
&  \geq0,\\%
\begin{bmatrix}
0 & q_{2}^{\ast}\\
q_{2} & \operatorname{Tr}[L_{\hat{A}\hat{B}}\rho_{\hat{A}\hat{B}}]
\end{bmatrix}
&  \geq0.
\end{align}
Since $\operatorname{Tr}[K_{\hat{A}\hat{B}}\rho_{\hat{A}\hat{B}}%
],\operatorname{Tr}[L_{\hat{A}\hat{B}}\rho_{\hat{A}\hat{B}}]\geq0$, it follows
that the inequalities above hold if and only if%
\begin{equation}
\operatorname{Tr}[K_{\hat{A}\hat{B}}\rho_{\hat{A}\hat{B}}]\geq\left\vert
q_{1}\right\vert ^{2},\qquad q_{2}=0.
\end{equation}
As a consequence, the optimization problem in \eqref{eq-app:ch-infid-opt}
becomes%
\begin{equation}
1-\left[  \sup_{\lambda\geq0,K_{\hat{A}\hat{B}},L_{\hat{A}\hat{B}}\geq
0,q_{1}\in\mathbb{C}}\lambda\right]  ^{2},
\end{equation}
subject to%
\begin{align}
\operatorname{Tr}[K_{\hat{A}\hat{B}}\rho_{\hat{A}\hat{B}}]  &  \geq\left\vert
q_{1}\right\vert ^{2},\\
\lambda &  \leq\operatorname{Re}[q_{1}],\\
K_{\hat{A}\hat{B}}+L_{\hat{A}\hat{B}}  &  =I_{\hat{A}\hat{B}},
\end{align}
and the following channel being one-way LOCC:%
\begin{multline}
\mathcal{L}_{A\hat{A}\hat{B}\rightarrow B}(\omega_{A\hat{A}\hat{B}%
})=\operatorname{id}_{A\rightarrow B}^{d}(\operatorname{Tr}_{\hat{A}\hat{B}%
}[K_{\hat{A}\hat{B}}\omega_{A\hat{A}\hat{B}}])\\
+\mathcal{D}_{A\rightarrow B}(\operatorname{Tr}_{\hat{A}\hat{B}}[L_{\hat
{A}\hat{B}}\omega_{A\hat{A}\hat{B}}]).
\end{multline}
As indicated previously (see \eqref{eq:1wl-form-K-op}), $\mathcal{L}_{A\hat
{A}\hat{B}\rightarrow B}$ is one-way LOCC if there exists a POVM\ $\{\Lambda
_{B\hat{A}}^{x}\}_{x}$ and a set $\{\mathcal{D}_{\hat{B}\rightarrow B}%
^{x}\}_{x}$\ of channels such that%
\begin{equation}
K_{\hat{A}\hat{B}}=\frac{1}{d^{2}}\sum_{x}\operatorname{Tr}_{B}[T_{B}%
(\Lambda_{B\hat{A}}^{x})\Gamma_{\hat{B}B}^{\mathcal{D}^{x}}],
\end{equation}
where $\Gamma_{\hat{B}B}^{\mathcal{D}^{x}}$ is the Choi operator of the
channel $\mathcal{D}_{\hat{B}\rightarrow B}^{x}$. Since we are trying to
maximize $\lambda$ subject to these constraints, we should then set
$\lambda=q_{1}=\sqrt{\operatorname{Tr}[K_{\hat{A}\hat{B}}\rho_{\hat{A}\hat{B}%
}]}$. This concludes the proof.

\section{Proof of Proposition~\ref{prop:simplified-SDP-two-ext-sim-id}}

\label{app:SDP-id-sim-simplify-sym}

Some of the reasoning here is similar conceptually to that given in
Appendix~\ref{app:proof-errors-equal}, but the details in many places are
rather different. We divide our proof into the subsections given below.

\subsection{Twirled two-extendible channel is optimal}

\label{app:twirled-2e-opt}

We begin by considering two-extendible channels exclusively and then later
bring in the PPT constraints. Our first goal is to prove that a two-extendible
channel of the form in \eqref{eq:twirled-2-ext-1}, with extension channel in
\eqref{eq:twirled-2-ext-2}, is optimal for performing the minimization in
\eqref{eq:sim-err-2-ext-gen-ch} if $\mathcal{N}_{A\rightarrow B}$ is the
identity channel $\operatorname{id}_{A\rightarrow B}^{d}$. To this end, let
$\mathcal{K}_{A\hat{A}\hat{B}\rightarrow B}$ be an arbitrary two-extendible
channel, meaning that there exists an extension channel $\mathcal{M}_{A\hat
{A}\hat{B}_{1}\hat{B}_{2}\rightarrow B_{1}B_{2}}$ satisfying the constraints
in \eqref{eq:perm-cov-ch}--\eqref{eq:marg-ch-ch}, i.e.,%
\begin{align}
\mathcal{F}_{B_{1}B_{2}}\circ\mathcal{M}_{A\hat{A}\hat{B}_{1}\hat{B}%
_{2}\rightarrow B_{1}B_{2}}  &  =\mathcal{M}_{A\hat{A}\hat{B}_{1}\hat{B}%
_{2}\rightarrow B_{1}B_{2}}\circ\mathcal{F}_{\hat{B}_{1}\hat{B}_{2}},\\
\operatorname{Tr}_{B_{2}}\circ\mathcal{M}_{A\hat{A}\hat{B}_{1}\hat{B}%
_{2}\rightarrow B_{1}B_{2}}  &  =\mathcal{K}_{A\hat{A}\hat{B}_{1}\rightarrow
B_{1}}\circ\operatorname{Tr}_{\hat{B}_{2}}.
\end{align}
For an arbitrary unitary channel $\mathcal{U}$, it then follows that the
channel%
\begin{equation}\label{eq:K_under_unitary}
\mathcal{K}_{A\hat{A}\hat{B}\rightarrow B}^{\mathcal{U}}\equiv\mathcal{U}%
_{B}^{\dag}\circ\mathcal{K}_{A\hat{A}\hat{B}\rightarrow B}\circ\mathcal{U}_{A}%
\end{equation}
is two-extendible with extension channel%
\begin{equation}\label{eq:M_under_unitary}
\mathcal{M}_{A\hat{A}\hat{B}_{1}\hat{B}_{2}\rightarrow B_{1}B_{2}%
}^{\mathcal{U}}\equiv(\mathcal{U}_{B_{1}}^{\dag}\otimes\mathcal{U}_{B_{2}%
}^{\dag})\circ\mathcal{M}_{A\hat{A}\hat{B}_{1}\hat{B}_{2}\rightarrow
B_{1}B_{2}}\circ\mathcal{U}_{A}%
\end{equation}
and achieves the same value of the normalized diamond distance that
$\mathcal{K}_{A\hat{A}\hat{B}\rightarrow B}$ does. That is,%
\begin{multline}
\left\Vert \mathcal{K}_{A\hat{A}\hat{B}\rightarrow B}\circ\mathcal{A}_{\hat
{A}\hat{B}}^{\rho}-\operatorname{id}_{A\rightarrow B}^{d}\right\Vert
_{\diamond}\\
=\left\Vert \mathcal{K}_{A\hat{A}\hat{B}\rightarrow B}^{\mathcal{U}}%
\circ\mathcal{A}_{\hat{A}\hat{B}}^{\rho}-\operatorname{id}_{A\rightarrow
B}^{d}\right\Vert _{\diamond}.
\end{multline}
This equality follows by the same reasoning used to justify
\eqref{eq:-diamond-d-unitary-equiv}. The claim that $\mathcal{K}_{A\hat{A}%
\hat{B}\rightarrow B}^{\mathcal{U}}$ is two-extendible follows because the
extension channel $\mathcal{M}_{A\hat{A}\hat{B}_{1}\hat{B}_{2}\rightarrow
B_{1}B_{2}}^{\mathcal{U}}$ satisfies%
\begin{align}
&  \mathcal{F}_{B_{1}B_{2}}\circ\mathcal{M}_{A\hat{A}\hat{B}_{1}\hat{B}%
_{2}\rightarrow B_{1}B_{2}}^{\mathcal{U}}\nonumber\\
&  =\mathcal{F}_{B_{1}B_{2}}\circ(\mathcal{U}_{B_{1}}^{\dag}\otimes
\mathcal{U}_{B_{2}}^{\dag})\circ\mathcal{M}_{A\hat{A}\hat{B}_{1}\hat{B}%
_{2}\rightarrow B_{1}B_{2}}\circ\mathcal{U}_{A}\label{eq:two-ext-justify-1}\\
&  =\mathcal{F}_{B_{1}B_{2}}\circ(\mathcal{U}_{B_{1}}^{\dag}\otimes
\mathcal{U}_{B_{2}}^{\dag})\circ\mathcal{F}_{B_{1}B_{2}}^{\dag}\nonumber\\
&  \qquad\circ\mathcal{F}_{B_{1}B_{2}}\circ\mathcal{M}_{A\hat{A}\hat{B}%
_{1}\hat{B}_{2}\rightarrow B_{1}B_{2}}\circ\mathcal{U}_{A}\\
&  =(\mathcal{U}_{B_{1}}^{\dag}\otimes\mathcal{U}_{B_{2}}^{\dag}%
)\circ\mathcal{F}_{B_{1}B_{2}}\circ\mathcal{M}_{A\hat{A}\hat{B}_{1}\hat{B}%
_{2}\rightarrow B_{1}B_{2}}\circ\mathcal{U}_{A}\\
&  =(\mathcal{U}_{B_{1}}^{\dag}\otimes\mathcal{U}_{B_{2}}^{\dag}%
)\circ\mathcal{M}_{A\hat{A}\hat{B}_{1}\hat{B}_{2}\rightarrow B_{1}B_{2}}%
\circ\mathcal{F}_{\hat{B}_{1}\hat{B}_{2}}\circ\mathcal{U}_{A}\\
&  =\mathcal{M}_{A\hat{A}\hat{B}_{1}\hat{B}_{2}\rightarrow B_{1}B_{2}%
}^{\mathcal{U}}\circ\mathcal{F}_{\hat{B}_{1}\hat{B}_{2}}.
\label{eq:two-ext-justify-1-last}%
\end{align}
Additionally,%
\begin{align}
&  \operatorname{Tr}_{B_{2}}\circ\mathcal{M}_{A\hat{A}\hat{B}_{1}\hat{B}%
_{2}\rightarrow B_{1}B_{2}}^{\mathcal{U}}\nonumber\\
&  =\operatorname{Tr}_{B_{2}}\circ(\mathcal{U}_{B_{1}}^{\dag}\otimes
\mathcal{U}_{B_{2}}^{\dag})\circ\mathcal{M}_{A\hat{A}\hat{B}_{1}\hat{B}%
_{2}\rightarrow B_{1}B_{2}}\circ\mathcal{U}_{A}\label{eq:two-ext-justify-2}\\
&  =\mathcal{U}_{B_{1}}^{\dag}\circ\operatorname{Tr}_{B_{2}}\circ
\mathcal{M}_{A\hat{A}\hat{B}_{1}\hat{B}_{2}\rightarrow B_{1}B_{2}}%
\circ\mathcal{U}_{A}\\
&  =\mathcal{U}_{B_{1}}^{\dag}\circ\mathcal{K}_{A\hat{A}\hat{B}_{1}\rightarrow
B_{1}}\circ\operatorname{Tr}_{\hat{B}_{2}}\circ\mathcal{U}_{A}\\
&  =\mathcal{K}_{A\hat{A}\hat{B}\rightarrow B}^{\mathcal{U}}\circ
\operatorname{Tr}_{\hat{B}_{2}}. \label{eq:two-ext-justify-2-last}%
\end{align}
Then consider that%
\begin{multline}
\left\Vert \mathcal{K}_{A\hat{A}\hat{B}\rightarrow B}\circ\mathcal{A}_{\hat
{A}\hat{B}}^{\rho}-\operatorname{id}_{A\rightarrow B}^{d}\right\Vert
_{\diamond}\label{eq:convexity-dd-2-ext}\\
\geq\left\Vert \widetilde{\mathcal{K}}_{A\hat{A}\hat{B}\rightarrow B}%
\circ\mathcal{A}_{\hat{A}\hat{B}}^{\rho}-\operatorname{id}_{A\rightarrow
B}^{d}\right\Vert _{\diamond},
\end{multline}
where $\widetilde{\mathcal{K}}_{A\hat{A}\hat{B}\rightarrow B}$ is the
following twirled two-extendible channel:%
\begin{equation}
\widetilde{\mathcal{K}}_{A\hat{A}\hat{B}\rightarrow B}\coloneqq\int
dU\ \mathcal{K}_{A\hat{A}\hat{B}\rightarrow B}^{\mathcal{U}},
\label{eq:twirled-2-ext-1}%
\end{equation}
with extension channel%
\begin{equation}
\widetilde{\mathcal{M}}_{A\hat{A}\hat{B}_{1}\hat{B}_{2}\rightarrow B_{1}B_{2}%
}\coloneqq\int dU\ \mathcal{M}_{A\hat{A}\hat{B}_{1}\hat{B}_{2}\rightarrow
B_{1}B_{2}}^{\mathcal{U}}. \label{eq:twirled-2-ext-2}%
\end{equation}
The inequality in \eqref{eq:convexity-dd-2-ext} follows from the same
reasoning given for \eqref{eq:convexity-dd-1wl}. Furthermore, $\widetilde
{\mathcal{K}}_{A\hat{A}\hat{B}\rightarrow B}$ is two-extendible with extension
$\widetilde{\mathcal{M}}_{A\hat{A}\hat{B}_{1}\hat{B}_{2}\rightarrow B_{1}%
B_{2}}$ because the extension channel $\widetilde{\mathcal{M}}_{A\hat{A}%
\hat{B}_{1}\hat{B}_{2}\rightarrow B_{1}B_{2}}$ satisfies%
\begin{align}
&  \mathcal{F}_{B_{1}B_{2}}\circ\widetilde{\mathcal{M}}_{A\hat{A}\hat{B}%
_{1}\hat{B}_{2}\rightarrow B_{1}B_{2}}\nonumber\\
&  =\mathcal{F}_{B_{1}B_{2}}\circ\int dU\ \mathcal{M}_{A\hat{A}\hat{B}_{1}%
\hat{B}_{2}\rightarrow B_{1}B_{2}}^{\mathcal{U}}\\
&  =\int dU\ \mathcal{F}_{B_{1}B_{2}}\circ\mathcal{M}_{A\hat{A}\hat{B}_{1}%
\hat{B}_{2}\rightarrow B_{1}B_{2}}^{\mathcal{U}}\\
&  =\int dU\ \mathcal{M}_{A\hat{A}\hat{B}_{1}\hat{B}_{2}\rightarrow B_{1}%
B_{2}}^{\mathcal{U}}\circ\mathcal{F}_{\hat{B}_{1}\hat{B}_{2}}\\
&  =\widetilde{\mathcal{M}}_{A\hat{A}\hat{B}_{1}\hat{B}_{2}\rightarrow
B_{1}B_{2}}\circ\mathcal{F}_{\hat{B}_{1}\hat{B}_{2}},
\end{align}
and%
\begin{align}
&  \operatorname{Tr}_{B_{2}}\circ\widetilde{\mathcal{M}}_{A\hat{A}\hat{B}%
_{1}\hat{B}_{2}\rightarrow B_{1}B_{2}}\nonumber\\
&  =\operatorname{Tr}_{B_{2}}\circ\int dU\ \mathcal{M}_{A\hat{A}\hat{B}%
_{1}\hat{B}_{2}\rightarrow B_{1}B_{2}}^{\mathcal{U}}\\
&  =\int dU\ \operatorname{Tr}_{B_{2}}\circ\mathcal{M}_{A\hat{A}\hat{B}%
_{1}\hat{B}_{2}\rightarrow B_{1}B_{2}}^{\mathcal{U}}\\
&  =\int dU\ \mathcal{K}_{A\hat{A}\hat{B}\rightarrow B}^{\mathcal{U}}%
\circ\operatorname{Tr}_{\hat{B}_{2}}\\
&  =\widetilde{\mathcal{K}}_{A\hat{A}\hat{B}\rightarrow B}\circ
\operatorname{Tr}_{\hat{B}_{2}}.
\end{align}
In the above, we made use of
\eqref{eq:two-ext-justify-1}--\eqref{eq:two-ext-justify-1-last} and \eqref{eq:two-ext-justify-2}--\eqref{eq:two-ext-justify-2-last}.

As a consequence of \eqref{eq:convexity-dd-2-ext}, it follows that it suffices
to minimize the following objective function%
\begin{equation}
\left\Vert \mathcal{K}_{A\hat{A}\hat{B}\rightarrow B}\circ\mathcal{A}_{\hat
{A}\hat{B}}^{\rho}-\operatorname{id}_{A\rightarrow B}^{d}\right\Vert
_{\diamond}%
\end{equation}
with respect to two-extendible channels obeying the symmetries in
\eqref{eq:twirled-2-ext-1} and \eqref{eq:twirled-2-ext-2}. Thus, our goal is
to characterize the form of two-extendible channels possessing the symmetries
in \eqref{eq:twirled-2-ext-1} and \eqref{eq:twirled-2-ext-2}. To this end, we
consider channels of the form in \eqref{eq:twirled-2-ext-2} that obey the
following constraints:%
\begin{align}
\mathcal{F}_{B_{1}B_{2}}\circ\widetilde{\mathcal{M}}_{A\hat{A}\hat{B}_{1}%
\hat{B}_{2}\rightarrow B_{1}B_{2}}  &  =\widetilde{\mathcal{M}}_{A\hat{A}%
\hat{B}_{1}\hat{B}_{2}\rightarrow B_{1}B_{2}}\circ\mathcal{F}_{\hat{B}_{1}%
\hat{B}_{2}},\label{eq:2-ext-constr-1}\\
\operatorname{Tr}_{B_{2}}\circ\widetilde{\mathcal{M}}_{A\hat{A}\hat{B}_{1}%
\hat{B}_{2}\rightarrow B_{1}B_{2}}  &  =\widetilde{\mathcal{K}}_{A\hat{A}%
\hat{B}\rightarrow B}\circ\operatorname{Tr}_{\hat{B}_{2}}.
\label{eq:2-ext-constr-2}%
\end{align}
Equivalently, we work with the Choi operator of such channels.

First consider that a channel obeying the following symmetry:%
\begin{multline}
\widetilde{\mathcal{M}}_{A\hat{A}\hat{B}_{1}\hat{B}_{2}\rightarrow B_{1}B_{2}%
}=\\
\int dU\ (\mathcal{U}_{B_{1}}^{\dag}\otimes\mathcal{U}_{B_{2}}^{\dag}%
)\circ\widetilde{\mathcal{M}}_{A\hat{A}\hat{B}_{1}\hat{B}_{2}\rightarrow
B_{1}B_{2}}\circ\mathcal{U}_{A}%
\end{multline}
is equivalent to its Choi operator obeying the following symmetry:%
\begin{equation}
\widetilde{M}_{A\hat{A}\hat{B}_{1}\hat{B}_{2}B_{1}B_{2}}=\widetilde
{\mathcal{T}}_{AB_{1}B_{2}}(\widetilde{M}_{A\hat{A}\hat{B}_{1}\hat{B}_{2}%
B_{1}B_{2}}), \label{eq:choi-op-two-ext-symmetrized}%
\end{equation}
where%
\begin{equation}
\widetilde{\mathcal{T}}_{AB_{1}B_{2}}(\cdot)\coloneqq\int dU\ (\overline
{\mathcal{U}}_{A}\otimes\mathcal{U}_{B_{1}}\otimes\mathcal{U}_{B_{2}})(\cdot).
\end{equation}
It was shown in \cite[Section~VI-A]{EW01} and \cite{JV13} that the tripartite
twirling channel $\widetilde{\mathcal{T}}_{AB_{1}B_{2}}$ has the following
effect on an arbitrary operator $X_{AB_{1}B_{2}}$:%
\begin{multline}
\widetilde{\mathcal{T}}_{AB_{1}B_{2}}(X_{AB_{1}B_{2}})= \label{eq:3-way-twirl}%
\\
\sum_{i\in\left\{  +,-,0,1,2,3\right\}  }\operatorname{Tr}_{AB_{1}B_{2}%
}[S_{AB_{1}B_{2}}^{i}X_{AB_{1}B_{2}}]\frac{S_{AB_{1}B_{2}}^{i}}%
{\operatorname{Tr}[S_{AB_{1}B_{2}}^{g(i)}]},
\end{multline}
where%
\begin{equation}\label{eq:S_indexing}
g(i)\coloneqq\left\{
\begin{array}
[c]{cc}%
i & \text{if }i\in\left\{  +,-\right\} \\
0 & \text{if }i\in\left\{  0,1,2,3\right\}
\end{array}
\right.  ,
\end{equation}
and%
\begin{align}
S_{AB_{1}B_{2}}^{+}  &  \coloneqq\frac{1}{2}\left[
\begin{array}
[c]{c}%
I_{AB_{1}B_{2}}+V_{B_{1}B_{2}}\\
-\left(  \frac{V_{AB_{1}}^{T_{A}}+V_{AB_{2}}^{T_{A}}+V_{AB_{1}B_{2}}^{T_{A}%
}+V_{B_{2}B_{1}A}^{T_{A}}}{d+1}\right)
\end{array}
\right]  ,\label{eq:S-op-+}\\
S_{AB_{1}B_{2}}^{-}  &  \coloneqq\frac{1}{2}\left[
\begin{array}
[c]{c}%
I_{AB_{1}B_{2}}-V_{B_{1}B_{2}}\\
-\left(  \frac{V_{AB_{1}}^{T_{A}}+V_{AB_{2}}^{T_{A}}-V_{AB_{1}B_{2}}^{T_{A}%
}-V_{B_{2}B_{1}A}^{T_{A}}}{d-1}\right)
\end{array}
\right]  ,\label{eq:S-op--}\\
S_{AB_{1}B_{2}}^{0}  &  \coloneqq\frac{1}{d^{2}-1}\left[
\begin{array}
[c]{c}%
d\left(  V_{AB_{1}}^{T_{A}}+V_{AB_{2}}^{T_{A}}\right) \\
-\left(  V_{AB_{1}B_{2}}^{T_{A}}+V_{B_{2}B_{1}A}^{T_{A}}\right)
\end{array}
\right]  ,\label{eq:S-op-0}\\
S_{AB_{1}B_{2}}^{1}  &  \coloneqq\frac{1}{d^{2}-1}\left[
\begin{array}
[c]{c}%
d\left(  V_{AB_{1}B_{2}}^{T_{A}}+V_{B_{2}B_{1}A}^{T_{A}}\right) \\
-\left(  V_{AB_{1}}^{T_{A}}+V_{AB_{2}}^{T_{A}}\right)
\end{array}
\right]  ,\label{eq:S-op-1}\\
S_{AB_{1}B_{2}}^{2}  &  \coloneqq\frac{1}{\sqrt{d^{2}-1}}\left(  V_{AB_{1}%
}^{T_{A}}-V_{AB_{2}}^{T_{A}}\right)  ,\label{eq:S-op-2}\\
S_{AB_{1}B_{2}}^{3}  &  \coloneqq\frac{i}{\sqrt{d^{2}-1}}\left(
V_{AB_{1}B_{2}}^{T_{A}}-V_{B_{2}B_{1}A}^{T_{A}}\right)  , \label{eq:S-op-3}%
\end{align}
and the permutation operators $V_{AB_{1}}$, $V_{AB_{2}}$, $V_{B_{1}B_{2}}$,
$V_{AB_{1}B_{2}}$, and $V_{B_{2}B_{1}A}$ are given by%
\begin{align}
V_{AB_{1}}  &  \coloneqq\sum_{i,j}|i\rangle\!\langle j|_{A}\otimes
|j\rangle\!\langle i|_{B_{1}}\otimes I_{B_{2}},\\
V_{AB_{2}}  &  \coloneqq\sum_{i,j}|i\rangle\!\langle j|_{A}\otimes I_{B_{1}%
}\otimes|j\rangle\!\langle i|_{B_{2}},\\
V_{B_{1}B_{2}}  &  \coloneqq\sum_{i,j}I_{A}\otimes|i\rangle\!\langle
j|_{B_{1}}\otimes|j\rangle\!\langle i|_{B_{2}}\\
V_{AB_{1}B_{2}}  &  \coloneqq\sum_{i,j,k}|k\rangle\!\langle i|_{A}%
\otimes|i\rangle\!\langle j|_{B_{1}}\otimes|j\rangle\!\langle k|_{B_{2}},\\
V_{B_{2}B_{1}A}  &  \coloneqq\sum_{i,j,k}|j\rangle\!\langle i|_{A}%
\otimes|k\rangle\!\langle j|_{B_{1}}\otimes|i\rangle\!\langle k|_{B_{2}},
\end{align}
so that%
\begin{align}
V_{AB_{1}}^{T_{A}}  &  =\sum_{i,j}|j\rangle\!\langle i|_{A}\otimes
|j\rangle\!\langle i|_{B_{1}}\otimes I_{B_{2}}%
,\label{eq:partial-transpose-perm-ops-1}\\
V_{AB_{2}}^{T_{A}}  &  =\sum_{i,j}|j\rangle\!\langle i|_{A}\otimes I_{B_{1}%
}\otimes|j\rangle\!\langle i|_{B_{2}},\\
V_{AB_{1}B_{2}}^{T_{A}}  &  =\sum_{i,j,k}|i\rangle\!\langle k|_{A}%
\otimes|i\rangle\!\langle j|_{B_{1}}\otimes|j\rangle\!\langle k|_{B_{2}},\\
V_{B_{2}B_{1}A}^{T_{A}}  &  =\sum_{i,j,k}|i\rangle\!\langle j|_{A}%
\otimes|k\rangle\!\langle j|_{B_{1}}\otimes|i\rangle\!\langle k|_{B_{2}}.
\label{eq:partial-transpose-perm-ops-4}%
\end{align}
We note that%
\begin{align}
\operatorname{Tr}[S_{AB_{1}B_{2}}^{+}]  &  =\frac{d\left(  d+2\right)  \left(
d-1\right)  }{2},\\
\operatorname{Tr}[S_{AB_{1}B_{2}}^{-}]  &  =\frac{d\left(  d-2\right)  \left(
d+1\right)  }{2},\\
\operatorname{Tr}[S_{AB_{1}B_{2}}^{0}]  &  =2d.\label{S0_trace}
\end{align}
It is known that $S_{AB_{1}B_{2}}^{+}$, $S_{AB_{1}B_{2}}^{-}$, and
$S_{AB_{1}B_{2}}^{0}$ are orthogonal projectors and satisfy
\begin{equation}
S_{AB_{1}B_{2}}^{+}+S_{AB_{1}B_{2}}^{-}+S_{AB_{1}B_{2}}^{0}=I_{AB_{1}B_{2}},
\end{equation}
while $S_{AB_{1}B_{2}}^{1}$, $S_{AB_{1}B_{2}}^{2}$, $S_{AB_{1}B_{2}}^{3}$ are
Pauli-like operators acting on the subspace onto which $S_{AB_{1}B_{2}}^{0}$
projects (see \cite[Section~VI-A]{EW01} and \cite{JV13}). All $S_{AB_{1}B_{2}%
}^{i}$ operators are Hermitian and satisfy the following for their
Hilbert--Schmidt inner product:%
\begin{align}
\left\langle S_{AB_{1}B_{2}}^{i},S_{AB_{1}B_{2}}^{j}\right\rangle  &
=\operatorname{Tr}[(S_{AB_{1}B_{2}}^{i})^{\dag}S_{AB_{1}B_{2}}^{j}%
]\label{eq:in-prod-rel-1}\\
&  =\operatorname{Tr}[S_{AB_{1}B_{2}}^{i}S_{AB_{1}B_{2}}^{j}]\\
&  =\operatorname{Tr}[S_{AB_{1}B_{2}}^{g(i)}]\delta_{i,j}.
\label{eq:in-prod-rel-3}%
\end{align}

As a consequence of the fact that $S_{AB_{1}B_{2}}^{-}$ is a projector with
trace equal to $\frac{d\left(  d-2\right)  \left(  d+1\right)  }{2}$, it
follows that $S_{AB_{1}B_{2}}^{-}=0$ for the case of $d=2$, so that this
operator and its associated operators $M_{\hat{A}\hat{B}_{1}\hat{B}_{2}%
}^{\prime-}$ and $M_{\hat{A}\hat{B}_{1}\hat{B}_{2}}^{-}$ given below are not
involved for the case of $d=2$ (when simulating a qubit identity channel).

By applying \eqref{eq:3-way-twirl} to \eqref{eq:choi-op-two-ext-symmetrized},
we find that%
\begin{equation}
\widetilde{M}_{A\hat{A}\hat{B}_{1}\hat{B}_{2}B_{1}B_{2}}=\sum_{i\in\left\{
+,-,0,1,2,3\right\}  }M_{\hat{A}\hat{B}_{1}\hat{B}_{2}}^{\prime i}\otimes
\frac{S_{AB_{1}B_{2}}^{i}}{\operatorname{Tr}[S_{AB_{1}B_{2}}^{g(i)}]},
\label{eq:sym-form-for-Choi-op-2-ext}%
\end{equation}
where, for $i\in\left\{  +,-,0,1,2,3\right\}  $,%
\begin{equation}
M_{\hat{A}\hat{B}_{1}\hat{B}_{2}}^{\prime i}\coloneqq\operatorname{Tr}%
_{AB_{1}B_{2}}[S_{AB_{1}B_{2}}^{i}\widetilde{M}_{A\hat{A}\hat{B}_{1}\hat
{B}_{2}B_{1}B_{2}}].
\end{equation}
Thus, our goal from here is to determine conditions on the $M_{\hat{A}\hat
{B}_{1}\hat{B}_{2}}^{\prime i}$ operators such that $\widetilde{M}_{A\hat
{A}\hat{B}_{1}\hat{B}_{2}B_{1}B_{2}}$ corresponds to a Choi operator for a
channel $\widetilde{\mathcal{M}}_{A\hat{A}\hat{B}_{1}\hat{B}_{2}\rightarrow
B_{1}B_{2}}$ satisfying \eqref{eq:2-ext-constr-1} and
\eqref{eq:2-ext-constr-2}. These conditions are the same as those specified in
\eqref{eq:obj-func-gen-two-ext-sim}, \eqref{eq:sim-two-ext-TP},
\eqref{eq:sim-two-ext-2EXT-constr}, and \eqref{eq:sim-two-ext-constr-ext}, and
we repeat them here for convenience:%
\begin{align}
\widetilde{M}_{A\hat{A}\hat{B}_{1}\hat{B}_{2}B_{1}B_{2}}  &  \geq
0,\label{eq:CP-map-Choi-PSD-2-ext}\\
\operatorname{Tr}_{B_{1}B_{2}}[\widetilde{M}_{A\hat{A}\hat{B}_{1}\hat{B}%
_{2}B_{1}B_{2}}]  &  =I_{A\hat{A}\hat{B}_{1}\hat{B}_{2}},
\label{eq:TP-map-Choi-PSD-2-ext}%
\end{align}%
\begin{multline}
\operatorname{Tr}_{B_{2}}[\widetilde{M}_{A\hat{A}\hat{B}_{1}\hat{B}_{2}%
B_{1}B_{2}}]\label{eq:marginal-Choi-PSD-2-ext}\\
=\frac{1}{d_{\hat{B}}}\operatorname{Tr}_{\hat{B}_{2}B_{2}}[\widetilde
{M}_{A\hat{A}\hat{B}_{1}\hat{B}_{2}B_{1}B_{2}}]\otimes I_{\hat{B}_{2}},
\end{multline}%
\begin{equation}
\widetilde{M}_{A\hat{A}\hat{B}_{1}\hat{B}_{2}B_{1}B_{2}}=(\mathcal{F}_{\hat
{B}_{1}\hat{B}_{2}}\otimes\mathcal{F}_{B_{1}B_{2}})(\widetilde{M}_{A\hat
{A}\hat{B}_{1}\hat{B}_{2}B_{1}B_{2}}). \label{eq:perm-cov-Choi-2-ext}%
\end{equation}

\subsection{Complete positivity condition}

\label{Complete_positivity_tel}

We begin by considering the condition in \eqref{eq:CP-map-Choi-PSD-2-ext}, as
applied to \eqref{eq:sym-form-for-Choi-op-2-ext}. By exploiting the previously
stated facts that $S_{AB_{1}B_{2}}^{+}$, $S_{AB_{1}B_{2}}^{-}$, and
$S_{AB_{1}B_{2}}^{0}$ are orthogonal projectors and $S_{AB_{1}B_{2}}^{1}$,
$S_{AB_{1}B_{2}}^{2}$, $S_{AB_{1}B_{2}}^{3}$ are Pauli-like operators acting
on the subspace onto which $S_{AB_{1}B_{2}}^{0}$ projects, it follows that
\eqref{eq:CP-map-Choi-PSD-2-ext} holds if and only if%
\begin{align}
M_{\hat{A}\hat{B}_{1}\hat{B}_{2}}^{\prime+}  &  \geq0,
\label{eq:M+-PSD-CP-map}\\
M_{\hat{A}\hat{B}_{1}\hat{B}_{2}}^{\prime-}  &  \geq0,
\label{eq:M--PSD-CP-map}\\
\sum_{i\in\left\{  0,1,2,3\right\}  }M_{\hat{A}\hat{B}_{1}\hat{B}_{2}}^{\prime
i}\otimes\frac{S_{AB_{1}B_{2}}^{i}}{\operatorname{Tr}[S_{AB_{1}B_{2}}%
^{g(i)}]}  &  \geq0. \label{eq:CP-conditions-last}%
\end{align}
We can now exploit the inner product relations in
\eqref{eq:in-prod-rel-1}--\eqref{eq:in-prod-rel-3} to conclude that the map%
\begin{equation}
X\rightarrow\sum_{i\in\left\{  0,1,2,3\right\}  }\frac{S^{i}}{2d}%
\operatorname{Tr}\left[  \frac{\sigma^{i}}{2}X\right]
\label{eq:isometric-map-sig-to-S}%
\end{equation}
from a qubit input to the systems $AB_{1}B_{2}$ is an isometry with inverse%
\begin{equation}
Y\rightarrow\sum_{i\in\left\{  0,1,2,3\right\}  }\frac{\sigma^{i}}%
{2}\operatorname{Tr}\left[  \frac{S^{i}}{2d}Y\right]  ,
\label{eq:isometric-map-sig-to-S-2}%
\end{equation}
where $\{\sigma^{i}\}_{i}$ is the set of Pauli operators:
\begin{align}
\sigma_0 & \coloneqq \begin{bmatrix} 1 & 0 \\ 0 & 1\end{bmatrix},
& \sigma_1 &\coloneqq \begin{bmatrix} 0 & 1 \\ 1 & 0\end{bmatrix},\\
\sigma_2 & \coloneqq \begin{bmatrix} 0 & -i \\ i & 0\end{bmatrix},
& \sigma_3 &\coloneqq \begin{bmatrix} 1 & 0 \\ 0 & -1\end{bmatrix}.
\end{align}
Exploiting this map,
we find that \eqref{eq:CP-conditions-last} holds if and only if%
\begin{equation}
\sum_{i\in\left\{  0,1,2,3\right\}  }M_{\hat{A}\hat{B}_{1}\hat{B}_{2}}^{\prime
i}\otimes\frac{\sigma^{i}}{\operatorname{Tr}[S_{AB_{1}B_{2}}^{g(i)}]}\geq0,
\end{equation}
the latter being equivalent to%
\begin{equation}%
\begin{bmatrix}
M^{0}+M^{3} & M^{1}-iM^{2}\\
M^{1}+iM^{2} & M^{0}-M^{3}%
\end{bmatrix}
\geq0, \label{eq:M0-3-conds-CP}%
\end{equation}
after defining%
\begin{equation}
M_{\hat{A}\hat{B}_{1}\hat{B}_{2}}^{i}\coloneqq\frac{2M_{\hat{A}\hat{B}_{1}%
\hat{B}_{2}}^{\prime i}}{\operatorname{Tr}[S_{AB_{1}B_{2}}^{g(i)}]}%
=\frac{M_{\hat{A}\hat{B}_{1}\hat{B}_{2}}^{\prime i}}{d},
\label{eq:M-ops-redef-1}%
\end{equation}
for $i\in\left\{  0,1,2,3\right\}  $ and using the shorthand $M^{i}\equiv
M_{\hat{A}\hat{B}_{1}\hat{B}_{2}}^{i}$. Defining
\begin{equation}
M_{\hat{A}\hat{B}_{1}\hat{B}_{2}}^{i}\coloneqq\frac{M_{\hat{A}\hat{B}_{1}%
\hat{B}_{2}}^{\prime i}}{d} \label{eq:M-ops-redef-2}%
\end{equation}
for $i\in\left\{  +,-\right\}  $, we conclude from \eqref{eq:M+-PSD-CP-map},
\eqref{eq:M--PSD-CP-map}, and \eqref{eq:M0-3-conds-CP} that
\eqref{eq:CP-map-Choi-PSD-2-ext} holds if and only if the following operator
inequalities hold%
\begin{align}
M_{\hat{A}\hat{B}_{1}\hat{B}_{2}}^{+}  &  \geq0, \label{eq:PSD-constr-M-ops-1}%
\\
M_{\hat{A}\hat{B}_{1}\hat{B}_{2}}^{-}  &  \geq0,\\%
\begin{bmatrix}
M^{0}+M^{3} & M^{1}-iM^{2}\\
M^{1}+iM^{2} & M^{0}-M^{3}%
\end{bmatrix}
&  \geq0. \label{eq:PSD-constr-M-ops-3}%
\end{align}
From \eqref{eq:PSD-constr-M-ops-3}, we conclude that $M_0 \geq 0$ because it implies $M^{0}+M^{3}\geq 0$ and $M^{0}-M^{3} \geq 0$, which in turn implies $M_0 \geq 0$. We have thus justified the constraints on $M^+, M^-, M^0, M^1, M^2, M^3$ in \eqref{eq:obj-funct-simple-sdp-tele} and \eqref{eq:M-matrix-constr-tele}.

\subsection{Trace preservation condition}\label{sec_trace_preserve_tel}

We now consider the condition in \eqref{eq:TP-map-Choi-PSD-2-ext}, as applied
to \eqref{eq:sym-form-for-Choi-op-2-ext}. Consider that%
\begin{align}
I_{A\hat{A}\hat{B}_{1}\hat{B}_{2}}  &  =\operatorname{Tr}_{B_{1}B_{2}%
}[\widetilde{M}_{A\hat{A}\hat{B}_{1}\hat{B}_{2}B_{1}B_{2}}%
]\label{eq:TP-cond-1-1}\\
&  =\sum_{i\in\left\{  +,-,0,1,2,3\right\}  }M_{\hat{A}\hat{B}_{1}\hat{B}_{2}%
}^{\prime i}\otimes\frac{\operatorname{Tr}_{B_{1}B_{2}}[S_{AB_{1}B_{2}}^{i}%
]}{\operatorname{Tr}[S_{AB_{1}B_{2}}^{g(i)}]}. \label{eq:TP-cond-1-2}%
\end{align}
Thus, we need to calculate $\operatorname{Tr}_{B_{1}B_{2}}[S_{AB_{1}B_{2}}%
^{i}]$ for all $i\in\left\{  +,-,0,1,2,3\right\}  $. To do so, we first find
that%
\begin{align}
\operatorname{Tr}_{B_{1}B_{2}}[I_{AB_{1}B_{2}}]  &  =d^{2}I_{A},\\
\operatorname{Tr}_{B_{1}B_{2}}[V_{B_{1}B_{2}}]  &  =dI_{A},\\
\operatorname{Tr}_{B_{1}B_{2}}[V_{AB_{1}}^{T_{A}}]  &  =dI_{A},\\
\operatorname{Tr}_{B_{1}B_{2}}[V_{AB_{2}}^{T_{A}}]  &  =dI_{A},\\
\operatorname{Tr}_{B_{1}B_{2}}[V_{AB_{1}B_{2}}^{T_{A}}]  &  =I_{A},\\
\operatorname{Tr}_{B_{1}B_{2}}[V_{B_{2}B_{1}A}^{T_{A}}]  &  =I_{A},
\end{align}
which implies from \eqref{eq:S-op-+}--\eqref{eq:S-op-3}, that%
\begin{align}
\operatorname{Tr}_{B_{1}B_{2}}[S_{AB_{1}B_{2}}^{+}]  &  =\frac{1}{2}\left[
\begin{array}
[c]{c}%
d^{2}I_{A}+dI_{A}\\
-\left(  \frac{dI_{A}+dI_{A}+I_{A}+I_{A}}{d+1}\right)
\end{array}
\right] \\
&  =\frac{\left(  d+2\right)  \left(  d-1\right)  }{2}I_{A},\\
\operatorname{Tr}_{B_{1}B_{2}}[S_{AB_{1}B_{2}}^{-}]  &  =\frac{1}{2}\left[
\begin{array}
[c]{c}%
d^{2}I_{A}-dI_{A}\\
-\left(  \frac{dI_{A}+dI_{A}-I_{A}-I_{A}}{d-1}\right)
\end{array}
\right] \\
&  =\frac{\left(  d-2\right)  \left(  d+1\right)  }{2}I_{A},\\
\operatorname{Tr}_{B_{1}B_{2}}[S_{AB_{1}B_{2}}^{0}]  &  =\frac{1}{d^{2}%
-1}\left[
\begin{array}
[c]{c}%
d\left(  dI_{A}+dI_{A}\right) \\
-\left(  I_{A}+I_{A}\right)
\end{array}
\right] \\
&  =2I_{A},\\
\operatorname{Tr}_{B_{1}B_{2}}[S_{AB_{1}B_{2}}^{1}]  &  =\frac{1}{d^{2}%
-1}\left[
\begin{array}
[c]{c}%
d\left(  I_{A}+I_{A}\right) \\
-\left(  dI_{A}+dI_{A}\right)
\end{array}
\right] \\
&  =0,\\
\operatorname{Tr}_{B_{1}B_{2}}[S_{AB_{1}B_{2}}^{2}]  &  =\frac{1}{\sqrt
{d^{2}-1}}\left(  dI_{A}-dI_{A}\right) \\
&  =0,\\
\operatorname{Tr}_{B_{1}B_{2}}[S_{AB_{1}B_{2}}^{3}]  &  =\frac{i}{\sqrt
{d^{2}-1}}\left(  I_{A}-I_{A}\right) \\
&  =0.
\end{align}
Combining with \eqref{eq:TP-cond-1-1}--\eqref{eq:TP-cond-1-2}, we conclude
that%
\begin{align}
&  I_{A\hat{A}\hat{B}_{1}\hat{B}_{2}}\nonumber\\
&  =M_{\hat{A}\hat{B}_{1}\hat{B}_{2}}^{\prime+}\otimes\frac{\operatorname{Tr}%
_{B_{1}B_{2}}[S_{AB_{1}B_{2}}^{+}]}{\operatorname{Tr}[S_{AB_{1}B_{2}}^{g(+)}%
]}\nonumber\\
&  \qquad+M_{\hat{A}\hat{B}_{1}\hat{B}_{2}}^{\prime-}\otimes\frac
{\operatorname{Tr}_{B_{1}B_{2}}[S_{AB_{1}B_{2}}^{-}]}{\operatorname{Tr}%
[S_{AB_{1}B_{2}}^{g(-)}]}\nonumber\\
&  \qquad+M_{\hat{A}\hat{B}_{1}\hat{B}_{2}}^{\prime0}\otimes\frac
{\operatorname{Tr}_{B_{1}B_{2}}[S_{AB_{1}B_{2}}^{0}]}{\operatorname{Tr}%
[S_{AB_{1}B_{2}}^{g(0)}]}\\
&  =M_{\hat{A}\hat{B}_{1}\hat{B}_{2}}^{\prime+}\otimes\frac{\frac{\left(
d+2\right)  \left(  d-1\right)  }{2}I_{A}}{\frac{d\left(  d+2\right)  \left(
d-1\right)  }{2}}\nonumber\\
&  \qquad+M_{\hat{A}\hat{B}_{1}\hat{B}_{2}}^{\prime-}\otimes\frac
{\frac{\left(  d-2\right)  \left(  d+1\right)  }{2}I_{A}}{\frac{d\left(
d-2\right)  \left(  d+1\right)  }{2}}\nonumber\\
&  \qquad+M_{\hat{A}\hat{B}_{1}\hat{B}_{2}}^{\prime0}\otimes\frac{2I_{A}}%
{2d}\\
&  =\frac{1}{d}\left(  M_{\hat{A}\hat{B}_{1}\hat{B}_{2}}^{\prime+}+M_{\hat
{A}\hat{B}_{1}\hat{B}_{2}}^{\prime-}+M_{\hat{A}\hat{B}_{1}\hat{B}_{2}}%
^{\prime0}\right)  \otimes I_{A},
\end{align}
which is equivalent to%
\begin{equation}
I_{\hat{A}\hat{B}_{1}\hat{B}_{2}}=\frac{1}{d}\left(  M_{\hat{A}\hat{B}_{1}%
\hat{B}_{2}}^{\prime+}+M_{\hat{A}\hat{B}_{1}\hat{B}_{2}}^{\prime-}+M_{\hat
{A}\hat{B}_{1}\hat{B}_{2}}^{\prime0}\right)  .
\end{equation}
Observing \eqref{eq:M-ops-redef-1} and \eqref{eq:M-ops-redef-2}, we conclude
that%
\begin{equation}
M_{\hat{A}\hat{B}_{1}\hat{B}_{2}}^{+}+M_{\hat{A}\hat{B}_{1}\hat{B}_{2}}%
^{-}+M_{\hat{A}\hat{B}_{1}\hat{B}_{2}}^{0}=I_{\hat{A}\hat{B}_{1}\hat{B}_{2}}.
\label{eq:meas-ops-conclusion}%
\end{equation}
We then conclude that \eqref{eq:TP-map-Choi-PSD-2-ext} holds if and only if
\eqref{eq:meas-ops-conclusion} holds. This justifies the constraint in \eqref{eq:TP-constr-tele}.

Note that we can interpret $\{M_{\hat
{A}\hat{B}_{1}\hat{B}_{2}}^{+},M_{\hat{A}\hat{B}_{1}\hat{B}_{2}}^{-}%
,M_{\hat{A}\hat{B}_{1}\hat{B}_{2}}^{0}\}$ as a POVM, due to the constraints in
\eqref{eq:meas-ops-conclusion} and
\eqref{eq:PSD-constr-M-ops-1}--\eqref{eq:PSD-constr-M-ops-3}, with the last
implying that $M_{\hat{A}\hat{B}_{1}\hat{B}_{2}}^{0}\geq0$, as indicated after~\eqref{eq:PSD-constr-M-ops-3}.

\subsection{Non-signaling condition}

\label{app:non-sig-tele}

We now consider the condition in \eqref{eq:marginal-Choi-PSD-2-ext}, as
applied to \eqref{eq:sym-form-for-Choi-op-2-ext}. Consider that%
\begin{multline}
\operatorname{Tr}_{B_{2}}[\widetilde{M}_{A\hat{A}\hat{B}_{1}\hat{B}_{2}%
B_{1}B_{2}}]\label{eq:1st-part-no-sig}\\
=\sum_{i\in\left\{  +,-,0,1,2,3\right\}  }M_{\hat{A}\hat{B}_{1}\hat{B}_{2}%
}^{\prime i}\otimes\frac{\operatorname{Tr}_{B_{2}}[S_{AB_{1}B_{2}}^{i}%
]}{\operatorname{Tr}[S_{AB_{1}B_{2}}^{g(i)}]}%
\end{multline}
Thus, we need to calculate $\operatorname{Tr}_{B_{2}}[S_{AB_{1}B_{2}}^{i}]$
for all $i\in\left\{  +,-,0,1,2,3\right\}  $. To do so, we first find that%
\begin{align}
\operatorname{Tr}_{B_{2}}[I_{AB_{1}B_{2}}]  &  =dI_{AB_{1}},\\
\operatorname{Tr}_{B_{2}}[V_{B_{1}B_{2}}]  &  =I_{AB_{1}},\\
\operatorname{Tr}_{B_{2}}[V_{AB_{1}}^{T_{A}}]  &  =d\Gamma_{AB_{1}}=d^{2}%
\Phi_{AB_{1}},\\
\operatorname{Tr}_{B_{2}}[V_{AB_{2}}^{T_{A}}]  &  =I_{AB_{1}},\\
\operatorname{Tr}_{B_{2}}[V_{AB_{1}B_{2}}^{T_{A}}]  &  =\Gamma_{AB_{1}}%
=d\Phi_{AB_{1}},\\
\operatorname{Tr}_{B_{2}}[V_{B_{2}B_{1}A}^{T_{A}}]  &  =\Gamma_{AB_{1}}%
=d\Phi_{AB_{1}},
\end{align}
which implies from \eqref{eq:S-op-+}--\eqref{eq:S-op-3}, that%
\begin{align}
&  \operatorname{Tr}_{B_{2}}[S_{AB_{1}B_{2}}^{+}]\nonumber\\
&  =\frac{1}{2}\left[
\begin{array}
[c]{c}%
dI_{AB_{1}}+I_{AB_{1}}\\
-\left(  \frac{d^{2}\Phi_{AB_{1}}+I_{AB_{1}}+d\Phi_{AB_{1}}+d\Phi_{AB_{1}}%
}{d+1}\right)
\end{array}
\right] \nonumber\\
&  =\frac{1}{2}\left[  \left(  d+1-\frac{1}{d+1}\right)  I_{AB_{1}}-\left(
\frac{d^{2}+2d}{d+1}\right)  \Phi_{AB_{1}}\right] \nonumber\\
&  =\frac{d\left(  d+2\right)  }{2\left(  d+1\right)  }\left(  I_{AB_{1}}%
-\Phi_{AB_{1}}\right)  ,\\
&  \operatorname{Tr}_{B_{2}}[S_{AB_{1}B_{2}}^{-}]\nonumber\\
&  =\frac{1}{2}\left[
\begin{array}
[c]{c}%
dI_{AB_{1}}-I_{AB_{1}}\\
-\left(  \frac{d^{2}\Phi_{AB_{1}}+I_{AB_{1}}-d\Phi_{AB_{1}}-d\Phi_{AB_{1}}%
}{d-1}\right)
\end{array}
\right] \nonumber\\
&  =\frac{1}{2}\left[  \left(  d-1-\frac{1}{d-1}\right)  I_{AB_{1}}-\left(
\frac{d^{2}-2d}{d-1}\right)  \Phi_{AB_{1}}\right] \nonumber\\
&  =\frac{d\left(  d-2\right)  }{2\left(  d-1\right)  }\left(  I_{AB_{1}}%
-\Phi_{AB_{1}}\right)  ,\\
&  \operatorname{Tr}_{B_{2}}[S_{AB_{1}B_{2}}^{0}]\nonumber\\
&  =\frac{1}{d^{2}-1}\left[
\begin{array}
[c]{c}%
d\left(  d^{2}\Phi_{AB_{1}}+I_{AB_{1}}\right) \\
-\left(  d\Phi_{AB_{1}}+d\Phi_{AB_{1}}\right)
\end{array}
\right] \nonumber\\
&  =\frac{1}{d^{2}-1}\left[  dI_{AB_{1}}+\left(  d^{3}-2d\right)  \Phi
_{AB_{1}}\right] \nonumber\\
&  =\frac{1}{d^{2}-1}\left[  d\left(  I_{AB_{1}}-\Phi_{AB_{1}}\right)
+\left(  d^{3}-d\right)  \Phi_{AB_{1}}\right] \nonumber\\
&  =\frac{d}{d^{2}-1}\left(  I_{AB_{1}}-\Phi_{AB_{1}}\right)  +d\Phi_{AB_{1}},
\end{align}%
\begin{align}
&  \operatorname{Tr}_{B_{2}}[S_{AB_{1}B_{2}}^{1}]\nonumber\\
&  =\frac{1}{d^{2}-1}\left[
\begin{array}
[c]{c}%
d\left(  d\Phi_{AB_{1}}+d\Phi_{AB_{1}}\right) \\
-\left(  d^{2}\Phi_{AB_{1}}+I_{AB_{1}}\right)
\end{array}
\right] \nonumber\\
&  =\frac{1}{d^{2}-1}\left[  d^{2}\Phi_{AB_{1}}-I_{AB_{1}}\right] \nonumber\\
&  =\frac{1}{d^{2}-1}\left[  \left(  d^{2}-1\right)  \Phi_{AB_{1}}-\left(
I_{AB_{1}}-\Phi_{AB_{1}}\right)  \right] \nonumber\\
&  =\Phi_{AB_{1}}-\frac{1}{d^{2}-1}\left(  I_{AB_{1}}-\Phi_{AB_{1}}\right)
,\\
&  \operatorname{Tr}_{B_{2}}[S_{AB_{1}B_{2}}^{2}]\nonumber\\
&  =\frac{1}{\sqrt{d^{2}-1}}\left(  d^{2}\Phi_{AB_{1}}-I_{AB_{1}}\right)
\nonumber\\
&  =\frac{1}{\sqrt{d^{2}-1}}\left(  \left(  d^{2}-1\right)  \Phi_{AB_{1}%
}-\left(  I_{AB_{1}}-\Phi_{AB_{1}}\right)  \right) \nonumber\\
&  =\sqrt{d^{2}-1}\Phi_{AB_{1}}-\frac{I_{AB_{1}}-\Phi_{AB_{1}}}{\sqrt{d^{2}%
-1}},\\
&  \operatorname{Tr}_{B_{2}}[S_{AB_{1}B_{2}}^{3}]\nonumber\\
&  =\frac{i}{\sqrt{d^{2}-1}}\left(  d\Phi_{AB_{1}}-d\Phi_{AB_{1}}\right)
\nonumber\\
&  =0,
\end{align}
We thus conclude from \eqref{eq:1st-part-no-sig}\ and the above that%
\begin{align}
&  \operatorname{Tr}_{B_{2}}[\widetilde{M}_{A\hat{A}\hat{B}_{1}\hat{B}%
_{2}B_{1}B_{2}}]\nonumber\\
&  =\sum_{i\in\left\{  +,-,0,1,2,3\right\}  }M_{\hat{A}\hat{B}_{1}\hat{B}_{2}%
}^{\prime i}\otimes\frac{\operatorname{Tr}_{B_{2}}[S_{AB_{1}B_{2}}^{i}%
]}{\operatorname{Tr}[S_{AB_{1}B_{2}}^{g(i)}]}\\
&  =M_{\hat{A}\hat{B}_{1}\hat{B}_{2}}^{\prime+}\otimes\frac{d\left(
d+2\right)  }{2\left(  d+1\right)  }\frac{\left(  I_{AB_{1}}-\Phi_{AB_{1}%
}\right)  }{\frac{d\left(  d+2\right)  \left(  d-1\right)  }{2}}\nonumber\\
&  +M_{\hat{A}\hat{B}_{1}\hat{B}_{2}}^{\prime-}\otimes\frac{\frac{d\left(
d-2\right)  }{2\left(  d-1\right)  }\left(  I_{AB_{1}}-\Phi_{AB_{1}}\right)
}{\frac{d\left(  d-2\right)  \left(  d+1\right)  }{2}}\nonumber\\
&  +M_{\hat{A}\hat{B}_{1}\hat{B}_{2}}^{\prime0}\otimes\frac{\frac{d}{d^{2}%
-1}\left(  I_{AB_{1}}-\Phi_{AB_{1}}\right)  +d\Phi_{AB_{1}}}{2d}\nonumber\\
&  +M_{\hat{A}\hat{B}_{1}\hat{B}_{2}}^{\prime1}\otimes\frac{\Phi_{AB_{1}%
}-\frac{1}{d^{2}-1}\left(  I_{AB_{1}}-\Phi_{AB_{1}}\right)  }{2d}\nonumber\\
&  +M_{\hat{A}\hat{B}_{1}\hat{B}_{2}}^{\prime2}\otimes\frac{\sqrt{d^{2}-1}%
\Phi_{AB_{1}}-\frac{I_{AB_{1}}-\Phi_{AB_{1}}}{\sqrt{d^{2}-1}}}{2d}%
\end{align}%
\begin{align}
&  =M_{\hat{A}\hat{B}_{1}\hat{B}_{2}}^{+}\otimes\frac{d}{\left(  d+1\right)
\left(  d-1\right)  }\left(  I_{AB_{1}}-\Phi_{AB_{1}}\right) \nonumber\\
&  +M_{\hat{A}\hat{B}_{1}\hat{B}_{2}}^{-}\otimes\frac{d}{\left(  d-1\right)
\left(  d+1\right)  }\left(  I_{AB_{1}}-\Phi_{AB_{1}}\right) \nonumber\\
&  +M_{\hat{A}\hat{B}_{1}\hat{B}_{2}}^{0}\otimes\left(  \frac{d}{2\left(
d^{2}-1\right)  }\left(  I_{AB_{1}}-\Phi_{AB_{1}}\right)  +\frac{d}{2}%
\Phi_{AB_{1}}\right) \nonumber\\
&  +M_{\hat{A}\hat{B}_{1}\hat{B}_{2}}^{1}\otimes\left(  \frac{1}{2}%
\Phi_{AB_{1}}-\frac{1}{2\left(  d^{2}-1\right)  }\left(  I_{AB_{1}}%
-\Phi_{AB_{1}}\right)  \right) \nonumber\\
&  +M_{\hat{A}\hat{B}_{1}\hat{B}_{2}}^{2}\otimes\left(  \frac{\sqrt{d^{2}-1}%
}{2}\Phi_{AB_{1}}-\frac{I_{AB_{1}}-\Phi_{AB_{1}}}{2\sqrt{d^{2}-1}}\right)
\end{align}%
\begin{align}
&  =M_{\hat{A}\hat{B}_{1}\hat{B}_{2}}^{+}\otimes d\left(  \frac{I_{AB_{1}%
}-\Phi_{AB_{1}}}{d^{2}-1}\right) \nonumber\\
&  +M_{\hat{A}\hat{B}_{1}\hat{B}_{2}}^{-}\otimes d\left(  \frac{I_{AB_{1}%
}-\Phi_{AB_{1}}}{d^{2}-1}\right) \nonumber\\
&  +M_{\hat{A}\hat{B}_{1}\hat{B}_{2}}^{0}\otimes\left(  \frac{d}{2}\left(
\frac{I_{AB_{1}}-\Phi_{AB_{1}}}{d^{2}-1}\right)  +\frac{d}{2}\Phi_{AB_{1}%
}\right) \nonumber\\
&  +M_{\hat{A}\hat{B}_{1}\hat{B}_{2}}^{1}\otimes\left(  \frac{1}{2}%
\Phi_{AB_{1}}-\frac{1}{2}\left(  \frac{I_{AB_{1}}-\Phi_{AB_{1}}}{d^{2}%
-1}\right)  \right) \nonumber\\
&  +M_{\hat{A}\hat{B}_{1}\hat{B}_{2}}^{2}\otimes\left(
\begin{array}
[c]{c}%
\frac{\sqrt{d^{2}-1}}{2}\Phi_{AB_{1}}\\
-\frac{\sqrt{d^{2}-1}}{2}\left(  \frac{I_{AB_{1}}-\Phi_{AB_{1}}}{d^{2}%
-1}\right)
\end{array}
\right)
\end{align}%
\begin{equation}
=P_{\hat{A}\hat{B}_{1}\hat{B}_{2}}^{\prime}\otimes\Phi_{AB_{1}}+Q_{\hat{A}%
\hat{B}_{1}\hat{B}_{2}}^{\prime}\otimes\left(  \frac{I_{AB_{1}}-\Phi_{AB_{1}}%
}{d^{2}-1}\right)  , \label{eq:no-sig-reduce-1}%
\end{equation}
where%
\begin{equation}
P_{\hat{A}\hat{B}_{1}\hat{B}_{2}}^{\prime}\coloneqq\frac{1}{2}\left[
\begin{array}
[c]{c}%
dM_{\hat{A}\hat{B}_{1}\hat{B}_{2}}^{0}+M_{\hat{A}\hat{B}_{1}\hat{B}_{2}}^{1}\\
+\sqrt{d^{2}-1}M_{\hat{A}\hat{B}_{1}\hat{B}_{2}}^{2}%
\end{array}
\right]  , \label{eq:Q-op-2-ext}%
\end{equation}%
\begin{multline}
Q_{\hat{A}\hat{B}_{1}\hat{B}_{2}}^{\prime}\coloneqq\label{eq:P-op-2-ext}\\
\frac{1}{2}\left[
\begin{array}
[c]{c}%
2d\left(  M_{\hat{A}\hat{B}_{1}\hat{B}_{2}}^{+}+M_{\hat{A}\hat{B}_{1}\hat
{B}_{2}}^{-}\right)  +dM_{\hat{A}\hat{B}_{1}\hat{B}_{2}}^{0}\\
-M_{\hat{A}\hat{B}_{1}\hat{B}_{2}}^{1}-\sqrt{d^{2}-1}M_{\hat{A}\hat{B}_{1}%
\hat{B}_{2}}^{2}%
\end{array}
\right] \\
=dI_{\hat{A}\hat{B}_{1}\hat{B}_{2}}-P_{\hat{A}\hat{B}_{1}\hat{B}_{2}}^{\prime
},
\end{multline}
where the last equality follows from \eqref{eq:meas-ops-conclusion}. Now
consider that%
\begin{multline}
\frac{1}{d_{\hat{B}}}\operatorname{Tr}_{\hat{B}_{2}B_{2}}[\widetilde{M}%
_{A\hat{A}\hat{B}_{1}\hat{B}_{2}B_{1}B_{2}}]\otimes I_{\hat{B}_{2}}\\
=\sum_{i\in\left\{  +,-,0,1,2,3\right\}  }\frac{1}{d_{\hat{B}}}%
\operatorname{Tr}_{\hat{B}_{2}}[M_{\hat{A}\hat{B}_{1}\hat{B}_{2}}^{\prime
i}]\otimes I_{\hat{B}_{2}}\otimes\frac{\operatorname{Tr}_{B_{2}}%
[S_{AB_{1}B_{2}}^{i}]}{\operatorname{Tr}[S_{AB_{1}B_{2}}^{g(i)}]},
\end{multline}
which reduces to%
\begin{multline}
\frac{1}{d_{\hat{B}}}\operatorname{Tr}_{\hat{B}_{2}}[P_{\hat{A}\hat{B}_{1}%
\hat{B}_{2}}^{\prime}]\otimes I_{\hat{B}_{2}}\otimes\Phi_{AB_{1}%
}\label{eq:no-sig-reduce-2}\\
+\frac{1}{d_{\hat{B}}}\operatorname{Tr}_{\hat{B}_{2}}[Q_{\hat{A}\hat{B}%
_{1}\hat{B}_{2}}^{\prime}]\otimes I_{\hat{B}_{2}}\otimes\left(  \frac
{I_{AB_{1}}-\Phi_{AB_{1}}}{d^{2}-1}\right)  ,
\end{multline}
Then the constraint in \eqref{eq:marginal-Choi-PSD-2-ext} reduces to the
following two constraints:%
\begin{align}
P_{\hat{A}\hat{B}_{1}\hat{B}_{2}}^{\prime}  &  =\frac{1}{d_{\hat{B}}%
}\operatorname{Tr}_{\hat{B}_{2}}[P_{\hat{A}\hat{B}_{1}\hat{B}_{2}}^{\prime
}]\otimes I_{\hat{B}_{2}},\label{eq:P'-non-sig-const-nonred}\\
Q_{\hat{A}\hat{B}_{1}\hat{B}_{2}}^{\prime}  &  =\frac{1}{d_{\hat{B}}%
}\operatorname{Tr}_{\hat{B}_{2}}[Q_{\hat{A}\hat{B}_{1}\hat{B}_{2}}^{\prime
}]\otimes I_{\hat{B}_{2}}, \label{eq:Q'-non-sig-constr-redundant}%
\end{align}
by applying \eqref{eq:no-sig-reduce-1}\ and \eqref{eq:no-sig-reduce-2},\ and
using the orthogonality of $\Phi_{AB_{1}}$ and $\frac{I_{AB_{1}}-\Phi_{AB_{1}%
}}{d^{2}-1}$. The second constraint in \eqref{eq:Q'-non-sig-constr-redundant}
is redundant, following from the first one in
\eqref{eq:P'-non-sig-const-nonred}, because%
\begin{equation}
Q_{\hat{A}\hat{B}_{1}\hat{B}_{2}}^{\prime}=dI_{\hat{A}\hat{B}_{1}\hat{B}_{2}%
}-P_{\hat{A}\hat{B}_{1}\hat{B}_{2}}^{\prime}.
\end{equation}
This justifies the constraint in \eqref{eq:no-sig-cond-tele-simple-sdp}, up to an inconsequential scale factor.

\subsection{Permutation covariance condition}

\label{Sec_perm_cov_tel}

We now consider the condition in \eqref{eq:perm-cov-Choi-2-ext}, as applied to
\eqref{eq:sym-form-for-Choi-op-2-ext}. Consider that%
\begin{multline}
(\mathcal{F}_{\hat{B}_{1}\hat{B}_{2}}\otimes\mathcal{F}_{B_{1}B_{2}%
})(\widetilde{M}_{A\hat{A}\hat{B}_{1}\hat{B}_{2}B_{1}B_{2}})\\
=\sum_{i\in\left\{  +,-,0,1,2,3\right\}  }\mathcal{F}_{\hat{B}_{1}\hat{B}_{2}%
}(M_{\hat{A}\hat{B}_{1}\hat{B}_{2}}^{\prime i})\otimes\frac{\mathcal{F}%
_{B_{1}B_{2}}(S_{AB_{1}B_{2}}^{i})}{\operatorname{Tr}[S_{AB_{1}B_{2}}^{g(i)}%
]}.
\end{multline}
Observing from
\eqref{eq:partial-transpose-perm-ops-1}--\eqref{eq:partial-transpose-perm-ops-4}\ that%
\begin{align}
\mathcal{F}_{B_{1}B_{2}}(I_{AB_{1}B_{2}})  &  =I_{AB_{1}B_{2}},\\
\mathcal{F}_{B_{1}B_{2}}(V_{B_{1}B_{2}})  &  =V_{B_{1}B_{2}},\\
\mathcal{F}_{B_{1}B_{2}}(V_{AB_{1}}^{T_{A}})  &  =V_{AB_{2}}^{T_{A}},\\
\mathcal{F}_{B_{1}B_{2}}(V_{AB_{2}}^{T_{A}})  &  =V_{AB_{1}}^{T_{A}},\\
\mathcal{F}_{B_{1}B_{2}}(V_{AB_{1}B_{2}}^{T_{A}})  &  =V_{B_{2}B_{1}A}^{T_{A}%
},\\
\mathcal{F}_{B_{1}B_{2}}(V_{B_{2}B_{1}A}^{T_{A}})  &  =V_{AB_{1}B_{2}}^{T_{A}%
},
\end{align}
we conclude from \eqref{eq:S-op-+}--\eqref{eq:S-op-3} that%
\begin{align}
\mathcal{F}_{B_{1}B_{2}}(S_{AB_{1}B_{2}}^{+})  &  =S_{AB_{1}B_{2}}^{+},\\
\mathcal{F}_{B_{1}B_{2}}(S_{AB_{1}B_{2}}^{-})  &  =S_{AB_{1}B_{2}}^{-},\\
\mathcal{F}_{B_{1}B_{2}}(S_{AB_{1}B_{2}}^{0})  &  =S_{AB_{1}B_{2}}^{0},\\
\mathcal{F}_{B_{1}B_{2}}(S_{AB_{1}B_{2}}^{1})  &  =S_{AB_{1}B_{2}}^{1},\\
\mathcal{F}_{B_{1}B_{2}}(S_{AB_{1}B_{2}}^{2})  &  =-S_{AB_{1}B_{2}}^{2},\\
\mathcal{F}_{B_{1}B_{2}}(S_{AB_{1}B_{2}}^{3})  &  =-S_{AB_{1}B_{2}}^{3}.
\end{align}
This implies that the condition in \eqref{eq:perm-cov-Choi-2-ext} is
equivalent to%
\begin{multline}
\sum_{i\in\left\{  +,-,0,1,2,3\right\}  }M_{\hat{A}\hat{B}_{1}\hat{B}_{2}%
}^{\prime i}\otimes\frac{S_{AB_{1}B_{2}}^{i}}{\operatorname{Tr}[S_{AB_{1}%
B_{2}}^{g(i)}]}\label{eq:perm-cov-cond-S-ops}\\
=\sum_{i\in\left\{  +,-,0,1,2,3\right\}  }\mathcal{F}_{\hat{B}_{1}\hat{B}_{2}%
}(M_{\hat{A}\hat{B}_{1}\hat{B}_{2}}^{\prime i})\otimes\frac{\left(  -1\right)
^{f(i)}S_{AB_{1}B_{2}}^{i}}{\operatorname{Tr}[S_{AB_{1}B_{2}}^{g(i)}]},
\end{multline}
where%
\begin{equation}
f(i)=\left\{
\begin{array}
[c]{cc}%
0 & \text{if }i\in\left\{  +,-,0,1\right\} \\
1 & \text{if }i\in\left\{  2,3\right\}
\end{array}
\right.  .
\end{equation}
Due to the fact that $S_{AB_{1}B_{2}}^{+}$, $S_{AB_{1}B_{2}}^{-}$, and
$S_{AB_{1}B_{2}}^{0}$ are orthogonal projectors, and $S_{AB_{1}B_{2}}^{1}$,
$S_{AB_{1}B_{2}}^{2}$, and $S_{AB_{1}B_{2}}^{3}$ only act on the subspace onto
which $S_{AB_{1}B_{2}}^{0}$ projects, the condition in
\eqref{eq:perm-cov-cond-S-ops} is equivalent to the following three
conditions:%
\begin{align}
M_{\hat{A}\hat{B}_{1}\hat{B}_{2}}^{+}  &  =\mathcal{F}_{\hat{B}_{1}\hat{B}%
_{2}}(M_{\hat{A}\hat{B}_{1}\hat{B}_{2}}^{+}),\\
M_{\hat{A}\hat{B}_{1}\hat{B}_{2}}^{-}  &  =\mathcal{F}_{\hat{B}_{1}\hat{B}%
_{2}}(M_{\hat{A}\hat{B}_{1}\hat{B}_{2}}^{-}),
\end{align}%
\begin{multline}%
\begin{bmatrix}
M^{0}+M^{3} & M^{1}-iM^{2}\\
M^{1}+iM^{2} & M^{0}-M^{3}%
\end{bmatrix}
\label{eq:swap-conds-paulis}\\
=%
\begin{bmatrix}
\mathcal{F}(M^{0})-\mathcal{F}(M^{3}) & \mathcal{F}(M^{1})+i\mathcal{F}%
(M^{2})\\
\mathcal{F}(M^{1})-i\mathcal{F}(M^{2}) & \mathcal{F}(M^{0})+\mathcal{F}(M^{3})
\end{bmatrix}
,
\end{multline}
where we have employed the abbreviation $\mathcal{F}\equiv\mathcal{F}_{\hat
{B}_{1}\hat{B}_{2}}$ and the isometry in
\eqref{eq:isometric-map-sig-to-S}--\eqref{eq:isometric-map-sig-to-S-2}. By considering that \eqref{eq:swap-conds-paulis} is equivalent to
\begin{align}
M^{0}+M^{3} & = \mathcal{F}(M^{0})-\mathcal{F}(M^{3}), \\
M^{1}-iM^{2} &= \mathcal{F}(M^{1})+i\mathcal{F}%
(M^{2}), \\ 
M^{1}+iM^{2} &= \mathcal{F}(M^{1})-i\mathcal{F}(M^{2}), \\
M^{0}-M^{3} &= \mathcal{F}(M^{0})+\mathcal{F}(M^{3}),
\end{align}
and adding and subtracting these equations, we conclude that 
\eqref{eq:swap-conds-paulis} is actually equivalent to the
following four conditions:%
\begin{align}
M_{\hat{A}\hat{B}_{1}\hat{B}_{2}}^{0}  &  =\mathcal{F}_{\hat{B}_{1}\hat{B}%
_{2}}(M_{\hat{A}\hat{B}_{1}\hat{B}_{2}}^{0}),\\
M_{\hat{A}\hat{B}_{1}\hat{B}_{2}}^{1}  &  =\mathcal{F}_{\hat{B}_{1}\hat{B}%
_{2}}(M_{\hat{A}\hat{B}_{1}\hat{B}_{2}}^{1}),\\
M_{\hat{A}\hat{B}_{1}\hat{B}_{2}}^{2}  &  =-\mathcal{F}_{\hat{B}_{1}\hat
{B}_{2}}(M_{\hat{A}\hat{B}_{1}\hat{B}_{2}}^{2}),\\
M_{\hat{A}\hat{B}_{1}\hat{B}_{2}}^{3}  &  =-\mathcal{F}_{\hat{B}_{1}\hat
{B}_{2}}(M_{\hat{A}\hat{B}_{1}\hat{B}_{2}}^{3}).
\end{align}
This justifies the constraints in \eqref{eq:perm-cov-constr-tele-1}--\eqref{eq:perm-cov-constr-tele-2}.

At this point, let us summarize the results of Appendices~\ref{Complete_positivity_tel} through \ref{Sec_perm_cov_tel}: we have reduced the constraints in
\eqref{eq:CP-map-Choi-PSD-2-ext}--\eqref{eq:perm-cov-Choi-2-ext}\ to the
following ones:%
\[%
\begin{bmatrix}
M^{0}+M^{3} & M^{1}-iM^{2}\\
M^{1}+iM^{2} & M^{0}-M^{3}%
\end{bmatrix}
\geq0,
\]%
\begin{align}
M_{\hat{A}\hat{B}_{1}\hat{B}_{2}}^{+}  &  \geq0,\\
M_{\hat{A}\hat{B}_{1}\hat{B}_{2}}^{-}  &  \geq0,\\
I_{\hat{A}\hat{B}_{1}\hat{B}_{2}}  &  =M_{\hat{A}\hat{B}_{1}\hat{B}_{2}}%
^{+}+M_{\hat{A}\hat{B}_{1}\hat{B}_{2}}^{-}+M_{\hat{A}\hat{B}_{1}\hat{B}_{2}%
}^{0}\\
P_{\hat{A}\hat{B}_{1}\hat{B}_{2}}^{\prime}  &  =\operatorname{Tr}_{\hat{B}%
_{2}}[P_{\hat{A}\hat{B}_{1}\hat{B}_{2}}^{\prime}]\otimes\pi_{\hat{B}_{2}},\\
M_{\hat{A}\hat{B}_{1}\hat{B}_{2}}^{i}  &  =\mathcal{F}_{\hat{B}_{1}\hat{B}%
_{2}}(M_{\hat{A}\hat{B}_{1}\hat{B}_{2}}^{i})\quad\forall i\in\left\{
+,-,0,1\right\}  ,\\
M_{\hat{A}\hat{B}_{1}\hat{B}_{2}}^{j}  &  =-\mathcal{F}_{\hat{B}_{1}\hat
{B}_{2}}(M_{\hat{A}\hat{B}_{1}\hat{B}_{2}}^{j})\quad\forall j\in\left\{
2,3\right\}  ,
\end{align}
where $P_{\hat{A}\hat{B}_{1}\hat{B}_{2}}^{\prime}$ is defined in \eqref{eq:Q-op-2-ext}.

\subsection{PPT\ constraints}\label{Sec_PPT_constraints}

We now consider the PPT constraints. Suppose that $\mathcal{K}_{A\hat{A}%
\hat{B}\rightarrow B}$ is an arbitrary two-PPT-extendible channel, meaning
that there exists an extension channel $\mathcal{M}_{A\hat{A}\hat{B}_{1}%
\hat{B}_{2}\rightarrow B_{1}B_{2}}$ satisfying the constraints considered
in Appendices~\ref{Complete_positivity_tel} through \ref{Sec_perm_cov_tel}, as well as%
\begin{align}
T_{B_{2}}\circ\mathcal{M}_{A\hat{A}\hat{B}_{1}\hat{B}_{2}\rightarrow
B_{1}B_{2}}\circ T_{\hat{B}_{2}}  &  \in\text{CP},\label{eq:app-ppt-cond-B2}\\
\mathcal{M}_{A\hat{A}\hat{B}_{1}\hat{B}_{2}\rightarrow B_{1}B_{2}}\circ
T_{A\hat{A}}  &  \in\text{CP}.
\end{align}
Then it follows that $\mathcal{K}_{A\hat{A}\hat{B}\rightarrow B}^{\mathcal{U}%
}$ and its extension $\mathcal{M}_{A\hat{A}\hat{B}_{1}\hat{B}_{2}\rightarrow
B_{1}B_{2}}^{\mathcal{U}}$, as defined in \eqref{eq:K_under_unitary} and \eqref{eq:M_under_unitary}, respectively,
satisfy the following conditions%
\begin{align}
T_{B_{2}}\circ\mathcal{M}_{A\hat{A}\hat{B}_{1}\hat{B}_{2}\rightarrow
B_{1}B_{2}}^{\mathcal{U}}\circ T_{\hat{B}_{2}}  &  \in\text{CP},\\
\mathcal{M}_{A\hat{A}\hat{B}_{1}\hat{B}_{2}\rightarrow B_{1}B_{2}%
}^{\mathcal{U}}\circ T_{A\hat{A}}  &  \in\text{CP},
\end{align}
because%
\begin{align}
&  T_{B_{2}}\circ\mathcal{M}_{A\hat{A}\hat{B}_{1}\hat{B}_{2}\rightarrow
B_{1}B_{2}}^{\mathcal{U}}\circ T_{\hat{B}_{2}}\nonumber\\
&  =T_{B_{2}}\circ\left(  \mathcal{U}_{B_{1}}^{\dag}\otimes\mathcal{U}_{B_{2}%
}^{\dag}\right)  \circ\mathcal{M}_{A\hat{A}\hat{B}_{1}\hat{B}_{2}\rightarrow
B_{1}B_{2}}\circ\mathcal{U}_{A}\circ T_{\hat{B}_{2}}\\
&  =T_{B_{2}}\circ\left(  \mathcal{U}_{B_{1}}^{\dag}\otimes\mathcal{U}_{B_{2}%
}^{\dag}\right)  \circ T_{B_{2}}\circ\nonumber\\
&  \qquad T_{B_{2}}\circ\mathcal{M}_{A\hat{A}\hat{B}_{1}\hat{B}_{2}\rightarrow
B_{1}B_{2}}\circ\mathcal{U}_{A}\circ T_{\hat{B}_{2}}\\
&  =\left(  \mathcal{U}_{B_{1}}^{\dag}\otimes\lbrack T_{B_{2}}\circ
\mathcal{U}_{B_{2}}^{\dag}\circ T_{B_{2}}]\right)  \circ\nonumber\\
&  \qquad\lbrack T_{B_{2}}\circ\mathcal{M}_{A\hat{A}\hat{B}_{1}\hat{B}%
_{2}\rightarrow B_{1}B_{2}}\circ T_{\hat{B}_{2}}]\circ\mathcal{U}_{A}%
\end{align}
and the maps $T_{B_{2}}\circ\mathcal{U}_{B_{2}}^{\dag}\circ T_{B_{2}}$ and
$T_{B_{2}}\circ\mathcal{M}_{A\hat{A}\hat{B}_{1}\hat{B}_{2}\rightarrow
B_{1}B_{2}}\circ T_{\hat{B}_{2}}$ are completely positive. Additionally,%
\begin{align}
&  \mathcal{M}_{A\hat{A}\hat{B}_{1}\hat{B}_{2}\rightarrow B_{1}B_{2}%
}^{\mathcal{U}}\circ T_{A\hat{A}}\nonumber\\
&  =\left(  \mathcal{U}_{B_{1}}^{\dag}\otimes\mathcal{U}_{B_{2}}^{\dag
}\right)  \circ\mathcal{M}_{A\hat{A}\hat{B}_{1}\hat{B}_{2}\rightarrow
B_{1}B_{2}}\circ\mathcal{U}_{A}\circ T_{A\hat{A}}\\
&  =\left(  \mathcal{U}_{B_{1}}^{\dag}\otimes\mathcal{U}_{B_{2}}^{\dag
}\right)  \circ\lbrack\mathcal{M}_{A\hat{A}\hat{B}_{1}\hat{B}_{2}\rightarrow
B_{1}B_{2}}\circ T_{A\hat{A}}]\nonumber\\
&  \qquad\circ\lbrack T_{A}\circ\mathcal{U}_{A}\circ T_{A}],
\end{align}
and the maps $\mathcal{M}_{A\hat{A}\hat{B}_{1}\hat{B}_{2}\rightarrow
B_{1}B_{2}}\circ T_{A\hat{A}}$ and $T_{A}\circ\mathcal{U}_{A}\circ T_{A}$ are
completely positive. Thus, $\mathcal{K}_{A\hat{A}\hat{B}\rightarrow
B}^{\mathcal{U}}$ is a two-PPT-extendible channel with extension
$\mathcal{M}_{A\hat{A}\hat{B}_{1}\hat{B}_{2}\rightarrow B_{1}B_{2}%
}^{\mathcal{U}}$. By linearity, the same is true for $\widetilde{\mathcal{K}}_{A\hat{A}\hat{B}\rightarrow
B}$ and $\widetilde{\mathcal{M}}_{A\hat{A}\hat{B}_{1}\hat{B}_{2}\rightarrow B_{1}B_{2}%
}$, as defined in \eqref{eq:twirled-2-ext-1} and \eqref{eq:twirled-2-ext-2}, respectively. Applying the same arguments used to justify
\eqref{eq:convexity-dd-2-ext}, it suffices to perform the diamond-distance
minimization with respect to twirled two-PPT-extendible channels having the
form in \eqref{eq:twirled-2-ext-1}, with extension channel of the form in~\eqref{eq:twirled-2-ext-2}.

Since we have already identified what the twirling symmetry imposes for trace
preservation, complete positivity, non-signaling, and permutation covariance,
it remains to identify what the twirling symmetry imposes for the
PPT\ conditions. Consider that%
\begin{multline}
T_{A\hat{A}}(\widetilde{M}_{A\hat{A}\hat{B}_{1}\hat{B}_{2}B_{1}B_{2}})\\
=\sum_{i\in\left\{  +,-,0,1,2,3\right\}  }T_{\hat{A}}(M_{\hat{A}\hat{B}%
_{1}\hat{B}_{2}}^{\prime i})\otimes\frac{T_{A}(S_{AB_{1}B_{2}}^{i}%
)}{\operatorname{Tr}[S_{AB_{1}B_{2}}^{g(i)}]}.
\end{multline}
Observing from
\eqref{eq:partial-transpose-perm-ops-1}--\eqref{eq:partial-transpose-perm-ops-4}\ that%
\begin{align}
T_{A}(I_{AB_{1}B_{2}})  &  =I_{AB_{1}B_{2}},\\
T_{A}(V_{AB_{1}}^{T_{A}})  &  =V_{AB_{1}},\\
T_{A}(V_{AB_{2}}^{T_{A}})  &  =V_{AB_{2}},\\
T_{A}(V_{B_{1}B_{2}})  &  =V_{B_{1}B_{2}},\\
T_{A}(V_{AB_{1}B_{2}}^{T_{A}})  &  =V_{AB_{1}B_{2}},\\
T_{A}(V_{B_{2}B_{1}A}^{T_{A}})  &  =V_{B_{2}B_{1}A},
\end{align}
we conclude from \eqref{eq:S-op-+}--\eqref{eq:S-op-3} that%
\begin{align}
T_{A}(S_{AB_{1}B_{2}}^{+})  &  =\frac{1}{2}\left[
\begin{array}
[c]{c}%
I_{AB_{1}B_{2}}+V_{B_{1}B_{2}}\\
-\left(  \frac{V_{AB_{1}}+V_{AB_{2}}+V_{AB_{1}B_{2}}+V_{B_{2}B_{1}A}}%
{d+1}\right)
\end{array}
\right]  ,\\
T_{A}(S_{AB_{1}B_{2}}^{-})  &  =\frac{1}{2}\left[
\begin{array}
[c]{c}%
I_{AB_{1}B_{2}}-V_{B_{1}B_{2}}\\
-\left(  \frac{V_{AB_{1}}+V_{AB_{2}}-V_{AB_{1}B_{2}}-V_{B_{2}B_{1}A}}%
{d-1}\right)
\end{array}
\right]  ,\\
T_{A}(S_{AB_{1}B_{2}}^{0})  &  =\frac{1}{d^{2}-1}\left[
\begin{array}
[c]{c}%
d\left(  V_{AB_{1}}+V_{AB_{2}}\right) \\
-\left(  V_{AB_{1}B_{2}}+V_{B_{2}B_{1}A}\right)
\end{array}
\right]  ,\\
T_{A}(S_{AB_{1}B_{2}}^{1})  &  =\frac{1}{d^{2}-1}\left[
\begin{array}
[c]{c}%
d\left(  V_{AB_{1}B_{2}}+V_{B_{2}B_{1}A}\right) \\
-\left(  V_{AB_{1}}+V_{AB_{2}}\right)
\end{array}
\right]  ,\\
T_{A}(S_{AB_{1}B_{2}}^{2})  &  =\frac{1}{\sqrt{d^{2}-1}}\left(  V_{AB_{1}%
}-V_{AB_{2}}\right)  ,\\
T_{A}(S_{AB_{1}B_{2}}^{3})  &  =\frac{i}{\sqrt{d^{2}-1}}\left(  V_{AB_{1}%
B_{2}}-V_{B_{2}B_{1}A}\right)  .
\end{align}
It is helpful for us to make use of the following operators, defined in
\cite[Section~II-A]{EW01}:%
\begin{align}
R_{AB_{1}B_{2}}^{+}  &  \coloneqq\frac{1}{6}\left[
\begin{array}
[c]{c}%
I_{AB_{1}B_{2}}+V_{AB_{1}}+V_{B_{1}B_{2}}\\
+V_{AB_{2}}+V_{AB_{1}B_{2}}+V_{B_{2}B_{1}A}%
\end{array}
\right]  ,\\
R_{AB_{1}B_{2}}^{-}  &  \coloneqq\frac{1}{6}\left[
\begin{array}
[c]{c}%
I_{AB_{1}B_{2}}-V_{AB_{1}}-V_{B_{1}B_{2}}\\
-V_{AB_{2}}+V_{AB_{1}B_{2}}+V_{B_{2}B_{1}A}%
\end{array}
\right]  ,\label{eq:R_minus_def}\\
R_{AB_{1}B_{2}}^{0}  &  \coloneqq\frac{1}{3}\left[  2I_{AB_{1}B_{2}}%
-V_{AB_{1}B_{2}}-V_{B_{2}B_{1}A}\right]  ,\\
R_{AB_{1}B_{2}}^{1}  &  \coloneqq\frac{1}{3}\left[  2V_{B_{1}B_{2}}-V_{AB_{2}%
}-V_{AB_{1}}\right]  ,\\
R_{AB_{1}B_{2}}^{2}  &  \coloneqq\frac{1}{\sqrt{3}}\left[  V_{AB_{1}%
}-V_{AB_{2}}\right]  ,\\
R_{AB_{1}B_{2}}^{3}  &  \coloneqq\frac{i}{\sqrt{3}}\left[  V_{AB_{1}B_{2}%
}-V_{B_{2}B_{1}A}\right]  .
\end{align}
These operators have the following properties:%
\begin{align}
\operatorname{Tr}[R_{AB_{1}B_{2}}^{+}]  &  =\frac{d\left(  d+1\right)  \left(
d+2\right)  }{6},\\
\operatorname{Tr}[R_{AB_{1}B_{2}}^{-}]  &  =\frac{d\left(  d-1\right)  \left(
d-2\right)  }{6},\label{eq:R_minus_trace}\\
\operatorname{Tr}[R_{AB_{1}B_{2}}^{0}]  &  =\frac{2d\left(  d^{2}-1\right)
}{3},\\
\operatorname{Tr}[R_{AB_{1}B_{2}}^{1}]  &  =\operatorname{Tr}[R_{AB_{1}B_{2}%
}^{2}]=\operatorname{Tr}[R_{AB_{1}B_{2}}^{3}]=0.
\end{align}
Furthermore, $R_{AB_{1}B_{2}}^{+}$, $R_{AB_{1}B_{2}}^{-}$, and $R_{AB_{1}%
B_{2}}^{0}$ are orthogonal projectors, satisfying%
\begin{equation}
R_{AB_{1}B_{2}}^{+}+R_{AB_{1}B_{2}}^{-}+R_{AB_{1}B_{2}}^{0}=I_{AB_{1}B_{2}},
\end{equation}
while $R_{AB_{1}B_{2}}^{1}$, $R_{AB_{1}B_{2}}^{2}$, and $R_{AB_{1}B_{2}}^{3}$
are Pauli-like operators acting on the subspace onto which $R_{AB_{1}B_{2}%
}^{0}$ projects. All $R_{AB_{1}B_{2}}^{i}$ operators are Hermitian and satisfy
the following for their Hilbert--Schmidt inner product:%
\begin{align}
\left\langle R_{AB_{1}B_{2}}^{i},R_{AB_{1}B_{2}}^{j}\right\rangle  &
=\operatorname{Tr}[(R_{AB_{1}B_{2}}^{i})^{\dag}R_{AB_{1}B_{2}}^{j}]\\
&  =\operatorname{Tr}[R_{AB_{1}B_{2}}^{i}R_{AB_{1}B_{2}}^{j}]\\
&  =\operatorname{Tr}[R_{AB_{1}B_{2}}^{g(i)}]\delta_{i,j}.
\end{align}
Making the identifications%
\begin{align}
V_{AB_{1}B_{2}}^{+}  &  \equiv I_{AB_{1}B_{2}}, \label{eq:V-ordering-1}\\
V_{AB_{1}B_{2}}^{-}  &  \equiv V_{AB_{1}},\\
V_{AB_{1}B_{2}}^{0}  &  \equiv V_{AB_{2}},\\
V_{AB_{1}B_{2}}^{1}  &  \equiv V_{B_{1}B_{2}},\\
V_{AB_{1}B_{2}}^{2}  &  \equiv V_{AB_{1}B_{2}},\\
V_{AB_{1}B_{2}}^{3}  &  \equiv V_{B_{2}B_{1}A},
\label{eq:V-ordering-6}
\end{align}
we find, for $i\in\left\{  +,-,0,1,2,3\right\}  $, that%
\begin{align}
S_{AB_{1}B_{2}}^{i}  &  =\sum_{j\in\left\{  +,-,0,1,2,3\right\}  }%
[Y]_{i,j}T_{A}(V_{AB_{1}B_{2}}^{j}),\\
R_{AB_{1}B_{2}}^{i}  &  =\sum_{j\in\left\{  +,-,0,1,2,3\right\}  }%
[Z]_{i,j}V_{AB_{1}B_{2}}^{j},
\end{align}
where the matrix $Y$\ with elements $[Y]_{i,j}$ is given by%
\begin{equation}
Y=%
\begin{bmatrix}
\frac{1}{2} & \frac{-1}{2\left(  d+1\right)  } & \frac{-1}{2\left(
d+1\right)  } & \frac{1}{2} & \frac{-1}{2\left(  d+1\right)  } & \frac
{-1}{2\left(  d+1\right)  }\\
\frac{1}{2} & \frac{-1}{2\left(  d-1\right)  } & \frac{-1}{2\left(
d-1\right)  } & -\frac{1}{2} & \frac{1}{2\left(  d-1\right)  } & \frac
{1}{2\left(  d-1\right)  }\\
0 & \frac{d}{d^{2}-1} & \frac{d}{d^{2}-1} & 0 & \frac{-1}{d^{2}-1} & \frac
{-1}{d^{2}-1}\\
0 & \frac{-1}{d^{2}-1} & \frac{-1}{d^{2}-1} & 0 & \frac{d}{d^{2}-1} & \frac
{d}{d^{2}-1}\\
0 & \frac{1}{\sqrt{d^{2}-1}} & \frac{-1}{\sqrt{d^{2}-1}} & 0 & 0 & 0\\
0 & 0 & 0 & 0 & \frac{i}{\sqrt{d^{2}-1}} & \frac{-i}{\sqrt{d^{2}-1}}%
\end{bmatrix}
,
\end{equation}
and the matrix $Z$ with elements $[Z]_{i,j}$ is given by%
\begin{equation}
Z=%
\begin{bmatrix}
\frac{1}{6} & \frac{1}{6} & \frac{1}{6} & \frac{1}{6} & \frac{1}{6} & \frac
{1}{6}\\
\frac{1}{6} & \frac{-1}{6} & \frac{-1}{6} & \frac{-1}{6} & \frac{1}{6} &
\frac{1}{6}\\
\frac{2}{3} & 0 & 0 & 0 & \frac{-1}{3} & \frac{-1}{3}\\
0 & \frac{-1}{3} & \frac{-1}{3} & \frac{2}{3} & 0 & 0\\
0 & \frac{1}{\sqrt{3}} & \frac{-1}{\sqrt{3}} & 0 & 0 & 0\\
0 & 0 & 0 & 0 & \frac{i}{\sqrt{3}} & \frac{-i}{\sqrt{3}}%
\end{bmatrix}
,
\end{equation}
with inverse%
\begin{equation}
Z^{-1}=%
\begin{bmatrix}
1 & 1 & 1 & 0 & 0 & 0\\
1 & -1 & 0 & -\frac{1}{2} & \frac{\sqrt{3}}{2} & 0\\
1 & -1 & 0 & -\frac{1}{2} & -\frac{\sqrt{3}}{2} & 0\\
1 & -1 & 0 & 1 & 0 & 0\\
1 & 1 & -\frac{1}{2} & 0 & 0 & -\frac{i\sqrt{3}}{2}\\
1 & 1 & -\frac{1}{2} & 0 & 0 & \frac{i\sqrt{3}}{2}%
\end{bmatrix}
.
\end{equation}
Now using the fact that%
\begin{equation}
YZ^{-1}=%
\begin{bmatrix}
\frac{d-1}{d+1} & 0 & \frac{d+2}{2\left(  d+1\right)  } & \frac{d+2}{2\left(
d+1\right)  } & 0 & 0\\
0 & \frac{d+1}{d-1} & \frac{d-2}{2\left(  d-1\right)  } & -\frac{d-2}{2\left(
d-1\right)  } & 0 & 0\\
\frac{2}{d+1} & -\frac{2}{d-1} & \frac{1}{d^{2}-1} & -\frac{d}{d^{2}-1} & 0 &
0\\
\frac{2}{d+1} & \frac{2}{d-1} & -\frac{d}{d^{2}-1} & \frac{1}{d^{2}-1} & 0 &
0\\
0 & 0 & 0 & 0 & \frac{\sqrt{3}}{\sqrt{d^{2}-1}} & 0\\
0 & 0 & 0 & 0 & 0 & \frac{\sqrt{3}}{\sqrt{d^{2}-1}}%
\end{bmatrix}
,
\end{equation}
we find, for $i\in\left\{  +,-,0,1,2,3\right\}  $ that%
\begin{align}
&  T_{A}(S_{AB_{1}B_{2}}^{i})\nonumber\\
&  =\sum_{j\in\left\{  +,-,0,1,2,3\right\}  }[Y]_{i,j}V_{AB_{1}B_{2}}^{j}\\
&  =\sum_{j,k\in\left\{  +,-,0,1,2,3\right\}  }[Y]_{i,j}[Z^{-1}]_{j,k}%
R_{AB_{1}B_{2}}^{k}\\
&  =\sum_{k\in\left\{  +,-,0,1,2,3\right\}  }[YZ^{-1}]_{i,k}R_{AB_{1}B_{2}%
}^{k},
\end{align}
and we conclude that%
\begin{align}
&  T_{A\hat{A}}(\widetilde{M}_{A\hat{A}\hat{B}_{1}\hat{B}_{2}B_{1}B_{2}%
})\nonumber\\
&  =\sum_{i\in\left\{  +,-,0,1,2,3\right\}  }\frac{T_{\hat{A}}(M_{\hat{A}%
\hat{B}_{1}\hat{B}_{2}}^{\prime i})}{\operatorname{Tr}[S_{AB_{1}B_{2}}%
^{g(i)}]}\otimes T_{A}(S_{AB_{1}B_{2}}^{i})\\
&  =\sum_{i,k\in\left\{  +,-,0,1,2,3\right\}  }[YZ^{-1}]_{i,k}\frac{T_{\hat
{A}}(M_{\hat{A}\hat{B}_{1}\hat{B}_{2}}^{\prime i})}{\operatorname{Tr}%
[S_{AB_{1}B_{2}}^{g(i)}]}\otimes R_{AB_{1}B_{2}}^{k}\\
&  =\sum_{k\in\left\{  +,-,0,1,2,3\right\}  }G_{\hat{A}\hat{B}_{1}\hat{B}_{2}%
}^{k}\otimes R_{AB_{1}B_{2}}^{k}.\label{eq:G_decomposition}
\end{align}
where%
\begin{equation}
G_{\hat{A}\hat{B}_{1}\hat{B}_{2}}^{k}\coloneqq\sum_{i\in\left\{
+,-,0,1,2,3\right\}  }[YZ^{-1}]_{i,k}\frac{T_{\hat{A}}(M_{\hat{A}\hat{B}%
_{1}\hat{B}_{2}}^{\prime i})}{\operatorname{Tr}[S_{AB_{1}B_{2}}^{g(i)}]}.
\end{equation}
In detail, we find that%
\begin{align}
&  G_{\hat{A}\hat{B}_{1}\hat{B}_{2}}^{+}\nonumber\\
&  =\frac{d-1}{d+1}\frac{T_{\hat{A}}(M_{\hat{A}\hat{B}_{1}\hat{B}_{2}}%
^{\prime+})}{\frac{1}{2}d\left(  d+2\right)  \left(  d-1\right)  }\nonumber\\
&  \quad+\frac{2}{d+1}\frac{T_{\hat{A}}(M_{\hat{A}\hat{B}_{1}\hat{B}_{2}%
}^{\prime0})}{2d}+\frac{2}{d+1}\frac{T_{\hat{A}}(M_{\hat{A}\hat{B}_{1}\hat
{B}_{2}}^{\prime1})}{2d}\\
&  =\frac{2T_{\hat{A}}(M_{\hat{A}\hat{B}_{1}\hat{B}_{2}}^{+})}{\left(
d+1\right)  \left(  d+2\right)  }+\frac{T_{\hat{A}}(M_{\hat{A}\hat{B}_{1}%
\hat{B}_{2}}^{0}+M_{\hat{A}\hat{B}_{1}\hat{B}_{2}}^{1})}{d+1}\\
&  =\frac{1}{d+1}T_{\hat{A}}\!\left(  \frac{2M_{\hat{A}\hat{B}_{1}\hat{B}_{2}%
}^{+}}{d+2}+M_{\hat{A}\hat{B}_{1}\hat{B}_{2}}^{0}+M_{\hat{A}\hat{B}_{1}\hat
{B}_{2}}^{1}\right)  ,
\end{align}%
\begin{align}
&  G_{\hat{A}\hat{B}_{1}\hat{B}_{2}}^{-}\nonumber\\
&  =\frac{d+1}{d-1}\frac{T_{\hat{A}}(M_{\hat{A}\hat{B}_{1}\hat{B}_{2}}%
^{\prime-})}{\frac{1}{2}d\left(  d-2\right)  \left(  d+1\right)  }\nonumber\\
&  \quad-\frac{2}{d-1}\frac{T_{\hat{A}}(M_{\hat{A}\hat{B}_{1}\hat{B}_{2}%
}^{\prime0})}{2d}+\frac{2}{d-1}\frac{T_{\hat{A}}(M_{\hat{A}\hat{B}_{1}\hat
{B}_{2}}^{\prime1})}{2d}\\
&  =\frac{2T_{\hat{A}}(M_{\hat{A}\hat{B}_{1}\hat{B}_{2}}^{-})}{\left(
d-2\right)  \left(  d-1\right)  }+\frac{T_{\hat{A}}(M_{\hat{A}\hat{B}_{1}%
\hat{B}_{2}}^{1}-M_{\hat{A}\hat{B}_{1}\hat{B}_{2}}^{0})}{d-1}\\
&  =\frac{1}{d-1}T_{\hat{A}}\!\left(  \frac{2M_{\hat{A}\hat{B}_{1}\hat{B}_{2}%
}^{-}}{d-2}+M_{\hat{A}\hat{B}_{1}\hat{B}_{2}}^{1}-M_{\hat{A}\hat{B}_{1}\hat
{B}_{2}}^{0}\right)  ,
\end{align}%
\begin{align}
&  G_{\hat{A}\hat{B}_{1}\hat{B}_{2}}^{0}\nonumber\\
&  =\frac{d+2}{2\left(  d+1\right)  }\frac{T_{\hat{A}}(M_{\hat{A}\hat{B}%
_{1}\hat{B}_{2}}^{\prime+})}{\frac{1}{2}d\left(  d+2\right)  \left(
d-1\right)  }\nonumber\\
&  \quad+\frac{d-2}{2\left(  d-1\right)  }\frac{T_{\hat{A}}(M_{\hat{A}\hat
{B}_{1}\hat{B}_{2}}^{\prime-})}{\frac{1}{2}d\left(  d-2\right)  \left(
d+1\right)  }\nonumber\\
&  \quad+\frac{T_{\hat{A}}(M_{\hat{A}\hat{B}_{1}\hat{B}_{2}}^{\prime
0})-dT_{\hat{A}}(M_{\hat{A}\hat{B}_{1}\hat{B}_{2}}^{\prime1})}{2d\left(
d^{2}-1\right)  }\nonumber\\
&  =\frac{T_{\hat{A}}(M_{\hat{A}\hat{B}_{1}\hat{B}_{2}}^{+}+M_{\hat{A}\hat
{B}_{1}\hat{B}_{2}}^{-})}{d^{2}-1}\\
&  \quad+\frac{T_{\hat{A}}(M_{\hat{A}\hat{B}_{1}\hat{B}_{2}}^{0}-dM_{\hat
{A}\hat{B}_{1}\hat{B}_{2}}^{1})}{2\left(  d^{2}-1\right)  }\nonumber\\
&  =\frac{1}{d^{2}-1}T_{\hat{A}}\!\left(  M^{+}+M^{-}+\frac{M^{0}-dM^{1}}%
{2}\right)  ,
\end{align}%
\begin{align}
&  G_{\hat{A}\hat{B}_{1}\hat{B}_{2}}^{1}\nonumber\\
&  =\frac{T_{\hat{A}}(M_{\hat{A}\hat{B}_{1}\hat{B}_{2}}^{+}-M_{\hat{A}\hat
{B}_{1}\hat{B}_{2}}^{-})}{d^{2}-1}\nonumber\\
&  \quad+\frac{T_{\hat{A}}(M_{\hat{A}\hat{B}_{1}\hat{B}_{2}}^{1}-dM_{\hat
{A}\hat{B}_{1}\hat{B}_{2}}^{0})}{2\left(  d^{2}-1\right)  }\\
&  =\frac{1}{d^{2}-1}T_{\hat{A}}\!\left(  M^{+}-M^{-}+\frac{M^{1}-dM^{0}}%
{2}\right)  ,
\end{align}%
\begin{align}
G_{\hat{A}\hat{B}_{1}\hat{B}_{2}}^{2}  &  =\frac{\sqrt{3}T_{\hat{A}}%
(M_{\hat{A}\hat{B}_{1}\hat{B}_{2}}^{2})}{2\sqrt{d^{2}-1}},\\
G_{\hat{A}\hat{B}_{1}\hat{B}_{2}}^{3}  &  =\frac{\sqrt{3}T_{\hat{A}}%
(M_{\hat{A}\hat{B}_{1}\hat{B}_{2}}^{3})}{2\sqrt{d^{2}-1}}.
\end{align}
Then, using the properties of the $R$ operators, we conclude that $T_{A\hat
{A}}(\widetilde{M}_{A\hat{A}\hat{B}_{1}\hat{B}_{2}B_{1}B_{2}})\geq0$ if and
only if%
\begin{align}
G_{\hat{A}\hat{B}_{1}\hat{B}_{2}}^{+}  &  \geq0, \label{eq:G+-PSD}\\
G_{\hat{A}\hat{B}_{1}\hat{B}_{2}}^{-}  &  \geq0,\label{eq:G--PSD}\\
\sum_{k\in\left\{  0,1,2,3\right\}  }G_{\hat{A}\hat{B}_{1}\hat{B}_{2}}%
^{k}\otimes R_{AB_{1}B_{2}}^{k}  &  \geq0.
\end{align}
The final condition is equivalent to%
\begin{equation}%
\begin{bmatrix}
G^{0}+G^{3} & G^{1}-iG^{2}\\
G^{1}+iG^{2} & G^{0}-G^{3}%
\end{bmatrix}
\geq0,
\label{eq:G-Paulis-PSD}
\end{equation}
by reasoning similar to that given for \eqref{eq:M0-3-conds-CP}.
Note that we can scale the condition in \eqref{eq:G+-PSD} by $d+1$ without changing it, the
condition in \eqref{eq:G--PSD} by $d-1$ without changing it, and the condition in \eqref{eq:G-Paulis-PSD} by $d^{2}-1$ without
changing it. This justifies the conditions in \eqref{eq:PPTA-cond-1}--\eqref{eq:PPTA-cond-3}.

We now consider the other PPT condition in \eqref{eq:app-ppt-cond-B2}. Note
that this is equivalent to the following one, which turns out to be simpler
for us to use:%
\begin{equation}
T_{B_{1}}\circ\mathcal{M}_{A\hat{A}\hat{B}_{1}\hat{B}_{2}\rightarrow
B_{1}B_{2}}\circ T_{A\hat{A}\hat{B}_{1}}\in\text{CP}.
\end{equation}
Consider that%
\begin{multline}
T_{A\hat{A}\hat{B}_{1}B_{1}}(\widetilde{M}_{A\hat{A}\hat{B}_{1}\hat{B}%
_{2}B_{1}B_{2}})\\
=\sum_{i\in\left\{  +,-,0,1,2,3\right\}  }T_{\hat{A}\hat{B}_{1}}(M_{\hat
{A}\hat{B}_{1}\hat{B}_{2}}^{\prime i})\otimes\frac{T_{AB_{1}}(S_{AB_{1}B_{2}%
}^{i})}{\operatorname{Tr}[S_{AB_{1}B_{2}}^{g(i)}]}.
\end{multline}
Observing from
\eqref{eq:partial-transpose-perm-ops-1}--\eqref{eq:partial-transpose-perm-ops-4}\ that%
\begin{align}
T_{AB_{1}}(I_{AB_{1}B_{2}})  &  =I_{AB_{1}B_{2}},\\
T_{AB_{1}}(V_{AB_{1}}^{T_{A}})  &  =V_{AB_{1}}^{T_{B_{1}}},\\
T_{AB_{1}}(V_{AB_{2}}^{T_{A}})  &  =V_{AB_{2}},\\
T_{AB_{1}}(V_{B_{1}B_{2}})  &  =V_{B_{1}B_{2}}^{T_{B_{1}}},\\
T_{AB_{1}}(V_{AB_{1}B_{2}}^{T_{A}})  &  =V_{AB_{1}B_{2}}^{T_{B_{1}}},\\
T_{AB_{1}}(V_{B_{2}B_{1}A}^{T_{A}})  &  =V_{B_{2}B_{1}A}^{T_{B_{1}}},
\end{align}
we conclude from \eqref{eq:S-op-+}--\eqref{eq:S-op-3} that%
\begin{align}
T_{AB_{1}}(S_{AB_{1}B_{2}}^{+})  &  =\frac{1}{2}\left[
\begin{array}
[c]{c}%
I_{AB_{1}B_{2}}+V_{B_{1}B_{2}}^{T_{B_{1}}}\\
-\left(  \frac{V_{AB_{1}}^{T_{B_{1}}}+V_{AB_{2}}^{T_{B_{1}}}+V_{AB_{1}B_{2}%
}^{T_{B_{1}}}+V_{B_{2}B_{1}A}^{T_{B_{1}}}}{d+1}\right)
\end{array}
\right]  ,\\
T_{AB_{1}}(S_{AB_{1}B_{2}}^{-})  &  =\frac{1}{2}\left[
\begin{array}
[c]{c}%
I_{AB_{1}B_{2}}-V_{B_{1}B_{2}}^{T_{B_{1}}}\\
-\left(  \frac{V_{AB_{1}}^{T_{B_{1}}}+V_{AB_{2}}-V_{AB_{1}B_{2}}^{T_{B_{1}}%
}-V_{B_{2}B_{1}A}^{T_{B_{1}}}}{d-1}\right)
\end{array}
\right]  ,\\
T_{AB_{1}}(S_{AB_{1}B_{2}}^{0})  &  =\frac{1}{d^{2}-1}\left[
\begin{array}
[c]{c}%
d\left(  V_{AB_{1}}^{T_{B_{1}}}+V_{AB_{2}}\right) \\
-\left(  V_{AB_{1}B_{2}}^{T_{B_{1}}}+V_{B_{2}B_{1}A}^{T_{B_{1}}}\right)
\end{array}
\right]  ,\\
T_{AB_{1}}(S_{AB_{1}B_{2}}^{1})  &  =\frac{1}{d^{2}-1}\left[
\begin{array}
[c]{c}%
d\left(  V_{AB_{1}B_{2}}^{T_{B_{1}}}+V_{B_{2}B_{1}A}^{T_{B_{1}}}\right) \\
-\left(  V_{AB_{1}}^{T_{B_{1}}}+V_{AB_{2}}\right)
\end{array}
\right]  ,\\
T_{AB_{1}}(S_{AB_{1}B_{2}}^{2})  &  =\frac{1}{\sqrt{d^{2}-1}}\left(
V_{AB_{1}}^{T_{B_{1}}}-V_{AB_{2}}\right)  ,\\
T_{AB_{1}}(S_{AB_{1}B_{2}}^{3})  &  =\frac{i}{\sqrt{d^{2}-1}}\left(
V_{AB_{1}B_{2}}^{T_{B_{1}}}-V_{B_{2}B_{1}A}^{T_{B_{1}}}\right)  .
\end{align}
Let us define the following operators:%
\begin{align}
C_{AB_{1}B_{2}}^{+}  &  \coloneqq \frac{1}{2}\left[
\begin{array}
[c]{c}%
W_{AB_{1}B_{2}}^{+}+W_{AB_{1}B_{2}}^{1}\\
-\left(  \frac{W_{AB_{1}B_{2}}^{-}+W_{AB_{1}B_{2}}^{0}+W_{AB_{1}B_{2}}%
^{2}+W_{AB_{1}B_{2}}^{3}}{d+1}\right)
\end{array}
\right]  ,\\
C_{AB_{1}B_{2}}^{-}  &  \coloneqq \frac{1}{2}\left[
\begin{array}
[c]{c}%
W_{AB_{1}B_{2}}^{+}-W_{AB_{1}B_{2}}^{1}\\
-\left(  \frac{W_{AB_{1}B_{2}}^{-}+W_{AB_{1}B_{2}}^{0}-W_{AB_{1}B_{2}}%
^{2}-W_{AB_{1}B_{2}}^{3}}{d-1}\right)
\end{array}
\right]  ,\\
C_{AB_{1}B_{2}}^{0}  &  \coloneqq \frac{1}{d^{2}-1}\left[
\begin{array}
[c]{c}%
d\left(  W_{AB_{1}B_{2}}^{-}+W_{AB_{1}B_{2}}^{0}\right) \\
-\left(  W_{AB_{1}B_{2}}^{2}+W_{AB_{1}B_{2}}^{3}\right)
\end{array}
\right]  ,\\
C_{AB_{1}B_{2}}^{1}  &  \coloneqq \frac{1}{d^{2}-1}\left[
\begin{array}
[c]{c}%
d\left(  W_{AB_{1}B_{2}}^{2}+W_{AB_{1}B_{2}}^{3}\right) \\
-\left(  W_{AB_{1}B_{2}}^{-}+W_{AB_{1}B_{2}}^{0}\right)
\end{array}
\right]  ,\\
C_{AB_{1}B_{2}}^{2}  &  \coloneqq \frac{1}{\sqrt{d^{2}-1}}\left(
W_{AB_{1}B_{2}}^{-}-W_{AB_{1}B_{2}}^{0}\right)  ,\\
C_{AB_{1}B_{2}}^{3}  &  \coloneqq \frac{i}{\sqrt{d^{2}-1}}\left(
W_{AB_{1}B_{2}}^{2}-W_{AB_{1}B_{2}}^{3}\right)  ,
\end{align}
where%
\begin{align}
W_{AB_{1}B_{2}}^{+}  &  \coloneqq I_{AB_{1}B_{2}},\\
W_{AB_{1}B_{2}}^{-}  &  \coloneqq V_{B_{1}B_{2}}^{T_{B_{1}}},\\
W_{AB_{1}B_{2}}^{0}  &  \coloneqq V_{AB_{1}}^{T_{B_{1}}},\\
W_{AB_{1}B_{2}}^{1}  &  \coloneqq V_{AB_{2}},\\
W_{AB_{1}B_{2}}^{2}  &  \coloneqq V_{AB_{1}B_{2}}^{T_{B_{1}}},\\
W_{AB_{1}B_{2}}^{3}  &  \coloneqq V_{B_{2}B_{1}A}^{T_{B_{1}}}.
\end{align}
Observe that the $C$ operators have the same algebraic relations as the $S$
operators, because the  $W$ operators defined above  are related to the original
$V$ operators by the system permutations $B_{1}\rightarrow A$, $B_{2}%
\rightarrow B_{1}$, and $A\rightarrow B_{2}$. Given the above definitions, we
find that, for $i\in\left\{  +,-,0,1,2,3\right\}  $,%
\begin{equation}
C_{AB_{1}B_{2}}^{i}=\sum_{j\in\left\{  +,-,0,1,2,3\right\}  }[Y]_{i,j}%
W_{AB_{1}B_{2}}^{j}.
\end{equation}
Also, observe that, for $j\in\left\{  +,-,0,1,2,3\right\}  $,%
\begin{equation}
W_{AB_{1}B_{2}}^{j}=\sum_{k\in\left\{  +,-,0,1,2,3\right\}  }[P]_{j,k}%
T_{B_{1}}(V_{AB_{1}B_{2}}^{k}),
\end{equation}
where $P$ is the following permutation matrix:%
\begin{equation}
P=%
\begin{bmatrix}
1 & 0 & 0 & 0 & 0 & 0\\
0 & 0 & 0 & 1 & 0 & 0\\
0 & 1 & 0 & 0 & 0 & 0\\
0 & 0 & 1 & 0 & 0 & 0\\
0 & 0 & 0 & 0 & 1 & 0\\
0 & 0 & 0 & 0 & 0 & 1
\end{bmatrix}
,
\end{equation}
and we made use of the ordering in \eqref{eq:V-ordering-1}--\eqref{eq:V-ordering-6}.
Thus,%
\begin{equation}
T_{B_{1}}(V_{AB_{1}B_{2}}^{i})=\sum_{j\in\left\{  +,-,0,1,2,3\right\}
}\left[  P^{-1}Y^{-1}\right]  _{i,j}C_{AB_{1}B_{2}}^{j},
\end{equation}
and this means that%
\begin{align}
&  T_{AB_{1}}(S_{AB_{1}B_{2}}^{i})\nonumber\\
&  =\sum_{j\in\left\{  +,-,0,1,2,3\right\}  }[Y]_{i,j}T_{AB_{1}}%
(T_{A}(V_{AB_{1}B_{2}}^{j}))\\
&  =\sum_{j\in\left\{  +,-,0,1,2,3\right\}  }[Y]_{i,j}T_{B_{1}}(V_{AB_{1}%
B_{2}}^{j})\\
&  =\sum_{j\in\left\{  +,-,0,1,2,3\right\}  }\left[  YP^{-1}Y^{-1}\right]
_{i,j}C_{AB_{1}B_{2}}^{j}.
\end{align}
Now considering that%
\begin{equation}
Y^{-1}=%
\begin{bmatrix}
1 & 1 & 1 & 0 & 0 & 0\\
0 & 0 & \frac{d}{2} & \frac{1}{2} & \frac{1}{2}\sqrt{d^{2}-1} & 0\\
0 & 0 & \frac{d}{2} & \frac{1}{2} & -\frac{1}{2}\sqrt{d^{2}-1} & 0\\
1 & -1 & 0 & 1 & 0 & 0\\
0 & 0 & \frac{1}{2} & \frac{d}{2} & 0 & -\frac{i}{2}\sqrt{d^{2}-1}\\
0 & 0 & \frac{1}{2} & \frac{d}{2} & 0 & \frac{i}{2}\sqrt{d^{2}-1}%
\end{bmatrix}
\end{equation}
we find that%
\begin{multline}
YP^{-1}Y^{-1}=\\%
\begin{bmatrix}
\frac{d}{2\left(  d+1\right)  } & \frac{d+2}{2\left(  d+1\right)  } &
\frac{d\left(  d+2\right)  }{4(d+1)} & -\frac{d+2}{4\left(  d+1\right)  } &
\frac{\left(  d+2\right)  \sqrt{d^{2}-1}}{4(d+1)} & 0\\
\frac{d-2}{2\left(  d-1\right)  } & \frac{d}{2\left(  d-1\right)  } &
-\frac{d\left(  d-2\right)  }{4\left(  d-1\right)  } & \frac{d-2}{4\left(
d-1\right)  } & -\frac{\left(  d-2\right)  \sqrt{d^{2}-1}}{4(d-1)} & 0\\
\frac{d}{d^{2}-1} & -\frac{d}{d^{2}-1} & \frac{d^{2}-2}{2\left(
d^{2}-1\right)  } & \frac{d}{2\left(  d^{2}-1\right)  } & -\frac{d}%
{2\sqrt{d^{2}-1}} & 0\\
-\frac{1}{d^{2}-1} & \frac{1}{d^{2}-1} & \frac{d}{2\left(  d^{2}-1\right)  } &
\frac{2d^{2}-3}{2\left(  d^{2}-1\right)  } & \frac{1}{2\sqrt{d^{2}-1}} & 0\\
-\frac{1}{\sqrt{d^{2}-1}} & \frac{1}{\sqrt{d^{2}-1}} & \frac{d}{2\sqrt
{d^{2}-1}} & -\frac{1}{2\sqrt{d^{2}-1}} & -\frac{1}{2} & 0\\
0 & 0 & 0 & 0 & 0 & 1
\end{bmatrix}
\end{multline}
so that%
\begin{align}
&  T_{A\hat{A}\hat{B}_{1}B_{1}}(\widetilde{M}_{A\hat{A}\hat{B}_{1}\hat{B}%
_{2}B_{1}B_{2}})\nonumber\\
&  =\sum_{i\in\left\{  +,-,0,1,2,3\right\}  }\frac{T_{\hat{A}\hat{B}_{1}%
}(M_{\hat{A}\hat{B}_{1}\hat{B}_{2}}^{\prime i})}{\operatorname{Tr}%
[S_{AB_{1}B_{2}}^{g(i)}]}\otimes T_{AB_{1}}(S_{AB_{1}B_{2}}^{i})\\
&  =\sum_{i,j\in\left\{  +,-,0,1,2,3\right\}  }\frac{T_{\hat{A}\hat{B}_{1}%
}(M_{\hat{A}\hat{B}_{1}\hat{B}_{2}}^{\prime i})}{\operatorname{Tr}%
[S_{AB_{1}B_{2}}^{g(i)}]}\otimes\left[  YP^{-1}Y^{-1}\right]  _{i,j}%
C_{AB_{1}B_{2}}^{j}\\
&  =\sum_{j\in\left\{  +,-,0,1,2,3\right\}  }E_{\hat{A}\hat{B}_{1}\hat{B}_{2}%
}^{j}\otimes C_{AB_{1}B_{2}}^{j},\label{eq:E_decomposition}
\end{align}
where%
\begin{equation}
E_{\hat{A}\hat{B}_{1}\hat{B}_{2}}^{j}\coloneqq\sum_{i\in\left\{
+,-,0,1,2,3\right\}  }\frac{T_{\hat{A}\hat{B}_{1}}(M_{\hat{A}\hat{B}_{1}%
\hat{B}_{2}}^{\prime i})}{\operatorname{Tr}[S_{AB_{1}B_{2}}^{g(i)}]}\left[
YP^{-1}Y^{-1}\right]  _{i,j}.
\end{equation}
In detail, we find that%
\begin{align}
&  E_{\hat{A}\hat{B}_{1}\hat{B}_{2}}^{+}\nonumber\\
&  =\frac{d}{2\left(  d+1\right)  }\frac{T_{\hat{A}\hat{B}_{1}}(M_{\hat{A}%
\hat{B}_{1}\hat{B}_{2}}^{\prime+})}{\frac{1}{2}d\left(  d+2\right)  \left(
d-1\right)  }\nonumber\\
&  \qquad+\frac{d-2}{2\left(  d-1\right)  }\frac{T_{\hat{A}\hat{B}_{1}%
}(M_{\hat{A}\hat{B}_{1}\hat{B}_{2}}^{\prime-})}{\frac{1}{2}d\left(
d-2\right)  \left(  d+1\right)  }\nonumber\\
&  +\frac{d}{d^{2}-1}\frac{T_{\hat{A}\hat{B}_{1}}(M_{\hat{A}\hat{B}_{1}\hat
{B}_{2}}^{\prime0})}{2d}-\frac{1}{d^{2}-1}\frac{T_{\hat{A}\hat{B}_{1}}%
(M_{\hat{A}\hat{B}_{1}\hat{B}_{2}}^{\prime1})}{2d}\nonumber\\
&  \qquad-\frac{1}{\sqrt{d^{2}-1}}\frac{T_{\hat{A}\hat{B}_{1}}(M_{\hat{A}%
\hat{B}_{1}\hat{B}_{2}}^{\prime2})}{2d}\\
&  =\frac{T_{\hat{A}\hat{B}_{1}}\left(  \frac{dM^{+}}{d+2}+M^{-}+\frac
{dM^{0}-M^{1}-\sqrt{d^{2}-1}M^{2}}{2}\right)  }{d^{2}-1},
\end{align}%
\begin{align}
&  E_{\hat{A}\hat{B}_{1}\hat{B}_{2}}^{-}\nonumber\\
&  =\frac{d+2}{2\left(  d+1\right)  }\frac{T_{\hat{A}\hat{B}_{1}}(M_{\hat
{A}\hat{B}_{1}\hat{B}_{2}}^{\prime+})}{\frac{1}{2}d\left(  d+2\right)  \left(
d-1\right)  }\nonumber\\
&  \qquad+\frac{d}{2\left(  d-1\right)  }\frac{T_{\hat{A}\hat{B}_{1}}%
(M_{\hat{A}\hat{B}_{1}\hat{B}_{2}}^{\prime-})}{\frac{1}{2}d\left(  d-2\right)
\left(  d+1\right)  }\nonumber\\
&  -\frac{d}{d^{2}-1}\frac{T_{\hat{A}\hat{B}_{1}}(M_{\hat{A}\hat{B}_{1}\hat
{B}_{2}}^{\prime0})}{2d}+\frac{1}{d^{2}-1}\frac{T_{\hat{A}\hat{B}_{1}}%
(M_{\hat{A}\hat{B}_{1}\hat{B}_{2}}^{\prime1})}{2d}\nonumber\\
&  \qquad+\frac{1}{\sqrt{d^{2}-1}}\frac{T_{\hat{A}\hat{B}_{1}}(M_{\hat{A}%
\hat{B}_{1}\hat{B}_{2}}^{\prime2})}{2d}\\
&  =\frac{T_{\hat{A}\hat{B}_{1}}\left(  M^{+}+\frac{dM^{-}}{d-2}+\frac
{M^{1}-dM^{0}+\sqrt{d^{2}-1}M^{2}}{2}\right)  }{d^{2}-1},
\end{align}%
\begin{align}
&  E_{\hat{A}\hat{B}_{1}\hat{B}_{2}}^{0}\nonumber\\
&  =\frac{d\left(  d+2\right)  }{4\left(  d+1\right)  }\frac{T_{\hat{A}\hat
{B}_{1}}(M_{\hat{A}\hat{B}_{1}\hat{B}_{2}}^{\prime+})}{\frac{1}{2}d\left(
d+2\right)  \left(  d-1\right)  }\nonumber\\
&  \qquad-\frac{d\left(  d-2\right)  }{4\left(  d-1\right)  }\frac{T_{\hat
{A}\hat{B}_{1}}(M_{\hat{A}\hat{B}_{1}\hat{B}_{2}}^{\prime-})}{\frac{1}%
{2}d\left(  d-2\right)  \left(  d+1\right)  }\nonumber\\
&  +\frac{d^{2}-2}{2\left(  d^{2}-1\right)  }\frac{T_{\hat{A}\hat{B}_{1}%
}(M_{\hat{A}\hat{B}_{1}\hat{B}_{2}}^{\prime0})}{2d}+\frac{d}{2\left(
d^{2}-1\right)  }\frac{T_{\hat{A}\hat{B}_{1}}(M_{\hat{A}\hat{B}_{1}\hat{B}%
_{2}}^{\prime1})}{2d}\nonumber\\
&  \qquad+\frac{d}{2\sqrt{d^{2}-1}}\frac{T_{\hat{A}\hat{B}_{1}}(M_{\hat{A}%
\hat{B}_{1}\hat{B}_{2}}^{\prime2})}{2d}\\
&  =\frac{T_{\hat{A}\hat{B}_{1}}\left(  d(M^{+}-M^{-})+\frac{\left(
d^{2}-2\right)  M^{0}+dM^{1}+d\sqrt{d^{2}-1}M^{2}}{2}\right)  }{2\left(
d^{2}-1\right)  },
\end{align}%
\begin{align}
&  E_{\hat{A}\hat{B}_{1}\hat{B}_{2}}^{1}\nonumber\\
&  =-\frac{\left(  d+2\right)  }{4\left(  d+1\right)  }\frac{T_{\hat{A}\hat
{B}_{1}}(M_{\hat{A}\hat{B}_{1}\hat{B}_{2}}^{\prime+})}{\frac{1}{2}d\left(
d+2\right)  \left(  d-1\right)  }\nonumber\\
&  \qquad+\frac{\left(  d-2\right)  }{4\left(  d-1\right)  }\frac{T_{\hat
{A}\hat{B}_{1}}(M_{\hat{A}\hat{B}_{1}\hat{B}_{2}}^{\prime-})}{\frac{1}%
{2}d\left(  d-2\right)  \left(  d+1\right)  }\nonumber\\
&  +\frac{d}{2\left(  d^{2}-1\right)  }\frac{T_{\hat{A}\hat{B}_{1}}(M_{\hat
{A}\hat{B}_{1}\hat{B}_{2}}^{\prime0})}{2d}+\frac{2d^{2}-3}{2\left(
d^{2}-1\right)  }\frac{T_{\hat{A}\hat{B}_{1}}(M_{\hat{A}\hat{B}_{1}\hat{B}%
_{2}}^{\prime1})}{2d}\nonumber\\
&  \qquad-\frac{1}{2\sqrt{d^{2}-1}}\frac{T_{\hat{A}\hat{B}_{1}}(M_{\hat{A}%
\hat{B}_{1}\hat{B}_{2}}^{\prime2})}{2d}\\
&  =\frac{T_{\hat{A}\hat{B}_{1}}\left(  -M^{+}+M^{-}+\frac{dM^{0}+\left(
2d^{2}-3\right)  M^{1}-\sqrt{d^{2}-1}M^{2}}{2}\right)  }{2\left(
d^{2}-1\right)  },
\end{align}%
\begin{align}
&  E_{\hat{A}\hat{B}_{1}\hat{B}_{2}}^{2}\nonumber\\
&  =\frac{\left(  d+2\right)  \sqrt{d^{2}-1}}{4\left(  d+1\right)  }%
\frac{T_{\hat{A}\hat{B}_{1}}(M_{\hat{A}\hat{B}_{1}\hat{B}_{2}}^{\prime+}%
)}{\frac{1}{2}d\left(  d+2\right)  \left(  d-1\right)  }\nonumber\\
&  \qquad-\frac{\left(  d-2\right)  \sqrt{d^{2}-1}}{4\left(  d-1\right)
}\frac{T_{\hat{A}\hat{B}_{1}}(M_{\hat{A}\hat{B}_{1}\hat{B}_{2}}^{\prime-}%
)}{\frac{1}{2}d\left(  d-2\right)  \left(  d+1\right)  }\nonumber\\
&  -\frac{d}{2\sqrt{d^{2}-1}}\frac{T_{\hat{A}\hat{B}_{1}}(M_{\hat{A}\hat
{B}_{1}\hat{B}_{2}}^{\prime0})}{2d}+\frac{1}{2\sqrt{d^{2}-1}}\frac{T_{\hat
{A}\hat{B}_{1}}(M_{\hat{A}\hat{B}_{1}\hat{B}_{2}}^{\prime1})}{2d}\nonumber\\
&  \qquad-\frac{1}{2}\frac{T_{\hat{A}\hat{B}_{1}}(M_{\hat{A}\hat{B}_{1}\hat
{B}_{2}}^{\prime2})}{2d}\\
&  =\frac{T_{\hat{A}\hat{B}_{1}}(M^{+}-M^{-}+\frac{M^{1}-dM^{0}-\sqrt{d^{2}%
-1}M^{2}}{2})}{2\sqrt{d^{2}-1}},
\end{align}%
\begin{equation}
E_{\hat{A}\hat{B}_{1}\hat{B}_{2}}^{3}=\frac{T_{\hat{A}\hat{B}_{1}}(M_{\hat
{A}\hat{B}_{1}\hat{B}_{2}}^{\prime3})}{2d}=\frac{T_{\hat{A}\hat{B}_{1}%
}(M_{\hat{A}\hat{B}_{1}\hat{B}_{2}}^{3})}{2}.
\end{equation}
Then the condition that $T_{A\hat{A}\hat{B}_{1}B_{1}}(\widetilde{M}_{A\hat
{A}\hat{B}_{1}\hat{B}_{2}B_{1}B_{2}})\geq0$ is equivalent to%
\begin{align}
E_{\hat{A}\hat{B}_{1}\hat{B}_{2}}^{+}  &  \geq0, \label{eq:E+-PSD}\\
E_{\hat{A}\hat{B}_{1}\hat{B}_{2}}^{-}  &  \geq0,\label{eq:E--PSD}\\
\sum_{k\in\left\{  0,1,2,3\right\}  }E_{\hat{A}\hat{B}_{1}\hat{B}_{2}}%
^{k}\otimes S_{AB_{1}B_{2}}^{k}  &  \geq0.
\end{align}
The final condition is equivalent to
\begin{equation}%
\begin{bmatrix}
E^{0}+E^{3} & E^{1}-iE^{2}\\
E^{1}+iE^{2} & E^{0}-E^{3}%
\end{bmatrix}
\geq0,
\end{equation}
by reasoning similar to that given for \eqref{eq:M0-3-conds-CP}.
Note that we can scale the conditions in \eqref{eq:E+-PSD} and \eqref{eq:E--PSD}  by $d^{2}-1$ without
changing them. This justifies the conditions in \eqref{eq:PPTB-cond-1}--\eqref{eq:PPTB-cond-3}.

\subsection{Final evaluation of normalized diamond distance objective
function}

\label{Sec_obj_func_tel}

It now remains to evalulate the normalized diamond distance by means of its
semi-definite programming formulation in~\eqref{eq:diamond-d-SDP}:%
\begin{multline}
\frac{1}{2}\left\Vert \widetilde{\mathcal{K}}_{A\hat{A}\hat{B}\rightarrow
B}\circ\mathcal{A}_{\hat{A}\hat{B}}^{\rho}-\operatorname{id}_{A\rightarrow
B}^{d}\right\Vert _{\diamond}\\
=\inf_{\substack{\mu\geq0,\\Z_{AB}\geq0}}\left\{
\begin{array}
[c]{c}%
\mu:\\
\mu I_{A}\geq Z_{A}\\
Z_{AB}\geq\Gamma_{AB}-\operatorname{Tr}_{\hat{A}\hat{B}}[\rho_{\hat{A}\hat{B}%
}^{T}\widetilde{K}_{A\hat{A}\hat{B}B}]
\end{array}
\right\}  .
\end{multline}
Consider that the Choi operator of the channel $\widetilde{\mathcal{K}}%
_{A\hat{A}\hat{B}\rightarrow B}$ is given by%
\begin{align}
\widetilde{K}_{A\hat{A}\hat{B}B}  &  =\frac{1}{d_{\hat{B}}}\operatorname{Tr}%
_{\hat{B}_{2}}[P_{\hat{A}\hat{B}_{1}\hat{B}_{2}}^{\prime}]\otimes\Phi
_{AB}\nonumber\\
&  \qquad+\frac{1}{d_{\hat{B}}}\operatorname{Tr}_{\hat{B}_{2}}[Q_{\hat{A}%
\hat{B}_{1}\hat{B}_{2}}^{\prime}]\otimes\left(  \frac{I_{AB}-\Phi_{AB}}%
{d^{2}-1}\right) \\
&  =\frac{1}{d_{\hat{B}}}\operatorname{Tr}_{\hat{B}_{2}}[P_{\hat{A}\hat{B}%
_{1}\hat{B}_{2}}]\otimes\Gamma_{AB}\nonumber\\
&  \qquad+\frac{1}{d_{\hat{B}}}\operatorname{Tr}_{\hat{B}_{2}}[Q_{\hat{A}%
\hat{B}_{1}\hat{B}_{2}}]\otimes\left(  \frac{dI_{AB}-\Gamma_{AB}}{d^{2}%
-1}\right)  ,
\end{align}
where%
\begin{align}
P_{\hat{A}\hat{B}_{1}\hat{B}_{2}}  &  \coloneqq\frac{1}{d}P_{\hat{A}\hat
{B}_{1}\hat{B}_{2}}^{\prime},\\
Q_{\hat{A}\hat{B}_{1}\hat{B}_{2}}  &  \coloneqq\frac{1}{d}Q_{\hat{A}\hat
{B}_{1}\hat{B}_{2}}^{\prime}=I_{\hat{A}\hat{B}_{1}\hat{B}_{2}}-P_{\hat{A}%
\hat{B}_{1}\hat{B}_{2}},
\end{align}
which implies that the Choi operator of $\widetilde{\mathcal{K}}_{A\hat{A}%
\hat{B}\rightarrow B}\circ\mathcal{A}_{\hat{A}\hat{B}}^{\rho}$ is%
\begin{multline}
\operatorname{Tr}_{\hat{A}\hat{B}}[\rho_{\hat{A}\hat{B}}^{T}\widetilde
{K}_{A\hat{A}\hat{B}B}]=\frac{1}{d_{\hat{B}}}\operatorname{Tr}[\rho_{\hat
{A}\hat{B}_{1}}^{T}P_{\hat{A}\hat{B}_{1}\hat{B}_{2}}]\Gamma_{AB}\\
+\frac{1}{d_{\hat{B}}}\operatorname{Tr}[\rho_{\hat{A}\hat{B}_{1}}^{T}%
Q_{\hat{A}\hat{B}_{1}\hat{B}_{2}}]\left(  \frac{dI_{AB}-\Gamma_{AB}}{d^{2}%
-1}\right)  ,
\end{multline}
Then it follows that%
\begin{multline}
\Gamma_{AB}-\operatorname{Tr}_{\hat{A}\hat{B}}[\rho_{\hat{A}\hat{B}}%
^{T}\widetilde{K}_{A\hat{A}\hat{B}B}]\\
=\left(  1-\frac{1}{d_{\hat{B}}}\operatorname{Tr}[\rho_{\hat{A}\hat{B}_{1}%
}^{T}P_{\hat{A}\hat{B}_{1}\hat{B}_{2}}]\right)  \Gamma_{AB}\\
-\frac{1}{d_{\hat{B}}}\operatorname{Tr}[\rho_{\hat{A}\hat{B}_{1}}^{T}%
Q_{\hat{A}\hat{B}_{1}\hat{B}_{2}}]\left(  \frac{dI_{AB}-\Gamma_{AB}}{d^{2}%
-1}\right)
\end{multline}
By the same reasoning used to justify
\eqref{eq:Z-op-ineq}--\eqref{eq:mu-opt-choice}, we conclude that optimal
choices for $Z_{AB}$ and $\mu$ are%
\begin{align}
Z_{AB}  &  =\mu\Gamma_{AB},\\
\mu &  =1-\frac{1}{d_{\hat{B}}}\operatorname{Tr}[\rho_{\hat{A}\hat{B}_{1}}%
^{T}P_{\hat{A}\hat{B}_{1}\hat{B}_{2}}].
\end{align}

Putting everything together, we conclude that the semi-definite program in
Proposition~\ref{prop:two-ext-sim-err-gen-ch} reduces to the form stated in Proposition~\ref{prop:simplified-SDP-two-ext-sim-id}.
This concludes the proof.

\subsection{The case  $d=2$}

\label{sec:tel_SDP_2d}

Several steps in the previous calculations involve a denominator of $d-2$, making it unsuitable to get lower bounds on the simulation error of an identity channel for qubits. However, it is easy to navigate around this problem.

The trace of $S^-_{AB_1B_2}$ is $d(d-2)(d+1)/2$; hence, it is traceless for $d=2$. Since $S^-_{AB_1B_2}$ is positive semi-definite as well, it follows that $S^-_{AB_1B_2} = 0$. Therefore, $M^{\prime -}_{\hat{A}\hat{B}_1\hat{B}_2}$ is not involved in \eqref{eq:sym-form-for-Choi-op-2-ext} and consequently $M_{\hat{A}\hat{B}_1\hat{B}_2}^-$ is not involved in any of the constraints. This is also equivalent to setting $M_{\hat{A}\hat{B}_1\hat{B}_2}^-$ to $0$.

Similarly, $R_{AB_1B_2}^-$ defined in \eqref{eq:R_minus_def} is positive semi-definite and traceless for $d=2$, implying that $R_{AB_1B_2}^- = 0$ when $d=2$. This means $G^{-}_{\hat{A}\hat{B}_1\hat{B}_2}$ is not involved in \eqref{eq:G_decomposition}, which in turn corresponds to removing \eqref{eq:PPTA-cond-2} from the constraints in the SDP of Proposition~\ref{prop:simplified-SDP-two-ext-sim-id}.

Finally, $C_{AB_1B_2}^- = 0$ because $C_{AB_1B_2}^-$ is a traceless positive semi-definite operator when $d=2$. Therefore, $E_{\hat{A}\hat{B}_1\hat{B}_2}^-$ is not involved in \eqref{eq:E_decomposition}, which in turn corresponds to removing \eqref{eq:PPTB-cond-2} from the constraints in this SDP. 

This leads us to the claim at the end of Proposition~\ref{prop:simplified-SDP-two-ext-sim-id}, that removing \eqref{eq:PPTA-cond-1} and \eqref{eq:PPTB-cond-2} and setting  $M_{\hat{A}\hat{B}_1\hat{B}_2}^- = 0$ gives the SDP for $d=2$. 

\section{Proof of Proposition~\ref{prop:sim-errs-equal-1WL-QEC}}\label{sec:sim-errs-equal-1WL-QEC-proof}

The proof of Proposition~\ref{prop:sim-errs-equal-1WL-QEC}\ bears similarities
with the proof of Proposition~\ref{prop:sim-errs-equal-1WL}, but we give a
somewhat detailed proof here for completeness and clarity. The main idea is
again to simplify the optimization problems in \eqref{eq:diamond-err-sim-QEC} and \eqref{eq:fidelity-err-sim-QEC} by exploiting
the symmetries of the identity channel, as stated in \eqref{eq:identity-ch-symmetries}.

Let us again begin by analyzing the diamond distance. Let $\Upsilon_{(\hat
{A}\rightarrow\hat{B})\rightarrow(A\rightarrow B)}$ be an arbitrary
LOCR\ superchannel, with corresponding bipartite channel $\mathcal{C}%
_{A\hat{B}\rightarrow\hat{A}B}$\ having the form in \eqref{eq:LOCR_superchannel_bipartite_rep}.
That is, let $\mathcal{C}_{A\hat{B}\rightarrow\hat{A}B}$ denote the bipartite channel
that is in direct correspondence with $\Upsilon_{(\hat{A}\rightarrow\hat
{B})\rightarrow(A\rightarrow B)}$; i.e.,%
\begin{equation}
\mathcal{C}_{A\hat{B}\rightarrow\hat{A}B}\coloneqq\sum_{y}p(y)\mathcal{E}%
_{A\rightarrow\hat{A}}^{y}\otimes\mathcal{D}_{\hat{B}\rightarrow B}^{y}.
\end{equation}
Additionally, the notation $\Upsilon_{(\hat{A}\rightarrow\hat{B}%
)\rightarrow(A\rightarrow B)}(\mathcal{N}_{\hat{A}\rightarrow\hat{B}})$
indicates that $\Upsilon_{(\hat{A}\rightarrow\hat{B})\rightarrow(A\rightarrow
B)}$ is a physical transformation of the channel $\mathcal{N}_{\hat
{A}\rightarrow\hat{B}}$ by means of the encoding-decoding scheme
$(p(y),\{\mathcal{E}_{A\rightarrow\hat{A}}^{y}\}_{y},\{\mathcal{D}_{\hat
{B}\rightarrow B}^{y}\}_{y})$, and the transformation takes a channel with
input and output systems $\hat{A}$ and $\hat{B}$, respectively, to a channel
with input and output systems $A$ and $B$, respectively.
Then by the same
reasoning that led to \eqref{eq:convexity-dd-1wl}, we find that%
\begin{multline}
\left\Vert \Upsilon_{(\hat{A}\rightarrow\hat{B})\rightarrow(A\rightarrow
B)}(\mathcal{N}_{\hat{A}\rightarrow\hat{B}})-\operatorname{id}_{A\rightarrow
B}^{d}\right\Vert _{\diamond}\label{eq:twirl-sym-LOCR}\\
\geq\left\Vert \widetilde{\Upsilon}_{(\hat{A}\rightarrow\hat{B})\rightarrow
(A\rightarrow B)}(\mathcal{N}_{\hat{A}\rightarrow\hat{B}})-\operatorname{id}%
_{A\rightarrow B}^{d}\right\Vert _{\diamond},
\end{multline}
where $\widetilde{\Upsilon}_{(\hat{A}\rightarrow\hat{B})\rightarrow
(A\rightarrow B)}$ is the twirled version of the superchannel $\Upsilon
_{(\hat{A}\rightarrow\hat{B})\rightarrow(A\rightarrow B)}$ and is defined as%
\begin{equation}
\widetilde{\Upsilon}_{(\hat{A}\rightarrow\hat{B})\rightarrow(A\rightarrow
B)}\coloneqq \int dU\ \mathcal{U}_{B}^{\dag}\circ\Upsilon_{(\hat{A}%
\rightarrow\hat{B})\rightarrow(A\rightarrow B)}\circ\mathcal{U}_{A}.
\end{equation}
Importantly, $\widetilde{\Upsilon}_{(\hat{A}\rightarrow\hat{B})\rightarrow
(A\rightarrow B)}$ is an LOCR\ channel because the channel twirl above can be
implemented by means of LOCR. Furthermore, observe that%
\begin{equation}
\widetilde{\Upsilon}_{(\hat{A}\rightarrow\hat{B})\rightarrow(A\rightarrow
B)}=\mathcal{U}_{B}^{\dag}\circ\widetilde{\Upsilon}_{(\hat{A}\rightarrow
\hat{B})\rightarrow(A\rightarrow B)}\circ\mathcal{U}_{A}
\label{eq:twirl-symmetry-super-ch-2}%
\end{equation}
for every unitary channel $\mathcal{U}$. Thus, as a consequence of
\eqref{eq:twirl-sym-LOCR}, it suffices to minimize the error with respect to
superchannels having the symmetry in
\eqref{eq:twirl-symmetry-super-ch-2}.\ The corresponding twirled bipartite
channel $\widetilde{\mathcal{C}}_{A\hat{B}\rightarrow\hat{A}B}$ then has the
following form:%
\begin{equation}
\widetilde{\mathcal{C}}_{A\hat{B}\rightarrow\hat{A}B}=\int dU\ \mathcal{U}%
_{B}^{\dag}\circ\mathcal{C}_{A\hat{B}\rightarrow\hat{A}B}\circ\mathcal{U}_{A}%
\end{equation}

Let us determine the form of LOCR\ superchannels possessing this symmetry. Let
$\widetilde{\Upsilon}_{A\hat{B}\hat{A}B}$ denote the Choi operator for
$\widetilde{\Upsilon}_{(\hat{A}\rightarrow\hat{B})\rightarrow(A\rightarrow
B)}$ and $\widetilde{\mathcal{C}}_{A\hat{B}\rightarrow\hat{A}B}$, and let
$\Upsilon_{A\hat{B}\hat{A}B}$ denote the Choi operator for $\Upsilon_{(\hat
{A}\rightarrow\hat{B})\rightarrow(A\rightarrow B)}$ and $\mathcal{C}_{A\hat
{B}\rightarrow\hat{A}B}$. They are related as follows:%
\begin{equation}
\widetilde{\Upsilon}_{A\hat{B}\hat{A}B}=\int dU\ (\mathcal{U}_{A}%
\otimes\overline{\mathcal{U}}_{B})(\Upsilon_{A\hat{B}\hat{A}B}).
\end{equation}
Recalling the form of the bipartite channel $\mathcal{C}_{A\hat{B}%
\rightarrow\hat{A}B}$, we find that its Choi operator $\Upsilon_{A\hat{B}%
\hat{A}B}$ is given by%
\begin{equation}
\Upsilon_{A\hat{B}\hat{A}B}=\sum_{y}p(y)E_{A\hat{A}}^{y}\otimes D_{\hat{B}%
B}^{y},
\end{equation}
where, for all $y$, $E_{A\hat{A}}^{y},D_{\hat{B}B}^{y}\geq0$,
$\operatorname{Tr}_{\hat{A}}[E_{A\hat{A}}^{y}]=I_{A}$, and $\operatorname{Tr}%
_{B}[D_{\hat{B}B}^{y}]=I_{\hat{B}}$. Recall the action of the twirling channel from \eqref{eq:twirl-ch-def}, so that%
\begin{align}
&  \widetilde{\mathcal{T}}_{AB}(E_{A\hat{A}}^{y}\otimes D_{\hat{B}B}%
^{y})\nonumber\\
&  =\Phi_{AB}\otimes\operatorname{Tr}_{AB}[\Phi_{AB}(E_{A\hat{A}}^{y}\otimes
D_{\hat{B}B}^{y})]\nonumber\\
&  \qquad+\frac{I_{AB}-\Phi_{AB}}{d^{2}-1}\otimes\operatorname{Tr}%
_{AB}[\left(  I_{AB}-\Phi_{AB}\right)  (E_{A\hat{A}}^{y}\otimes D_{\hat{B}%
B}^{y})]\\
&  =\Gamma_{AB}\otimes\frac{1}{d}\operatorname{Tr}_{AB}[\Phi_{AB}(E_{A\hat{A}%
}^{y}\otimes D_{\hat{B}B}^{y})]\nonumber\\
&  \qquad+\frac{dI_{AB}-\Gamma_{AB}}{d^{2}-1}\otimes\frac{1}{d}%
\operatorname{Tr}_{AB}[\left(  I_{AB}-\Phi_{AB}\right)  (E_{A\hat{A}}%
^{y}\otimes D_{\hat{B}B}^{y})].
\end{align}
Consider that the Choi operator of the composition $\widetilde{\Upsilon
}_{(\hat{A}\rightarrow\hat{B})\rightarrow(A\rightarrow B)}(\mathcal{N}%
_{\hat{A}\rightarrow\hat{B}})$ is as follows%
\begin{equation}
\operatorname{Tr}_{\hat{A}\hat{B}}[T_{\hat{A}\hat{B}}(\Gamma_{\hat{A}\hat{B}%
}^{\mathcal{N}})\widetilde{\Upsilon}_{A\hat{B}\hat{A}B}]
\end{equation}
by applying the propagation rule in \eqref{eq:propagation rule}. This means that%
\begin{multline}
\operatorname{Tr}_{\hat{A}\hat{B}}[T_{\hat{A}\hat{B}}(\Gamma_{\hat{A}\hat{B}%
}^{\mathcal{N}})\widetilde{\Upsilon}_{A\hat{B}\hat{A}B}]\\
=\Gamma_{AB}\otimes\sum_{y}p(y)\frac{1}{d}\operatorname{Tr}[(\Phi_{AB}\otimes
T_{\hat{A}\hat{B}}(\Gamma_{\hat{A}\hat{B}}^{\mathcal{N}}))(E_{A\hat{A}}%
^{y}\otimes D_{\hat{B}B}^{y})]\\
+\frac{dI_{AB}-\Gamma_{AB}}{d^{2}-1}\otimes\sum_{y}p(y)\frac{1}{d}\times\\
\operatorname{Tr}[(\left(  I_{AB}-\Phi_{AB}\right)  \otimes T_{\hat{A}\hat{B}%
}(\Gamma_{\hat{A}\hat{B}}^{\mathcal{N}}))(E_{A\hat{A}}^{y}\otimes D_{\hat{B}%
B}^{y})].
\end{multline}
Then we find that%
\begin{align}
&  \sum_{y}p(y)\frac{1}{d}\operatorname{Tr}[(\Phi_{AB}\otimes T_{\hat{A}%
\hat{B}}(\Gamma_{\hat{A}\hat{B}}^{\mathcal{N}}))(E_{A\hat{A}}^{y}\otimes
D_{\hat{B}B}^{y})]\nonumber\\
&  =\sum_{y}p(y)\frac{1}{d}\operatorname{Tr}[\Phi_{AB}\operatorname{Tr}%
_{\hat{A}\hat{B}}[T_{\hat{A}\hat{B}}(\Gamma_{\hat{A}\hat{B}}^{\mathcal{N}%
})(E_{A\hat{A}}^{y}\otimes D_{\hat{B}B}^{y})]]\\
&  =\sum_{y}p(y)\frac{1}{d}\operatorname{Tr}[\Phi_{AB}\Gamma_{AB}%
^{\mathcal{D}^{y}\circ\mathcal{N}\circ\mathcal{E}^{y}}]\\
&  =\sum_{y}p(y)\operatorname{Tr}[\Phi_{AB}(\mathcal{D}_{\hat{B}\rightarrow
B}^{y}\circ\mathcal{N}_{\hat{A}\rightarrow\hat{B}}\circ\mathcal{E}%
_{A\rightarrow\hat{A}}^{y})(\Phi_{AB})]\\
&  =E_{F}(\mathcal{N}_{\hat{A}\rightarrow\hat{B}};\{p(y),\mathcal{E}%
_{A\rightarrow\hat{A}}^{y},\mathcal{D}_{\hat{B}\rightarrow B}^{y}\}_{y}).
\end{align}
Consider that%
\begin{align}
\operatorname{Tr}_{AB}[\Phi_{AB}(E_{A\hat{A}}^{y}\otimes D_{\hat{B}B}^{y})]
&  =\operatorname{Tr}_{AB}[\Phi_{AB}(T_{B}(E_{\hat{A}B}^{y})D_{\hat{B}B}%
^{y})]\nonumber\\
&  =\frac{1}{d}\operatorname{Tr}_{B}[T_{B}(E_{\hat{A}B}^{y})D_{\hat{B}B}%
^{y}],\\
\operatorname{Tr}_{AB}[I_{AB}(E_{A\hat{A}}^{y}\otimes D_{\hat{B}B}^{y})]  &
=\operatorname{Tr}_{AB}[E_{A\hat{A}}^{y}\otimes D_{\hat{B}B}^{y}]\\
&  =\operatorname{Tr}_{A}[E_{A\hat{A}}^{y}]\otimes I_{\hat{B}}.
\end{align}
Thus, we find that%
\begin{equation}
\widetilde{\mathcal{T}}_{AB}(\Upsilon_{A\hat{B}\hat{A}B})=\Gamma_{AB}\otimes
K_{\hat{A}\hat{B}}+\frac{dI_{AB}-\Gamma_{AB}}{d^{2}-1}\otimes L_{\hat{A}%
\hat{B}},
\end{equation}
where%
\begin{align}
K_{\hat{A}\hat{B}}  &  \coloneqq \frac{1}{d^{2}}\sum_{y}p(y)\operatorname{Tr}%
_{B}[T_{B}(E_{\hat{A}B}^{y})D_{\hat{B}B}^{y}],\\
L_{\hat{A}\hat{B}}  &  \coloneqq \frac{1}{d}\sum_{y}p(y)\operatorname{Tr}%
_{A}[E_{A\hat{A}}^{y}]\otimes I_{\hat{B}}-K_{\hat{A}\hat{B}}.
\end{align}
Observe that the operator%
\begin{equation}
\tau_{\hat{A}}\coloneqq \frac{1}{d}\sum_{y}p(y)\operatorname{Tr}_{A}%
[E_{A\hat{A}}^{y}]
\end{equation}
is a state, because it is positive semi-definite and has trace equal to one.

Consider that the Choi operator of the composition $\widetilde{\Upsilon
}_{(\hat{A}\rightarrow\hat{B})\rightarrow(A\rightarrow B)}(\mathcal{N}%
_{\hat{A}\rightarrow\hat{B}})$ is as follows%
\begin{equation}
\operatorname{Tr}_{\hat{A}\hat{B}}[T_{\hat{A}\hat{B}}(\Gamma_{\hat{A}\hat{B}%
}^{\mathcal{N}})\widetilde{\Upsilon}_{A\hat{B}\hat{A}B}]
\end{equation}
by applying the propagation rule in \eqref{eq:propagation rule}. This implies that the simulation
channel has the following Choi operator:%
\begin{align}
&  \operatorname{Tr}_{\hat{A}\hat{B}}[T_{\hat{A}\hat{B}}(\Gamma_{\hat{A}%
\hat{B}}^{\mathcal{N}})\widetilde{\Upsilon}_{A\hat{B}\hat{A}B}]\nonumber\\
&  =\Gamma_{AB}\operatorname{Tr}[T_{\hat{A}\hat{B}}(\Gamma_{\hat{A}\hat{B}%
}^{\mathcal{N}})K_{\hat{A}\hat{B}}]\nonumber\\
&  \qquad+\frac{dI_{AB}-\Gamma_{AB}}{d^{2}-1}\operatorname{Tr}[T_{\hat{A}%
\hat{B}}(\Gamma_{\hat{A}\hat{B}}^{\mathcal{N}})L_{\hat{A}\hat{B}}]\\
&  =\Gamma_{AB}\operatorname{Tr}[\Gamma_{\hat{A}\hat{B}}^{\mathcal{N}}%
T_{\hat{A}\hat{B}}(K_{\hat{A}\hat{B}})]\nonumber\\
&  \qquad+\frac{dI_{AB}-\Gamma_{AB}}{d^{2}-1}\operatorname{Tr}[\Gamma_{\hat
{A}\hat{B}}^{\mathcal{N}}T_{\hat{A}\hat{B}}(L_{\hat{A}\hat{B}})].
\end{align}
Observing that the optimization does not change under the substitutions
$T_{\hat{A}\hat{B}}(K_{\hat{A}\hat{B}})\rightarrow K_{\hat{A}\hat{B}}$ and
$T_{\hat{A}\hat{B}}(L_{\hat{A}\hat{B}})\rightarrow L_{\hat{A}\hat{B}}$, we
find that the Choi operator is given by%
\begin{multline}
\operatorname{Tr}_{\hat{A}\hat{B}}[T_{\hat{A}\hat{B}}(\Gamma_{\hat{A}\hat{B}%
}^{\mathcal{N}})\widetilde{\Upsilon}_{A\hat{B}\hat{A}B}]\\
=\Gamma_{AB}\operatorname{Tr}[K_{\hat{A}\hat{B}}\Gamma_{\hat{A}\hat{B}%
}^{\mathcal{N}}]+\frac{dI_{AB}-\Gamma_{AB}}{d^{2}-1}\operatorname{Tr}%
[L_{\hat{A}\hat{B}}\Gamma_{\hat{A}\hat{B}}^{\mathcal{N}}].
\end{multline}
This corresponds to a channel of the following form:%
\begin{equation}
\operatorname{Tr}[K_{\hat{A}\hat{B}}\Gamma_{\hat{A}\hat{B}}^{\mathcal{N}%
}]\operatorname{id}_{A\rightarrow B}^{d}+\operatorname{Tr}[L_{\hat{A}\hat{B}%
}\Gamma_{\hat{A}\hat{B}}^{\mathcal{N}}]\mathcal{D}_{A\rightarrow B},
\label{eq:LOCR-red}%
\end{equation}
where the randomizing channel $\mathcal{D}_{A\rightarrow B}$ is defined in
\eqref{eq:randomizing channel}. The channel in \eqref{eq:LOCR-red} has the interpretation that the
identity channel $\operatorname{id}_{A\rightarrow B}^{d}$ is applied with
probability $\operatorname{Tr}[K_{\hat{A}\hat{B}}\Gamma_{\hat{A}\hat{B}%
}^{\mathcal{N}}]$ and the randomizing channel $\mathcal{D}_{A\rightarrow B}$
is applied with probability $\operatorname{Tr}[L_{\hat{A}\hat{B}}\Gamma
_{\hat{A}\hat{B}}^{\mathcal{N}}]$. The total probability is indeed equal to
one because%
\begin{align}
\operatorname{Tr}[K_{\hat{A}\hat{B}}\Gamma_{\hat{A}\hat{B}}^{\mathcal{N}%
}]+\operatorname{Tr}[L_{\hat{A}\hat{B}}\Gamma_{\hat{A}\hat{B}}^{\mathcal{N}}]
&  =\operatorname{Tr}[(\tau_{\hat{A}}\otimes I_{\hat{B}})\Gamma_{\hat{A}%
\hat{B}}^{\mathcal{N}}]\\
&  =1.
\end{align}
This justifies the claim in \eqref{eq:ent-fid-ch-spec-form}.

The claim in \eqref{eq:dd_fid_err_equality_QEC}, that the simulation errors are equal, follows the same
reasoning given in Appendix \ref{app:proof-errors-equal}.

\section{Proof of Proposition~\ref{prop:simplified-SDP-two-ext-sim-id-QEC}}

\label{app:proof-approx-QEC-SDP-simplify}

The proof of Proposition~\ref{prop:simplified-SDP-two-ext-sim-id-QEC} bears
some similarities with the proof of Proposition~\ref{prop:simplified-SDP-two-ext-sim-id}. However, there are some key
differences that we detail here. As before, symmetry plays a critical role.

Our goal is to minimize the following objective function%
\begin{equation}
\frac{1}{2}\left\Vert \Theta_{(\hat{A}\rightarrow\hat{B})\rightarrow
(A\rightarrow B)}(\mathcal{N}_{\hat{A}\rightarrow\hat{B}})-\operatorname{id}%
_{A\rightarrow B}^{d}\right\Vert _{\diamond}, \label{eq:sim-err-QEC-app}%
\end{equation}
with respect to every two-PPT-extendible non-signaling superchannel
$\Theta\equiv\Theta_{(\hat{A}\rightarrow\hat{B})\rightarrow(A\rightarrow B)}$ (see Section~\ref{sec_2-ppt-ext-non-sig-superchannels}),
where $\mathcal{N}_{\hat{A}\rightarrow\hat{B}}$ is a given channel and
$\operatorname{id}_{A\rightarrow B}^{d}$ is the $d$-dimensional identity
channel. By exploiting the covariance symmetry of the identity channel
$\operatorname{id}_{A\rightarrow B}^{d}$, as given in \eqref{eq:identity-ch-symmetries}, it suffices to
optimize over the set of two-PPT-extendible non-signaling superchannels that
satisfy the same symmetry. Let $\mathcal{K}_{A\hat{B}\rightarrow\hat{A}B}$ be
the bipartite channel corresponding to $\Theta$, and since $\Theta$ is
two-PPT-extendible and non-signaling, this implies that there exists an
extension channel $\mathcal{M}_{A\hat{B}_{1}\hat{B}_{2}\rightarrow\hat{A}%
B_{1}B_{2}}$ satisfying the constraints for two-PPT-extendibility and
non-signaling. Following the same arguments given in Appendix~\ref{app:twirled-2e-opt}, the simulation
error in \eqref{eq:sim-err-QEC-app} is minimized by a symmetrized or twirled
superchannel $\widetilde{\Theta}_{(\hat{A}\rightarrow\hat{B})\rightarrow
(A\rightarrow B)}$, such that its corresponding bipartite channel
$\widetilde{\mathcal{K}}_{A\hat{B}\rightarrow\hat{A}B}$ is twirled, as well as
its extension $\widetilde{\mathcal{M}}_{A\hat{B}_{1}\hat{B}_{2}\rightarrow
\hat{A}B_{1}B_{2}}$, so that%
\begin{equation}
\widetilde{\mathcal{K}}_{A\hat{B}\rightarrow\hat{A}B}\coloneqq \int
dU\ \mathcal{U}_{B}^{\dag}\circ\mathcal{K}_{A\hat{B}\rightarrow\hat{A}B}%
\circ\mathcal{U}_{A},
\end{equation}%
\begin{multline}
\widetilde{\mathcal{M}}_{A\hat{B}_{1}\hat{B}_{2}\rightarrow\hat{A}B_{1}B_{2}%
}\coloneqq\\
\int dU\ (\mathcal{U}_{B_{1}}^{\dag}\otimes\mathcal{U}_{B_{2}}^{\dag}%
)\circ\mathcal{M}_{A\hat{B}_{1}\hat{B}_{2}\rightarrow\hat{A}B_{1}B_{2}}%
\circ\mathcal{U}_{A}.
\end{multline}
The corresponding Choi operators $\widetilde{K}_{A\hat{B}\hat{A}B}$ and
$\widetilde{M}_{A\hat{B}_{1}\hat{B}_{2}\hat{A}B_{1}B_{2}}$ satisfy%
\begin{equation}
\widetilde{K}_{A\hat{B}\hat{A}B}=\int dU\ (\overline{\mathcal{U}}_{A}%
\otimes\mathcal{U}_{B})(K_{A\hat{B}\hat{A}B}),
\end{equation}%
\begin{multline}
\widetilde{M}_{A\hat{B}_{1}\hat{B}_{2}\hat{A}B_{1}B_{2}}=\\
\int dU\ (\overline{\mathcal{U}}_{A}\otimes\mathcal{U}_{B_{1}}\otimes
\mathcal{U}_{B_{2}})(M_{A\hat{B}_{1}\hat{B}_{2}\hat{A}B_{1}B_{2}}).
\end{multline}
As before, this implies that the Choi operator of $\widetilde{M}_{A\hat{B}%
_{1}\hat{B}_{2}\hat{A}B_{1}B_{2}}$ has the following form:%
\begin{equation}
\widetilde{M}_{A\hat{B}_{1}\hat{B}_{2}\hat{A}B_{1}B_{2}}=d\sum_{i\in\left\{
+,-,0,1,2,3\right\}  }M_{\hat{A}\hat{B}_{1}\hat{B}_{2}}^{i}\otimes
\frac{S_{AB_{1}B_{2}}^{i}}{\operatorname{Tr}[S_{AB_{1}B_{2}}^{g(i)}]},
\label{eq:superch-symm-QEC}%
\end{equation}
where the $S_{AB_{1}B_{2}}^{i}$ operators are defined in \eqref{eq:S-op-+}--\eqref{eq:S-op-3} and the
function $g(i)$ in \eqref{eq:S_indexing}. We now consider the individual constraints on
$\widetilde{M}_{A\hat{B}_{1}\hat{B}_{2}\hat{A}B_{1}B_{2}}$, which are imposed
on it such that it is a two-PPT-extendible non-signaling Choi operator.

\subsection{Complete positivity condition}

Complete positivity is equivalent to $\widetilde{M}_{A\hat{B}_{1}\hat{B}%
_{2}\hat{A}B_{1}B_{2}}$ being positive semi-definite. By applying the
development in Appendix~\ref{Complete_positivity_tel}, we conclude that $\widetilde{M}_{A\hat{B}%
_{1}\hat{B}_{2}\hat{A}B_{1}B_{2}}\geq0$ if and only if%
\begin{align}
M_{\hat{A}\hat{B}_{1}\hat{B}_{2}}^{+}  &  \geq0,\\
M_{\hat{A}\hat{B}_{1}\hat{B}_{2}}^{-}  &  \geq0,\\%
\begin{bmatrix}
M^{0}+M^{3} & M^{1}-iM^{2}\\
M^{1}+iM^{2} & M^{0}-M^{3}%
\end{bmatrix}
&  \geq0.
\end{align}
As justified previously in Appendix~\ref{Complete_positivity_tel}, the last inequality implies that $M^0 \geq 0$. This justifies the constraints in \eqref{eq:obj-func-qec-simp}--\eqref{eq:M-matrix-PSD-qec-constr}.

\subsection{Trace preservation condition}

Trace preservation is equivalent to%
\begin{equation}
\operatorname{Tr}_{\hat{A}B_{1}B_{2}}[\widetilde{M}_{A\hat{B}_{1}\hat{B}%
_{2}\hat{A}B_{1}B_{2}}]=I_{A\hat{B}_{1}\hat{B}_{2}},
\end{equation}
which by applying \eqref{eq:superch-symm-QEC}, is equivalent to%
\begin{equation}
\sum_{i\in\left\{  +,-,0,1,2,3\right\}  }d\operatorname{Tr}_{\hat{A}}%
[M_{\hat{A}\hat{B}_{1}\hat{B}_{2}}^{i}]\otimes\frac{\operatorname{Tr}%
_{B_{1}B_{2}}[S_{AB_{1}B_{2}}^{i}]}{\operatorname{Tr}[S_{AB_{1}B_{2}}^{g(i)}%
]}=I_{A\hat{B}_{1}\hat{B}_{2}}.
\end{equation}
Applying the analysis in Appendix~\ref{sec_trace_preserve_tel}, we find that%
\begin{multline}
\sum_{i\in\left\{  +,-,0,1,2,3\right\}  }d\operatorname{Tr}_{\hat{A}}%
[M_{\hat{A}\hat{B}_{1}\hat{B}_{2}}^{i}]\otimes\frac{\operatorname{Tr}%
_{B_{1}B_{2}}[S_{AB_{1}B_{2}}^{i}]}{\operatorname{Tr}[S_{AB_{1}B_{2}}^{g(i)}%
]}\\
=\operatorname{Tr}_{\hat{A}}[M_{\hat{A}\hat{B}_{1}\hat{B}_{2}}^{+}+M_{\hat
{A}\hat{B}_{1}\hat{B}_{2}}^{-}+M_{\hat{A}\hat{B}_{1}\hat{B}_{2}}^{0}]\otimes
I_{A},
\end{multline}
so that trace preservation is equivalent to the following condition:%
\begin{equation}
\operatorname{Tr}_{\hat{A}}[M_{\hat{A}\hat{B}_{1}\hat{B}_{2}}^{+}+M_{\hat
{A}\hat{B}_{1}\hat{B}_{2}}^{-}+M_{\hat{A}\hat{B}_{1}\hat{B}_{2}}^{0}]\otimes
I_{A}=I_{A\hat{B}_{1}\hat{B}_{2}}%
\end{equation}
which in turn is equivalent to%
\begin{equation}
\operatorname{Tr}_{\hat{A}}[M_{\hat{A}\hat{B}_{1}\hat{B}_{2}}^{+}+M_{\hat
{A}\hat{B}_{1}\hat{B}_{2}}^{-}+M_{\hat{A}\hat{B}_{1}\hat{B}_{2}}^{0}%
]=I_{\hat{B}_{1}\hat{B}_{2}}.
\end{equation}
This justifies the constraint in \eqref{eq:TP-constr-simp-QEC}.

\subsection{Non-signaling conditions}

We have two different non-signaling conditions:%
\begin{align}
\operatorname{Tr}_{\hat{A}}[\widetilde{M}_{A\hat{B}_{1}\hat{B}_{2}\hat{A}%
B_{1}B_{2}}] &  =I_{A}\otimes\frac{1}{d}\operatorname{Tr}_{A\hat{A}%
}[\widetilde{M}_{A\hat{B}_{1}\hat{B}_{2}\hat{A}B_{1}B_{2}}%
],\label{eq:a-nonsig-cond}\\
\operatorname{Tr}_{B_{2}}[\widetilde{M}_{A\hat{B}_{1}\hat{B}_{2}\hat{A}%
B_{1}B_{2}}] &  =I_{\hat{B}_{2}}\otimes\frac{1}{d_{\hat{B}}}\operatorname{Tr}%
_{\hat{B}_{2}B_{2}}[\widetilde{M}_{A\hat{B}_{1}\hat{B}_{2}\hat{A}B_{1}B_{2}}].
\label{eq:b-non-sig-cond-qec}
\end{align}
The second condition we have already investigated in Appendix~\ref{app:non-sig-tele}, and we found
that it reduces to the following:%
\begin{align}
P_{\hat{A}\hat{B}_{1}\hat{B}_{2}} &  =\frac{1}{d_{\hat{B}}}\operatorname{Tr}%
_{\hat{B}_{2}}[P_{\hat{A}\hat{B}_{1}\hat{B}_{2}}]\otimes I_{\hat{B}_{2}}, \label{eq:non-sig-P-QEC-marg}\\
Q_{\hat{A}\hat{B}_{1}\hat{B}_{2}} &  =\frac{1}{d_{\hat{B}}}\operatorname{Tr}%
_{\hat{B}_{2}}[Q_{\hat{A}\hat{B}_{1}\hat{B}_{2}}]\otimes I_{\hat{B}_{2}},
\label{eq:non-sig-Q-QEC-marg}
\end{align}
where%
\begin{align}
P_{\hat{A}\hat{B}_{1}\hat{B}_{2}}  & \coloneqq\frac{1}{2d}[dM^{0}+M^{1}%
+\sqrt{d^{2}-1}M^{2}],\\
Q_{\hat{A}\hat{B}_{1}\hat{B}_{2}}  & \coloneqq\frac{1}{2d}\left[
\begin{array}
[c]{c}%
2d\left(  M_{\hat{A}\hat{B}_{1}\hat{B}_{2}}^{+}+M_{\hat{A}\hat{B}_{1}\hat
{B}_{2}}^{-}\right)  +dM_{\hat{A}\hat{B}_{1}\hat{B}_{2}}^{0}\\
-M_{\hat{A}\hat{B}_{1}\hat{B}_{2}}^{1}-\sqrt{d^{2}-1}M_{\hat{A}\hat{B}_{1}%
\hat{B}_{2}}^{2}
\end{array}
\right]  ,
\end{align}
Note that we need to incorporate the extra condition on $Q_{\hat{A}\hat{B}%
_{1}\hat{B}_{2}}$ because \eqref{eq:TP-constr-tele} no longer holds---we instead have \eqref{eq:TP-constr-simp-QEC} for the case of
approximate quantum error correction and we cannot use this to eliminate the variable $Q_{\hat{A}\hat{B}_{1}\hat{B}_{2}}$.

Let us now consider the condition in \eqref{eq:a-nonsig-cond}. Consider that%
\begin{align}
&  \operatorname{Tr}_{\hat{A}}[\widetilde{M}_{A\hat{B}_{1}\hat{B}_{2}\hat
{A}B_{1}B_{2}}]\nonumber\\
&  =\sum_{i\in\left\{  +,-,0,1,2,3\right\}  }\frac{d\operatorname{Tr}_{\hat
{A}}[M_{\hat{A}\hat{B}_{1}\hat{B}_{2}}^{i}]}{\operatorname{Tr}[S_{AB_{1}B_{2}%
}^{g(i)}]}\otimes S_{AB_{1}B_{2}}^{i}\\
&  =\frac{2\operatorname{Tr}_{\hat{A}}[M_{\hat{A}\hat{B}_{1}\hat{B}_{2}}^{+}%
]}{\left(  d+2\right)  \left(  d-1\right)  }\otimes S_{AB_{1}B_{2}}%
^{+}\nonumber\\
&  \qquad+\frac{2\operatorname{Tr}_{\hat{A}}[M_{\hat{A}\hat{B}_{1}\hat{B}_{2}%
}^{-}]}{\left(  d-2\right)  \left(  d+1\right)  }\otimes S_{AB_{1}B_{2}}%
^{-}\nonumber\\
&  \qquad+\sum_{i\in\left\{  0,1,2,3\right\}  }\frac{\operatorname{Tr}%
_{\hat{A}}[M_{\hat{A}\hat{B}_{1}\hat{B}_{2}}^{i}]}{2}\otimes S_{AB_{1}B_{2}%
}^{i}.\label{eq:non-sig-a-particular}%
\end{align}
Also,%
\begin{multline}
I_{A}\otimes\frac{1}{d}\operatorname{Tr}_{A\hat{A}}[\widetilde{M}_{A\hat
{B}_{1}\hat{B}_{2}\hat{A}B_{1}B_{2}}]=\\
\sum_{i\in\left\{  +,-,0,1,2,3\right\}  }\frac{\operatorname{Tr}_{\hat{A}%
}[M_{\hat{A}\hat{B}_{1}\hat{B}_{2}}^{i}]}{\operatorname{Tr}[S_{AB_{1}B_{2}%
}^{g(i)}]}\otimes I_{A}\otimes\operatorname{Tr}_{A}[S_{AB_{1}B_{2}}^{i}].
\end{multline}
Consider that%
\begin{align}
\operatorname{Tr}_{A}[I_{AB_{1}B_{2}}] &  =dI_{B_{1}B_{2}},\\
\operatorname{Tr}_{A}[V_{AB_{1}}^{T_{A}}] &  =I_{B_{1}B_{2}},\\
\operatorname{Tr}_{A}[V_{AB_{2}}^{T_{A}}] &  =I_{B_{1}B_{2}},\\
\operatorname{Tr}_{A}[V_{B_{1}B_{2}}] &  =dV_{B_{1}B_{2}},\\
\operatorname{Tr}_{A}[V_{AB_{1}B_{2}}^{T_{A}}] &  =V_{B_{1}B_{2}},\\
\operatorname{Tr}_{A}[V_{B_{2}B_{1}A}^{T_{A}}] &  =V_{B_{1}B_{2}},
\end{align}
which from \eqref{eq:S-op-+}--\eqref{eq:S-op-3} implies that%
\begin{align}
\operatorname{Tr}_{A}[S_{AB_{1}B_{2}}^{+}] &  =\frac{1}{2}\left[
\begin{array}
[c]{c}%
dI_{B_{1}B_{2}}+dV_{B_{1}B_{2}}\\
-\left(  \frac{I_{B_{1}B_{2}}+I_{B_{1}B_{2}}+V_{B_{1}B_{2}}+V_{B_{1}B_{2}}%
}{d+1}\right)
\end{array}
\right]  \nonumber\\
&  =\frac{\left(  d+2\right)  \left(  d-1\right)  }{d+1}\Pi_{B_{1}B_{2}%
}^{\mathcal{S}},\\
\operatorname{Tr}_{A}[S_{AB_{1}B_{2}}^{-}] &  =\frac{1}{2}\left[
\begin{array}
[c]{c}%
dI_{B_{1}B_{2}}-dV_{B_{1}B_{2}}\\
-\left(  \frac{I_{B_{1}B_{2}}+I_{B_{1}B_{2}}-V_{B_{1}B_{2}}-V_{B_{1}B_{2}}%
}{d-1}\right)
\end{array}
\right]  \nonumber\\
&  =\frac{\left(  d-2\right)  \left(  d+1\right)  }{d-1}\Pi_{B_{1}B_{2}%
}^{\mathcal{A}},\\
\operatorname{Tr}_{A}[S_{AB_{1}B_{2}}^{0}] &  =\frac{1}{d^{2}-1}\left[
\begin{array}
[c]{c}%
d\left(  I_{B_{1}B_{2}}+I_{B_{1}B_{2}}\right)  \\
-\left(  V_{B_{1}B_{2}}+V_{B_{1}B_{2}}\right)
\end{array}
\right]  \nonumber\\
&  =\frac{2\left(  dI_{B_{1}B_{2}}-V_{B_{1}B_{2}}\right)  }{d^{2}%
-1}\nonumber\\
&  =\frac{2}{d+1}\Pi_{B_{1}B_{2}}^{\mathcal{S}}+\frac{2}{d-1}\Pi_{B_{1}B_{2}%
}^{\mathcal{A}},\\
\operatorname{Tr}_{A}[S_{AB_{1}B_{2}}^{1}] &  =\frac{1}{d^{2}-1}\left[
\begin{array}
[c]{c}%
d\left(  V_{B_{1}B_{2}}+V_{B_{1}B_{2}}\right)  \\
-\left(  I_{B_{1}B_{2}}+I_{B_{1}B_{2}}\right)
\end{array}
\right]  \nonumber\\
&  =\frac{2\left(  dV_{B_{1}B_{2}}-I_{B_{1}B_{2}}\right)  }{d^{2}%
-1}\nonumber\\
&  =\frac{2}{d+1}\Pi_{B_{1}B_{2}}^{\mathcal{S}}-\frac{2}{d-1}\Pi_{B_{1}B_{2}%
}^{\mathcal{A}},\\
\operatorname{Tr}_{A}[S_{AB_{1}B_{2}}^{2}] &  =\frac{1}{\sqrt{d^{2}-1}}\left(
I_{B_{1}B_{2}}-I_{B_{1}B_{2}}\right)  =0,\\
\operatorname{Tr}_{A}[S_{AB_{1}B_{2}}^{3}] &  =\frac{i}{\sqrt{d^{2}-1}}\left(
V_{B_{1}B_{2}}-V_{B_{1}B_{2}}\right)  =0,
\end{align}
where we have defined the projections onto the symmetric and antisymmetric subspaces of systems $B_1B_2 $ as
\begin{align}
    \Pi_{B_{1}B_{2}}^{\mathcal{S}} & \coloneqq \frac{I_{B_{1}B_{2}} + V_{B_{1}B_{2}}}{2} \label{eq:sym-sub-proj} \\
    \Pi_{B_{1}B_{2}}^{\mathcal{A}} & \coloneqq \frac{I_{B_{1}B_{2}} - V_{B_{1}B_{2}}}{2}.
    \label{eq:antisym-sub-proj}
\end{align}
Note that
\begin{align}
    \Pi_{B_{1}B_{2}}^{\mathcal{S}} \Pi_{B_{1}B_{2}}^{\mathcal{A}} & = 0, \\
    \operatorname{Tr}[\Pi_{B_{1}B_{2}}^{\mathcal{S}}] & = \frac{d(d+1)}{2}, \\
    \operatorname{Tr}[\Pi_{B_{1}B_{2}}^{\mathcal{A}}] & = \frac{d(d-1)}{2}.
\end{align}
So we find that%
\begin{align}
&  I_{A}\otimes\frac{1}{d}\operatorname{Tr}_{A\hat{A}}[\widetilde{M}_{A\hat
{B}_{1}\hat{B}_{2}\hat{A}B_{1}B_{2}}]\nonumber\\
&  =\frac{2\operatorname{Tr}_{\hat{A}}[M_{\hat{A}\hat{B}_{1}\hat{B}_{2}}^{+}%
]}{d\left(  d+2\right)  \left(  d-1\right)  }\otimes I_{A}\otimes\frac{\left(
d+2\right)  \left(  d-1\right)  }{d+1}\Pi_{B_{1}B_{2}}^{\mathcal{S}%
}\nonumber\\
&  \qquad+\frac{2\operatorname{Tr}_{\hat{A}}[M_{\hat{A}\hat{B}_{1}\hat{B}_{2}%
}^{-}]}{d\left(  d-2\right)  \left(  d+1\right)  }\otimes I_{A}\otimes
\frac{\left(  d-2\right)  \left(  d+1\right)  }{d-1}\Pi_{B_{1}B_{2}%
}^{\mathcal{A}}\nonumber\\
&  \qquad+\frac{\operatorname{Tr}_{\hat{A}}[M_{\hat{A}\hat{B}_{1}\hat{B}_{2}%
}^{0}]}{2d}\otimes I_{A}\otimes\left(  \frac{2}{d+1}\Pi_{B_{1}B_{2}%
}^{\mathcal{S}}+\frac{2}{d-1}\Pi_{B_{1}B_{2}}^{\mathcal{A}}\right)
\nonumber\\
&  \qquad+\frac{\operatorname{Tr}_{\hat{A}}[M_{\hat{A}\hat{B}_{1}\hat{B}_{2}%
}^{1}]}{2d}\otimes I_{A}\otimes\left(  \frac{2}{d+1}\Pi_{B_{1}B_{2}%
}^{\mathcal{S}}-\frac{2}{d-1}\Pi_{B_{1}B_{2}}^{\mathcal{A}}\right)  \\
&  =\frac{2\operatorname{Tr}_{\hat{A}}[M_{\hat{A}\hat{B}_{1}\hat{B}_{2}}^{+}%
]}{d\left(  d+1\right)  }\otimes I_{A}\otimes\Pi_{B_{1}B_{2}}^{\mathcal{S}%
}\nonumber\\
&  \qquad+\frac{2\operatorname{Tr}_{\hat{A}}[M_{\hat{A}\hat{B}_{1}\hat{B}_{2}%
}^{-}]}{d\left(  d-1\right)  }\otimes I_{A}\otimes\Pi_{B_{1}B_{2}%
}^{\mathcal{A}}\nonumber\\
&  +\operatorname{Tr}_{\hat{A}}[M_{\hat{A}\hat{B}_{1}\hat{B}_{2}}^{0}]\otimes
I_{A}\otimes\left(  \frac{1}{d\left(  d+1\right)  }\Pi_{B_{1}B_{2}%
}^{\mathcal{S}}+\frac{1}{d\left(  d-1\right)  }\Pi_{B_{1}B_{2}}^{\mathcal{A}%
}\right)  \nonumber\\
&  +\operatorname{Tr}_{\hat{A}}[M_{\hat{A}\hat{B}_{1}\hat{B}_{2}}^{1}]\otimes
I_{A}\otimes\left(  \frac{1}{d\left(  d+1\right)  }\Pi_{B_{1}B_{2}%
}^{\mathcal{S}}-\frac{1}{d\left(  d-1\right)  }\Pi_{B_{1}B_{2}}^{\mathcal{A}%
}\right)  \\
&  =\frac{\operatorname{Tr}_{\hat{A}}[2M^{+}+M^{0}+M^{1}]}{d\left(
d+1\right)  }\otimes I_{A}\otimes\Pi_{B_{1}B_{2}}^{\mathcal{S}}\\
&  \qquad+\frac{\operatorname{Tr}_{\hat{A}}[2M^{-}+M^{0}-M^{1}]}{d\left(
d-1\right)  }\otimes I_{A}\otimes\Pi_{B_{1}B_{2}}^{\mathcal{A}}.
\end{align}
Now consider that%
\begin{align}
&  I_{A}\otimes\Pi_{B_{1}B_{2}}^{\mathcal{S}}\nonumber\\
&  =\sum_{i\in\left\{  +,-,0,1,2,3\right\}  }\frac{\operatorname{Tr}%
[(I_{A}\otimes\Pi_{B_{1}B_{2}}^{\mathcal{S}})S_{AB_{1}B_{2}}^{i}%
]}{\operatorname{Tr}[S_{AB_{1}B_{2}}^{g(i)}]}\otimes S_{AB_{1}B_{2}}^{i}\\
&  =\sum_{i\in\left\{  +,-,0,1,2,3\right\}  }\frac{\operatorname{Tr}%
[\Pi_{B_{1}B_{2}}^{\mathcal{S}}\operatorname{Tr}_{A}[S_{AB_{1}B_{2}}^{i}%
]]}{\operatorname{Tr}[S_{AB_{1}B_{2}}^{g(i)}]}\otimes S_{AB_{1}B_{2}}^{i}\\
&  =\frac{\left(  d+2\right)  \left(  d-1\right)  }{d+1}\frac
{\operatorname{Tr}[\Pi_{B_{1}B_{2}}^{\mathcal{S}}]}{d\left(  d+2\right)
\left(  d-1\right)  /2}\otimes S_{AB_{1}B_{2}}^{+}\nonumber\\
&  +\frac{1}{2d}\operatorname{Tr}\left[  \Pi_{B_{1}B_{2}}^{\mathcal{S}}\left(
\frac{2}{d+1}\Pi_{B_{1}B_{2}}^{\mathcal{S}}+\frac{2}{d-1}\Pi_{B_{1}B_{2}%
}^{\mathcal{A}}\right)  \right]  \otimes S_{AB_{1}B_{2}}^{0}\nonumber\\
&  +\frac{1}{2d}\operatorname{Tr}\left[  \Pi_{B_{1}B_{2}}^{\mathcal{S}}\left(
\frac{2}{d+1}\Pi_{B_{1}B_{2}}^{\mathcal{S}}-\frac{2}{d-1}\Pi_{B_{1}B_{2}%
}^{\mathcal{A}}\right)  \right]  \otimes S_{AB_{1}B_{2}}^{1}\\
&  =S_{AB_{1}B_{2}}^{+}+\frac{1}{2}\left(  S_{AB_{1}B_{2}}^{0}+S_{AB_{1}B_{2}%
}^{1}\right)  .
\end{align}
We also have that%
\begin{align}
&  I_{A}\otimes\Pi_{B_{1}B_{2}}^{\mathcal{A}}\nonumber\\
&  =\sum_{i\in\left\{  +,-,0,1,2,3\right\}  }\frac{\operatorname{Tr}%
[(I_{A}\otimes\Pi_{B_{1}B_{2}}^{\mathcal{A}})S_{AB_{1}B_{2}}^{i}%
]}{\operatorname{Tr}[S_{AB_{1}B_{2}}^{g(i)}]}\otimes S_{AB_{1}B_{2}}^{i}\\
&  =\sum_{i\in\left\{  +,-,0,1,2,3\right\}  }\frac{\operatorname{Tr}%
[\Pi_{B_{1}B_{2}}^{\mathcal{A}}\operatorname{Tr}_{A}[S_{AB_{1}B_{2}}^{i}%
]]}{\operatorname{Tr}[S_{AB_{1}B_{2}}^{g(i)}]}\otimes S_{AB_{1}B_{2}}^{i}\\
&  =\frac{\left(  d-2\right)  \left(  d+1\right)  }{d-1}\frac
{\operatorname{Tr}[\Pi_{B_{1}B_{2}}^{\mathcal{A}}]}{d\left(  d-2\right)
\left(  d+1\right)  /2}\otimes S_{AB_{1}B_{2}}^{-}\nonumber\\
&  +\frac{1}{2d}\operatorname{Tr}\left[  \Pi_{B_{1}B_{2}}^{\mathcal{A}}\left(
\frac{2}{d+1}\Pi_{B_{1}B_{2}}^{\mathcal{S}}+\frac{2}{d-1}\Pi_{B_{1}B_{2}%
}^{\mathcal{A}}\right)  \right]  \otimes S_{AB_{1}B_{2}}^{0}\nonumber\\
&  +\frac{1}{2d}\operatorname{Tr}\left[  \Pi_{B_{1}B_{2}}^{\mathcal{A}}\left(
\frac{2}{d+1}\Pi_{B_{1}B_{2}}^{\mathcal{S}}-\frac{2}{d-1}\Pi_{B_{1}B_{2}%
}^{\mathcal{A}}\right)  \right]  \otimes S_{AB_{1}B_{2}}^{1}\\
&  =S_{AB_{1}B_{2}}^{-}+\frac{1}{2}\left(  S_{AB_{1}B_{2}}^{0}-S_{AB_{1}B_{2}%
}^{1}\right)  .
\end{align}
We then find that%
\begin{align}
&  I_{A}\otimes\frac{1}{d}\operatorname{Tr}_{A\hat{A}}[\widetilde{M}_{A\hat
{B}_{1}\hat{B}_{2}\hat{A}B_{1}B_{2}}]\nonumber\\
&  =\frac{\operatorname{Tr}_{\hat{A}}[2M^{+}+M^{0}+M^{1}]}{d\left(
d+1\right)  }\otimes I_{A}\otimes\Pi_{B_{1}B_{2}}^{\mathcal{S}}\nonumber\\
&  \qquad+\frac{\operatorname{Tr}_{\hat{A}}[2M^{-}+M^{0}-M^{1}]}{d\left(
d-1\right)  }\otimes I_{A}\otimes\Pi_{B_{1}B_{2}}^{\mathcal{A}}\\
&  =\frac{\operatorname{Tr}_{\hat{A}}[2M^{+}+M^{0}+M^{1}]}{d\left(
d+1\right)  }\otimes\left(  S_{AB_{1}B_{2}}^{+}+\frac{1}{2}\left(
S_{AB_{1}B_{2}}^{0}+S_{AB_{1}B_{2}}^{1}\right)  \right)  \nonumber\\
&  +\frac{\operatorname{Tr}_{\hat{A}}[2M^{-}+M^{0}-M^{1}]}{d\left(
d-1\right)  }\otimes\left(  S_{AB_{1}B_{2}}^{-}+\frac{1}{2}\left(
S_{AB_{1}B_{2}}^{0}-S_{AB_{1}B_{2}}^{1}\right)  \right)  \\
&  =\frac{\operatorname{Tr}_{\hat{A}}[2M^{+}+M^{0}+M^{1}]}{d\left(
d+1\right)  }\otimes S_{AB_{1}B_{2}}^{+}\nonumber\\
&  \qquad+\frac{\operatorname{Tr}_{\hat{A}}[2M^{-}+M^{0}-M^{1}]}{d\left(
d-1\right)  }\otimes S_{AB_{1}B_{2}}^{-}\nonumber\\
&  \qquad+\frac{1}{2d}\left(
\begin{array}
[c]{c}%
\frac{\operatorname{Tr}_{\hat{A}}[2M^{+}+M^{0}+M^{1}]}{d+1}\\
+\frac{\operatorname{Tr}_{\hat{A}}[2M^{-}+M^{0}-M^{1}]}{d-1}%
\end{array}
\right)  \otimes S_{AB_{1}B_{2}}^{0}\nonumber\\
&  \qquad+\frac{1}{2d}\left(
\begin{array}
[c]{c}%
\frac{\operatorname{Tr}_{\hat{A}}[2M^{+}+M^{0}+M^{1}]}{d+1}\\
-\frac{\operatorname{Tr}_{\hat{A}}[2M^{-}+M^{0}-M^{1}]}{d-1}%
\end{array}
\right)  \otimes S_{AB_{1}B_{2}}^{1}%
\end{align}%
\begin{align}
&  =\frac{\operatorname{Tr}_{\hat{A}}[2M^{+}+M^{0}+M^{1}]}{d\left(
d+1\right)  }\otimes S_{AB_{1}B_{2}}^{+}\nonumber\\
&  +\frac{\operatorname{Tr}_{\hat{A}}[2M^{-}+M^{0}-M^{1}]}{d\left(
d-1\right)  }\otimes S_{AB_{1}B_{2}}^{-}\nonumber\\
&  +\frac{\operatorname{Tr}_{\hat{A}}[\left(  d-1\right)  M^{+}+\left(
d+1\right)  M^{-}+dM^{0}-M^{1}]}{d\left(  d^{2}-1\right)  }\otimes
S_{AB_{1}B_{2}}^{0}\nonumber\\
&  +\frac{\operatorname{Tr}_{\hat{A}}[\left(  d-1\right)  M^{+}-\left(
d+1\right)  M^{-}-M^{0}+dM^{1}]}{d\left(  d^{2}-1\right)  }\otimes
S_{AB_{1}B_{2}}^{1}\\
&  =\frac{\operatorname{Tr}_{\hat{A}}[2M^{+}+M^{0}+M^{1}]}{d\left(
d+1\right)  }\otimes S_{AB_{1}B_{2}}^{+}\nonumber\\
&  +\frac{\operatorname{Tr}_{\hat{A}}[2M^{-}+M^{0}-M^{1}]}{d\left(
d-1\right)  }\otimes S_{AB_{1}B_{2}}^{-}\nonumber\\
&  +\frac{dI_{\hat{B}_{1}\hat{B}_{2}}+\operatorname{Tr}_{\hat{A}}[M^{-}%
-M^{+}-M^{1}]}{d\left(  d^{2}-1\right)  }\otimes S_{AB_{1}B_{2}}%
^{0}\nonumber\\
&  +\frac{-I_{\hat{B}_{1}\hat{B}_{2}}+\operatorname{Tr}_{\hat{A}}%
[dM^{+}-dM^{-}+dM^{1}]}{d\left(  d^{2}-1\right)  }\otimes S_{AB_{1}B_{2}}^{1}.
\end{align}
Now considering \eqref{eq:a-nonsig-cond} and \eqref{eq:non-sig-a-particular},
as well as the facts stated after \eqref{S0_trace}, we find that the non-signaling
condition is equivalent to the following conditions:%
\begin{align}
\frac{\operatorname{Tr}_{\hat{A}}[2M^{+}]}{\left(  d+2\right)  \left(
d-1\right)  } &  =\frac{\operatorname{Tr}_{\hat{A}}[2M^{+}+M^{0}+M^{1}%
]}{d\left(  d+1\right)  },\label{eq:alm-final-non-sig-ch-1}\\
\frac{\operatorname{Tr}_{\hat{A}}[2M^{-}]}{\left(  d-2\right)  \left(
d+1\right)  } &  =\frac{\operatorname{Tr}_{\hat{A}}[2M^{-}+M^{0}-M^{1}%
]}{d\left(  d-1\right)  },\label{eq:alm-final-non-sig-ch-2}%
\end{align}%
\begin{multline}
\frac{1}{2}%
\begin{bmatrix}
\operatorname{Tr}_{\hat{A}}[M^{0}+M^{3}] & \operatorname{Tr}_{\hat{A}}%
[M^{1}-iM^{2}]\\
\operatorname{Tr}_{\hat{A}}[M^{1}+iM^{2}] & \operatorname{Tr}_{\hat{A}}%
[M^{0}-M^{3}]
\end{bmatrix}
=\label{eq:non-sig-A-final-sym}\\
\frac{1}{d\left(  d^{2}-1\right)  }%
\begin{bmatrix}
Y^{0} & Y^{1}\\
Y^{1} & Y^{0}%
\end{bmatrix}
.
\end{multline}
where%
\begin{align}
Y^{0} &  \coloneqq dI_{\hat{B}_{1}\hat{B}_{2}}+\operatorname{Tr}_{\hat{A}%
}[M^{-}-M^{+}-M^{1}],\\
Y^{1} &  \coloneqq-I_{\hat{B}_{1}\hat{B}_{2}}+\operatorname{Tr}_{\hat{A}%
}[dM^{+}-dM^{-}+dM^{1}].
\end{align}
Note that the last condition in \eqref{eq:non-sig-A-final-sym} simplifies
because it reduces to%
\begin{align}
\frac{1}{2}\operatorname{Tr}_{\hat{A}}[M^{0}+M^{3}]  & =\frac{Y^{0}}{d\left(
d^{2}-1\right)  }\\
& =\frac{1}{2}\operatorname{Tr}_{\hat{A}}[M^{0}-M^{3}],
\end{align}
which implies that $\operatorname{Tr}_{\hat{A}}[M^{3}]=0$. Similarly,%
\begin{align}
\frac{1}{2}\operatorname{Tr}_{\hat{A}}[M^{1}+iM^{2}]  & =\frac{Y^{1}}{d\left(
d^{2}-1\right)  }\\
& =\operatorname{Tr}_{\hat{A}}[M^{1}-iM^{2}],
\end{align}
implying that $\operatorname{Tr}_{\hat{A}}[M^{2}]=0$. So
\eqref{eq:non-sig-A-final-sym} simplifies to%
\begin{align}
\frac{1}{2} \operatorname{Tr}_{\hat{A}}[M^{0}]  & =\frac{Y^{0}}{d\left(  d^{2}-1\right)
},\label{eq:alm-final-non-sig-ch-3}\\
\frac{1}{2} \operatorname{Tr}_{\hat{A}}[M^{1}]  & =\frac{Y^{1}}{d\left(  d^{2}-1\right)
},\label{eq:alm-final-non-sig-ch-4}\\
\operatorname{Tr}_{\hat{A}}[M^{2}] &  = \operatorname{Tr}_{\hat{A}}[M^{3}]=0.
\end{align}

\subsection{Permutation covariance condition}

Permutation covariance is equivalent to%
\begin{equation}
\widetilde{M}_{A\hat{B}_{1}\hat{B}_{2}\hat{A}B_{1}B_{2}}=(\mathcal{F}_{\hat
{B}_{1}\hat{B}_{2}}\otimes\mathcal{F}_{B_{1}B_{2}})(\widetilde{M}_{A\hat
{B}_{1}\hat{B}_{2}\hat{A}B_{1}B_{2}}),
\end{equation}
which, by applying \eqref{eq:superch-symm-QEC}, is equivalent to%
\begin{multline}
\sum_{i\in\left\{  +,-,0,1,2,3\right\}  }d\mathcal{F}_{\hat{B}_{1}\hat{B}_{2}%
}(M_{\hat{A}\hat{B}_{1}\hat{B}_{2}}^{i})\otimes\frac{\mathcal{F}_{B_{1}B_{2}%
}(S_{AB_{1}B_{2}}^{i})}{\operatorname{Tr}[S_{AB_{1}B_{2}}^{g(i)}]}=\\
\sum_{i\in\left\{  +,-,0,1,2,3\right\}  }dM_{\hat{A}\hat{B}_{1}\hat{B}_{2}%
}^{i}\otimes\frac{S_{AB_{1}B_{2}}^{i}}{\operatorname{Tr}[S_{AB_{1}B_{2}%
}^{g(i)}]}.
\end{multline}
Applying the analysis in Appendix~\ref{Sec_perm_cov_tel}, we find that
permutation covariance is equivalent to the following conditions:%
\begin{align}
M_{\hat{A}\hat{B}_{1}\hat{B}_{2}}^{i}  &  =\mathcal{F}_{\hat{B}_{1}\hat{B}%
_{2}}(M_{\hat{A}\hat{B}_{1}\hat{B}_{2}}^{i})\quad\forall i\in\left\{
+,-,0,1\right\}  ,\\
M_{\hat{A}\hat{B}_{1}\hat{B}_{2}}^{j}  &  =-\mathcal{F}_{\hat{B}_{1}\hat
{B}_{2}}(M_{\hat{A}\hat{B}_{1}\hat{B}_{2}}^{j})\quad\forall j\in\left\{
2,3\right\}  .
\end{align}

\subsection{PPT\ constraints}

We finally consider the PPT constraints, which are equivalent to%
\begin{align}
T_{A\hat{A}}(\widetilde{M}_{A\hat{B}_{1}\hat{B}_{2}\hat{A}B_{1}B_{2}})  &
\geq0,\\
T_{A\hat{A}\hat{B}_{1}B_{1}}(\widetilde{M}_{A\hat{B}_{1}\hat{B}_{2}\hat
{A}B_{1}B_{2}})  &  \geq0.
\end{align}
We have already evaluated these precise conditions in Appendix~\ref{Sec_PPT_constraints}, and we found
that they are equivalent to the conditions listed in \eqref{eq:PPT-begin-QEC}--\eqref{eq:PPTA-end-QEC} and  \eqref{eq:PPTB-constr-qec-simp-1}--\eqref{eq:PPTB-constr-qec-simp-3}.

\subsection{Final evaluation of the normalized diamond distance objective
function}

We now finally evaluate the following objective function%
\begin{equation}
\frac{1}{2}\left\Vert \widetilde{\Theta}_{(\hat{A}\rightarrow\hat
{B})\rightarrow(A\rightarrow B)}(\mathcal{N}_{\hat{A}\rightarrow\hat{B}%
})-\operatorname{id}_{A\rightarrow B}^{d}\right\Vert _{\diamond},
\label{eq:obj-func-SDP-QEC-lower}%
\end{equation}
subject to the constraint that the twirled superchannel $\widetilde{\Theta
}_{(\hat{A}\rightarrow\hat{B})\rightarrow(A\rightarrow B)}$ is
two-PPT-extendible and non-signaling. By employing the SDP for the normalized
diamond distance in \eqref{eq:diamond-d-SDP}, as well as the propagation rule in \eqref{eq:propagation rule}, we find that
\eqref{eq:obj-func-SDP-QEC-lower} is equal to%
\begin{equation}
\inf_{\mu,Z_{AB}\geq0}\left\{
\begin{array}
[c]{c}%
\mu:\mu I_{A}\geq Z_{A},\\
Z_{AB}\geq\Gamma_{AB}-\operatorname{Tr}_{\hat{A}\hat{B}}[T_{\hat{A}\hat{B}%
}(\Gamma_{\hat{A}\hat{B}}^{\mathcal{N}})\widetilde{K}_{A\hat{B}\hat{A}B}]
\end{array}
\right\}  ,
\end{equation}
where $\widetilde{K}_{A\hat{B}\hat{A}B}$ is the Choi operator corresponding to
the twirled superchannel $\widetilde{\Theta}_{(\hat{A}\rightarrow\hat
{B})\rightarrow(A\rightarrow B)}$. Recalling \eqref{eq:b-non-sig-cond-qec}, \eqref{eq:non-sig-P-QEC-marg}--\eqref{eq:non-sig-Q-QEC-marg}, and \eqref{eq:no-sig-reduce-2}, this Choi operator is given by
\begin{multline}
    \widetilde{K}_{A\hat{B}_1\hat{A}B_1} = \frac{1}{d_{\hat{B}}}\operatorname{Tr}_{\hat{B}_{2}}[P_{\hat{A}\hat{B}_{1}%
\hat{B}_{2}}]\otimes\Gamma_{AB_{1}%
}\\
+\frac{1}{d_{\hat{B}}}\operatorname{Tr}_{\hat{B}_{2}}[Q_{\hat{A}\hat{B}%
_{1}\hat{B}_{2}}]\otimes\left(  \frac
{d I_{AB_{1}}-\Gamma_{AB_{1}}}{d^{2}-1}\right)  ,
\end{multline}
which implies that
\begin{multline}
\operatorname{Tr}_{\hat{A}\hat{B}}[T_{\hat{A}\hat{B}}(\Gamma_{\hat{A}\hat{B}}^{\mathcal{N}})\widetilde{K}_{A\hat{B}\hat{A}B}] =\\ 
\frac{1}{d_{\hat{B}}}\operatorname{Tr}[T_{\hat{A}\hat{B}}(\Gamma_{\hat{A}\hat{B}}^{\mathcal{N}})P_{\hat{A}\hat{B}_{1}%
\hat{B}_{2}}]\otimes\Gamma_{AB_{1}%
}\\
+\frac{1}{d_{\hat{B}}}\operatorname{Tr}[T_{\hat{A}\hat{B}}(\Gamma_{\hat{A}\hat{B}}^{\mathcal{N}})Q_{\hat{A}\hat{B}%
_{1}\hat{B}_{2}}]\otimes\left(  \frac
{d I_{AB_{1}}-\Gamma_{AB_{1}}}{d^{2}-1}\right)
\end{multline}
The rest of the analysis follows along the
lines given in Appendix~\ref{Sec_obj_func_tel}, and so we conclude the statement of Proposition~\ref{prop:simplified-SDP-two-ext-sim-id-QEC},
after minimizing the objective function over all two-PPT-extendible and
non-signaling twirled superchannels.

\subsection{The case $d=2$}

We follow the same reasoning from Appendix~\ref{sec:tel_SDP_2d}. Due to the similarities between the SDP in Proposition~\ref{prop:simplified-SDP-two-ext-sim-id} and the SDP in Proposition~\ref{prop:simplified-SDP-two-ext-sim-id-QEC}, it is clear that we should set $M_{\hat{A}\hat{B}_1\hat{B}_2}^- = 0$ and remove the constraints in \eqref{eq:PPT-begin-QEC-2} and \eqref{eq:PPTB-constr-qec-simp-2} when $d=2$. In addition, we also  remove the constraint in \eqref{eq:non-sig-B-simp-QEC}, as it comes from comparing the coefficients of $S_{AB_1B-2}^{-}$, which is equal to zero when $d=2$. This leads to the claim at the end of Proposition~\ref{prop:simplified-SDP-two-ext-sim-id-QEC}.

\section{Proof of Equation~\eqref{eq:sdp-approx-tele-just-ppt}}

\label{app:approx-tele-just-PPT}

In this appendix, we work out the SDP for approximate teleportation when using
PPT\ constraints alone. Here we do not show as many details as previous
appendices because the methods being used are similar. The simulation error
for approximate teleportation, when using a resource state $\rho_{\hat{A}%
\hat{B}}$ and a C-PPT-P channel $\mathcal{P}_{A\hat{A}\hat{B}\rightarrow B}%
$\ for free, is as follows:%
\begin{equation}
\inf_{\mathcal{P}\in\text{PPT}}\frac{1}{2}\left\Vert \mathcal{P}_{A\hat{A}%
\hat{B}\rightarrow B}\circ\mathcal{A}_{\hat{A}\hat{B}}^{\rho}%
-\operatorname{id}_{A\rightarrow B}^{d}\right\Vert _{\diamond},
\label{eq:init-PPT-opt-tele}
\end{equation}
where we have abbreviated the set of C-PPT-P\ channels by PPT. By the same
arguments given in \eqref{eq:twirl-optimal-best-1}--\eqref{eq:convexity-dd-1wl} (i.e., exploiting the unitary covariance
symmetry of the identity channel $\operatorname{id}_{A\rightarrow B}^{d}$), it
suffices to restrict the optimization in \eqref{eq:init-PPT-opt-tele} to twirled C-PPT-P\ channels,
which satisfy%
\begin{equation}
\mathcal{P}_{A\hat{A}\hat{B}\rightarrow B}=\int dU\ \mathcal{U}_{B}^{\dag
}\circ\mathcal{P}_{A\hat{A}\hat{B}\rightarrow B}\circ\mathcal{U}_{A}.
\end{equation}
The Choi operator $P_{AB\hat{A}\hat{B}}$ thus satisfies%
\begin{equation}
P_{AB\hat{A}\hat{B}}=\widetilde{P}_{AB\hat{A}\hat{B}},
\end{equation}
where%
\begin{align}
\widetilde{P}_{AB\hat{A}\hat{B}}  & :=\int dU\ (\mathcal{U}_{A}\otimes
\overline{\mathcal{U}}_{B})(P_{AB\hat{A}\hat{B}})\\
& =\widetilde{\mathcal{T}}_{AB}(P_{AB\hat{A}\hat{B}}).
\end{align}
By invoking the twirling identity in \eqref{eq:twirl-ch-def}, we find that%
\begin{multline}
\widetilde{P}_{AB\hat{A}\hat{B}}=\Phi_{AB}\otimes\operatorname{Tr}_{AB}%
[\Phi_{AB}P_{AB\hat{A}\hat{B}}]\\
+\frac{I_{AB}-\Phi_{AB}}{d^{2}-1}\otimes\operatorname{Tr}_{AB}[\left(
I_{AB}-\Phi_{AB}\right)  P_{AB\hat{A}\hat{B}}].
\end{multline}
Now setting%
\begin{align}
K_{\hat{A}\hat{B}}  & :=\frac{1}{d}\operatorname{Tr}_{AB}[\Phi_{AB}%
P_{AB\hat{A}\hat{B}}],\\
L_{\hat{A}\hat{B}}  & :=\frac{1}{d}\operatorname{Tr}_{AB}[\left(  I_{AB}%
-\Phi_{AB}\right)  P_{AB\hat{A}\hat{B}}],
\end{align}
we can write%
\begin{equation}
\widetilde{P}_{AB\hat{A}\hat{B}}=\Gamma_{AB}\otimes K_{\hat{A}\hat{B}}%
+\frac{dI_{AB}-\Gamma_{AB}}{d^{2}-1}\otimes L_{\hat{A}\hat{B}}.
\end{equation}
We can then determine the conditions on $K_{\hat{A}\hat{B}}$ and $L_{\hat
{A}\hat{B}}$ in order for $\widetilde{P}_{AB\hat{A}\hat{B}}$ to be a
legitimate C-PPT-P\ channel. We know that $\widetilde{P}_{AB\hat{A}\hat{B}}$
is a Choi operator for a C-PPT-P channel (see Section~\ref{sec:C-PPT-P-def}) if the following conditions hold%
\begin{align}
\widetilde{P}_{AB\hat{A}\hat{B}}  & \geq0,\\
\operatorname{Tr}_{B}[\widetilde{P}_{AB\hat{A}\hat{B}}]  & =I_{A\hat{A}\hat
{B}},\\
T_{B\hat{B}}(\widetilde{P}_{AB\hat{A}\hat{B}})  & \geq0.
\end{align}
So we determine what each of these conditions impose on $K_{\hat{A}\hat{B}}$
and $L_{\hat{A}\hat{B}}$. By considering that $\Gamma_{AB}$ is orthogonal to
$\frac{dI_{AB}-\Gamma_{AB}}{d^{2}-1}$, we conclude that $\widetilde{P}%
_{AB\hat{A}\hat{B}}\geq0$ if and only if%
\begin{equation}
K_{\hat{A}\hat{B}},L_{\hat{A}\hat{B}}\geq0.
\end{equation}
Now consider that%
\begin{align}
I_{A\hat{A}\hat{B}}  & =\operatorname{Tr}_{B}[\widetilde{P}_{AB\hat{A}\hat{B}%
}]\notag \\
& =\operatorname{Tr}_{B}[\Gamma_{AB}]\otimes K_{\hat{A}\hat{B}}%
+\operatorname{Tr}_{B}\left[  \frac{dI_{AB}-\Gamma_{AB}}{d^{2}-1}\right]
\otimes L_{\hat{A}\hat{B}}\notag \\
& =I_{A}\otimes K_{\hat{A}\hat{B}}+I_{A}\otimes L_{\hat{A}\hat{B}}\notag \\
& =I_{A}\otimes\left(  K_{\hat{A}\hat{B}}+L_{\hat{A}\hat{B}}\right)  .
\end{align}
The equality $I_{A\hat{A}\hat{B}}=I_{A}\otimes\left(  K_{\hat{A}\hat{B}%
}+L_{\hat{A}\hat{B}}\right)  $ is then equivalent to
\begin{equation}
K_{\hat{A}\hat{B}}+L_{\hat{A}\hat{B}}=I_{\hat{A}\hat{B}}.\label{eq:K+L=I}%
\end{equation}
Finally, by making use of the identity $T_{B}(\Gamma_{AB})=V_{AB}$ and the
definitions in \eqref{eq:sym-sub-proj} and \eqref{eq:antisym-sub-proj} of the symmetric and antisymmetric subspace projectors, respectively,
consider that%
\begin{align}
& T_{B\hat{B}}(\widetilde{P}_{AB\hat{A}\hat{B}})\nonumber\\
& =T_{B}(\Gamma_{AB})\otimes T_{\hat{B}}(K_{\hat{A}\hat{B}})+T_{B}\left(
\frac{dI_{AB}-\Gamma_{AB}}{d^{2}-1}\right)  \otimes T_{\hat{B}}(L_{\hat{A}%
\hat{B}})\\
& =V_{AB}\otimes T_{\hat{B}}(K_{\hat{A}\hat{B}})+\frac{dI_{AB}-V_{AB}}%
{d^{2}-1}\otimes T_{\hat{B}}(L_{\hat{A}\hat{B}})\\
& =\left(  \Pi_{AB}^{\mathcal{S}}-\Pi_{AB}^{\mathcal{A}}\right)  \otimes
T_{\hat{B}}(K_{\hat{A}\hat{B}})\nonumber\\
& \qquad +\frac{d\left(  \Pi_{AB}^{\mathcal{S}}+\Pi_{AB}^{\mathcal{A}}\right)
-\left(  \Pi_{AB}^{\mathcal{S}}-\Pi_{AB}^{\mathcal{A}}\right)  }{d^{2}%
-1}\otimes T_{\hat{B}}(L_{\hat{A}\hat{B}})\\
& =\left(  \Pi_{AB}^{\mathcal{S}}-\Pi_{AB}^{\mathcal{A}}\right)  \otimes
T_{\hat{B}}(K_{\hat{A}\hat{B}})\nonumber\\
& \qquad +\left(  \frac{\Pi_{AB}^{\mathcal{S}}}{d+1}+\frac{\Pi_{AB}^{\mathcal{A}}%
}{d-1}\right)  \otimes T_{\hat{B}}(L_{\hat{A}\hat{B}})\\
& =\Pi_{AB}^{\mathcal{S}}\otimes T_{\hat{B}}\left(  K_{\hat{A}\hat{B}}%
+\frac{1}{d+1}L_{\hat{A}\hat{B}}\right)  \nonumber\\
& \qquad +\Pi_{AB}^{\mathcal{A}}\otimes T_{\hat{B}}\left(  \frac{1}{d-1}L_{\hat
{A}\hat{B}}-K_{\hat{A}\hat{B}}\right)  .
\end{align}
Since $\Pi_{AB}^{\mathcal{S}}$ and $\Pi_{AB}^{\mathcal{A}}$ are projectors and
orthogonal to each other, it follows that $T_{B\hat{B}}(\widetilde{P}%
_{AB\hat{A}\hat{B}})\geq0$ if and only if%
\begin{align}
T_{\hat{B}}\left(  K_{\hat{A}\hat{B}}+\frac{1}{d+1}L_{\hat{A}\hat{B}}\right)
& \geq0, \label{eq:QEC-PPT-K}\\
T_{\hat{B}}\left(  \frac{1}{d-1}L_{\hat{A}\hat{B}}-K_{\hat{A}\hat{B}}\right)
& \geq0.
\label{eq:QEC-PPT-L}
\end{align}
We can simplify these conditions by employing \eqref{eq:K+L=I}. Substituting,
we find that they reduce to%
\begin{align}
0  & \leq T_{\hat{B}}\left(  K_{\hat{A}\hat{B}}+\frac{1}{d+1}L_{\hat{A}\hat
{B}}\right)  \\
& =T_{\hat{B}}\left(  K_{\hat{A}\hat{B}}+\frac{1}{d+1}\left(  I_{\hat{A}%
\hat{B}}-K_{\hat{A}\hat{B}}\right)  \right)  \\
& =T_{\hat{B}}\left(  \frac{I_{\hat{A}\hat{B}}}{d+1}+\frac{d}{d+1}K_{\hat
{A}\hat{B}}\right)  ,
\end{align}
which is equivalent to%
\begin{equation}
-I_{\hat{A}\hat{B}}\leq d\ T_{\hat{B}}\left(  K_{\hat{A}\hat{B}}\right)  .
\end{equation}
Additionally,%
\begin{align}
0  & \leq T_{\hat{B}}\left(  \frac{1}{d-1}L_{\hat{A}\hat{B}}-K_{\hat{A}\hat
{B}}\right)  \\
& =T_{\hat{B}}\left(  \frac{1}{d-1}\left(  I_{\hat{A}\hat{B}}-K_{\hat{A}%
\hat{B}}\right)  -K_{\hat{A}\hat{B}}\right)  \\
& =T_{\hat{B}}\left(  \frac{1}{d-1}I_{\hat{A}\hat{B}}-\frac{d}{d-1}K_{\hat
{A}\hat{B}}\right)  ,
\end{align}
which is equivalent to%
\begin{equation}
d\ T_{\hat{B}}\left(  K_{\hat{A}\hat{B}}\right)  \leq I_{\hat{A}\hat{B}}.
\end{equation}
So we conclude that the operators $K_{\hat{A}\hat{B}}$ and $L_{\hat{A}\hat{B}%
}$ correspond to a C-PPT-P\ channel if%
\begin{align}
K_{\hat{A}\hat{B}},L_{\hat{A}\hat{B}}  & \geq0,\\
K_{\hat{A}\hat{B}}+L_{\hat{A}\hat{B}}  & =I_{\hat{A}\hat{B}},\\
-I_{\hat{A}\hat{B}}  & \leq d\ T_{\hat{B}}\left(  K_{\hat{A}\hat{B}}\right)
\leq I_{\hat{A}\hat{B}}.
\end{align}
Now employing reasoning similar to that in Appendix~\ref{Sec_obj_func_tel}, we conclude that%
\begin{align}
& \inf_{\mathcal{P}\in\text{PPT}}\frac{1}{2}\left\Vert \mathcal{P}_{A\hat
{A}\hat{B}\rightarrow B}\circ\mathcal{A}_{\hat{A}\hat{B}}^{\rho}%
-\operatorname{id}_{A\rightarrow B}^{d}\right\Vert _{\diamond}\nonumber\\
& =1-\sup_{K_{\hat{A}\hat{B}},L_{\hat{A}\hat{B}}\geq0}\left\{
\begin{array}
[c]{c}%
\operatorname{Tr}[K_{\hat{A}\hat{B}}T(\rho_{\hat{A}\hat{B}})]:\\
K_{\hat{A}\hat{B}%
}+L_{\hat{A}\hat{B}}=I_{\hat{A}\hat{B}},\\
-I_{\hat{A}\hat{B}}\leq d\ T_{\hat{B}}\left(  K_{\hat{A}\hat{B}}\right)  \leq
I_{\hat{A}\hat{B}}%
\end{array}
\right\}  \\
& =1-\sup_{K_{\hat{A}\hat{B}}\geq0}\left\{
\begin{array}
[c]{c}%
\operatorname{Tr}[K_{\hat{A}\hat{B}}T(\rho_{\hat{A}\hat{B}})]:\\
K_{\hat{A}\hat{B}}\leq I_{\hat{A}\hat{B}},\\
I_{\hat{A}\hat{B}} \pm d\ T_{\hat{B}}\left(  K_{\hat{A}\hat{B}}\right)  \geq
0
\end{array}
\right\},
\end{align}
where the last simplification follows because the operator $L_{\hat{A}\hat{B}%
}$ does not appear in the objective and thus can be considered a slack
variable. Finally, we can eliminate the transpose on $K_{\hat{A}\hat{B}}$ in the objective function by making the substitution $K_{\hat{A}\hat{B}} \to T(K_{\hat{A}\hat{B}})$ and noticing that
\begin{align}
\operatorname{Tr}[K_{\hat{A}\hat{B}}T(\rho_{\hat{A}\hat{B}})] & = \operatorname{Tr}[T(K_{\hat{A}\hat{B}})\rho_{\hat{A}\hat{B}}], \\
    K_{\hat{A}\hat{B}} \geq 0 \quad & \Leftrightarrow \quad T(K_{\hat{A}\hat{B}}) \geq 0 ,  \\
    K_{\hat{A}\hat{B}}\leq I_{\hat{A}\hat{B}} \quad & \Leftrightarrow \quad T(K_{\hat{A}\hat{B}}) \leq T(I_{\hat{A}\hat{B}}) = I_{\hat{A}\hat{B}} ,
\end{align}
\begin{align}
    & I_{\hat{A}\hat{B}} \pm d\ T_{\hat{B}}\left(  K_{\hat{A}\hat{B}}\right)  \geq 0 \notag \\
     \Leftrightarrow  \quad &
    T(I_{\hat{A}\hat{B}} \pm d\ T_{\hat{B}}\left(  K_{\hat{A}\hat{B}}\right)) \geq 0 \notag \\
     \Leftrightarrow \quad &
    I_{\hat{A}\hat{B}} \pm d\ T_{\hat{B}}\left( T( K_{\hat{A}\hat{B}})\right) \geq 0.
\end{align}

\section{Proof of Equation~\eqref{eq:PPT-SDP-approx-QEC}}

\label{sec:PPT-NS-alone-approx-QEC}

In this appendix, we derive the SDP lower bound on the simulation error of approximate quantum error correction, when using PPT constraints
alone. Ref.~\cite{LM15} already worked this out, but we provide another derivation
here for completeness. The simulation error for approximate quantum error
correction, when using a resource channel $\mathcal{N}_{\hat{A}\rightarrow
\hat{B}}$ and a C-PPT-P, non-signaling superchannel $\Theta_{(\hat
{A}\rightarrow\hat{B})\rightarrow(A\rightarrow B)}$\ for free, is as follows:%
\begin{equation}
\inf_{\Theta\in\text{PPT}\cap\text{NS}}\frac{1}{2}\left\Vert \Theta_{(\hat
{A}\rightarrow\hat{B})\rightarrow(A\rightarrow B)}(\mathcal{N}_{\hat
{A}\rightarrow\hat{B}})-\operatorname{id}_{A\rightarrow B}^{d}\right\Vert
_{\diamond},\label{eq:opt-PPT-qec}%
\end{equation}
where we have abbreviated the set of C-PPT-P, non-signaling superchannels by
PPT$\cap$NS. Some of the arguments in this appendix are almost identical to those in
Appendix~\ref{app:approx-tele-just-PPT}, and so we provide fewer details and explanations here. The Choi operator for the
channel $\Theta_{(\hat{A}\rightarrow\hat{B})\rightarrow(A\rightarrow
B)}(\mathcal{N}_{\hat{A}\rightarrow\hat{B}})$ is given by the propagation rule
in \eqref{eq:propagation rule}:%
\begin{equation}
\operatorname{Tr}_{\hat{A}\hat{B}}[T_{\hat{A}\hat{B}}(\Gamma_{\hat{A}\hat{B}%
}^{\mathcal{N}})\Gamma_{A\hat{B}\hat{A}B}^{\Theta}],
\end{equation}
where $\Gamma_{A\hat{B}\hat{A}B}^{\Theta}$ is the Choi operator for
$\Theta_{(\hat{A}\rightarrow\hat{B})\rightarrow(A\rightarrow B)}$. In order to
be a legitimate C-PPT-P, non-signaling superchannel, the following constraints
should be satisfied:%
\begin{align}
\Gamma_{A\hat{B}\hat{A}B}^{\Theta}  & \geq0,\label{eq:CP-PPT-qec}\\
\operatorname{Tr}_{\hat{A}B}[\Gamma_{A\hat{B}\hat{A}B}^{\Theta}]  &
=I_{A\hat{B}},\label{eq:TP-PPT-qec}\\
\operatorname{Tr}_{B}[\Gamma_{A\hat{B}\hat{A}B}^{\Theta}]  & =\frac{1}%
{d_{\hat{B}}}\operatorname{Tr}_{\hat{B}B}[\Gamma_{A\hat{B}\hat{A}B}^{\Theta
}]\otimes I_{\hat{B}},\label{eq:non-sig-PPT-QEC}\\
\operatorname{Tr}_{\hat{A}}[\Gamma_{A\hat{B}\hat{A}B}^{\Theta}] &  =\frac
{1}{d}\operatorname{Tr}_{A\hat{A}}[\Gamma_{A\hat{B}\hat{A}B}^{\Theta}]\otimes
I_{A},\label{eq:other-non-sig}\\
T_{\hat{B}B}(\Gamma_{A\hat{B}\hat{A}B}^{\Theta})  & \geq
0.\label{eq:PPT-PPT-qec}%
\end{align}
Due to the unitary covariance symmetry of the identity channel to be
simulated, it suffices to restrict the optimization in \eqref{eq:opt-PPT-qec} to superchannels
with Choi operators having the following form:%
\begin{equation}
\widetilde{\Gamma}_{A\hat{B}\hat{A}B}^{\Theta}=\Gamma_{AB}\otimes K_{\hat
{A}\hat{B}}+\frac{dI_{AB}-\Gamma_{AB}}{d^{2}-1}\otimes L_{\hat{A}\hat{B}}.
\end{equation}
The condition in \eqref{eq:CP-PPT-qec} is equivalent to $K_{\hat{A}\hat{B}%
},L_{\hat{A}\hat{B}}\geq0$. The condition in \eqref{eq:TP-PPT-qec}\ implies
that%
\begin{align}
 I_{A\hat{B}}
& =\operatorname{Tr}_{\hat{A}B}[\widetilde{\Gamma}_{A\hat{B}\hat{A}B}^{\Theta
}]\\
& =\operatorname{Tr}_{B}[\Gamma_{AB}]\otimes\operatorname{Tr}_{\hat{A}%
}[K_{\hat{A}\hat{B}}]\nonumber\\
& \qquad +\operatorname{Tr}_{B}\left[  \frac{dI_{AB}-\Gamma_{AB}}{d^{2}-1}\right]
\otimes\operatorname{Tr}_{\hat{A}}[L_{\hat{A}\hat{B}}]\\
& =I_{A}\otimes\operatorname{Tr}_{\hat{A}}[K_{\hat{A}\hat{B}}]+I_{A}%
\otimes\operatorname{Tr}_{\hat{A}}[L_{\hat{A}\hat{B}}]\\
& =I_{A}\otimes\operatorname{Tr}_{\hat{A}}[K_{\hat{A}\hat{B}}+L_{\hat{A}%
\hat{B}}],
\end{align}
which is equivalent to%
\begin{equation}
\operatorname{Tr}_{\hat{A}}[K_{\hat{A}\hat{B}}+L_{\hat{A}\hat{B}}]=I_{\hat{B}%
}.
\end{equation}
Now consider that%
\begin{align}
& \operatorname{Tr}_{B}[\widetilde{\Gamma}_{A\hat{B}\hat{A}B}^{\Theta
}]\nonumber\\
& =\operatorname{Tr}_{B}[\Gamma_{AB}]\otimes K_{\hat{A}\hat{B}}%
+\operatorname{Tr}_{B}\left[  \frac{dI_{AB}-\Gamma_{AB}}{d^{2}-1}\right]
\otimes L_{\hat{A}\hat{B}}\\
& =I_{A}\otimes\left(  K_{\hat{A}\hat{B}}+L_{\hat{A}\hat{B}}\right)  ,
\end{align}
and
\begin{align}
& \operatorname{Tr}_{\hat{B}B}[\Gamma_{A\hat{B}\hat{A}B}^{\Theta}]\otimes
I_{\hat{B}}\nonumber\\
& =\left(
\begin{array}
[c]{c}%
\operatorname{Tr}_{B}[\Gamma_{AB}]\otimes\operatorname{Tr}_{\hat{B}}%
[K_{\hat{A}\hat{B}}]\\
+\operatorname{Tr}_{B}\left[  \frac{dI_{AB}-\Gamma_{AB}}{d^{2}-1}\right]
\otimes\operatorname{Tr}_{\hat{B}}[L_{\hat{A}\hat{B}}]
\end{array}
\right)  \otimes I_{\hat{B}}\\
& =I_{A}\otimes\left(  \operatorname{Tr}_{\hat{B}}[K_{\hat{A}\hat{B}}%
+L_{\hat{A}\hat{B}}]\right)  \otimes I_{\hat{B}}.
\end{align}
So the constraint in \eqref{eq:non-sig-PPT-QEC} is equivalent to%
\begin{equation}
K_{\hat{A}\hat{B}}+L_{\hat{A}\hat{B}}=\operatorname{Tr}_{\hat{B}}[K_{\hat
{A}\hat{B}}+L_{\hat{A}\hat{B}}]\otimes\frac{1}{d_{\hat{B}}}I_{\hat{B}}.
\end{equation}
The other non-signaling constraint in \eqref{eq:other-non-sig} is%
\begin{equation}
\operatorname{Tr}_{\hat{A}}[\Gamma_{A\hat{B}\hat{A}B}^{\Theta}]=\frac{1}%
{d}\operatorname{Tr}_{A\hat{A}}[\Gamma_{A\hat{B}\hat{A}B}^{\Theta}]\otimes
I_{A}.
\end{equation}
Consider that%
\begin{multline}
\operatorname{Tr}_{\hat{A}}[\widetilde{\Gamma}_{A\hat{B}\hat{A}B}^{\Theta
}]=\Gamma_{AB}\otimes\operatorname{Tr}_{\hat{A}}[K_{\hat{A}\hat{B}}%
]\\
+\frac{dI_{AB}-\Gamma_{AB}}{d^{2}-1}\otimes\operatorname{Tr}_{\hat{A}%
}[L_{\hat{A}\hat{B}}],    
\end{multline}
while%
\begin{align}
& \operatorname{Tr}_{A\hat{A}}[\widetilde{\Gamma}_{A\hat{B}\hat{A}B}^{\Theta}]\nonumber\\
& =\operatorname{Tr}_{A}[\Gamma_{AB}]\otimes\operatorname{Tr}_{\hat{A}%
}[K_{\hat{A}\hat{B}}]\nonumber\\
& \qquad+\operatorname{Tr}_{A}\left[  \frac{dI_{AB}-\Gamma_{AB}}{d^{2}%
-1}\right]  \otimes\operatorname{Tr}_{\hat{A}}[L_{\hat{A}\hat{B}}]\\
& =I_{B}\otimes\operatorname{Tr}_{\hat{A}}[K_{\hat{A}\hat{B}}]+I_{A}%
\otimes\operatorname{Tr}_{\hat{A}}[L_{\hat{A}\hat{B}}]\\
& =I_{B}\otimes\left(  \operatorname{Tr}_{\hat{A}}[K_{\hat{A}\hat{B}}%
+L_{\hat{A}\hat{B}}]\right)  .
\end{align}
Then the equality in \eqref{eq:other-non-sig} is equivalent to
\begin{multline}
\Gamma_{AB}\otimes\operatorname{Tr}_{\hat{A}}[K_{\hat{A}\hat{B}}%
]+\frac{dI_{AB}-\Gamma_{AB}}{d^{2}-1}\otimes\operatorname{Tr}_{\hat{A}%
}[L_{\hat{A}\hat{B}}]\\
=\frac{1}{d}I_{AB}\otimes\left(  \operatorname{Tr}_{\hat{A}}[K_{\hat{A}\hat
{B}}+L_{\hat{A}\hat{B}}]\right)  .
\end{multline}
Scaling both sides by $\frac{1}{d}$ gives%
\begin{align}
&  \Phi_{AB}\otimes\operatorname{Tr}_{\hat{A}}[K_{\hat{A}\hat{B}}%
]+\frac{I_{AB}-\Phi_{AB}}{d^{2}-1}\otimes\operatorname{Tr}_{\hat{A}}%
[L_{\hat{A}\hat{B}}]\nonumber\\
&  =\frac{1}{d^{2}}I_{AB}\otimes\left(  \operatorname{Tr}_{\hat{A}}[K_{\hat
{A}\hat{B}}+L_{\hat{A}\hat{B}}]\right)  \\
&  =\frac{1}{d^{2}}\left(  I_{AB}-\Phi_{AB}+\Phi_{AB}\right)  \otimes\left(
\operatorname{Tr}_{\hat{A}}[K_{\hat{A}\hat{B}}+L_{\hat{A}\hat{B}}]\right)  \\
&  =\frac{1}{d^{2}}\left(  I_{AB}-\Phi_{AB}\right)  \otimes\left(
\operatorname{Tr}_{\hat{A}}[K_{\hat{A}\hat{B}}+L_{\hat{A}\hat{B}}]\right)
\nonumber\\
&  \qquad+\frac{1}{d^{2}}\Phi_{AB}\otimes\left(  \operatorname{Tr}_{\hat{A}%
}[K_{\hat{A}\hat{B}}+L_{\hat{A}\hat{B}}]\right)  ,
\end{align}
so that this is equivalent to the following two constraints:%
\begin{align}
\operatorname{Tr}_{\hat{A}}[K_{\hat{A}\hat{B}}]  & =\frac{1}{d^{2}%
}\operatorname{Tr}_{\hat{A}}[K_{\hat{A}\hat{B}}+L_{\hat{A}\hat{B}}],\\
\frac{1}{d^{2}-1}\operatorname{Tr}_{\hat{A}}[L_{\hat{A}\hat{B}}]  & =\frac
{1}{d^{2}}\operatorname{Tr}_{\hat{A}}[K_{\hat{A}\hat{B}}+L_{\hat{A}\hat{B}}],
\end{align}
which in turn are equivalent to%
\begin{equation}
\left(  d^{2}-1\right)  \operatorname{Tr}_{\hat{A}}[K_{\hat{A}\hat{B}%
}]=\operatorname{Tr}_{\hat{A}}[L_{\hat{A}\hat{B}}].
\end{equation}
By applying \eqref{eq:QEC-PPT-K}--\eqref{eq:QEC-PPT-L}, the condition in \eqref{eq:PPT-PPT-qec} is equivalent to%
\begin{align}
T_{\hat{B}}\left(  K_{\hat{A}\hat{B}}+\frac{1}{d+1}L_{\hat{A}\hat{B}}\right)
&  \geq0,\\
T_{\hat{B}}\left(  \frac{1}{d-1}L_{\hat{A}\hat{B}}-K_{\hat{A}\hat{B}}\right)
&  \geq0.
\end{align}
Thus, the optimization in \eqref{eq:opt-PPT-qec} reduces as follows:%
\begin{equation}
1-\sup_{K_{\hat{A}\hat{B}},L_{\hat{A}\hat{B}}\geq0}\left\{
\begin{array}
[c]{c}%
\operatorname{Tr}[T(K_{\hat{A}\hat{B}})\Gamma_{\hat{A}\hat{B}}^{\mathcal{N}}]:\\
\operatorname{Tr}_{\hat{A}}[K_{\hat{A}\hat{B}}+L_{\hat{A}\hat{B}}]=I_{\hat{B}%
},\\
\left(  d^{2}-1\right)  \operatorname{Tr}_{\hat{A}}[K_{\hat{A}\hat{B}%
}]=\operatorname{Tr}_{\hat{A}}[L_{\hat{A}\hat{B}}],\\
K_{\hat{A}\hat{B}}+L_{\hat{A}\hat{B}}=\\
\operatorname{Tr}_{\hat{B}}[K_{\hat{A}\hat{B}}+L_{\hat{A}\hat{B}}]\otimes
\frac{1}{d_{\hat{B}}}I_{\hat{B}},\\
T_{\hat{B}}\left(  K_{\hat{A}\hat{B}}+\frac{1}{d+1}L_{\hat{A}\hat{B}}\right)
\geq0,\\
T_{\hat{B}}\left(  \frac{1}{d-1}L_{\hat{A}\hat{B}}-K_{\hat{A}\hat{B}}\right)
\geq0
\end{array}
\right\}  ,
\end{equation}
which simplifies to
\begin{equation}
1-\sup_{K_{\hat{A}\hat{B}},L_{\hat{A}\hat{B}}\geq0}\left\{
\begin{array}
[c]{c}%
\operatorname{Tr}[T(K_{\hat{A}\hat{B}})\Gamma_{\hat{A}\hat{B}}^{\mathcal{N}}]:\\
d^{2}\operatorname{Tr}_{\hat{A}}[K_{\hat{A}\hat{B}}]=I_{\hat{B}},\\
\left(  d^{2}-1\right)  \operatorname{Tr}_{\hat{A}}[K_{\hat{A}\hat{B}%
}]=\operatorname{Tr}_{\hat{A}}[L_{\hat{A}\hat{B}}],\\
K_{\hat{A}\hat{B}}+L_{\hat{A}\hat{B}}=\\
\operatorname{Tr}_{\hat{B}}[K_{\hat{A}\hat{B}}+L_{\hat{A}\hat{B}}]\otimes
\frac{1}{d_{\hat{B}}}I_{\hat{B}},\\
T_{\hat{B}}\left(  K_{\hat{A}\hat{B}}+\frac{1}{d+1}L_{\hat{A}\hat{B}}\right)
\geq0,\\
T_{\hat{B}}\left(  \frac{1}{d-1}L_{\hat{A}\hat{B}}-K_{\hat{A}\hat{B}}\right)
\geq0
\end{array}
\right\}  .
\end{equation}
The transpose on $K_{\hat{A}\hat{B}}$ in the objective function can be
eliminated by making the substitution $K_{\hat{A}\hat{B}}\rightarrow
T(K_{\hat{A}\hat{B}})$ and noticing that%
\begin{equation}
K_{\hat{A}\hat{B}},L_{\hat{A}\hat{B}}\geq0\quad\Leftrightarrow\quad
T(K_{\hat{A}\hat{B}}),T(L_{\hat{A}\hat{B}})\geq0,
\end{equation}%
\begin{align}
d^{2}\operatorname{Tr}_{\hat{A}}[K_{\hat{A}\hat{B}}]  & =I_{\hat{B}%
}\nonumber\\
\Leftrightarrow\quad d^{2}T_{\hat{B}}(\operatorname{Tr}_{\hat{A}}[K_{\hat
{A}\hat{B}}])  & =T_{\hat{B}}(I_{\hat{B}})\\
\Leftrightarrow\quad d^{2}T_{\hat{B}}(\operatorname{Tr}_{\hat{A}}[T_{\hat{A}%
}(K_{\hat{A}\hat{B}})])  & =I_{\hat{B}}\\
\Leftrightarrow\quad d^{2}(\operatorname{Tr}_{\hat{A}}[T(K_{\hat{A}\hat{B}%
})])  & =I_{\hat{B}},
\end{align}%
\begin{align}
K_{\hat{A}\hat{B}}+L_{\hat{A}\hat{B}}  & =\operatorname{Tr}_{\hat{B}}%
[K_{\hat{A}\hat{B}}+L_{\hat{A}\hat{B}}]\otimes\frac{1}{d_{\hat{B}}}I_{\hat{B}%
}\nonumber\\
\Leftrightarrow\quad T(K_{\hat{A}\hat{B}}+L_{\hat{A}\hat{B}})  & =T\left(
\operatorname{Tr}_{\hat{B}}[K_{\hat{A}\hat{B}}+L_{\hat{A}\hat{B}}]\otimes
\frac{1}{d_{\hat{B}}}I_{\hat{B}}\right)  \end{align}
\begin{align}
\Leftrightarrow\quad & T(K_{\hat{A}\hat{B}})+T(L_{\hat{A}\hat{B}})  \notag \\
& =T_{\hat{A}}(\operatorname{Tr}_{\hat{B}}[K_{\hat{A}\hat{B}}+L_{\hat{A}\hat{B}%
}])\otimes\frac{1}{d_{\hat{B}}}T_{\hat{B}}(I_{\hat{B}})\\
& =T_{\hat{A}}(\operatorname{Tr}_{\hat{B}}[T_{\hat{B}}(K_{\hat{A}\hat{B}%
}+L_{\hat{A}\hat{B}})])\otimes\frac{1}{d_{\hat{B}}}I_{\hat{B}}\\
& =(\operatorname{Tr}_{\hat{B}}[T(K_{\hat{A}\hat{B}})+T(L_{\hat{A}\hat{B}%
})])\otimes\frac{1}{d_{\hat{B}}}I_{\hat{B}},
\end{align}%
\begin{align}
\left(  d^{2}-1\right)  \operatorname{Tr}_{\hat{A}}[K_{\hat{A}\hat{B}}]  &
=\operatorname{Tr}_{\hat{A}}[L_{\hat{A}\hat{B}}]\notag \\
\Leftrightarrow\quad\left(  d^{2}-1\right)  T_{\hat{B}}(\operatorname{Tr}%
_{\hat{A}}[K_{\hat{A}\hat{B}}])  & =T_{\hat{B}}(\operatorname{Tr}_{\hat{A}%
}[L_{\hat{A}\hat{B}}])\notag \\
\Leftrightarrow\quad\left(  d^{2}-1\right)  T_{\hat{B}}(\operatorname{Tr}%
_{\hat{A}}[T_{\hat{A}}(K_{\hat{A}\hat{B}})])  & =T_{\hat{B}}(\operatorname{Tr}%
_{\hat{A}}[T_{\hat{A}}(L_{\hat{A}\hat{B}})])\notag \\
\Leftrightarrow\quad\left(  d^{2}-1\right)  \operatorname{Tr}_{\hat{A}%
}[T(K_{\hat{A}\hat{B}})]  & =\operatorname{Tr}_{\hat{A}}[T(L_{\hat{A}\hat{B}%
})],
\end{align}%
\begin{align}
T_{\hat{B}}\left(  K_{\hat{A}\hat{B}}+\frac{1}{d+1}L_{\hat{A}\hat{B}}\right)
& \geq0\\
\Leftrightarrow\quad T\left(  T_{\hat{B}}\left(  K_{\hat{A}\hat{B}}+\frac
{1}{d+1}L_{\hat{A}\hat{B}}\right)  \right)    & \geq0\\
\Leftrightarrow\quad T_{\hat{B}}\left(  T\left(  K_{\hat{A}\hat{B}}+\frac
{1}{d+1}L_{\hat{A}\hat{B}}\right)  \right)    & \geq0\\
\Leftrightarrow\quad T_{\hat{B}}\left(  T(K_{\hat{A}\hat{B}})+\frac{1}%
{d+1}T(L_{\hat{A}\hat{B}})\right)    & \geq0,
\end{align}
and similarly,%
\begin{align}
T_{\hat{B}}\left(  \frac{1}{d-1}L_{\hat{A}\hat{B}}-K_{\hat{A}\hat{B}}\right)
& \geq0\\
\Leftrightarrow\quad T_{\hat{B}}\left(  \frac{1}{d-1}T(L_{\hat{A}\hat{B}%
})-T(K_{\hat{A}\hat{B}})\right)    & \geq0.
\end{align}
Thus, the optimization reduces to
\begin{equation}
1-\sup_{K_{\hat{A}\hat{B}},L_{\hat{A}\hat{B}}\geq0}\left\{
\begin{array}
[c]{c}%
\operatorname{Tr}[K_{\hat{A}\hat{B}}\Gamma_{\hat{A}\hat{B}}^{\mathcal{N}}]:\\
\operatorname{Tr}_{\hat{A}}[K_{\hat{A}\hat{B}}+L_{\hat{A}\hat{B}}]=I_{\hat{B}%
},\\
K_{\hat{A}\hat{B}}+L_{\hat{A}\hat{B}}=\\
\operatorname{Tr}_{\hat{B}}[K_{\hat{A}\hat{B}}+L_{\hat{A}\hat{B}}]\otimes
\frac{1}{d_{\hat{B}}}I_{\hat{B}},\\
\left(  d^{2}-1\right)  \operatorname{Tr}_{\hat{A}}[K_{\hat{A}\hat{B}%
}]=\operatorname{Tr}_{\hat{A}}[L_{\hat{A}\hat{B}}],\\
T_{\hat{B}}\left(  K_{\hat{A}\hat{B}}+\frac{1}{d+1}L_{\hat{A}\hat{B}}\right)
\geq0,\\
T_{\hat{B}}\left(  \frac{1}{d-1}L_{\hat{A}\hat{B}}-K_{\hat{A}\hat{B}}\right)
\geq0
\end{array}
\right\}
\label{eq:PPT-LM-general}.
\end{equation}

To recover the form in \cite{LM15}, starting from \eqref{eq:PPT-LM-general},
we can introduce a new optimization variable $\sigma_{\hat{A}}$ with the
constraint%
\begin{equation}
\sigma_{\hat{A}}=\frac{1}{d_{\hat{B}}}\operatorname{Tr}_{\hat{B}}[K_{\hat
{A}\hat{B}}+L_{\hat{A}\hat{B}}].
\end{equation}
This operator is positive semi-definite because $K_{\hat{A}\hat{B}}$ and
$L_{\hat{A}\hat{B}}$ are, and it has trace equal to one because%
\begin{align}
\operatorname{Tr}[\sigma_{\hat{A}}]  &  =\frac{1}{d_{\hat{B}}}%
\operatorname{Tr}_{\hat{A}\hat{B}}[K_{\hat{A}\hat{B}}+L_{\hat{A}\hat{B}}]\\
&  =\frac{1}{d_{\hat{B}}}\operatorname{Tr}_{\hat{B}}[I_{\hat{B}}]\\
&  =1.
\end{align}
Thus, the optimization in \eqref{eq:PPT-LM-general} above   is equivalent to
\begin{equation}
1-\sup_{\substack{K_{\hat{A}\hat{B}},L_{\hat{A}\hat{B}},\\\sigma_{\hat{A}}\geq
0}}\left\{
\begin{array}
[c]{c}%
\operatorname{Tr}[K_{\hat{A}\hat{B}}\Gamma_{\hat{A}\hat{B}}^{\mathcal{N}}]:\\
\operatorname{Tr}_{\hat{A}}[K_{\hat{A}\hat{B}}+L_{\hat{A}\hat{B}}]=I_{\hat{B}%
},\\
K_{\hat{A}\hat{B}}+L_{\hat{A}\hat{B}}=\\
\operatorname{Tr}_{\hat{B}}[K_{\hat{A}\hat{B}}+L_{\hat{A}\hat{B}}]\otimes
\frac{1}{d_{\hat{B}}}I_{\hat{B}},\\
\left(  d^{2}-1\right)  \operatorname{Tr}_{\hat{A}}[K_{\hat{A}\hat{B}%
}]=\operatorname{Tr}_{\hat{A}}[L_{\hat{A}\hat{B}}],\\
T_{\hat{B}}\left(  K_{\hat{A}\hat{B}}+\frac{1}{d+1}L_{\hat{A}\hat{B}}\right)
\geq0,\\
T_{\hat{B}}\left(  \frac{1}{d-1}L_{\hat{A}\hat{B}}-K_{\hat{A}\hat{B}}\right)
\geq0,\\
\sigma_{\hat{A}}=\frac{1}{d_{\hat{B}}}\operatorname{Tr}_{\hat{B}}[K_{\hat
{A}\hat{B}}+L_{\hat{A}\hat{B}}],\\
\operatorname{Tr}[\sigma_{\hat{A}}]=1.
\end{array}
\right\}
\end{equation}
However, now we can substitute to find that%
\begin{equation}
K_{\hat{A}\hat{B}}+L_{\hat{A}\hat{B}}=\sigma_{\hat{A}}\otimes I_{\hat{B}},
\end{equation}
which implies that%
\begin{align}
K_{\hat{A}\hat{B}}+\frac{1}{d+1}L_{\hat{A}\hat{B}}  &  =K_{\hat{A}\hat{B}%
}+\frac{1}{d+1}\left(  \sigma_{\hat{A}}\otimes I_{\hat{B}}-K_{\hat{A}\hat{B}%
}\right) \\
&  =\frac{1}{d+1}\left(  dK_{\hat{A}\hat{B}}+\sigma_{\hat{A}}\otimes
I_{\hat{B}}\right)  ,\\
\frac{1}{d-1}L_{\hat{A}\hat{B}}-K_{\hat{A}\hat{B}}  &  =\frac{1}{d-1}\left(
\sigma_{\hat{A}}\otimes I_{\hat{B}}-K_{\hat{A}\hat{B}}\right)  -K_{\hat{A}%
\hat{B}}\\
&  =\frac{1}{d-1}\left(  \sigma_{\hat{A}}\otimes I_{\hat{B}}-dK_{\hat{A}%
\hat{B}}\right)  .
\end{align}
Thus,%
\begin{align}
T_{\hat{B}}\left(  K_{\hat{A}\hat{B}}+\frac{1}{d+1}L_{\hat{A}\hat{B}}\right)
&  \geq0,\\
T_{\hat{B}}\left(  \frac{1}{d-1}L_{\hat{A}\hat{B}}-K_{\hat{A}\hat{B}}\right)
&  \geq0
\end{align}
is equivalent to%
\begin{align}
d\ T_{\hat{B}}(K_{\hat{A}\hat{B}})+\sigma_{\hat{A}}\otimes I_{\hat{B}}  &
\geq0,\\
\sigma_{\hat{A}}\otimes I_{\hat{B}}-d\ T_{\hat{B}}(K_{\hat{A}\hat{B}})  &
\geq0,
\end{align}
which in turn is equivalent to%
\begin{equation}
\sigma_{\hat{A}}\otimes I_{\hat{B}}\pm d\ T_{\hat{B}}(K_{\hat{A}\hat{B}}%
)\geq0.
\end{equation}
Also, the constraint $\operatorname{Tr}_{\hat{A}}[K_{\hat{A}\hat{B}}%
+L_{\hat{A}\hat{B}}]=I_{\hat{B}}$ becomes%
\begin{align}
I_{\hat{B}}  &  =\operatorname{Tr}_{\hat{A}}[K_{\hat{A}\hat{B}}+L_{\hat{A}%
\hat{B}}]\\
&  =\operatorname{Tr}_{\hat{A}}[K_{\hat{A}\hat{B}}+\sigma_{\hat{A}}\otimes
I_{\hat{B}}-K_{\hat{A}\hat{B}}]\\
&  =I_{\hat{B}},
\end{align}
and is thus redudant. The constraint $\left(  d^{2}-1\right)
\operatorname{Tr}_{\hat{A}}[K_{\hat{A}\hat{B}}]=\operatorname{Tr}_{\hat{A}%
}[L_{\hat{A}\hat{B}}]$ becomes%
\begin{align}
\left(  d^{2}-1\right)  \operatorname{Tr}_{\hat{A}}[K_{\hat{A}\hat{B}}]  &
=\operatorname{Tr}_{\hat{A}}[L_{\hat{A}\hat{B}}]\\
&  =\operatorname{Tr}_{\hat{A}}[\sigma_{\hat{A}}\otimes I_{\hat{B}}-K_{\hat
{A}\hat{B}}]\\
&  =I_{\hat{B}}-\operatorname{Tr}_{\hat{A}}[K_{\hat{A}\hat{B}}],
\end{align}
which is equivalent to%
\begin{equation}
d^{2}\operatorname{Tr}_{\hat{A}}[K_{\hat{A}\hat{B}}]=I_{\hat{B}}%
\end{equation}
Employing these observations, the SDP\ above reduces to%
\begin{equation}
1-\sup_{\substack{K_{\hat{A}\hat{B}},L_{\hat{A}\hat{B}},\\\sigma_{\hat{A}}\geq
0}}\left\{
\begin{array}
[c]{c}%
\operatorname{Tr}[K_{\hat{A}\hat{B}}\Gamma_{\hat{A}\hat{B}}^{\mathcal{N}}]:\\
K_{\hat{A}\hat{B}}+L_{\hat{A}\hat{B}}=\sigma_{\hat{A}}\otimes I_{\hat{B}},\\
d^{2}\operatorname{Tr}_{\hat{A}}[K_{\hat{A}\hat{B}}]=I_{\hat{B}},\\
\sigma_{\hat{A}}\otimes I_{\hat{B}}\pm d\ T_{\hat{B}}(K_{\hat{A}\hat{B}}%
)\geq0,\\
\sigma_{\hat{A}}=\frac{1}{d_{\hat{B}}}\operatorname{Tr}_{\hat{B}}[K_{\hat
{A}\hat{B}}+L_{\hat{A}\hat{B}}],\\
\operatorname{Tr}[\sigma_{\hat{A}}]=1.
\end{array}
\right\}  .
\end{equation}
We now notice that the constraint $\sigma_{\hat{A}}=\frac{1}{d_{\hat{B}}%
}\operatorname{Tr}_{\hat{B}}[K_{\hat{A}\hat{B}}+L_{\hat{A}\hat{B}}]$ is a
consequence of the constraint $K_{\hat{A}\hat{B}}+L_{\hat{A}\hat{B}}%
=\sigma_{\hat{A}}\otimes I_{\hat{B}}$, from applying a partial trace over
$\hat{B}$. It is thus redundant and can be eliminated, leading to%
\begin{equation}
1-\sup_{\substack{K_{\hat{A}\hat{B}},L_{\hat{A}\hat{B}},\\\sigma_{\hat{A}}\geq
0}}\left\{
\begin{array}
[c]{c}%
\operatorname{Tr}[K_{\hat{A}\hat{B}}\Gamma_{\hat{A}\hat{B}}^{\mathcal{N}}]:\\
K_{\hat{A}\hat{B}}+L_{\hat{A}\hat{B}}=\sigma_{\hat{A}}\otimes I_{\hat{B}},\\
d^{2}\operatorname{Tr}_{\hat{A}}[K_{\hat{A}\hat{B}}]=I_{\hat{B}},\\
\sigma_{\hat{A}}\otimes I_{\hat{B}}\pm d\ T_{\hat{B}}(K_{\hat{A}\hat{B}}%
)\geq0,\\
\operatorname{Tr}[\sigma_{\hat{A}}]=1.
\end{array}
\right\}  .
\end{equation}
Now we can understand $L_{\hat{A}\hat{B}}$ as a slack variable and replace the
equality constraint $K_{\hat{A}\hat{B}}+L_{\hat{A}\hat{B}}=\sigma_{\hat{A}%
}\otimes I_{\hat{B}}$ with an inequality constraint and eliminate $L_{\hat
{A}\hat{B}}$, leading to the final form in \eqref{eq:PPT-SDP-approx-QEC}:
\begin{equation}
1-\sup_{K_{\hat{A}\hat{B}},\sigma_{\hat{A}}\geq0}\left\{
\begin{array}
[c]{c}%
\operatorname{Tr}[K_{\hat{A}\hat{B}}\Gamma_{\hat{A}\hat{B}}^{\mathcal{N}}]:\\
K_{\hat{A}\hat{B}}\leq\sigma_{\hat{A}}\otimes I_{\hat{B}},\\
d^{2}\operatorname{Tr}_{\hat{A}}[K_{\hat{A}\hat{B}}]=I_{\hat{B}},\\
\sigma_{\hat{A}}\otimes I_{\hat{B}}\pm d\ T_{\hat{B}}(K_{\hat{A}\hat{B}}%
)\geq0,\\
\operatorname{Tr}[\sigma_{\hat{A}}]=1.
\end{array}
\right\}  .
\end{equation}

\end{document}